\setlist{leftmargin=24pt}
\numberwithin{equation}{section}
\theoremstyle{plain}
\newtheorem{theorem}{Theorem}[section]
\newtheorem{proposition}[theorem]{Proposition}
\newtheorem{corollary}[theorem]{Corollary}
\newtheorem{lemma}[theorem]{Lemma}
\theoremstyle{remark}
\newtheorem{remark}[theorem]{Remark}
\newtheorem{definition}[theorem]{Definition}
\newcommand{\xx}{{\underline x}}
\newcommand{\HH}{{\mathcal H}}
\newcommand{\R}{{\mathbb R}}
\newcommand{\kk}{{\bf k}}
\newcommand{\x}{{\bf x}}
\newcommand{\yy}{{\underline y}}
\newcommand{\C}{{\mathbb C}}
\newcommand{\Y}{\mathcal Y}
\newcommand{\tgamma}{\widetilde{\gamma}}
\newcommand{\tx}{\widetilde{x}}
\newcommand{\ep}[2]{\eu^{\iu  \epsilon \phi(#1,#2)}}
\newcommand{\N}{{\mathbb N}}
\newcommand{\Z}{{\mathbb Z}}
\newcommand{\T}{{\mathbb T}}
\newcommand{\Hi}{\mathcal{H}}
\newcommand{\Id}{\mathbf{1}}
\newcommand{\eu}{\mathrm{e}}
\newcommand{\iu}{\mathrm{i}}
\newcommand{\di}{\mathrm{d}}
\newcommand{\set}[1]{\left\{ #1 \right\}}
\newcommand{\norm}[1]{\left\| #1 \right\|}
\newcommand{\scal}[2]{\left\langle #1, #2 \right\rangle}
\newcommand{\bra}[1]{\left\langle #1 \right|}
\newcommand{\ket}[1]{\left| #1 \right\rangle}
\newcommand{\sub}[1]{_{\mathrm{#1}}}
\newcommand{\Ll}{\lambda}    
\DeclareMathOperator{\Ran}{Ran}
\DeclareMathOperator{\Tr}{Tr}
\DeclareMathOperator{\tr}{tr}
\title[Parseval frames of exponentially localized magnetic Wannier functions]{Parseval frames of exponentially localized\\ magnetic Wannier functions}
\author{Horia D. Cornean, Domenico Monaco \and Massimo Moscolari}
\date{\today. Final version for arXiv.org}
\begin{document}

\begin{abstract}
Motivated by the analysis of gapped periodic quantum systems in presence of a uniform magnetic field in dimension $d \le 3$, we study the possibility to construct spanning sets of exponentially localized (generalized) Wannier functions for the space of occupied states.

When the magnetic flux per unit cell satisfies a certain rationality condition, by going to the momentum-space description one can model $m$ occupied energy bands by a real-analytic and $\Z^{d}$-periodic family $\set{P(\kk)}_{\kk \in \R^{d}}$ of orthogonal projections of rank $m$. 
A moving \emph{orthonormal basis} of $\Ran P(\kk)$ consisting of real-analytic and $\Z^d$-periodic Bloch vectors can be constructed if and only if the first Chern number(s) of $P$ vanish(es). Here we are mainly interested in the topologically obstructed case.

First, by dropping the generating condition, we show how to algorithmically construct a collection of $m-1$ \emph{orthonormal}, real-analytic, and $\Z^d$-periodic Bloch vectors. Second, by dropping the linear independence condition, we construct a \emph{Parseval frame} of $m+1$ real-analytic and $\Z^d$-periodic Bloch vectors which generate $\Ran P(\kk)$. Both algorithms are based on a two-step logarithm method which produces a moving orthonormal basis in the topologically trivial case.

A moving Parseval frame of analytic, $\Z^d$-periodic Bloch vectors corresponds to a Parseval frame of exponentially localized composite Wannier functions. We extend this construction to the case of magnetic Hamiltonians with an irrational magnetic flux per unit cell and show how to produce Parseval frames of exponentially localized generalized Wannier functions also in this setting.

Our results are illustrated in crystalline insulators modelled by $2d$ discrete Hofstadter-like Hamiltonians, but apply to certain continuous models of magnetic Schr\"{o}dinger operators as well.

\bigskip

\noindent \textsc{Keywords.} Wannier functions, Parseval frames, constructive algorithms, Hofstadter Hamiltonian.

\medskip

\noindent \textsc{Mathematics Subject Classification 2010.} 81Q10, 81Q15, 81Q30, 81Q70.
\end{abstract}

\maketitle

\setcounter{tocdepth}{1}
\tableofcontents


\section{Introduction} \label{sec:intro}
A large number of problems coming from the condensed matter physics of crystalline insulators can be mathematically described by means of a gapped Hamilton operator, where the gap in the spectrum is a threshold for the occupied states. In this framework it is important to have a suitable set of vectors in the Hilbert space that represents this energy window and encodes all the relevant physical information contained in the gapped spectral island. The reasons for that are multiple: from a theoretical point of view, one needs, for example, to justify the use of effective Hamiltonians, say of tight-binding nature, that simplify the analysis of the model while retaining the relevant features of the underlying physical system; from the computational point of view, the use of a suitable basis allows the efficient computation of physical quantities \cite{Goedecker99, Yates_et_al07, RestaVanderbilt07, Spaldin12, MarzariEtAl12}. 

\medskip 

If the gapped Hamiltonian is \emph{periodic}, then it is possible to pass to the $\kk$-space representation through the \emph{Bloch--Floquet transform} \cite{Kuchment16}, and the space of occupied states is described by a $\Z^d$-periodic family of projections $P(\kk)$, where $d \le 3$ is the dimension of the system. The associated Bloch bundle \cite{Panati07, MonacoPanati15}, the vector bundle over the Brillouin torus $\T^d:=\R^d/\Z^d \simeq (-1/2,1/2)^d$ whose fiber over $\kk$ is the space $\Ran P(\kk) \subset L^2((0,1)^d)$ of occupied states at fixed crystal momentum $\kk$, contains all the physical information pertaining the relevant gapped spectral island. A suitable set of vectors spanning this space consists then of sections $\set{\xi_a(\kk)}_{a \in \{1, \ldots, M\}}$ of the Bloch bundle. The inverse Bloch--Floquet transform
\begin{equation} \label{eqn:WF}
w_a(\xx + \gamma) :=\int_{(-1/2,1/2)^d} \di \kk \, \eu^{\iu  2 \pi \kk \cdot \gamma} \xi_a(\kk,\xx), \quad \xx \in (0,1)^d, \: \gamma \in \Z^d,
\end{equation}
defines then (\emph{composite}) \emph{Wannier functions}, which together with their translates span the gapped spectral island of the Hamiltonian. For the theoretical purposes mentioned above, it is important that these Wannier functions decay at infinity as fast as possible, e.g.~exponentially: by a Paley--Wiener-type argument, this is equivalent to requiring that the Bloch vectors $\xi_a(\kk)$ depend analytically on $\kk$~\cite{Cloizeaux, Kuchment16}. 

The first goal of this paper is to provide a \emph{constructive algorithm} that produces such a spanning set of localized Wannier functions, or rather the corresponding Bloch vectors. Notice that in general the existence of an \emph{orthormal basis} of (continuous, periodic) Bloch vectors is \emph{topologically obstructed} by the geometry of the Bloch bundle \cite{TKNN, Thouless84, Kohmoto85,Panati07,Monaco17} (see Section~\ref{sec:MR_Periodic}). However, if one relaxes the linear independence condition, then this topological obstruction can be circumvented \cite{AucklyKuchment18,Kuchment09,Kuchment16}: we provide a new proof of this result (formulated here as Theorem~\ref{thm:Main}) in the form of an algorithm for the construction of a ``redundant'', non-orthonormal spanning set of Bloch vectors (a \emph{Parseval frame}, to be precise). As we will detail in Section~\ref{sec:Gabor}, this datum is sufficient for example to recover spectral properties and construct effective models associated to the original Hamiltonian, even in lack of orthonormality and linear independence. 

This first result, applies to both continuous and discrete models of 2- and 3-dimensional gapped crystals subject to a constant magnetic field whose flux per unit cell satisfies a certain commensurability condition. In this paper, we choose as a recurring example the model of Hofstadter-like Hamiltonians, which are discrete analogues of magnetic Schr\"{o}dinger operators on the 2-dimensional lattice $\Z^2$ with uniform magnetic field in the orthogonal direction. Hofstadter-like Hamiltonians will be described in detail in Section~\ref{sec:Model}; the application of the general result Theorem~\ref{thm:Main} to these models is spelled out in Theorem~\ref{thm:Main_bis}.

\medskip

When periodicity is lost, there is no underlying vector bundle structure for the occupied states, and so the quest for a suitable set of spanning vectors becomes more complicated. However, the question is still well-defined in the original position-space representation, and one can ask whether a spanning set of localized (generalized) Wannier functions exists. 

To describe this non-periodic setting we specialize, for the sake of a more explicit presentation, to Hofstadter-like Hamiltonians. Up to an explicit unitary ``scaling'' transformation depending on the magnetic field (which corresponds roughly speaking to considering a supercell for the lattice), these Hamiltonians become periodic under the above-mentioned commensurability condition on the magnetic flux per unit cell, see Proposition~\ref{prop1}. Very concretely, in our case this condition requires that the strength of the magnetic field be a rational multiple of $2\pi$. For a value $b_0$ for which the latter condition is satisfied, we can apply the result from the first part, which yields a Parseval frame of localized Wannier functions for every gapped spectral island. 

Perturbing around $b_0 \in 2 \pi \mathbb{Q}$, say for $b=b_0+\epsilon$ with $\epsilon \ll 1$, the spectral island remains gapped but periodicity is lost when $\epsilon/(2\pi)$ is irrational \cite{Nenciu02}. Nevertheless, if $\epsilon$ is small enough, we will show how one can \emph{extend} the construction of a Parseval frame of localized generalized Wannier functions for the spectral island at magnetic field $b_0$ to one at magnetic field~$b$, see Theorem~\ref{thm:Main2} and its Corollary~\ref{CorGWB}. The result essentially depends on Combes--Thomas exponential estimates for the resolvent of the Hamiltonian. Once again, all the results for the type of discrete magnetic Hamiltonians discussed above can be extended to continuous models and magnetic Schr\"{o}dinger operators, see Remark~\ref{ExtensionToContinuous}.

\subsection{Comparison with the literature} \label{sec:literature}

We would like here to compare our results with the existing literature on the subject.

Our most novel contributions are Theorem~\ref{thm:Main2} and its Corollary~\ref{CorGWB} concerning the construction of spanning sets of localized Wannier functions for magnetic Hamiltonians with an \emph{irrational} magnetic flux. These provide in particular a generalization in the discrete setting of the results of \cite{CorneanHerbstNenciu16} which assume \emph{time-reversal symmetry} (hence zero magnetic flux per unit cell), consequently covering only the periodic case with no topological obstruction to the existence of an orthonormal basis of localized Wannier functions, and perturbations theoreof.

The literature on the interplay between topology and existence of localized Wannier functions in gapped \emph{periodic} quantum systems is much more extended. Concerning the topologically trivial case, namely when the existence of an orthonormal basis is not topologically obstructed, our proof of Theorem~\ref{thm:Main}\eqref{Main_3} (correspondingly Theorem~\ref{thm:Main_bis}\eqref{Main_bis_2}) provides a constructive algorithm for the results of \cite{Panati07, MonacoPanati15} (compare also \cite{Kuchment09}), which are instead obtained by abstract bundle-theoretic methods. There, the condition of topological triviality is obtained as a consequence of time-reversal symmetry: constructive algorithms for Bloch bases under this symmetry assumption have been recently investigated in \cite{FiorenzaMonacoPanati16_B, FiorenzaMonacoPanati16, CancesLevittPanatiStoltz17, CorneanHerbstNenciu16, CorneanMonacoTeufel17, CorneanMonaco17}.

Moving to the topologically non-trivial setting, to the best of our knowledge the only previous work which treated the problem of constructing an effective magnetic Hamiltonian starting from a topologically obstructed Fermi projection is \cite{FreundTeufel16}. However, there the authors only allow bounded magnetic potentials as perturbations, thus excluding magnetic fields which do not vanish at infinity.

Even though the use of ``non-orthogonal Wannier functions'' (that is, of redudant spanning sets of Wannier functions) is adopted in several computational schemes for electronic structure, quantum chemistry and density functional theory \cite{GalliParrinello92,Goedecker99,MarzariEtAl12}, in the mathematical literature the study of Parseval (or equivalently $1$-tight) frames of localized Wannier functions in the topologically obstructed case was initiated only in \cite{Kuchment09}, where an upper bound of the form $M \le 2^d \, m$ was given on the number of Bloch vectors needed to span $m$ isolated energy bands in dimension $d$. Improved estimates on $M$ for Bloch bundles in $d \le 3$ were announced in \cite{Kuchment16} and proved in \cite{AucklyKuchment18}, yielding $M=m+1$ as in Theorem~\ref{thm:Main}\eqref{Main_2} (correspondingly Theorem~\ref{thm:Main_bis}\eqref{Main_bis_2}). The results of \cite{AucklyKuchment18} prove the \emph{existence} of a Parseval $(m+1)$-frame of exponentially localized Wannier functions when $d\le 3$, again by means of general bundle-theoretic arguments. However, using powerful results from the theory of functions of several complex variables, their proofs allow to show that the corresponding Bloch vectors are analytic in the \emph{same analyticity domain} of the family of projections $\set{P(\kk)}_{\kk \in \R^d}$. Our techniques, even though more algorithmic and ``explicit'' in nature, allow instead only to exhibit Bloch vectors which are \emph{real-analytic}, that is, analytic in a complex strip around the real axis which could in principle be much smaller than the analyticity domain of the map $\kk \mapsto P(\kk)$. The problem of finding an explicit extension of these Bloch vectors to this domain (again through ``algorithmic'' methods) is an interesting research line, which we postpone to future investigation.

In this respect, it is also interesting to notice that exponential localization of generalized Wannier Parseval frames is somewhat optimal. Indeed, it was recently proved in \cite{DubailRead15} that if one requires the Wannier functions in a Parseval frame to be \emph{compactly supported}, then necessarily the Bloch bundle must be trivial.

Finally, in a very recent paper \cite{LudewigThiang19} the problem of the existence of a localized Wannier basis/Parseval frame (decaying faster than any polynomial) has been translated in a more general framework by using $K$-theoretic techniques, allowing to generalize some of the results in \cite{Kuchment09} to the setting of equivariant vector bundles over certain Riemannian manifolds. The existence of a constructive algorithm in such general framework is still an open problem.

\subsection{Structure of the paper}

The article consists of three main parts.

The first part is devoted to the presentation of the reference physical model, namely Hofstadter-like Hamiltonians, in Section~\ref{sec:Model}, and of our main contributions, whose proofs are postponed to the remaining two parts of the manuscript, in Sections~\ref{sec:MR_Periodic} and~\ref{sec:MR_irrational}. In particular, 
in Section~\ref{sec:MR_Periodic} we present Theorem~\ref{thm:Main} on spanning sets of vectors for periodic families of projections, as well as its application to Hofstadter-like Hamiltonians with rational magnetic flux, that is, Theorem~\ref{thm:Main_bis}. After that, in  Section~\ref{sec:MR_irrational} we describe the extension of these results to Hofstadter-like Hamiltonians with irrational but close-to-rational magnetic flux, namely Theorem~\ref{thm:Main2} and its Corollary~\ref{CorGWB}. A brief outline of the constructive, algorithmic proof for these results can be found in the discussion after Corollary~\ref{CorGWB}.

The second part of the paper focuses on our new proof of Theorem~\ref{thm:Main}, which is spread through Sections~\ref{sec:c1=0} to~\ref{sec:thmMain(ii)}: in Section \ref{sec:c1=0} we prove the third part of Theorem \ref{thm:Main} regarding the periodic topologically trivial case, in Section~\ref{sec:thmMain(i)} we prove the first part of Theorem~\ref{thm:Main} on the maximal number of orthonormal vectors, and finally in Section~\ref{sec:thmMain(ii)} we construct the Parseval frame for projections with non-trivial topology. In the context of Hofstadter-like Hamiltonians, by going back to position-space via the inverse Bloch--Floquet transform~\eqref{eqn:WF} this will prove also Theorem~\ref{thm:Main_bis}.

The third part of the paper is instead concerned with the proofs of Theorem~\ref{thm:Main2} and Corollary~\ref{CorGWB}. In Section~\ref{sec:Hofstadter} we show that the problem of finding Parseval frames of localized Wannier functions for general Hofstadter-like Hamiltonians can be recast in a more abstract problem regarding Fermi-like magnetic projections, see Definition~ \ref{dfn:Fermi-like} and Lemma~\ref{lemmaKN}. In the last Section~\ref{sec:MagneticParseval} we construct the required Parseval frame and conclude the proof of Theorem~\ref{thm:Main2}.

\medskip

\noindent \textbf{Acknowledgments.} The authors would like to thank G.~Panati, G.~Nenciu, G.~De Nittis and P.~Kuchment for inspiring discussions. Financial support from Grant 8021-00084B of the Danish Council for Independent Research $|$ Natural Sciences, from the German Science Foundation (DFG) within the GRK 1838 ``Spectral theory and dynamics of quantum systems'', and from the ERC Consolidator Grant 2016 ``UNICoSM -- Universality in Condensed Matter and Statistical Mechanics'' is gratefully acknowledged.

\section{The reference model: Hofstadter-like Hamiltonians} \label{sec:Model}

This Section presents our motivating physical model of crystalline systems in presence of uniform magnetic fields; we fix here some notation that will be used extensively throughout the paper. The quantum systems of interest are modelled by magnetic Hamiltonians, like e.g.~discrete tight-binding Hamiltonians where hoppings carry Peierls magnetic phases, or continuous Schr\"odinger operators of the form $\frac{1}{2} (-\iu \nabla - A)^2 + V$, where $A$ (respectively $V$) is the magnetic (respectively electrostatic) potential. To fix a reference model, to be used in applications of more general results, we now introduce Hofstadter-like Hamiltonians in the $2$-dimensional discrete setting.

Consider a set of $N$ points $\Y\subset [0,1]^2 \subset \R^2$, and a collection of functions $h_{\kk} \colon \Y \times \Y \to \R$ which are real-analytic and $\Z^2$-periodic in $\kk$, meaning that the maps $\R^2 \ni \kk \mapsto h_{\kk}(\yy,\yy') \in \R$ with fixed $\yy, \yy' \in \Y$ are real-analytic and $\Z^2$-periodic. Let us also introduce the skew-symmetric Peierls magnetic phase $\phi \colon \R^2\times\R^2 \to \R$, defined by
\begin{equation} \label{eqn:Peierls}
\phi(\x,\x'):=(x_1'x_2-x_2'x_1)/2= \{{\bf e}_3\cdot (\x' \times \x)\}/2.
\end{equation}

We define, for $b\in \R$, the bounded operator in $\ell^2(\Z^2\times \Y)\simeq\ell^2(\Z^2)\otimes \ell^2(\Y)$ (note that $\ell^2(\Y) \simeq \C^N$) given by the following matrix elements:
\begin{equation} \label{ianuar1}
H_b(\gamma,\yy;\gamma',\yy') := \eu^{\iu b\phi(\gamma+\yy,\gamma'+\yy')}\mathcal{T}(\gamma-\gamma'; \yy, \yy'), \quad \text{where} \quad \mathcal{T}(\gamma;\yy,\yy') := \int_\Omega \di \kk \, \eu^{\iu 2\pi \kk\cdot \gamma}h_{\kk}(\yy,\yy')
\end{equation}
with $\gamma,\gamma'\in\Z^2$, $\yy,\yy' \in \Y$, and the integral defining $\mathcal{T}$ is performed over $\Omega := (-1/2,1/2)^2$. Then $h_{\kk}$ can be recovered via
\[ h_{\kk}(\yy,\yy') = \sum_{\gamma\in \Z^2}\eu^{-\iu 2\pi \kk\cdot \gamma} \mathcal{T}(\gamma; \yy, \yy'), \quad \yy,\yy' \in \Y, \: \kk \in \R^2. \]
The resulting operator $H_b$ will be called an \emph{Hofstadter-like Hamiltonian}; the original Hofstadter model \cite{Hofstadter76} would correspond to $\Y = \{(0,0)\}$ (i.e. $N=1$),  $h_{\kk}=2\cos(2\pi k_1)+2\cos(2\pi k_2)$, while the Peierls phase would be written in the Landau gauge and equal $\phi_L(\x,\x'):=(x_2-x_2')(x_1+x_1')/2$. The magnetic phase $\eu^{\iu b \phi(\cdot, \cdot)}$ in front of the ``hopping'' $\mathcal{T}$ models the presence of a uniform magnetic field ${\bf B} := b \, {\bf e}_3$ orthogonal to the $2$-dimensional crystal. The quantity $-2 b \phi({\bf e}_1, {\bf e}_2) = {\bf B} \cdot ({\bf e}_1 \times {\bf e}_2)$ is then the magnetic flux per unit cell, to be measured in units of the magnetic flux quantum (which equals $1/2\pi$ in our units).

When $b=0$ the spectrum of $H_0$ is absolutely continuous and it is given by the range of the $N$ eigenvalues $E_j(\kk)$ of $h_\kk$ as functions of $\kk$, that is, 
\[ \sigma(H_0) = \set{E \in \R : E_j(\kk)=E \text{ for some } j \in \set{1,\dots,N}, \; \kk \in \Omega}. \]
The graph of the function $E_j$ is usually called the $j$-th energy (Bloch) band. Several analytic properties of these non-magnetic energy bands are discussed in \cite{AvronSimon78}, see also \cite{ZaidenbergKreinKuchmentPankov75} for the infinite-dimensional generalization from $\kk$-dependent matrices to linear operators on Banach spaces.

\medskip

If the magnetic field strength $b$ is such that $b=b_0$ where $b_0/(2\pi)$ is rational, i.e.\ there exists $q\in\mathbb{N}$ such that $b_0q \in 2\pi \mathbb{Z}$, then $H_{b_0}$ is unitarily equivalent to a periodic operator. Notice that the condition $b_0\in 2\pi \mathbb{Q}$ implies that the magnetic flux per unit cell is a rational multiple of the flux quantum: we will thus call this the ``rational flux'' condition. In order to formulate this unitary equivalence more precisely, and to be able to study also values of the magnetic field strength $b$ which are close to $b_0$, we will use the common technique of enlarging the unit cell in order to have an integer-flux magnetic field: we introduce the new lattice $\Gamma_q := (q\Z) \times \Z \simeq \Z^2$ and denote by $\Y_q$ its fundamental cell, namely
\[ \Y_q = \mathcal{B}_q \times \Y \subset \R^2, \quad \text{where} \quad \mathcal{B}_q := \set{(0,0),\ldots, (q-1,0)} \subset \R^2. \]
Hence, every point in the crystal can be uniquely represented as
\[ \tilde{\eta} + \xx , \quad \tilde{\eta} \in \Gamma_q, \: \xx \in \Y_q , \]
where 
$$\tilde{\eta}=(q\eta_1,\eta_2), \quad \eta_{1},\eta_2\in \Z, \qquad \xx=\yy + \underline{z} , \quad   \underline{z} \in \mathcal{B}_q, \: \yy \in  \Y.$$
In the following, we will often naturally identify $\tilde{\eta}=(q\eta_1,\eta_2) \in \Gamma_q$ with $\eta=(\eta_1,\eta_2) \in \Z^2$, and correspondingly identify e.g.~the magnetic phases $\eu^{\iu \phi(\tilde{\eta},\tilde{\eta}')} = \eu^{\iu q \phi(\eta, \eta')}$ or, with a little abuse of notation, the matrix elements $\mathcal{T}((q\eta_1,\eta_2);\xx)=\mathcal{T}(\tilde{\eta}+\underline{z};\yy)$.

With this notation, we then have following result:
\begin{proposition}\label{prop1}
Assume that $b_0 q\in 2\pi \mathbb{Z}$ as above. Set $Q=Q(q):=qN$. Then, for every $\epsilon\ge 0$, there exists a family of $Q\times Q$ self-adjoint matrices $h_{\kk,b_0+\epsilon}$ which is real-analytic and $\Z^2$-periodic as a function of $\kk$, real-analytic as a function of $\epsilon$, and such that $H_{b_0+\epsilon}$ is unitarily equivalent via a unitary operator $U_{b_0+\epsilon}$ to an operator in $\ell^2(\Z^2)\otimes \C^{Q}$ given by the matrix elements
\begin{equation} \label{apr1'}
\widetilde{\HH}_\epsilon(\gamma,\xx;\gamma',\xx') := \eu^{\iu\epsilon q\phi(\gamma,\gamma')} \mathcal{T}_\epsilon(\gamma-\gamma';\xx,\xx'), \text{ where } \mathcal{T}_\epsilon(\gamma;\xx,\xx') := \int_{\Omega} \di\kk \, \eu^{\iu 2\pi \kk\cdot \gamma} h_{\kk,b_0+\epsilon}(\xx,\xx'),
\end{equation}
with $\gamma,\gamma'\in\Z^2$ and $\xx,\xx' \in  \set{1,\ldots,Q}$.
\end{proposition}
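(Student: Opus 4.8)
The plan is to build $U_{b_0+\epsilon}$ explicitly as a ``supercell reindexing twisted by magnetic phases'': up to a final gauge transformation it will be the restriction, to a suitable index-$q$ sublattice of $\Z^2$, of the magnetic translations of $H_{b_0}$, so that the rational part $b_0$ of the flux gets absorbed and only the perturbation $\epsilon$ survives, as the Peierls phase in~\eqref{apr1'}.

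For $\lambda\in\Z^2$ put $(T_\lambda f)(\gamma,\yy):=\eu^{\iu b_0\phi(\gamma+\yy,\lambda)}f(\gamma-\lambda,\yy)$ on $\ell^2(\Z^2\times\Y)$. A short computation using bilinearity and skew-symmetry of $\phi$ gives $T_\lambda H_{b_0}T_\lambda^{-1}=H_{b_0}$ and $T_\lambda T_\mu=\eu^{-\iu b_0\phi(\lambda,\mu)}T_{\lambda+\mu}$. Now take $\Lambda:=q\Z\times\Z$: since $b_0q\in2\pi\Z$, for all $\lambda,\mu\in\Lambda$ one has $b_0\phi(\lambda,\mu)\in\pi\Z$, so the operators $\set{T_\lambda}_{\lambda\in\Lambda}$ pairwise commute and the resulting (symmetric, $\set{\pm1}$-valued) $2$-cocycle is a coboundary; trivializing it by an explicit phase redefinition $\widetilde{T}_\lambda:=\eu^{\iu\alpha(\lambda)}T_\lambda$ yields a genuine unitary representation of $\Lambda\simeq\Z^2$ that still commutes with $H_{b_0}$. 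Fixing a fundamental domain $F$ for $\Z^2/\Lambda$ (so $|F|=q$) and writing $\gamma=\lambda+\mu$ uniquely with $\lambda\in\Lambda$, $\mu\in F$, the map $f\mapsto g$, $g(\lambda;\mu,\yy):=(\widetilde{T}_\lambda^{-1}f)(\mu,\yy)$, is a unitary $\ell^2(\Z^2\times\Y)\to\ell^2(\Z^2)\otimes\C^{Q}$ with $Q:=qN$; composing it, if needed, with one more explicit diagonal gauge transformation $\eu^{\iu\theta(\,\cdot\,)}$ ($\theta$ a quadratic polynomial in the new lattice variables) defines $U_{b_0+\epsilon}$, which depends on $b_0+\epsilon$ through the phases (on $b_0$ via the $\widetilde{T}_\lambda$, on $\epsilon$ via the final gauge).

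Applying this to $H_{b_0+\epsilon}$: the latter has the same hopping $\mathcal T$ as $H_{b_0}$ but carries an extra uniform flux $\epsilon$, hence it fails to commute with the $\widetilde{T}_\lambda$ precisely by this $\epsilon$-flux. Since the twisted reindexing ``straightens out'' exactly the $b_0$-magnetic structure, $U_{b_0+\epsilon}H_{b_0+\epsilon}U_{b_0+\epsilon}^{-1}$ acquires the form $\widetilde{\HH}_\epsilon(\gamma,\xx;\gamma',\xx')=\eu^{\iu\epsilon q\phi(\gamma,\gamma')}\mathcal{T}_\epsilon(\gamma-\gamma';\xx,\xx')$: the generators of $\Lambda$ span a cell of area $q$ in $\R^2$, so the residual flux per unit cell of the new lattice is $\epsilon q$, while the position-dependent phases $\eu^{\iu b_0\phi(\gamma+\yy,\lambda)}$ built into $\widetilde{T}_\lambda$ are precisely what converts the ``cross terms'' mixing lattice points with the internal coordinates $\yy\in\Y$ into a function of $\gamma-\gamma'$, which is then absorbed into $\mathcal{T}_\epsilon$ (that $\mathcal{T}_\epsilon$ now depends on $\epsilon$ is no contradiction, since $U_{b_0+\epsilon}$ already does). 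I expect this to be the main obstacle: verifying that, after the reindexing and the final gauge, the surviving Peierls phase is genuinely independent of the internal indices $\xx,\xx'$ and equals exactly $\eu^{\iu\epsilon q\phi(\gamma,\gamma')}$, while keeping track both of the non-integer positions in $\Y$ and of the $\set{\pm1}$-ambiguity coming from the cocycle.

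It then remains to check the regularity of $h_{\kk,b_0+\epsilon}(\xx,\xx'):=\sum_{\gamma\in\Z^2}\eu^{-\iu2\pi\kk\cdot\gamma}\mathcal{T}_\epsilon(\gamma;\xx,\xx')$. Real-analyticity and $\Z^2$-periodicity of $\kk\mapsto h_{\kk}(\yy,\yy')$ give, by a Paley--Wiener argument, an estimate $|\mathcal{T}(\gamma;\yy,\yy')|\le C\,\eu^{-a|\gamma|}$; as the transformation above only reindexes $\gamma$ (stretching one direction by a factor $q$) and multiplies by unimodular phases, $\mathcal{T}_\epsilon(\,\cdot\,;\xx,\xx')$ inherits exponential decay, uniformly for $\epsilon$ in compact sets, and the defining series converges to a real-analytic, $\Z^2$-periodic function of $\kk$. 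Self-adjointness of $h_{\kk,b_0+\epsilon}$ follows from that of $H_{b_0+\epsilon}$ and the unitarity of $U_{b_0+\epsilon}$, and the $\epsilon$-dependence of $\mathcal{T}_\epsilon$ enters only through entire factors $\eu^{\iu\epsilon(\,\cdot\,)}$, which together with the uniform bound yields joint real-analyticity of $h_{\kk,b_0+\epsilon}$ in $(\kk,\epsilon)$; one takes $Q(q)=qN$.
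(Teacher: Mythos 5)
Your overall architecture is the same as the paper's: pass to the $q$-supercell (so $Q=qN$), conjugate by an explicit unitary consisting of phases plus a lattice reindexing, use $\phi(\tilde{\eta},\tilde{\eta}')=q\,\phi(\gamma,\gamma')$ to identify the residual Peierls factor $\eu^{\iu\epsilon q\phi(\gamma,\gamma')}$, and deduce the regularity of the fiber $h_{\kk,b_0+\epsilon}$ from the exponential decay of $\mathcal{T}$. Your trivialization $\eu^{\iu\alpha(\lambda)}$ of the $\set{\pm1}$ cocycle plays the role of the factor $\eu^{\iu b_0\tilde{\eta}_1\tilde{\eta}_2/2}=\eu^{\iu\pi p\eta_1\eta_2}$ in the paper's unitary~\eqref{iunie1}.

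However, there is a genuine gap exactly at the step you yourself flag as ``the main obstacle'', and the remedy you propose does not close it. The phases built into your $\widetilde{T}_\lambda$ carry only the field $b_0$. Expanding the matrix element of $H_{b_0+\epsilon}$ in supercell coordinates via the decomposition of $\phi$ (as in the paper), your conjugation cancels only the $b_0$-part of the cross terms, so the factor $\eu^{\iu\epsilon[\phi(\xx+\yy,\tilde{\eta})-\phi(\xx'+\yy',\tilde{\eta}')]}$ survives, where $\xx+\yy$ (with $\xx\in\mathcal{B}_q$, $\yy\in\Y$) are the internal positions. Since $\phi(\xx+\yy,\tilde{\eta})-\phi(\xx'+\yy',\tilde{\eta}')=\phi(\xx+\yy,\tilde{\eta}-\tilde{\eta}')+\phi(\xx+\yy-\xx'-\yy',\tilde{\eta}')$, the second summand depends on $\tilde{\eta}'$ and on the internal coordinates separately: it is neither a function of $\gamma-\gamma'$ (hence cannot be absorbed into $\mathcal{T}_\epsilon$) nor of the form $\epsilon q\phi(\gamma,\gamma')$. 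A final diagonal gauge $\eu^{\iu\theta}$ with $\theta$ a quadratic polynomial in the new lattice variables alone cannot remove it, because the offending phase couples the lattice variable to the (generally non-integer) positions $\xx+\yy$ in the fiber. The fix is precisely what the paper's unitary~\eqref{iunie1} does: the gauge factor must be $\eu^{\iu(b_0+\epsilon)\phi(\tilde{\eta},\xx+\yy)}$, i.e.\ it must depend jointly on the supercell lattice variable and on the internal index and carry the full field $b=b_0+\epsilon$, not only $b_0$. Once this is corrected (and one checks that the remaining $b_0$ lattice--lattice phase reduces to $(-1)^{p(\gamma_1-\gamma_1')(\gamma_2-\gamma_2')}$, a function of $\gamma-\gamma'$), your concluding argument --- exponential decay of $\mathcal{T}$, hence real-analyticity and $\Z^2$-periodicity in $\kk$, self-adjointness by unitarity, and analyticity in $\epsilon$ through the phase factors --- is fine and coincides with the paper's.
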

\begin{proof}
Write $b_0=2\pi p/q$ with $p,q\in\mathbb{Z}$ coprime, and let $b=b_0+\epsilon$, where $\epsilon>0$. 
Define the unitary operator from $\ell^2(\Z^2) \otimes \ell^2(\Y)$ to $ \ell^2(\Gamma_q)\otimes \ell^2(\Y_q) $ acting on $ f\in \ell^2(\Z^2)\otimes \ell^2(\Y) $ by
\begin{align}\label{iunie1}
[U_b f](\tilde{\eta},\xx) &:= \eu^{\iu b_0 \tilde{\eta}_1\tilde{\eta}_2/2} \eu^{\iu b\phi(\tilde{\eta},\xx)} f(\tilde{\eta}+\underline{z},\yy) \\
& = \eu^{\iu \pi p \eta_1\eta_2} \eu^{\iu b\phi(\tilde{\eta},\xx)}f(\tilde{\eta}+\underline{z},\yy) \quad \tilde{\eta} \in \Gamma_q, \: \xx \in \Y_q ,\nonumber
\end{align}
where we have used the unique decomposition $\Y_q \ni \xx=\yy + \underline{z}$, $\underline{z} \in \mathcal{B}_q, \: \yy \in  \Y$.
We note the identity
\begin{equation*}
\phi(\tilde{\eta}+\xx,\tilde{\eta}'+\xx') = \phi(\tilde{\eta},\tilde{\eta}') + \phi(\tilde{\eta}-\tilde{\eta}',\xx+\xx') +\phi(\xx,\xx')+\phi(\xx,\tilde{\eta})-\phi(\xx',\tilde{\eta}').
\end{equation*}

By rotating $H_b$ with $U_b$ we have
\begin{multline*}
[U_b H_b U_b^*](\tilde{\eta},\xx;\tilde{\eta}',\xx')=\eu^{\iu \epsilon \phi(\tilde{\eta},\tilde{\eta}')}\cdot \\
\cdot \eu^{\iu b_0 \tilde{\eta}'_1(\tilde{\eta}_2-\tilde{\eta}_2')}\eu^{\iu b_0 (\tilde{\eta}_1-\tilde{\eta}_1')(\tilde{\eta}_2-\tilde{\eta}_2')/2}\eu^{\iu b\phi(\tilde{\eta}-\tilde{\eta}',\xx+\xx')}\eu^{\iu b\phi(\xx,\xx')}\mathcal{T}(\tilde{\eta}-\tilde{\eta}';\xx,\xx').
\end{multline*}
We observe that $b_0 \tilde{\eta}'_1(\tilde{\eta}_2-\tilde{\eta}_2')=2\pi p\eta_1'(\eta_2-\eta_2')\in 2\pi\mathbb{Z}$, thus $\eu^{\iu b_0 \tilde{\eta}'_1(\tilde{\eta}_2-\tilde{\eta}_2')} = 1$ and
\begin{equation*}
[U_bH_bU_b^*](\tilde{\eta},\xx;\tilde{\eta}',\xx')=\eu^{\iu \epsilon \phi(\tilde{\eta},\tilde{\eta}')} (-1)^{p (\eta_1-\eta_1')(\eta_2-\eta_2')}
\cdot \eu^{\iu b\phi(\tilde{\eta}-\tilde{\eta}',\xx+\xx')}\eu^{\iu b\phi(\xx,\xx')}\mathcal{T}(\tilde{\eta}-\tilde{\eta}';\xx,\xx').
\end{equation*}
Upon the identification of $\tilde{\gamma} = (q \gamma_1, \gamma_2) \in \Gamma_q$ with $\gamma = (\gamma_1, \gamma_2) \in \Z^2$ as in the comments before the statement, every operator on $\ell^2(\Gamma_q) \otimes \ell^2(\Y_q)$ is identified with an operator on $\ell^2(\Z^2) \otimes \ell^2(\Y_q)$. In particular, the above unitary conjugation of the Hofstadter-like Hamiltonian can be seen as acting in $\ell^2(\Z^2)\otimes \ell^2(\Y_q)$ with matrix elements
\begin{equation} \label{febru5}
\begin{aligned}
{\widetilde{\mathcal{H}}}_\epsilon(\gamma,\xx;\gamma',\xx')& :=\eu^{\iu \epsilon q \phi(\gamma,\gamma')}
(-1)^{p (\gamma_1-\gamma_1')(\gamma_2-\gamma_2')}\eu^{\iu (b_0+\epsilon )(\gamma_2-\gamma_2')(\xx_1+\xx_1')/2}  \\ 
& \qquad \cdot \eu^{-\iu q(b_0+\epsilon )(\gamma_1-\gamma_1')(\xx_2+\xx_2')/2}\eu^{\iu (b_0+\epsilon)\phi(\xx,\xx')} \mathcal{T}((q(\gamma_1-\gamma_1'),\gamma_2-\gamma_2');\xx,\xx').
\end{aligned}
\end{equation}
Notice that the whole expression on the right-hand side of the above, with the exception of the phase $\eu^{\iu \epsilon q \phi(\gamma,\gamma')}$, depends only on the difference $\gamma-\gamma'$. Identifying $\ell^2(\Y_q) \simeq \ell^2(\mathcal{B}_q) \otimes \ell^2(\Y) \simeq \C^q \otimes \C^N \simeq \C^{qN}$, we can thus determine a new Bloch fiber for ${\widetilde{\mathcal{H}}}_\epsilon$, which will be a matrix of size $qN\times qN$ equal to
\begin{align*}
h_{\kk,b_0+\epsilon }(\xx;\xx'):=\sum_{\gamma\in \Z^2}&  \eu^{-\iu 2\pi \kk\cdot \gamma} (-1)^{p \gamma_1\gamma_2} \eu^{\iu (\pi p/q+\epsilon/2 )\gamma_2(\xx_1+\xx_1')} \\ 
&\cdot \eu^{-\iu (\pi p+q\epsilon/2 )\gamma_1(\xx_2+\xx_2')}\eu^{\iu (b_0+\epsilon)\phi(\xx,\xx')} \mathcal{T}((q\gamma_1,\gamma_2);\xx,\xx').
\end{align*}
Moreover, the family of matrices $h_{\kk,b_0+\epsilon}$ is real-analytic in both $\kk$ and $\epsilon$. This is a simple direct consequence of the exponential decay of $\mathcal{T}(\gamma)$ as a function of $\gamma$, which in turn is a consequence of the real-analyticity in $\kk$ of the original $h_\kk$.
\end{proof}

Let us now use the above result to show how the original rational flux Hamiltonian $H_{b_0}$ is unitarily equivalent to a fibered operator. Indeed, if $\epsilon=0$, the Hamiltonian $\widetilde{\HH}_0$ is periodic, that is, it commutes with the usual translation operators by shifts in $\Z^2$. Therefore, it is possible to diagonalize it by using the usual  Bloch--Floquet theory \cite{ReedSimon4, Kuchment93, FreundTeufel16, Kuchment16}.  Consider the Bloch--Floquet transform $\mathcal{U}\sub{BF}$ defined, for every $ f \in C_0^\infty(\Z^2)\otimes \ell^2(\Y_q) $, as  
	\[ (\mathcal{U}\sub{BF} f)_{\kk}(\xx):= \sum_{{\eta} \in \Z^2} \eu^{-\iu 2 \pi \kk \cdot {\eta}} f({\eta},\xx) \,, \quad  \kk \in \Omega, \: \xx \in \Y_q, \]
	and then extended by continuity to a unitary operator $\mathcal{U}\sub{BF} \colon \ell^2(\Z^2) \otimes \ell^2(\Y_q) \to L^2(\Omega) \otimes \ell^2(\Y_q)$.
	
	Let us introduce the group of (\emph{modified}) \emph{magnetic translations} $\widehat\tau_{b_0,{\tilde{\eta}}}$ defined for every ${\tilde{\eta}} \in \Gamma_q$ by
	\begin{equation}
	\label{MagneticTranslation} \left[\widehat\tau_{b_0,{\tilde{\eta}}}f\right](\tilde{\gamma},\xx):= \eu^{\iu b_0 {\tilde{\eta}}_1{\tilde{\eta}}_2/2}\eu^{\iu b_0 \phi(\tilde{\gamma}+\xx,\tilde{\eta})} f({\tilde{\gamma}}-{\tilde{\eta}},\xx), \quad f \in  \ell^2(\Gamma_q)\otimes \ell^2(\Y_q)\, . 
	\end{equation}
	We stress that the phase factor $\eu^{\iu b_0 {\tilde{\eta}}_1{\tilde{\eta}}_2/2}$ is crucial in order to have a \emph{unitary representation} of the group $\Z^2$ (that is, $\widehat\tau_{b_0,{\tilde{\eta}}} \widehat\tau_{b_0,{\tilde{\gamma}}}= \widehat\tau_{b_0,{\tilde{\eta}}+{\tilde{\gamma}}}$ for ${\tilde{\gamma}}, {\tilde{\eta}} \in \Gamma_q$), instead of just a projective one (compare Remark~\ref{rmk:MatrixCommuteMagn} below), when $b_0 \in 2\pi\mathbb{Q}$.
	
	Define the following operator: 
	\begin{equation} \label{MagneticBF}
	\left(\mathcal{U}\sub{mBF}g\right)_{\kk}(\xx) :=\left(\mathcal{U}\sub{BF} U_{b_0}g\right)_{\kk}(\xx) = \sum_{{\eta} \in \Z^2} \eu^{-\iu 2 \pi \kk \cdot {\eta}} (\widehat\tau_{b_0,{-(q\eta_1,\eta_2)}}g)(0,\xx),  \quad g \in  C^\infty_0(\Z^2)\otimes \ell^2(\Y) \,.
	\end{equation}
	The unitary operator $\mathcal{U}\sub{mBF}$ is a modified Bloch--Floquet transform. Then we have the identity
	\begin{equation} \label{eqn:FiberedHb0}
	\mathcal{U}\sub{BF} \widetilde{\HH}_0 \mathcal{U}\sub{BF}^* = 	\mathcal{U}\sub{mBF} H_{b_0} \mathcal{U}\sub{mBF}^* = \int_{\Omega}^{\oplus} d\kk\, h(\kk) \,,
	\end{equation}
	where the fiber Hamiltonian $h(\kk) \equiv h_{\kk,b_0}$ is periodic in $\kk$ with respect to shifts in the dual lattice $\set{\kk \in \R^2 : \kk \cdot {\eta} \in 2 \pi \Z \: \forall \, {\eta} \in \Z^2} \simeq \Z^2$ and acts on functions with fixed crystal momentum $\kk$, so only on the degrees of freedom in the supercell $\Y_q$. 

If the Hamiltonian $H_{b_0}$ is gapped, then its Fermi projection $\Pi_{b_0}$ onto the (finite) gapped spectral island is also unitarily equivalent to a direct integral $\int_{\T^2}^{\oplus} \di \kk \, P(\kk)$ of (finite-rank) projections, and the same properties of regularity and periodicity in $\kk$ claimed in Proposition~\ref{prop1} for the fibers $h(\kk)$ hold for $P(\kk)$ as well.

While the original Hamiltonian $H_{b_0}$ commutes with the modified magnetic translations defined in \eqref{MagneticTranslation}, we stress, once again, that the Hamiltonian $\widetilde{\HH}_0$ is truly a periodic operator. Indeed, $\widetilde{\HH}_0$ commutes with the usual translations defined, for every $f \in \ell(\Z^2) \otimes \C^Q$, by 
	$$
	\left[\tau_{0,\eta} f\right](\gamma,\xx):=f(\gamma-\eta,\xx)\, , \qquad \eta \in \Z^2 \, .
	$$
This is reflected by \eqref{eqn:FiberedHb0}, where $\widetilde{\HH}_0$ is fibered by the Bloch--Floquet transform $\mathcal{U}\sub{BF}$.
The situation is different when we consider an irrational magnetic flux. Indeed, the original perturbed Hamiltonian $H_{b_0+\epsilon}$ commutes with the magnetic translations associated to $b_0 + \epsilon$, that is the unitary operator defined, for every $f \in \ell(\Z^2) \otimes \ell(\Y)$, by 
\begin{equation}
\left[\tau^{(b_0+\epsilon)}_{\eta}f\right](\gamma,\xx):= \eu^{\iu (b_0+\epsilon) \phi(\gamma+\xx,\eta)} f(\gamma-\eta,\xx), \quad \eta \in \Z^2 \, . 
\end{equation}
Instead, because of the action of the unitary operator $U_{b_0+\epsilon}$, the Hamiltonian ${\widetilde{\mathcal{H}}}_\epsilon$ commutes with the unitary operator defined, for every $f \in \ell(\Z^2) \otimes \C^Q$, by 
\begin{equation}
\label{eqn:tau}
\left[\tau_{\epsilon,\eta}f\right](\gamma,\xx):= \eu^{\iu \epsilon q \phi(\gamma,\eta)} f(\gamma-\eta,\xx), \quad \eta \in \Z^2 \, . 
\end{equation}  

\begin{remark} \label{rmk:MatrixCommuteMagn}
More generally, notice that any operator $A$ on $\ell^2(\Z^2) \otimes \C^Q$ whose matrix elements are of the form
\[ A(\gamma,\xx;\gamma',\xx'):= \eu^{\iu \epsilon q \phi(\gamma,\gamma')} a_{\epsilon}(\gamma-\gamma';\xx,\xx'), \quad \gamma,\gamma' \in \Z^2, \: \xx, \xx' \in \set{1, \ldots, Q}, \]
	commutes with the magnetic translations $\tau_{\epsilon,\eta}$, $\eta \in \Z^2$, defined in~\eqref{eqn:tau}. Indeed for $f \in \ell^2(\Z^2) \otimes \C^Q$
	\begin{equation} \label{luglio2}
	\begin{aligned}
	{[A \tau_{\epsilon,\eta} f]}(\gamma,\xx) & = \sum_{\gamma' \in \Z^2} \sum_{\xx=1}^{Q} \eu^{\iu \epsilon q \phi(\gamma,\gamma')} a_{\epsilon}(\gamma-\gamma';\xx,\xx') \, \eu^{\iu \epsilon q \phi(\gamma',\eta)} f(\gamma'-\eta,\xx') \\
	& = \sum_{\gamma''=\gamma'-\eta \in \Z^2} \sum_{\xx=1}^{Q} \eu^{\iu \epsilon q \phi(\gamma,\gamma''+\eta)} a_{\epsilon}(\gamma-\gamma''+\eta;\xx,\xx') \, \eu^{\iu \epsilon q \phi(\gamma'' + \eta ,\eta)} f(\gamma'',\xx') \\
	& = \eu^{\iu \epsilon q \phi(\gamma,\eta)} \sum_{\gamma'' \in \Z^2} \sum_{\xx=1}^{Q} \eu^{\iu \epsilon q \phi(\gamma-\eta,\gamma'')} a_{\epsilon}(\gamma-\eta-\gamma'';\xx,\xx') \, f(\gamma'',\xx') \\
	& = [\tau_{\epsilon, \eta} A f](\gamma,\xx),
	\end{aligned}
	\end{equation}
	where we repeatedly used the skew-symmetry of the Peierls magnetic phase $\phi(\cdot,\cdot)$.
	
	Contrary to the modified magnetic translations $\widehat\tau_{b_0,{\eta}}$ defined in \eqref{MagneticTranslation}, the translation operators $\tau_{\epsilon,\eta}$ do not form a unitary representation of the group $\Z^2$, but rather a projective one. Indeed
	\begin{equation} \label{eqn:PropMagTransl}
	\tau_{\epsilon,\eta}^* = \tau_{\epsilon,-\eta} \quad \text{and} \quad \tau_{\epsilon,\eta} \tau_{\epsilon,\eta'} = \eu^{\iu \epsilon q \phi(\eta',\eta)} \tau_{\epsilon,\eta+\eta'}, \qquad \eta,\eta' \in \Z^2.
	\end{equation}
\end{remark}

The main achievement of Proposition~\ref{prop1} is to reduce the original Hamiltonian $H_{b_0+\epsilon}$ to the product of a phase factor times a fibered operator, that is $\mathcal{T}_\epsilon$,  
whose fiber $h_{\kk,b_0+\epsilon}$ acts in a fiber space whose dimension $Q$ is independent of $\epsilon$ and which only depends on $b_0$ via $q$. This is crucial in order to control the perturbation induced by $\epsilon$, because the $\epsilon$-dependent fiber operators act in the \emph{same} space even as $\epsilon$ is varying. Even though the representation in Proposition~\ref{prop1} is valid for all values of $\epsilon$, it will be used for $\epsilon$ sufficiently small, for which the spectral properties of $H_{b_0}$ and $H_{b_0+\epsilon}$ are ``comparable'' (i.e., for which the spectral gap of the rational-flux Hamiltonian persists also at $\epsilon \ne 0$).

\section{Periodic setting: results for rational flux Hamiltonians} \label{sec:MR_Periodic}

Having established a clear reference model, we now abstract from the periodic setting of Hofstadter-like Hamiltonians satisfying a rational flux condition, and consider families of rank-$m$ orthogonal projections $\set{P(\kk)}_{\kk \in \R^d}$, $P(\kk) = P(\kk)^2 = P(\kk)^*$, acting on some Hilbert space $\Hi$, which are subject to the following conditions:
\begin{enumerate}
 \item the map $P \colon \R^d \to \mathcal{B}(\Hi)$, $\kk \mapsto P(\kk)$, is smooth (at least of class $C^1$);
 \item the map $P \colon \R^d \to \mathcal{B}(\Hi)$, $\kk \mapsto P(\kk)$, is $\Z^d$-periodic, that is, $P(\kk) = P(\kk + \mathbf{n})$ for all $\mathbf{n} \in \Z^d$.
\end{enumerate}
The rank $m$ corresponds to the number of occupied energy bands in physical applications. As discussed e.g.~in~\cite{CorneanHerbstNenciu16,MonacoPanatiPisanteTeufel18}, the same setting arises also from continuous models (described by a magnetic Schr\"{o}dinger operator as the Hamiltonian) of gapped periodic quantum systems subject to a magnetic field satisfying the rational flux property: we note that, in this case, some technical modifications are required to define the Bloch--Floquet representation, and one is led to use in this case the so-called (\emph{magnetic}) \emph{Bloch--Floquet--Zak transform} (see also \cite{FreundTeufel16}).

\begin{definition} \label{def:BlochFrame}
A \emph{Bloch vector} for the family of projections $\set{P(\kk)}_{\kk \in \R^d}$ is a map $\xi \colon \R^d \to \Hi$ such that 
\[ P(\kk) \xi(\kk) = \xi(\kk) \quad \text{for all } \kk \in \R^d. \]

A Bloch vector $\xi$ is called
\begin{enumerate}
 \item \label{item:Bloch_a} \emph{continuous} if the map $\xi \colon \R^d \to \Hi$ is continuous;
 \item \label{item:Bloch_b} \emph{periodic} if the map $\xi \colon \R^d \to \Hi$ is $\Z^d$-periodic, that is, $\xi(\kk) = \xi(\kk + \mathbf{n})$ for all $\mathbf{n} \in \Z^d$;
 \item \label{item:Bloch_c} \emph{normalized} if $\norm{\xi(\kk)} = 1$ for all $\kk \in \R^d$.
\end{enumerate}

A collection of $M$ Bloch vectors $\set{\xi_a}_{a=1}^{M}$ is said to be
\begin{enumerate}
 \item \emph{independent} (respectively \emph{orthonormal}) if the vectors $\set{\xi_a(\kk)}_{a=1}^{M} \subset \Hi$ are linearly independent (respectively orthonormal) for all $\kk \in \R^d$;
 \item a \emph{moving Parseval $M$-frame} (or $M$-frame in short) if $M \ge m$ and for every $\psi\in \Ran P(\kk)$ we have  
\begin{equation}\label{april1} 
 \psi=\sum_{a=1}^M \langle \xi_a(\kk),\psi \rangle \;\xi_a(\kk)\quad \text{or equivalently} \quad \norm{\psi}^2=\sum_{a=1}^M \left| \langle \xi_a(\kk),\psi \rangle \right|^2.
 \end{equation}
 If $M=m$, we call $\set{\xi_a}_{a=1}^{m}$ a \emph{Bloch basis}.
\end{enumerate}
\end{definition}

In general, all the above conditions on a collection of Bloch vectors compete against each other, and one has to give up some of them in order to enforce the others. As was recalled in the Introduction, this is well-known in differential geometry. Indeed, given a smooth, periodic family of projections, one can construct the associated \emph{Bloch bundle} $\mathcal{E} \to \T^d$ \cite{Panati07}, which is an Hermitian vector bundle over the (Brillouin) $d$-torus $\T^d = \R^d / \Z^d$, and Bloch vectors are nothing but \emph{sections} for this vector bundle. The \emph{topological obstruction} to construct sections of a vector bundle reflects in the impossibility to construct collections of Bloch vectors with the required properties. For example:
\begin{itemize}
 \item in general, a Bloch vector can be continuous but \emph{not} periodic, or viceversa periodic but \emph{not} continuous: in the latter case, one then speaks of \emph{local sections} of the associated Bloch bundle, defined in the patches where they are continuous;
 \item global (continuous and periodic) sections may exist, but they may \emph{vanish} in $\T^d$, thus violating the normalization condition for a Bloch vector;
 \item when $d \le 3$, the topological obstruction to construct a (possibly orthonormal) Bloch basis consisting of continuous, periodic Bloch vectors is encoded in the \emph{Chern numbers} \cite{AvisIsham79, Panati07, Monaco17}
 \begin{equation} \label{Chern}
 c_1(P)_{ij} = \frac{1}{2 \pi \iu} \int_{\T^2_{ij}} \di k_i \, \di k_j \, \Tr_\Hi \left( P(\kk) \, \left[ \partial_i P(\kk), \partial_j P(\kk) \right] \right) \quad \in \Z, \quad 1 \le i < j \le d,
 \end{equation}
 where $\T^2_{ij} \subset \T^d$ is the $2$-torus where the coordinates different from $k_i$ and $k_j$ are set equal to zero. Only when the Chern numbers vanish does a Bloch basis exist, in which case the Bloch bundle is \emph{trivial}, i.e.\ isomorphic to $\T^d \times \C^m$.
\end{itemize}

In the first part of this paper, we discuss the possibility of relaxing the condition to be a continuous, periodic, and orthonormal Bloch basis in two possible ways, by considering instead collections of $M$ Bloch vectors such that
\begin{enumerate}
 \item $M < m$, and the continuous, periodic Bloch vectors are still \emph{orthonormal};
 \item $M > m$, and the continuous, periodic Bloch vectors are still \emph{generating} (hence constitute an $M$-frame).
\end{enumerate}
In the present context of families of projections arising from gapped crystalline Hamiltonians, optimal \emph{existence} results on orthonormal sets and Parseval frames of Bloch vectors were first proved in \cite{Kuchment09,AucklyKuchment18} via general bundle-theoretic argument, as already mentioned in Section~\ref{sec:literature}. Here ``optimal'' refers to finding the optimal value $M$ in each of the two situations (the maximal $M$ in the first, and the minimal $M$ in the second). The results are summarized in the following
\begin{theorem}[\cite{AucklyKuchment18,Kuchment09}] \label{thm:Main}
Let $d \le 3$, and let $\set{P(\kk)}_{\kk \in \R^d}$ be a smooth, $\Z^d$-periodic family of orthogonal projections of rank $m$.
\begin{enumerate}[label=(\roman*),ref=\roman*]
 \item \label{Main_1} There exist at least $m-1$ independent Bloch vectors which are continuous and $\Z^d$-periodic.
 \item \label{Main_2} There exists a Parseval $(m+1)$-frame of continuous and $\Z^d$-periodic Bloch vectors (see \eqref{april1}).
 \item \label{Main_3} Assume furthermore that $c_1(P)_{ij} = 0 \in \Z$ for all $1 \le i < j \le d$, where $c_1(P)_{ij}$ is defined in \eqref{Chern}. Then, there exists an \emph{orthonormal Bloch basis} of continuous and $\Z^d$-periodic Bloch vectors.
\end{enumerate}
\end{theorem}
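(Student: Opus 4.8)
The plan is to treat the three items in reverse order, taking \ref{Main_3} as the engine and reducing \ref{Main_1} and \ref{Main_2} to it. The technical heart -- and the only place where the topological hypothesis is actually used -- will be \ref{Main_3}, proved by a ``two-step logarithm'' method that manufactures, out of the family $\set{P(\kk)}$, a smooth $\Z^d$-periodic family of unitaries $W(\kk)\in\mathcal{B}(\Hi)$ with $W(0)=\Id$ and $W(\kk)\,P(0)\,W(\kk)^{*}=P(\kk)$; an orthonormal Bloch basis is then obtained by transporting a fixed orthonormal basis of $\Ran P(0)$, namely $\xi_a(\kk):=W(\kk)\,\xi_a(0)$. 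I would build $W$ cell by cell: over a fundamental domain small enough that $\norm{P(\kk)-P(\kk')}<1$ one has the explicit Kato--Nagy intertwiner, and parallel-transporting first in the $k_1$-direction and then in the remaining directions produces an intertwiner that is \emph{almost} $\Z^d$-periodic, the defect being encoded in a smooth loop of transition unitaries $\lambda\colon\T^1\to U(m)$ (a single matrix when $d=1$). The winding number of $\det\lambda\colon\T^1\to U(1)$ equals, up to sign, $c_1(P)_{12}$; since this vanishes, $\det\lambda=\eu^{\iu\theta(k_1)}$ for a smooth real $\theta$ (the \emph{first} logarithm), and then $\lambda(k_1)\,\eu^{-\iu\theta(k_1)/m}$ takes values in the simply connected group $SU(m)$, hence equals $\eu^{\iu\Lambda(k_1)}$ for a smooth $\mathfrak{su}(m)$-valued $\Lambda$ (the \emph{second} logarithm). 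Feeding $\theta$ and $\Lambda$ back as interpolating gauge transformations cancels the defect and delivers the periodic $W$. For $d=3$ one runs the $d=2$ argument on each slice $\T^2_{12}\times\set{k_3}$ and then closes up in the $k_3$-direction, the residual obstructions being again the (assumed vanishing) Chern numbers $c_1(P)_{13}$ and $c_1(P)_{23}$; keeping everything smooth (and jointly real-analytic in any auxiliary parameters) across these iterations is the delicate bookkeeping, and I expect it to be the main obstacle.

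For \ref{Main_2} I would first construct, on an auxiliary copy of $\C$, a smooth $\Z^d$-periodic rank-one family of projections $L(\kk)$ whose Chern numbers are the negatives of those of $P$, i.e. $c_1(L)_{ij}=-\,c_1(P)_{ij}$ for all $1\le i<j\le d$; an explicit model is a Bott-type ``monopole'' projection pulled back from each coordinate $2$-torus, so this step is completely explicit. The direct sum $\widetilde P(\kk):=P(\kk)\oplus L(\kk)$ on $\Hi\oplus\C$ is then a smooth $\Z^d$-periodic family of rank-$(m+1)$ projections with \emph{all} Chern numbers zero, so \ref{Main_3} applies and yields an orthonormal Bloch basis $\set{e_a(\kk)}_{a=1}^{m+1}$ of $\Ran\widetilde P(\kk)$. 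Putting $\xi_a(\kk):=(P(\kk)\oplus 0)\,e_a(\kk)\in\Ran P(\kk)$, one checks that $\scal{e_a(\kk)}{\psi}=\scal{\xi_a(\kk)}{\psi}$ for every $\psi\in\Ran P(\kk)\subset\Ran\widetilde P(\kk)$, whence $\sum_{a=1}^{m+1}\bigl|\scal{\xi_a(\kk)}{\psi}\bigr|^2=\sum_{a=1}^{m+1}\bigl|\scal{e_a(\kk)}{\psi}\bigr|^2=\norm{\psi}^2$; thus $\set{\xi_a}_{a=1}^{m+1}$ is the desired Parseval $(m+1)$-frame, continuous (indeed real-analytic, working in that category throughout) and $\Z^d$-periodic by construction.

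For \ref{Main_1}, if $m=1$ there is nothing to prove, so assume $m\ge2$. From the Parseval $(m+1)$-frame of \ref{Main_2} one obtains a continuous $\Z^d$-periodic isometry $V(\kk)\colon\Ran P(\kk)\to\C^{m+1}$, $\psi\mapsto(\scal{\xi_a(\kk)}{\psi})_a$, whose range $\mathcal{R}(\kk)\subset\C^{m+1}$ is a rank-$m$ subbundle of the trivial bundle isomorphic to $\Ran P$, and whose orthogonal complements $\mathcal{R}(\kk)^{\perp}$ trace out a continuous map $\T^d\to\mathbb{CP}^{m}$ with image of real dimension at most $d\le3$. Since an $(m-1)$-dimensional subspace $W\subset\C^{m+1}$ meets a fixed line of $\C^{m+1}$ in real codimension $4>d$, a transversality (general-position) argument produces an explicit $W$ whose projectivization misses that image; the orthogonal projections onto $\mathcal{R}(\kk)$ of an orthonormal basis of $W$ are then linearly independent for every $\kk$, and Gram--Schmidt followed by pulling back through $V$ gives $m-1$ orthonormal, continuous, $\Z^d$-periodic Bloch vectors for $P$. (Equivalently, and more directly, one may iterate: a rank-$r$ family of projections over $\T^d$ with $r\ge2$ and $d\le3$ admits a nowhere-vanishing continuous $\Z^d$-periodic Bloch vector because the sphere $S^{2r-1}$ has vanishing homotopy in degrees below $d$; peeling off one normalized such vector at a time from $P$, down to rank one, produces the $m-1$ vectors.)
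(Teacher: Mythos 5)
Your overall architecture mirrors the paper's: parallel transport producing an obstruction unitary whose determinant winding is identified with $c_1(P)$, a two-step logarithm to build a periodic intertwiner for part~\eqref{Main_3}, and a rank-doubling trick for the $(m+1)$-frame in part~\eqref{Main_2} (the paper doubles only a rank-one leftover with its conjugate $C P_1(-\kk)C^{-1}$ after extracting $m-1$ periodic vectors via~\eqref{Main_1}, whereas you double all of $P$ with an explicit line bundle of opposite Chern numbers; your order \eqref{Main_3}$\to$\eqref{Main_2}$\to$\eqref{Main_1} is a legitimate reorganization, and your Parseval computation is correct, up to the slip that the auxiliary summand must live in $\C^n$ with $n\ge 2$, not in $\C$). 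The genuine gap is at the heart of \eqref{Main_3}: the claim that, once $\det\lambda$ is trivialized, the $SU(m)$-valued family $\lambda(k_1)\,\eu^{-\iu\theta(k_1)/m}$ equals $\eu^{\iu\Lambda(k_1)}$ with $\Lambda$ continuous and periodic ``because $SU(m)$ is simply connected''. Simple connectedness gives null-homotopy, not a global single exponential with periodic exponent. Already in $SU(2)$: let a loop approach $-\Id$ as $\exp\bigl(\iu(\pi-(t_0-t))\,\hat n_1\cdot\sigma\bigr)$ and leave as $\exp\bigl(\iu(\pi-(t-t_0))\,\hat n_2\cdot\sigma\bigr)$ with $\hat n_1,\hat n_2$ linearly independent; away from $-\Id$ any traceless self-adjoint logarithm must share the eigenprojections of the loop, so its one-sided limits at $t_0$ are nonzero multiples of $\hat n_1\cdot\sigma$ and of $\hat n_2\cdot\sigma$ respectively, which cannot coincide --- no continuous (let alone periodic) logarithm exists, although the loop is null-homotopic. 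This is exactly why the paper does not take a single logarithm: it first constructs an approximant with everywhere non-degenerate spectrum (Appendix~\ref{app:NonDegenerate}), extracts a periodic logarithm for it by tracking continuous eigenvalue arguments, and recovers $\alpha$ via a second Cayley-transform logarithm of a matrix close to $\Id$ (Proposition~\ref{thm:2-step}), the whole scheme being organized by the $N$-step factorization of Proposition~\ref{thm:alpha}. Without this, or an equivalent device, your periodic intertwiner $W$ is not constructed, and since you reduce \eqref{Main_2} and \eqref{Main_1} to \eqref{Main_3}, all three parts are affected.

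Two further points. For $d=3$ you propose running the $2$D argument on each slice $\T^2_{12}\times\set{k_3}$ and ``closing up'' in $k_3$; but continuity in $k_3$ of slicewise constructions is precisely what fails, and you acknowledge rather than resolve it. The paper avoids per-slice constructions: it solves the problem only on the hyperplane $k_1=0$ and pushes the entire residual obstruction into a two-parameter family $\alpha(k_2,k_3)$, whose null-homotopy is decided by the two winding numbers identified with $c_1(P)_{12}$ and $c_1(P)_{13}$; here your determinant/$SU(m)$ splitting would again be insufficient, since a map $\T^2\to SU(m)$ need not be a single exponential either. Finally, even granting \eqref{Main_2}, your proof of \eqref{Main_1} (generic-position choice of the $(m-1)$-plane $W$ via Sard-type transversality, or the peeling argument using $\pi_i(S^{2r-1})=0$) is an existence argument of exactly the Husemoller type quoted in Section~\ref{sec:MR}; the content of \eqref{Main_1} in the paper is a constructive replacement, obtained by conjugating the obstruction $\alpha$ into the matrix $\widetilde{\alpha}$ carrying $\det\alpha$ in one corner and, for $d=3$, by the $\gamma$-periodic (pseudo-periodic) extension of the two-step logarithm in Propositions~\ref{thm:gamma-null} and~\ref{thm:gamma}. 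As written, your \eqref{Main_1} does not deliver the advertised construction even if \eqref{Main_3} were repaired.
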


\begin{remark}
By standard arguments, which we reproduce in Appendix~\ref{app:Smoothing} for the reader's convenience, it is possible to improve the regularity of Bloch vectors if the family of projections is more regular: the only obstruction is to continuity. In other words, if for example the map $\kk \mapsto P(\kk)$ is smooth or analytic, then a continuous Bloch vector yields a smooth or real-analytic one by convolution with a sufficiently regular kernel. Moreover, one can always make sure that all the other properties (periodicity, orthogonality, \ldots) are preserved by this smoothing procedure.

As was already remarked in Section~\ref{sec:literature}, in the case of an analytic family of projections the techniques of \cite{AucklyKuchment18,Kuchment09} allow to show the existence of Bloch vectors which are analytic in the same analyticity strip. The explicit smoothing procedure mentioned above, instead, only gives a weaker real analyticity (i.e.\ analyticity of the Bloch vectors in a complex strip around the real $\kk$'s of \textit{a priori} smaller width than the one of the analyticity domain of the projections).
\end{remark}

Abstract results concerning the existence of such collections of Bloch vectors can be also found in the literature on vector bundles. For example:
\begin{enumerate}
 \item by \cite[Chap.~9, Thm.~1.2]{Husemoller94}, there exist $m-\ell_d$ continuous and periodic \emph{independent} sections of the Bloch bundle, where
 \footnote{We denote by $\lceil x \rceil$ the smallest integer $n$ such that $x \le n$.} 
 $\ell_d = \lceil (d-1)/2 \rceil$;
 \item by \cite[Chap.~8, Thm.~7.2]{Husemoller94}, there exists an \emph{$(m+r_d)$-frame} for the Bloch bundle, where $r_d = \lceil d/2 \rceil$.
\end{enumerate}
The second of the above statements can be rephrased by saying that there exists a \emph{trivial} vector bundle $\mathcal{F}$ of rank $m+r_d$ that contains $\mathcal{E}$ as a subbundle. Indeed, if $\set{\psi_a}_{a=1}^{m+r_d}$ is a moving basis for $\mathcal{F}$, then setting $\xi_a(\kk) := P(\kk) \, \psi_a(\kk)$, $a \in \set{ 1, \ldots, m+r_d}$, defines an $(m+r_d)$-frame for $\mathcal{E}$ (see also \cite{FreemanPooreWeiWyse14}). Notice that the above Theorem~\ref{thm:Main} for $d=3$ yields an optimal number ($M=m+1$) of vectors in a Parseval frame, which is actually smaller than the number $M=m+r_{d=3} = m+2$ predicted by the general, bundle-theoretic result quoted above \cite[Chap.~8, Thm.~7.2]{Husemoller94}.

This kind of results have a much broader range of applicability and hold for a large class of base manifolds (of which the base space of the Bloch bundle, namely the $d$-torus for $d \le 3$, is only a very specific case). However, their proofs rely on techniques from algebraic topology, specifically on homotopy and obstruction theory, which may not be particularly suited to numerical implementations, because for example they allow to construct the required objects only up to homotopies which are often difficult to describe analytically
\footnote{A noteworthy exception for the present discussion is provided by~\cite{GontierLevittSiraj-Dine18}, where certain homotopies in the unitary groups are computed to numerically construct Wannier functions in time-reversal symmetric topological insulators.}. 

Aiming at this type of applications in computational condensed matter physics, as already mentioned in the Introduction (see also Section~\ref{sec:Gabor} below), our first contribution in this direction is to provide an \emph{alternative, algorithmic proof} of Theorem~\ref{thm:Main}, which explicitly exhibits the optimal number of orthonormal (respectively generating) Bloch vectors via an algorithm, in a finite number of steps and working out all analytical details (mostly in the Appendices and in references therein). The algorithm we propose is sketched at the end of the next Section, and the details of our proof of Theorem~\ref{thm:Main} are presented in Sections~\ref{sec:c1=0} to~\ref{sec:thmMain(ii)}.

\subsection{Applications to Wannier functions} 

Concerning the specific case of Bloch bundles arising from condensed matter systems, the construction of (real-analytic) Bloch vectors translates to the construction of \emph{localized} (\emph{composite}) \emph{Wannier functions} for the occupied states of the magnetic Hamiltonian describing the crystal, by transforming the Bloch vectors back from the $\kk$-space representation to the position representation via the Bloch--Floquet transform~\eqref{eqn:WF} \cite{Kuchment16}. Our proof of the second part of Theorem~\ref{thm:Main} can then be rephrased as the possibility to \emph{algorithmically construct} Parseval frames for the spectral island onto $m$ gapped energy bands consisting of $m+1$ exponentially localized Wannier functions, together with their (magnetic) translates. Although, as mentioned above, the range of applicability of Theorem~\ref{thm:Main} on Bloch vectors includes also spectral projections of certain magnetic Schr\"{o}dinger operators, for simplicity, and in order to avoid too many technical conditions, we only formulate the result for Hofstadter-like Hamiltonians:
\begin{theorem} \label{thm:Main_bis}
Let $H_{b_0}$ be an Hofstadter-like Hamiltonian on $\ell^2(\Z^2) \otimes \ell^2(\Y)$ corresponding to a magnetic field $b_0 \in 2 \pi \mathbb{Q}$. Let $\Pi=\Pi_{b_0}$ be the spectral projection onto an isolated spectral island of $H_{b_0}$ consisting of $m$ energy bands, and let $\mathcal{U}\sub{mBF}\,\Pi\,\mathcal{U}\sub{mBF}^* =\int_{\T^2}^{\oplus} \di \kk \, P(\kk)$. Then:
\begin{enumerate}[label=(\roman*),ref=\roman*]
\item \label{Main_bis_1} there exists an \emph{exponentially localized Wannier Parseval frame} for the subspace $\Ran \Pi \subset \ell^2(\Z^2) \otimes \ell^2(\Y)$, i.e.~there exist $m+1$ exponentially localized vectors $w_a$, $a \in \set{1,\ldots,m+1}$, such that
\[	w = \sum_{\gamma\in \Z^2} \sum_{a=1}^{m+1} \scal{ \hat{\tau}_{b_0,\gamma} w_a}{w} \left(\hat{\tau}_{b_0,\gamma} w_a \right) \quad \text{for all } w \in \Ran \Pi; \]
 \item \label{Main_bis_2} if moreover $c_1(P) = 0 \in \Z$, where $c_1(P) \equiv c_1(P)_{12}$ is defined in \eqref{Chern}, then there exist $m$ exponentially localized vectors $w_a$, $a \in \set{1,\ldots,m}$, such that $\{\hat{\tau}_{b_0,\gamma} w_a\}_{a\in\set{1,\ldots,m}, \:\gamma \in \Z^2} $ is an \emph{orthonormal basis} of $\Ran \Pi \subset \ell^2(\Z^2) \otimes \ell^2(\Y)$.
\end{enumerate}
\end{theorem}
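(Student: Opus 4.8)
The plan is to deduce Theorem~\ref{thm:Main_bis} from Theorem~\ref{thm:Main} by transporting the fibered statement back to position space through the magnetic Bloch--Floquet transform, the exponential decay being supplied by a Paley--Wiener argument.

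First I would invoke Proposition~\ref{prop1} with $\epsilon=0$ to conjugate $H_{b_0}$ by the unitary $U_{b_0}$ into the ordinary-translation-invariant operator $\widetilde{\HH}_0$ on $\ell^2(\Z^2)\otimes\C^Q$, and then apply the standard Bloch--Floquet transform $\mathcal{U}\sub{BF}$ to fiber it; set $\mathcal{U}\sub{mBF}:=\mathcal{U}\sub{BF}\,U_{b_0}$. As recalled after Proposition~\ref{prop1}, the Fermi projection becomes $\mathcal{U}\sub{mBF}\,\Pi\,\mathcal{U}\sub{mBF}^{*}=\int_{\T^2}^{\oplus}\di\kk\,P(\kk)$, with $\set{P(\kk)}_{\kk\in\R^2}$ real-analytic, $\Z^2$-periodic and of rank $m$ --- exactly the input required by Theorem~\ref{thm:Main} in dimension $d=2\le 3$.

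For part~(i), I would apply part~(ii) of Theorem~\ref{thm:Main} to produce a continuous, $\Z^2$-periodic Parseval $(m+1)$-frame $\set{\xi_a}_{a=1}^{m+1}$ for $\set{P(\kk)}$, and then upgrade it to a \emph{real-analytic} one by the smoothing of the Remark following Theorem~\ref{thm:Main} (see Appendix~\ref{app:Smoothing}): convolve each $\xi_a$ in $\kk$ with an analytic, rapidly decaying kernel close to a $\delta$, project fiberwise onto $\Ran P(\kk)$, and renormalize by $S(\kk)^{-1/2}$, where $S(\kk):=\sum_a\ket{\xi_a(\kk)}\bra{\xi_a(\kk)}$ is invertible on $\Ran P(\kk)$ when the kernel is close enough to $\delta$; this restores the exact identity~\eqref{april1} while keeping analyticity and periodicity. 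Defining $\widetilde w_a\in\ell^2(\Z^2)\otimes\C^Q$ by the discrete analogue of the inverse Bloch--Floquet transform~\eqref{eqn:WF} and setting $w_a:=U_{b_0}^{-1}\widetilde w_a$, exponential localization follows because each analytic $\Z^2$-periodic $\xi_a$ extends holomorphically and boundedly to a complex strip $\set{\kk\in\C^2:|\Im\kk|<\delta}$, so that shifting the integration contour gives $\norm{\widetilde w_a(\gamma,\cdot)}_{\C^Q}\le C\,\eu^{-\delta'|\gamma|}$; since (Section~\ref{sec:Hofstadter}) $U_{b_0}$ is a fixed block-local rearrangement --- regrouping $\Z^2\times\Y$ into the supercells $\Y_q$ --- composed with magnetic gauge phases, the decay is inherited by $w_a$. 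The frame identity then comes from a Fourier computation: on each fiber, \eqref{april1} reads $\psi(\kk)=\sum_a\scal{\xi_a(\kk)}{\psi(\kk)}\xi_a(\kk)$, the Fourier coefficients of $\kk\mapsto\scal{\xi_a(\kk)}{\psi(\kk)}$ are exactly the numbers $\scal{T^0_\gamma\widetilde w_a}{\widetilde w}_{\ell^2}$ (with $T^0_\gamma$ the ordinary translations), and applying $\mathcal{U}\sub{BF}^{-1}$ term by term gives $\widetilde w=\sum_a\sum_{\gamma\in\Z^2}\scal{T^0_\gamma\widetilde w_a}{\widetilde w}\,T^0_\gamma\widetilde w_a$; conjugating by $U_{b_0}^{-1}$ and using $U_{b_0}^{-1}T^0_\gamma U_{b_0}=T_\gamma$ (Section~\ref{sec:Hofstadter}) yields $w=\sum_{\gamma\in\Gamma}\sum_{a=1}^{m+1}\scal{T_\gamma w_a}{w}\,T_\gamma w_a$ for all $w\in\Ran\Pi$.

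For part~(ii), assuming $c_1(P)=c_1(P)_{12}=0$ I would rerun the same argument with part~(iii) of Theorem~\ref{thm:Main} instead: this produces an analytic, $\Z^2$-periodic orthonormal Bloch basis $\set{\xi_a}_{a=1}^{m}$ (smoothed via a L\"{o}wdin-type orthonormalization, which preserves $\scal{\xi_a(\kk)}{\xi_b(\kk)}=\delta_{ab}$), and by Plancherel for the Bloch--Floquet transform the orthonormality of the fibers translates into orthonormality of $\set{T_\gamma w_a}$; since it is moreover a Parseval frame (by the argument of part~(i)), and an orthonormal Parseval frame is an orthonormal basis, the system is total in $\Ran\Pi$. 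The only genuinely new analytic input beyond Theorem~\ref{thm:Main} and Proposition~\ref{prop1} --- and the step I expect to require the most care --- is the bookkeeping for $U_{b_0}$: that it preserves exponential localization modulo passing to the supercell $\Y_q$, and that it conjugates the magnetic translations $T_\gamma$ on $\ell^2(\Z^2\times\Y)$ into the ordinary translations on $\ell^2(\Z^2)\otimes\C^Q$ used by the Bloch--Floquet transform. The Paley--Wiener step and the Fourier-series identity above are otherwise routine.
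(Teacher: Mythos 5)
Your proposal is correct and follows essentially the same route as the paper: Proposition~\ref{prop1} at $\epsilon=0$ plus the magnetic Bloch--Floquet transform reduces the statement to Theorem~\ref{thm:Main} for the fibers $P(\kk)$, real-analyticity of the Bloch frame is arranged as in Appendix~\ref{app:Smoothing}, and the inverse transform together with a Paley--Wiener argument and Plancherel yields the exponentially localized Parseval frame (respectively orthonormal basis) of Wannier functions. Your only minor deviation is obtaining an analytic Parseval frame by smoothing and renormalizing with the frame operator $S(\kk)^{-1/2}$, whereas the paper gets analyticity directly from its construction (projecting an analytic orthonormal basis of the doubled bundle); both are fine.
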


We stress again that, using our proof of Theorem~\ref{thm:Main}, the objects whose existence is claimed in Theorem~\ref{thm:Main_bis} can be constructed with a finite-step algorithm that in principle can be numerically implemented.

\subsection{Why are Parseval frames useful in solid state physics?} \label{sec:Gabor}

Inspired by \cite{Kuchment09}, we advocate the use of Parseval frames of localized Wannier functions as an efficient tool to derive tight-binding models for magnetic Hamiltonians, much in the same way as orthonormal bases are used in topologically unobstructed cases, e.g.~under a time-reversal symmetry assumption \cite{Panati07, MonacoPanati15}.

To substantiate this claim let us start by some general considerations, and recall the definition of a classical Parseval Gabor frame \cite{HanLarson00,Simon15}. For every pair $(\lambda, \gamma) \in \Z^d \times \Z^d = \Z^{2d}$ we consider the functions $\psi_{\lambda\gamma}(\x):=\eu^{2\pi\iu \lambda \cdot \x} g(\x-\gamma)$ where $g$ is a smooth function, compactly supported in $[-1,1]^d$ and such that $\sum_{\gamma\in \Z^d}|g(\x-\gamma)|^2=1$ for all $\x\in \R^d$.  It is well-known that the set $\set{\psi_{\lambda\gamma}}_{\lambda, \gamma \in \Z^d}$ forms an overcomplete Parseval frame in $L^2(\R^d)$, in the sense that any $f\in L^2(\R^d)$ can be written~as 
\[ f=\sum_{\lambda, \gamma\in \Z^d} \scal{\psi_{\lambda\gamma}}{f} \psi_{\lambda\gamma},\quad \text{with} \quad  \norm{f}^2 = \sum_{\lambda, \gamma\in \Z^d} \left| \scal{\psi_{\lambda\gamma}}{f}\right|^2. \]
Although a Parseval frame is not an orthonormal basis, one can represent any reasonable linear (pseudo-differential) operator $A$ on $L^2(\R^d)$ as an ``infinite double matrix'' acting in $\ell^2(\Z^{2d})$, where the matrix elements are given by $A(\lambda,\gamma; \lambda',\gamma') := \scal{\psi_{\lambda\gamma}}{A\psi_{\lambda'\gamma'}}$ \cite{Grochenig,FeichtingerStrohmer98}.

In applications one is typically interested in finding a generating set of vectors for the subspace which is the range of an orthogonal Fermi projection $\Pi$ onto an isolated group of $m$ bands of an Hamiltonian $H$ unitarily conjugated to a fibered operator $\int_{\T^d}^{\oplus} \di \kk \, h(\kk)$. Our proof of Theorem~\ref{thm:Main_bis} provides a way to construct the \emph{smallest} finite set of \emph{exponentially} localized functions $\set{w_a}_{1 \le a \le M}$ with $m\leq M$ such that 
\[ \Pi = \sum_{\gamma\in \Z^d} \sum_{a=1}^{M} \ket{T_\gamma \, w_a} \bra{T_\gamma \, w_a} \] 
where $\gamma \mapsto T_\gamma$ is a suitable representation of the translation group $\Z^d$ (e.g.\ $T_\gamma = \hat{\tau}_{b_0,\gamma}$ for rational-flux Hofstadter-like Hamiltonians in $d=2$). In particular, the existence of a Parseval frame allows to isometrically identify $\Ran \Pi$ with the space $\ell^2(\Z^d)\otimes \C^M$. The number $M$ is either $m$ or $m+1$, depending on the vanishing or not of the Chern numbers of the Bloch bundle associated to the fibers $\set{P(\kk)}_{\kk \in \T^d}$ of $\Pi$.

The analytic and periodic Bloch frame corresponding to the Parseval frame $\set{T_\gamma \, w_a}_{1 \le a \le M, \, \gamma \in \Z^d}$, see Theorem~\ref{thm:Main}, can be also used to construct an effective model to study, for example, the band structure of the fiber Bloch Hamiltonian $h(\kk)$ numerically through Fourier interpolation \cite{Yates_et_al07, MarzariEtAl12}. By the gap condition and a shift of the (Fermi) energy, we can assume that $h(\kk)$ has a gap at zero energy and hence has non-zero eigenvalues. Denote by $E_j(\kk)\ne 0$ the Bloch energy bands, labelled in increasing order, and by $\psi_j(\kk)$ the corresponding (normalized) Bloch eigenfunctions, $h(\kk) \psi_j(\kk) = E_j(\kk) \psi_j(\kk)$. Assume that the negative eigenvalues (below the spectral gap and the Fermi energy) are labelled by $j \in \set{1, \ldots, m}$. We have
\[ P(\kk) = \sum_{j=1}^{m} \ket{\psi_j(\kk)} \bra{\psi_j(\kk)}, \quad   h(\kk)P(\kk) = \sum_{j=1}^{m} E_j(\kk) \, \ket{\psi_j(\kk)}\bra{\psi_j(\kk)}. \]
The eigenvectors $\psi_j(\kk)$ are not necessarily smooth in $\kk$ even though $h(\kk) P(\kk)$ is smooth and periodic. Using our Parseval frame $\set{\xi_a(\kk)}_{1 \le a \le M}$ as in \eqref{april1}, we can introduce an $M \times M$ matrix $h\sub{eff}(\kk)$ acting on $\C^{M}$ and given by
\[ h\sub{eff}(\kk)_{a a'} := \scal{\xi_a(\kk)}{h(\kk) \xi_{a'}(\kk)}_{\mathcal{H}}, \quad 1\leq a,a'\leq M. \]
This matrix is both smooth and periodic; we show further that its non-zero spectrum coincides with the relevant Bloch eigenvalues $E_j(\kk)$. Define the vectors $\Psi_j(\kk)\in \C^{M}$, $j \in \set{1,\ldots,m}$, with components given by $(\Psi_j(\kk))_a = \scal{\xi_a(\kk)}{\psi_j(\kk)}_{\mathcal{H}}$, $a \in \set{1,\ldots,M}$. Then, by the Parseval property \eqref{april1},
\[
\scal{\Psi_j(\kk)}{\Psi_{j'}(\kk)}_{\C^M} = \sum_{a=1}^{M} \scal{\psi_j(\kk)}{\xi_a(\kk)}_{\mathcal{H}} \, \scal{\xi_a(\kk)}{\psi_{j'}(\kk)}_{\mathcal{H}} = \scal{\psi_j(\kk)}{\psi_{j'}(\kk)}_{\mathcal{H}} = \delta_{jj'}
\]
and furthermore, by definition
\[ h\sub{eff}(\kk) =\sum_{j=1}^{m} E_j(\kk) \ket{\Psi_j(\kk)} \bra{\Psi_j(\kk)}. \]
From the above we see that $h\sub{eff}(\kk)$ has $\Psi_j(\kk)$ as an eigenvector with corresponding eigenvalue~$E_j(\kk)$.

Even though $h\sub{eff}(\kk)$ has a (redundant) constant zero eigenvalue, no information about the non-zero spectrum is lost. In particular, the $m$ non-zero eigenvalues of the $M\times M$ matrix $h\sub{eff}(\kk)$, coinciding with the relevant Bloch bands, are periodic functions of $\kk$ and can be sampled at a few points $\kk$ in a mesh for $(-1/2,1/2)^d$. Interpolating these few points with Fourier multipliers allows to approximate the energy bands with great accuracy: this is guaranteed by the smoothness of the constructed Parseval frame $\set{\xi_a(\kk)}$, which implies a very fast decay of their Fourier coefficients (namely of the corresponding Wannier functions) and hence a fast convergence of their Fourier series, see for example \cite{Yates_et_al07,MarzariEtAl12} and references therein.

\section{Non-periodic setting: results for irrational flux Hamiltonians}
\label{sec:MR_irrational}

Once the construction of Parseval frames is established for periodic projections, it is a legitimate question to ask whether it is possible to extend this result to systems that are not periodic. Our second novel result goes in this direction. As was explained in Section~\ref{sec:Model}, one such situation is provided by Hofstadter-like Hamiltonians on 2-dimensional crystals subject to a magnetic field which has irrational flux through the fundamental cell, in units of the magnetic flux quantum. As soon as the rationality condition is not satisfied, the Bloch bundle construction fails. This is due to the fact that, despite the Hamiltonian is still commuting with the set of magnetic translations, they are not a unitary representation of the translation group $\Z^2$, but only a projective one. Therefore, since there is no $\kk$-space description, one is forced to build spanning sets of localized vectors for the Fermi projection onto an isolated spectral island directly in position-space. 

We approach the problem of an irrational magnetic flux perturbatively and set $b = b_0 + \epsilon$ with $b_0 q \in 2\pi \mathbb{Z}$ for $q \in \mathbb{N}$ and $0\le \epsilon \ll 1$.  We assume that the periodic Hamiltonian $\widetilde{\mathcal{H}}_{0}$ in~\eqref{apr1'}, which is unitarily equivalent to the Hofstadter-like Hamiltonian $H_{b_0}$, has an isolated spectral island consisting of $m$ bands which are associated to a Fermi projection $\widetilde{\mathcal{P}}_{0}$ unitarily equivalent to the fibered operator $\int_{\T^2}^{\oplus} \di \kk \, P_0(\kk)$. Notice that, in the periodic Hamiltonian $\widetilde{\mathcal{H}}_{0}$, the information about the magnetic field $b_0$ is encoded in the translation invariant matrix elements and the Peierls phase is absent, therefore $\widetilde{\mathcal{H}}_{0}$ is a sort of effective reference non-magnetic Hamiltonian. If $\epsilon$ is small enough, then $\widetilde{\mathcal{H}}_{\epsilon}$ will also have an isolated spectral island \cite{Cornean} associated to a Fermi projection $\widetilde{\mathcal{P}}^{(\epsilon)}$, with $\widetilde{\mathcal{P}}^{(\epsilon=0)}=\widetilde{\mathcal{P}}_{0}$; notice that the number of magnetic mini-bands may change. Note that, as it was explained in Section \ref{sec:Model}, both $\widetilde{\mathcal{H}}_{\epsilon}$ and $\widetilde{\mathcal{P}}_{\epsilon}$ commute with the unitary operator defined in \eqref{eqn:tau}.  Then our second main result is the following. 
\begin{theorem} \label{thm:Main2}
For $\eta \in \Z^2$, let $\tau_{\epsilon,\eta}$ be the unitary given defined in \eqref{eqn:tau}.
Then there exists $\epsilon_0>0$ such that for all $0\leq \epsilon\leq \epsilon_0$ the following hold:
\begin{enumerate}[label=(\roman*),ref=\roman*]
\item there exist $m+1$ exponentially localized vectors $\big\{w_a^{(\epsilon)}\big\}_{1 \le a \le m+1}$ such that 
\begin{equation} \label{apr7}
\widetilde{\mathcal{P}}^{(\epsilon)} := \sum_{\eta \in \Z^2} \sum_{a=1}^{m+1} \ket{\tau_{\epsilon,\eta}\, w_a^{(\epsilon)}} \bra{\tau_{\epsilon,\eta}\, w_a^{(\epsilon)}};
\end{equation}
 \item if moreover $c_1(P_{0}) = 0 \in \Z$, where $c_1(P_{0}) \equiv c_1(P_{0})_{12}$ is defined in \eqref{Chern}, then there exist $m$ exponentially localized vectors $\big\{w_a^{(\epsilon)}\big\}_{1 \le a \le m}$ such that $\big\{\tau_{\epsilon,\eta} \, w_a^{(\epsilon)}\big\}_{1 \le a \le m, \:\eta \in \Z^2}$ is an \emph{orthonormal basis} of $\Ran \widetilde{\mathcal{P}}^{(\epsilon)}$.
\end{enumerate}
\end{theorem}

In the same spirit of the proof of Theorem~\ref{thm:Main}, our argument for the above result provides a constructive algorithm consisting of finitely many steps which exhibits the required Wannier functions.

The above result can be rephrased in terms of the original Hofstadter-like Hamiltonian $H_b= H_{b_0+\epsilon}$. In order to do so, we refer to the notion, introduced in \cite{NenciuNenciu93,NenciuNenciu98} (see also \cite{CorneanHerbstNenciu16}), of a generalized Wannier basis or Parseval frame for Hamiltonians which do not commute with a unitary representation of the group $\Z^2$.

\begin{definition}
An \emph{exponentially localized generalized Wannier basis} (respectively \emph{Parseval frame}) for the projection $\Pi$ acting in $\ell^2(\Z^2)\otimes\C^Q$ is a couple $\left(\Gamma,\mathcal{W}\right)$, where $\Gamma$ is a discrete subset of $\R^2$, and $\mathcal{W}=\{\psi_{\gamma,a}\}_{\gamma \in \Gamma, \: 1\leq a\leq m(\gamma)<m^*}$, with $m^*>0$ and independent of $\gamma$, is an orthonormal basis (respectively Parseval frame) for the range of $\Pi$ such that
\begin{equation*}
 \sum_{\eta \in \Z^2} \sum_{\xx=1}^{Q} |\psi_{\gamma,a} (\eta,\xx)|^2 \, \eu^{\beta\|\eta-\gamma\|} \leq M, \quad a \in \set{1,\ldots,m(\gamma)},
\end{equation*}
for some positive constants $\beta,M>0$ uniform in $\gamma$.
\end{definition}

Consider now the projection $\Pi_{b}$ of the original Hofstadter-like Hamiltonian. As was explained in Section~\ref{sec:Model}, at $b = b_0$ the Hamiltonian $H_{b_0}$ is fibered by the magnetic Bloch--Floquet transform, $\mathcal{U}\sub{mBF} H_{b_0} \mathcal{U}\sub{mBF}^* = \int_{\T^2}^{\oplus} \di \kk \, h(\kk)$ (compare~\eqref{eqn:FiberedHb0}), and correspondingly $\mathcal{U}\sub{mBF} \Pi_{b_0} \mathcal{U}\sub{mBF}^* = \int_{\T^2}^{\oplus} \di \kk \, P(\kk)$. Then the following result easily follows from Theorem \ref{thm:Main2} and Proposition \ref{prop1}.
 
\begin{corollary} \label{CorGWB}
There exists $\epsilon_0>0$ such that for all $0\leq \epsilon\leq \epsilon_0$ the following hold:
\begin{enumerate}[label=(\roman*),ref=\roman*]
	\item there exists an exponentially localized generalized Wannier Parseval frame for the projection $\Pi_{b=b_0+\epsilon}$ that is given by the couple $	\big(\Z^2,\{U^*_b \tau_{\epsilon,\eta} \, w_a^{(\epsilon)}\}_{\eta \in \Z^2, \: 1 \le a \le m+1}\big)$ and satisfies 
	\begin{equation*}
	\Pi_{b}=  \sum_{\eta \in \Z^2} \sum_{a=1}^{m+1} \ket{U^*_b \tau_{\epsilon,\eta}\, w_a^{(\epsilon)}} \bra{U^*_b \tau_{\epsilon,\eta}\, w_a^{(\epsilon)}}.
	\end{equation*}
	\item if moreover $c_1(P) = 0 \in \Z$, where $c_1(P) \equiv c_1(P)_{12}$ is defined in \eqref{Chern}, then there exists an exponentially localized generalized Wannier \emph{basis} for the projection $\Pi_{b=b_0+\epsilon}$, given by	$\big(\Z^2,\{U^*_b \tau_{\epsilon,\eta} \, w_a^{(\epsilon)}\}_{\eta \in \Z^2, \: 1 \le a \le m}\big)$.
\end{enumerate}
\end{corollary}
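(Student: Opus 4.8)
The plan is to obtain both parts by transporting Theorem~\ref{thm:Main2} through the unitary $U_b$ of Proposition~\ref{prop1}; the only real content is that this transport preserves Parseval frames, orthonormal bases, and exponential localization. Fix $b=b_0+\epsilon$ with $0\le\epsilon\le\epsilon_0$, where $\epsilon_0$ is the constant of Theorem~\ref{thm:Main2}. By Proposition~\ref{prop1} (together with the harmless normalization $b_0\mapsto 0$, $q\mapsto 1$ and the attendant rescaling of $\epsilon$ described in Section~\ref{sec:MR}, itself a further unitary of the same ``scaling plus phase'' type) one has $H_b=U_b^{*}\,\widetilde{\HH}_\epsilon\,U_b$, with $\widetilde{\HH}_\epsilon$ acting in $\ell^2(\Z^2)\otimes\C^{Q}$ for a fixed $Q$. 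Since the spectral island stays isolated for $\epsilon$ small (which is part of the hypothesis of Theorem~\ref{thm:Main2}), functional calculus with the characteristic function of an interval enclosing it — equivalently a Riesz contour integral around it — commutes with $U_b$ and gives at once $\Pi_b=U_b^{*}\,\widetilde{\mathcal{P}}^{(\epsilon)}\,U_b$, where $\widetilde{\mathcal{P}}^{(\epsilon)}$ is the Fermi projection of Theorem~\ref{thm:Main2}.

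Next I would conjugate the frame identity. For any $\psi$ one has the elementary identity $U_b^{*}\ket{\psi}\bra{\psi}U_b=\ket{U_b^{*}\psi}\bra{U_b^{*}\psi}$, because $\langle U_b^{*}\psi,\,\cdot\,\rangle=\langle\psi,\,U_b\,\cdot\,\rangle$. Applying it term by term to the decomposition \eqref{apr7} of $\widetilde{\mathcal{P}}^{(\epsilon)}$ yields
\[
\Pi_b=\sum_{\eta\in\Z^2}\sum_{a=1}^{m+1}\ket{U_b^{*}\,\tau_{\epsilon,\eta}\,w_a^{(\epsilon)}}\bra{U_b^{*}\,\tau_{\epsilon,\eta}\,w_a^{(\epsilon)}},
\]
which is exactly the assertion that $\{U_b^{*}\,\tau_{\epsilon,\eta}\,w_a^{(\epsilon)}\}$ is a Parseval frame for $\Ran\Pi_b$; likewise, applying the unitary $U_b^{*}$ to the orthonormal basis of $\Ran\widetilde{\mathcal{P}}^{(\epsilon)}$ furnished by part~(ii) of Theorem~\ref{thm:Main2} gives an orthonormal basis of $\Ran\Pi_b$. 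One also has to match the hypotheses: the normalization of Section~\ref{sec:MR} intertwines $\mathcal{U}\sub{mBF}\,\Pi_{b_0}\,\mathcal{U}\sub{mBF}^{*}$ with $\widetilde{\mathcal{P}}_0$ fiberwise, so the associated Bloch bundles are isomorphic and $c_1(P)=c_1(P_0)$; thus ``$c_1(P)=0$'' in the corollary is the same condition as ``$c_1(P_0)=0$'' in Theorem~\ref{thm:Main2}.

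The only step requiring a genuine estimate — and hence the main, if modest, obstacle — is to check that $U_b$ does not destroy exponential localization and to pin down the discrete localization set $\Gamma$. By Proposition~\ref{prop1}, $U_b$ is the composition of multiplication by a position-dependent phase, which is diagonal in the $(\gamma,\xx)$ variables and leaves $|f(\gamma,\xx)|$ pointwise unchanged, with a ``scaling'' unitary that merely regroups the $q$ cells of each supercell and hence has uniformly (indeed constantly, $q$ being fixed) bounded fibers. Consequently, a vector that by Theorem~\ref{thm:Main2} is exponentially localized around the supercell labelled $\eta\in\Z^2$ is mapped by $U_b^{*}$ to a vector of $\ell^2(\Z^2\times\Y)$ that is exponentially localized — with the same rate up to a fixed loss — around the image $c(\eta)$ of $\eta$ under the injective, affine (hence bi-Lipschitz) supercell map. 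Taking $\Gamma:=\{c(\eta):\eta\in\Z^2\}$ (a sublattice of $\Z^2$, which one identifies with $\Z^2$ as an index set) and enlarging $M$, shrinking $\beta$ uniformly in $\eta$, one gets the required bound
\[
\sum_{\zeta\in\Z^2}\sum_{\xx=1}^{Q}\bigl|\bigl(U_b^{*}\,\tau_{\epsilon,\eta}\,w_a^{(\epsilon)}\bigr)(\zeta,\xx)\bigr|^{2}\,\eu^{\beta\,\|\zeta-c(\eta)\|}\le M .
\]
Combining the three steps, $\bigl(\Gamma,\{U_b^{*}\,\tau_{\epsilon,\eta}\,w_a^{(\epsilon)}\}\bigr)$ is the asserted exponentially localized generalized Wannier Parseval frame for $\Pi_{b_0+\epsilon}$, and an orthonormal basis in case~(ii); no smallness beyond the $\epsilon_0$ already required by Theorem~\ref{thm:Main2} enters.
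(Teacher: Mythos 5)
Your proposal is correct and follows essentially the same route as the paper: conjugate the decomposition \eqref{apr7} (and the orthonormal basis in case (ii)) by the unitary $U_b^{*}$ of Proposition~\ref{prop1}, note that spectral projections transform covariantly so $\Pi_b=U_b^{*}\widetilde{\mathcal{P}}^{(\epsilon)}U_b$, and observe that $U_b$ is just multiplication by a local phase (plus the fixed supercell relabelling), hence preserves exponential localization. The paper treats this as immediate, adding only the remark about $U_b$ not spoiling localization, which your last step spells out in slightly more detail.
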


Let us stress that the unitary $U_b$ consists just of multiplication by a local phase (compare~\eqref{iunie1}), hence it does not spoil the localization properties of the function on which it is applied.

The proofs of Theorem \ref{thm:Main2} and of its Corollary \ref{CorGWB}, crucially rely on Combes--Thomas estimates and on \emph{gauge covariant magnetic perturbation theory} for discrete magnetic Hamiltonians, which we briefly review in Appendix~\ref{app:KernelEstimates}. These techniques are available also for continuous magnetic Schr\"{o}dinger operators: see \cite{CorneanNenciu} for the Combes--Thomas estimates and \cite{Nenciu02,Cornean,CorneanMonacoMoscolari18} for magnetic perturbation theory. Thus, our proofs can be generalized to the continuous setting with only minor efforts (compare Remark~\ref{ExtensionToContinuous}). 

The generalized Wannier Parseval frame $\big\{\psi_{\eta,a} := U^*_b \tau_{\epsilon,\eta}\, w_a^{(\epsilon)}\big\}_{\eta \in \Z^2, 1 \le a \le m}$ provided by the above Corollary allows to construct effective Hamiltonians $h_{b, \text{eff}}(\eta,a; \eta',a') := \scal{\psi_{\eta,a}}{H_b \psi_{\eta',a'}}$ on $\ell^2(\Z^2) \otimes \C^m$, from which spectral properties of the restriction to the isolated spectral island of the original Hofstadter-like Hamiltonian $H_b$ can be investigated, compare Section~\ref{sec:Gabor} above.

Since we consider Corollary~\ref{CorGWB} as the most important and novel contribution of our paper, we briefly sketch here the steps of its proof, serving also as an outline for the expert reader of the algorithmic construction of exponentially localized (generalized) Wannier functions claimed above, at least in 2-dimensions.
\begin{description}[leftmargin=4em]
 \item[Step 1] First look at $b=b_0 \in 2\pi\mathbb{Q}$, or, said otherwise, at $\epsilon=0$. The projection $\Pi_0$ is unitarily equivalent, via the modified Bloch--Floquet transform $\mathcal{U}\sub{mBF}$, to an analytic and $\Z^2$-periodic family of rank-$m$ projections $\set{P(\kk)}_{\kk\in\R^2}$. Via a modified \emph{parallel transport} in the second direction, one can extend any orthonormal basis for $P(\mathbf{0})$ to a smooth, $\Z$-periodic orthonormal Bloch basis for $\set{P(0,k_2)}_{k_2\in\R}$. Again parallel transport in the first direction will lead to a smooth orthonormal Bloch basis $\set{\psi_a(\kk)}_{\kk\in\R^2}$ for $\set{P(\kk)}_{\kk\in\R^2}$, which fails however to be periodic in $k_1$:
 \[ \psi_b(k_1+1,k_2) = \sum_{a=1}^{m} \psi_a(k_1,k_2)\, \alpha(k_2)_{ab}, \quad b \in \set{1,\ldots,m}. \]
 The unitary matrix $\alpha(k_2)$ is called the \emph{matching matrix}.
 \item[Step 2] Via a \emph{two-step logarithm}, the matching matrix can be deformed continuously to a diagonal matrix having all $1$'s as the first $m-1$ diagonal entries and a $k_2$-dependent phase as the last entry. The latter can also be ``unwinded'', that is, made equal to $1$ for all $k_2 \in \R$, exactly when the Chern number of $P$ vanishes. Deforming the matching matrix allows in turn to modify the $\psi_a$'s to a new orthonormal Bloch basis where $m-1$ vectors are also $\Z^2$-periodic, while the last one picks up a (topological) phase when looping in the first direction over the Brillouin torus. A smoothing procedure further allows to choose this Bloch basis as a regular function of $\kk$.
 \item[Step 3] Define $P_1(\kk)$ and $P_2(\kk)$ to be the subprojections of $P(\kk)$ onto the space spanned by the first $m-1$ Bloch vectors constructed before and onto the orthogonal complement of their span, respectively. Then $\set{P_2(\kk)}_{\kk\in\R^2}$ is a smooth and $\Z^2$-periodic family of rank-1 projections. We \emph{double the space dimension}, and consider the projection $P_2(\kk) \oplus (C P_2(\kk) C^{-1})$, where $C$ is a complex conjugation operator. This family of projections is topologically trivial and, by what was explained in Step 1, it admits a smooth and $\Z^2$-periodic Bloch basis consisting of two vectors. By projecting these two Bloch vectors back to the original space, we obtain a Parseval $2$-frame for the rank-1 projection $P_2(\kk)$, and consequently also a Parseval $(m+1)$-frame for $P(\kk)$, having all the desired properties.
 \item[Step 4] When applied to a rational-flux Hofstadter-like Hamiltonian $H_{b_0}$, the previous Steps produce an exponentially localized Wannier Parseval frame $\set{\hat{\tau}_{b_0,\gamma} w_a}_{1\le a \le m, \gamma \in \Z^2}$ for $\Pi_{b_0}$ by using the inverse magnetic Bloch--Floquet transform. We now perturb around $b_0$, passing to $b=b_0+\epsilon$. Recall from Proposition~\ref{prop1} that $H_{b_0+\epsilon} = U_{b_0+\epsilon}^* \, \widetilde{\mathcal{H}}_{\epsilon} \, U_{b_0+\epsilon}$, where the matrix elements of $\widetilde{\mathcal{H}}_{\epsilon}$ have the form \eqref{apr1'}. Inspired by gauge covariant magnetic perturbation theory \cite{CorneanNenciu98,Nenciu02,Cornean}, we consider the auxiliary Hamiltonian $\mathcal{H}_{\epsilon}$ defined through its matrix elements as in \eqref{apr1'}, but with the hopping $\mathcal{T}_{\epsilon}$ replaced by $\mathcal{T}_{0}$. We prove that the spectra of $\widetilde{\mathcal{H}}_{\epsilon}$ and of $\mathcal{H}_{\epsilon}$ are close (in the Hausdorff distance), and hence in particular that to every spectral projection $\widetilde{\mathcal{P}}_{\epsilon}$ onto a gapped spectral island of $\widetilde{\mathcal{H}}_{\epsilon}$ there corresponds a spectral projection $\mathcal{P}_{\epsilon}$ of $\mathcal{H}_{\epsilon}$; even more, for $|\epsilon|$ small enough the two projections are unitarily conjugated via a \emph{Kato--Nagy unitary} $K_\epsilon$. Thus a Wannier Parseval frame can be constructed for $\Pi_b$ if and only if it can be constructed for $\mathcal{P}_{\epsilon}$, since the two are unitarily conjugated via $U_{b_0+\epsilon} K_\epsilon$: the decay properties of the matrix elements of the latter unitary imply that localization is preserved under this unitary map.
 \item[Step 5] The projection $\mathcal{P}_{\epsilon}$ enjoys a number of properties, which we summarize in the Definition~\ref{dfn:Fermi-like} of a \emph{Fermi-like magnetic projection}. In particular, it is $\epsilon$-close to the operator whose matrix elements are equal to the ones of $\mathcal{P}_0$ multiplied by the $\epsilon$-dependent Peierls magnetic phase $\eu^{\iu \epsilon \phi(\cdot,\cdot)}$. The latter is not a projection anymore (it squares to itself only up to errors of order $\epsilon$), but it is much better-behaved as a function of $\epsilon$. Exploiting gauge covariant magnetic perturbation theory coupled with the procedure in Step 3, we find localized vectors close to the magnetic translates via $\tau_{\epsilon,\gamma}$ of the ones constructed previously at $\epsilon=0$, which give the required Parseval frame for $\mathcal{P}_{\epsilon}$ (and hence for $\Pi_b$ by Step 4). See the discussion after Proposition \ref{thm:Main3} for a more detailed description of this procedure.
\end{description}


\section{Proof of Theorem~\ref{thm:Main}\eqref{Main_3}: the topologically trivial case} \label{sec:c1=0}

We begin by proving Theorem~\ref{thm:Main}\eqref{Main_3} since elements of this proof will be essential for the other two parts of Theorem~\ref{thm:Main}. Thus we assume throughout this Section that $\set{P(\kk)}_{\kk \in \R^d}$, $d \le 3$, is a smooth and $\Z^d$-periodic family of rank-$m$ projections on the Hilbert space $\Hi$ with vanishing Chern numbers. We will construct an orthonormal Bloch basis (so, a $m$-tuple of orthogonal Bloch vectors) which is continuous and $\Z^d$-periodic. To stress that our proofs are algorithmic and explicit in nature, we use the phrase ``one can construct\ldots'' in many of the following statements.

\subsection{The 1D case}

We start from the case $d=1$. Notice that any $1$-dimensional family of projections $\set{P(k)}_{k \in \R}$ is topologically trivial, that is, it has vanishing Chern numbers (as there are no non-zero differential $2$-forms on the circle $\T$). 

Let $T(k,0)$ denote the parallel transport unitary along the segment from the point $0$ to the point $k$ associated to $\set{P(k)}_{k \in \R}$ (see Appendix~\ref{app:Parallel} for more details). At $k=1$, write $T(1,0) = \eu^{\iu  M}$, where $M = M^* \in \mathcal{B}(\Hi)$ is self-adjoint. 

Pick an orthonormal basis $\set{\xi_a(0)}_{a=1}^{m}$ in $\Ran P(0) \simeq \C^m \subset \Hi$, and define for $a \in \set{1, \ldots, m}$ and $k \in \R$
\[ \xi_a(k) := W(k) \, \xi_a(0), \quad W(k) := T(k,0) \eu^{- \iu k M}. \]
Then $\set{\xi_a}_{a=1}^{m}$ gives a continuous, $\Z^2$-periodic, and orthonormal Bloch basis for the $1$-dimensional family of projections $\set{P(k)}_{k \in \R}$ (compare \cite{CorneanHerbstNenciu16, CorneanMonacoTeufel17}). This proves Theorem~\ref{thm:Main} in $d=1$ (where the only non-trivial statement is part~\eqref{Main_3}).

\subsection{The induction argument in the dimension} \label{sec:induction} 

Consider a smooth and periodic family of projections $\set{P(k_1, \kk)}_{(k_1,\kk) \in \R^d}$, and let $D := d-1$. Assume that the $D$-dimensional restriction $\set{P(0,\kk)}_{\kk \in \R^{D}}$ admits a continuous and $\Z^{D}$-periodic orthonormal Bloch basis $\set{\xi_a(0,\cdot)}_{a=1}^{m}$. Consider now the parallel transport unitary $T_{\kk}(k_1,0)$ along the straight line from the point $(0,\kk)$ to the point $(k_1,\kk)$. At $k_1=1$, denote $\mathcal{T}(\kk) := T_{\kk}(1,0)$. Define
\begin{equation} \label{psi}
\psi_a(k_1,\kk) := T_{\kk}(k_1,0) \, \xi_a(0,\kk), \quad a \in \set{1, \ldots, m}, \: (k_1,\kk) \in \R^d.
\end{equation}
The above defines a collection of $m$ Bloch vectors for $\set{P(k_1,\kk)}_{(k_1,\kk) \in \R^d}$ which are continuous, orthonormal, and $\Z^{D}$-periodic in the variable $\kk$, but in general fail to be $\Z$-periodic in the variable $k_1$. Indeed, one can check that
\begin{equation} \label{alpha}
\psi_b(k_1+1, \kk) = \sum_{a=1}^{m} \psi_a(k_1, \kk) \, \alpha(\kk)_{ab}, \quad \text{where} \quad \alpha(\kk)_{ab} := \scal{\xi_a(0,\kk)}{\mathcal{T}(\kk) \, \xi_b(0,\kk)}
\end{equation}
(compare \cite[Eqn.s~(3.4) and (3.5)]{CorneanMonacoTeufel17}). The family $\set{\alpha(\kk)}_{\kk \in \R^{D}}$ defined above is a continuous and $\Z^{D}$-periodic family of $m \times m$ unitary matrices.

The possibility of ``rotating'' $\alpha(\kk)$ to the identity entails thus the construction of a Bloch basis which is also periodic in $k_1$. Formally, we have the following statement (compare also \cite[Thm.s~2.4 and 2.6]{CorneanMonaco17}).

\begin{proposition} \label{thm:alpha}
For the continuous and periodic family $\set{\alpha(\kk)}_{\kk \in \R^D}$ defined in \eqref{alpha}, the following are equivalent:
\begin{enumerate}
 \item  \label{alpha_1} the family is \emph{null-homotopic}, namely there exists a collection of continuous and $\Z^D$-periodic family of unitary matrices $\set{\alpha_t(\kk)}_{\kk \in \R^D}$, depending continuously on $t \in [0,1]$, and such that $\alpha_{t=0}(\kk) \equiv \Id$ while $\alpha_{t=1}(\kk) = \alpha(\kk)$ for all $\kk \in \R$;
 \item \label{alpha_2} assuming $D \le 2$, we have $\deg_j(\det \alpha)=0$ for all  $j \in \set {1, \ldots, D}$. In the smooth case, this is the same as:
 \begin{equation} \label{deg}
 \deg_j(\det \alpha) = \frac{1}{2 \pi \iu} \int_{0}^{1} \di k_j \, \tr_{\C^m} \left( \alpha(\kk)^* \frac{\partial \alpha}{\partial k_j}(\kk) \right) = 0 \quad \text{for all } j \in \set {1, \ldots, D};
 \end{equation}
 \item \label{alpha_3} the family admits a continuous and $\Z^D$-periodic \emph{$N$-step logarithm}, namely there exist $N$ continuous and $\Z^D$-periodic families of self-adjoint matrices $\set{h_i(\kk)}_{\kk \in \R^D}$, $i \in \set{1, \ldots, N}$, such that
 \begin{equation} \label{multilog}
 \alpha(\kk) = \eu^{\iu  h_1(\kk)} \cdots \eu^{\iu  h_N(\kk)}, \quad \kk \in \R^D;
 \end{equation}
 \item  \label{alpha_4} there exists a continuous family of unitary matrices $\set{\beta(k_1, \kk)}_{(k_1, \kk) \in \R^d}$, $d = D+1$, which is $\Z^D$-periodic in $\kk$, with $\beta(0,\kk) \equiv \Id$ for all $\kk \in \R^D$, and such that
 \[ \alpha(\kk) = \beta(k_1, \kk) \, \beta(k_1 + 1, \kk)^{-1}, \quad (k_1, \kk) \in \R^d; \]
 \item \label{alpha_5} there exists a continuous and $\Z^d$-periodic Bloch basis $\set{\xi_a}_{a=1}^{m}$ for $\set{P(\kk)}_{\kk \in \R^d}$.
\end{enumerate}
\end{proposition}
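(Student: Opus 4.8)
The plan is to prove the chain of equivalences in Proposition~\ref{thm:alpha} by establishing a cycle of implications, taking advantage of the fact that several of these statements are standard facts about maps into the unitary group $U(m)$, while the genuinely new content lies in connecting these statements to the existence of a periodic Bloch basis. I would organize the argument as $\eqref{alpha_5} \Rightarrow \eqref{alpha_4} \Rightarrow \eqref{alpha_1} \Rightarrow \eqref{alpha_3} \Rightarrow \eqref{alpha_5}$, inserting the equivalence $\eqref{alpha_1} \Leftrightarrow \eqref{alpha_2}$ as a separate homotopy-theoretic statement once we are in the case $D \le 2$.

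\textbf{The easy implications.} For $\eqref{alpha_5} \Rightarrow \eqref{alpha_4}$: given a continuous, $\Z^d$-periodic Bloch basis $\set{\xi_a}$, define $\beta(k_1,\kk)$ to be the unique unitary on $\Ran P(k_1,\kk)$ sending $\psi_a(k_1,\kk)$ (from \eqref{psi}) to $\xi_a(k_1,\kk)$ — more precisely $\beta$ is the matrix of change of basis, $\beta(k_1,\kk)_{ab} = \scal{\psi_a(k_1,\kk)}{\xi_b(k_1,\kk)}$ — normalized so that $\beta(0,\kk) \equiv \Id$. The relation $\alpha(\kk) = \beta(k_1,\kk)\beta(k_1+1,\kk)^{-1}$ then follows by comparing the transformation law \eqref{alpha} for $\psi_b$ against the $\Z$-periodicity of $\xi_b$. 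For $\eqref{alpha_4} \Rightarrow \eqref{alpha_1}$: the path $t \mapsto \beta(t k_1, \kk)\,\beta(t(k_1+1),\kk)^{-1}$ evaluated at $k_1$ fixed (say $k_1 = 0$, so $\alpha_t(\kk) := \beta(0,\kk)\beta(t,\kk)^{-1} = \beta(t,\kk)^{-1}$ after using $\beta(0,\cdot)=\Id$) provides the null-homotopy — one checks $\alpha_0 = \Id$ and $\alpha_1 = \beta(1,\kk)^{-1}$; a small reindexing using the $\Z$-periodicity in the $k_1$-slot that $\beta$ need not have is needed to match $\alpha(\kk) = \beta(0,\kk)\beta(1,\kk)^{-1} = \beta(1,\kk)^{-1}$. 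For $\eqref{alpha_1} \Rightarrow \eqref{alpha_3}$: a null-homotopy $\alpha_t$ with $\alpha_0 = \Id$ can be subdivided, using uniform continuity and compactness of $\T^D$, into $N$ short steps $\alpha_{t_{i-1}}^{-1}\alpha_{t_i}$ each lying within the injectivity radius of $\exp$ around $\Id$, hence each equal to $\eu^{\iu h_i(\kk)}$ for a continuous, periodic self-adjoint $h_i$; telescoping gives \eqref{multilog} (this is precisely the ``multi-step logarithm'' device referenced in \cite{CorneanMonaco17}).

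\textbf{The main implication.} The crux is $\eqref{alpha_3} \Rightarrow \eqref{alpha_5}$: from an $N$-step logarithm we must actually build the periodic Bloch basis. Starting from the non-periodic basis $\psi_a(k_1,\kk)$ of \eqref{psi}, which satisfies $\psi_b(k_1+1,\kk) = \sum_a \psi_a(k_1,\kk)\alpha(\kk)_{ab}$, the idea is to absorb the obstruction $\alpha(\kk)$ factor by factor. Write $\beta(k_1,\kk) := \eu^{\iu \rho_1(k_1) h_1(\kk)} \cdots \eu^{\iu \rho_N(k_1) h_N(\kk)}$ where $\rho_i \colon \R \to \R$ are fixed smooth functions interpolating the value $0$ near integers shifted appropriately and the value $1$ elsewhere, arranged so that $\beta(0,\kk) = \Id$ and $\beta(k_1,\kk)\beta(k_1+1,\kk)^{-1} = \alpha(\kk)$ — this requires splitting the unit interval into $N$ subintervals and turning on one factor $\eu^{\iu h_i}$ per subinterval, a routine but slightly fiddly bookkeeping exercise (essentially the proof of $\eqref{alpha_3}\Rightarrow\eqref{alpha_4}$). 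Then set $\xi_b(k_1,\kk) := \sum_a \psi_a(k_1,\kk)\,\beta(k_1,\kk)_{ab}$; using the transformation laws for $\psi$ and $\beta$ one verifies $\xi_b(k_1+1,\kk) = \xi_b(k_1,\kk)$, orthonormality is preserved since $\beta$ is unitary, $\Z^D$-periodicity in $\kk$ is inherited from $\psi$ and the $h_i$, and $P(k_1,\kk)\xi_b = \xi_b$ since each $\psi_a$ lies in the range. Finally the equivalence $\eqref{alpha_1}\Leftrightarrow\eqref{alpha_2}$ for $D \le 2$ is the classical statement that the connected components of $C(\T^D, U(m))$ are detected by the winding numbers $\deg_j(\det)$: for $D = 1$ this is $\pi_0(C(\T,U(m))) \cong \pi_1(U(m)) \cong \Z$ via the determinant; for $D = 2$ one uses that $SU(m)$ is $2$-connected ($\pi_1 = \pi_2 = 0$), so $U(m) \simeq S^1$ through dimension $2$ and $[\T^2, U(m)] \cong [\T^2, S^1] \cong H^1(\T^2;\Z) \cong \Z^2$, again via the two degrees of $\det\alpha$; the displayed integral formula \eqref{deg} is the de Rham representative of the winding number in the smooth case.

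\textbf{Expected obstacle.} The implications $\eqref{alpha_5}\Leftrightarrow\eqref{alpha_4}\Leftrightarrow\eqref{alpha_1}$ and $\eqref{alpha_1}\Leftrightarrow\eqref{alpha_2}$ are essentially soft/standard; the only place where real care is needed is the explicit construction in $\eqref{alpha_3}\Rightarrow\eqref{alpha_4}$ (equivalently $\Rightarrow\eqref{alpha_5}$), namely choosing the cutoff profiles $\rho_i$ and verifying the cocycle identity $\alpha = \beta(k_1,\cdot)\beta(k_1+1,\cdot)^{-1}$ holds on the nose rather than merely up to homotopy — this is what makes the whole proposition \emph{constructive}, which is the point of the paper, so it cannot be brushed aside with an abstract argument. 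I expect this bookkeeping with the $N$ subintervals, and the check that continuity is maintained at the junction points where successive factors are switched on, to be the main technical burden.
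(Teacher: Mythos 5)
Your overall strategy is the paper's: the cycle \ref{alpha_5}$\Rightarrow$\ref{alpha_4}$\Rightarrow$\ref{alpha_1}$\Rightarrow$\ref{alpha_3}$\Rightarrow$\ref{alpha_4}/\ref{alpha_5}, with \ref{alpha_1}$\Leftrightarrow$\ref{alpha_2} handled by the classification of maps $\T^D\to U(m)$, $D\le 2$, through the winding numbers of the determinant; your telescoping $\alpha=\prod_i(\alpha_{t_{i-1}}^{-1}\alpha_{t_i})$ with Cayley-transform logarithms of the short steps is the paper's iteration, your $\alpha_t(\kk)=\beta(t,\kk)^{-1}$ for \ref{alpha_4}$\Rightarrow$\ref{alpha_1} is even slightly simpler than the paper's $\beta(-t/2,\kk)\beta(t/2,\kk)^{-1}$, and \ref{alpha_4}$\Leftrightarrow$\ref{alpha_5} via the change-of-basis matrix between the $\psi_a$ of \eqref{psi} and the $\xi_a$ is identical (one cosmetic point: for \ref{alpha_5}$\Rightarrow$\ref{alpha_4} the matrix $\beta(0,\kk)$ need not be $\Id$ unless the given Bloch basis agrees with $\xi_a(0,\kk)$; multiplying $\beta(k_1,\kk)$ on the right by $\beta(0,\kk)^{-1}$ fixes the normalization without affecting periodicity or the cocycle relation).

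The one step that does not survive as written is your construction of $\beta$ in the crucial implication \ref{alpha_3}$\Rightarrow$\ref{alpha_4}: you posit a single global formula $\beta(k_1,\kk)=\eu^{\iu\rho_1(k_1)h_1(\kk)}\cdots\eu^{\iu\rho_N(k_1)h_N(\kk)}$ with fixed cutoff profiles $\rho_i\colon\R\to\R$ chosen so that $\alpha(\kk)=\beta(k_1,\kk)\,\beta(k_1+1,\kk)^{-1}$ holds \emph{for every} $k_1\in\R$. For $N\ge 2$ and non-commuting $h_i$ this is impossible: the cocycle identity forces $\beta(k_1+n,\kk)=\alpha(\kk)^{-n}\beta(k_1,\kk)$, so on the $n$-th cell $\beta$ must be a word of length of order $Nn$ in the exponentials $\eu^{\pm\iu h_i(\kk)}$, which generically does not lie in the $N$-parameter family $\set{\eu^{\iu a_1 h_1(\kk)}\cdots\eu^{\iu a_N h_N(\kk)}}$; already $\beta(2,\kk)=\eu^{-\iu h_2(\kk)}\eu^{-\iu h_1(\kk)}\eu^{-\iu h_2(\kk)}\eu^{-\iu h_1(\kk)}$ is generically not of the form $\eu^{\iu a h_1(\kk)}\eu^{\iu b h_2(\kk)}$ when $N=2$. (There is also an ordering mismatch: the cocycle at $k_1=0$ with $\beta(0,\kk)=\Id$ forces $\beta(1,\kk)=\alpha(\kk)^{-1}=\eu^{-\iu h_N(\kk)}\cdots\eu^{-\iu h_1(\kk)}$, which your product in the stated order cannot produce by just prescribing the values $\rho_i(1)$.) The repair is exactly the paper's device: define $\beta$ only on the fundamental cell $k_1\in[0,1]$, e.g. $\beta(k_1,\kk):=\alpha_{k_1}(\kk)^{-1}$ with $\alpha_t(\kk):=\eu^{\iu t h_1(\kk)}\cdots\eu^{\iu t h_N(\kk)}$ (your one-factor-per-subinterval variant on $[0,1]$ would do as well), then extend by the recursion $\beta(k_1+1,\kk):=\alpha(\kk)^{-1}\beta(k_1,\kk)$ for $k_1>0$ and $\beta(k_1,\kk):=\alpha(\kk)\beta(k_1+1,\kk)$ for $k_1<0$, and verify continuity at the integer values of $k_1$ by the short limit computation done in the paper. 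With that replacement your argument closes, and it is then the same proof.
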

\begin{proof}
\noindent \fbox{\ref{alpha_1} $\Longleftrightarrow$ \ref{alpha_2}.} The integer $\deg_j(\det \alpha)$ defined in \eqref{deg} computes the winding number of the continuous and periodic function $k_j \mapsto \det \alpha(\cdots,k_j,\cdots) \colon \R \to U(1)$, $j \in \set{1, \ldots, D}$. It is a well-known fact in topology that $\pi_1(U(m)) \simeq \pi_1(U(1)) \simeq \Z$, with the first isomorphism implemented by the map $[\alpha] \mapsto [\det \alpha]$ and the second one implemented by the map $[\varphi] \mapsto \deg(\varphi) := (2 \pi \iu)^{-1} \int_{0}^{1} \varphi^{-1} \, \di \varphi$. It can be then argued that these winding numbers constitute complete homotopy invariants for continuous, periodic maps $\alpha \colon \R^D \to U(m)$ when $D \le 2$ (see {\sl e.g}~\cite[App.~A]{MonacoTauber17}).

\noindent \fbox{\ref{alpha_1} $\Longleftrightarrow$ \ref{alpha_3}.} Let $\set{\alpha_t(\kk)}_{\kk \in \R^D}$ be an homotopy between $\Id$ and $\alpha$, as in the statement. Since $[0,1]$ is a compact interval and $\alpha_t$ is $\Z^D$-periodic, by uniform continuity there exists $\delta > 0$ such that
\begin{equation} \label{close}
\sup_{\kk \in \R^D} \norm{\alpha_s(\kk) - \alpha_t(\kk)} < 2 \quad \text{whenever} \quad |s-t|< \delta. 
\end{equation}
Let $N \in \N$ be such that $1/N < \delta$. Then in particular
\[ \sup_{\kk \in \R^D} \norm{\alpha_{1/N}(\kk) - \Id} < 2 \]
so that the Cayley transform (see Appendix~\ref{app:Cayley}) provides a ``good'' logarithm for $\alpha_{1/N}(\kk)$, i.e. $\alpha_{1/N}(\kk) = \eu^{\iu  h_N(\kk)}$, with $h_N(\kk) = h_N(\kk)^*$ continuous and $\Z^D$-periodic.

Using again \eqref{close} we have that
\[ \sup_{\kk \in \R^D} \norm{\alpha_{2/N}(\kk) \, \eu^{-\iu  h_N(\kk)} - \Id} = \sup_{\kk \in \R^D} \norm{\alpha_{2/N}(\kk) - \alpha_{1/N}(\kk)} < 2 \]
so that by the same argument 
\[ \alpha_{2/N}(\kk) \, \eu^{-\iu  h_N(\kk)} = \eu^{\iu  h_{N-1}(\kk)}, \quad \text{or} \quad \alpha_{2/N}(\kk) = \eu^{\iu  h_{N-1}(\kk)} \, \eu^{\iu  h_N(\kk)}. \]
Repeating the same line of reasoning $N$ times, we end up exactly with \eqref{multilog}.

Conversely, if $\alpha(\kk)$ is as in \eqref{multilog}, then
\[ \alpha_t(\kk) := \eu^{\iu  \, t h_1(\kk)} \ldots \eu^{\iu  \, t h_N(\kk)}, \quad t \in [0,1], \: \kk \in \R^D, \]
defines the required homotopy between $\alpha_0(\kk) \equiv \Id$ and $\alpha_1(\kk) = \alpha(\kk)$.

\noindent \fbox{\ref{alpha_1} $\Longleftrightarrow$ \ref{alpha_4}.} Let $\set{\alpha_t(\kk)}_{\kk \in \R^D}$ be an homotopy between $\Id$ and $\alpha$. We set
\[ \beta(k_1,\kk) := \alpha_{k_1}(\kk)^{-1}, \quad k_1 \in [0,1],\;\kk \in \R^D, \]
and extend this definition to positive $k_1 > 0$ via
\[ \beta(k_1+1,\kk) := \alpha(\kk)^{-1} \, \beta(k_1,\kk) \]
and to negative $k_1 < 0$ via
\[ \beta(k_1,\kk) := \alpha(\kk) \, \beta(k_1+1,\kk). \]
We just need to show that this definition yields a continuous function of $k_1$. We have $\beta(0^+,\kk)=\Id$ and $\beta(1^-,\kk)=\alpha(\kk)^{-1}$ by definition. Let $\epsilon > 0$. If $k_1=-\epsilon$ is negative but close to zero, we have due to the definition
\[ \beta(-\epsilon,\kk) = \alpha(\kk) \, \beta(1-\epsilon,\kk) \to \alpha(\kk) \, \beta(1^-,\kk) = \Id \quad \text{as } \epsilon \to 0. \]
Hence $\beta$ is continuous at $k_1=0$. At $k_1=1$ we have instead
\[ \beta(1+\epsilon,\kk) = \alpha(\kk)^{-1} \, \beta(\epsilon,\kk) \to \alpha(\kk)^{-1} \, \beta(0^+,\kk) = \alpha(\kk)^{-1} \quad \text{as } \epsilon \to 0 \]
and $\beta$ is also continuous there. In a similar way one can prove continuity at every integer, thus on $\R$.

Conversely, if $\set{\beta(k_1, \kk)}_{(k_1,\kk) \in \R^2}$ is as in the statement, then the required homotopy $\alpha_t$ between $\Id$ and $\alpha$ is provided by setting
\[ \alpha_t(\kk) := \beta(-t/2,\kk) \, \beta(t/2,\kk)^{-1}, \quad t \in [0,1], \: \kk \in \R^D. \]

\noindent \fbox{\ref{alpha_4} $\Longleftrightarrow$ \ref{alpha_5}.} It suffices to set
\[ \xi_a(k_1,\kk) := \sum_{b=1}^{m} \psi_b(k_1,\kk) \, \beta(k_1,\kk)_{ba}, \quad a \in \set{1, \ldots, m}, \]
or equivalently
\[ \beta(k_1,\kk)_{ba} := \scal{\psi_b(k_1,\kk)}{\xi_a(k_1,\kk)}, \quad a,b \in \set{1, \ldots, m}, \]
for $\set{\psi_b}_{b=1}^{m}$ as in \eqref{psi} and $(k_1,\kk) \in \R^d$.
\end{proof}

To turn the above proof into a constructive argument, we need to construct the ``good'' logarithms in \eqref{multilog}.

\begin{proposition} \label{thm:2-step}
For $D \le 2$, let $\set{\alpha(\kk)}_{\kk \in \R^D}$ be a continuous and $\Z^D$-periodic family of unitary matrices. Assume that $\alpha$ is null-homotopic. Then it is possible to \emph{construct} a two-step ``good'' logarithm for $\alpha$, i.e.\ $N=2$ in Proposition~\ref{thm:alpha}\ref{alpha_3}.
\end{proposition}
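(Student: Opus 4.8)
The strategy is to reduce the multiplicative problem to an additive one, using the logarithm only twice. First I would exploit the hypothesis that $\alpha$ is null-homotopic, which by Proposition~\ref{thm:alpha} is equivalent to the vanishing of the winding numbers $\deg_j(\det\alpha) = 0$ for all $j \in \set{1,\ldots,D}$. The key idea is to first ``peel off'' a scalar phase so that the remaining unitary family takes values in $SU(m)$, and then use that $SU(m)$ is simply connected (indeed $\pi_1(SU(m)) = 0$ and also $\pi_2(SU(m)) = 0$) so that a continuous $SU(m)$-valued family over the torus $\T^D$ with $D \le 2$ is null-homotopic, hence lifts to a single exponential.

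\textbf{Step 1: Splitting off the determinant.} Write $\delta(\kk) := \det\alpha(\kk) \in U(1)$. Since $\deg_j(\delta) = 0$ for all $j$, the function $\delta \colon \R^D \to U(1)$ is $\Z^D$-periodic with trivial winding in each coordinate; because $U(1) = \T$ and $D \le 2$, such a map lifts to a continuous, $\Z^D$-periodic real-valued function, {\sl i.e.}~$\delta(\kk) = \eu^{\iu \theta(\kk)}$ with $\theta$ continuous and periodic. (This lift is explicit: integrate $\frac{1}{\iu}\,\delta^{-1}\di\delta$, using vanishing winding to guarantee periodicity.) Then set $h_1(\kk) := \frac{1}{m}\,\theta(\kk)\,\Id$, so that $\eu^{\iu h_1(\kk)} = \eu^{\iu\theta(\kk)/m}\,\Id$ is a continuous, $\Z^D$-periodic family of self-adjoint matrices, and the family $\widetilde\alpha(\kk) := \eu^{-\iu h_1(\kk)}\,\alpha(\kk)$ satisfies $\det\widetilde\alpha(\kk) = \eu^{-\iu\theta(\kk)}\,\delta(\kk) = 1$, {\sl i.e.}~$\widetilde\alpha$ takes values in $SU(m)$ and is continuous and $\Z^D$-periodic.

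\textbf{Step 2: A single logarithm for the $SU(m)$-valued part.} Now $\widetilde\alpha \colon \T^D \to SU(m)$ with $D \le 2$. Since $\pi_1(SU(m))$ and $\pi_2(SU(m))$ both vanish, every continuous map from a manifold of dimension $\le 2$ into $SU(m)$ is null-homotopic; in particular $\widetilde\alpha$ is homotopic (through continuous, $\Z^D$-periodic $SU(m)$-valued families) to the constant $\Id$. Running the same compactness-and-Cayley-transform argument as in the proof of \ref{alpha_1} $\Longleftrightarrow$ \ref{alpha_3} in Proposition~\ref{thm:alpha}, but now applied along this homotopy, one would ordinarily produce an $N$-step logarithm; the point of working inside $SU(m)$ is that one can instead obtain a \emph{single} good logarithm. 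The cleanest route is to invoke the fact that the exponential map $\mathfrak{su}(m) \to SU(m)$ is surjective and that a null-homotopic $SU(m)$-valued family over a $\le 2$-dimensional complex admits a continuous lift to $\mathfrak{su}(m)$ (since the obstruction to lifting lives in $\pi_j$ of the fiber $\ker(\exp)$, which is discrete, hence the obstruction is measured by $\pi_1$ of the base with coefficients in $\pi_0$ of the fiber — and this vanishes because $\widetilde\alpha$ is null-homotopic). Concretely this gives a continuous, $\Z^D$-periodic family of self-adjoint (trace-zero) matrices $h_2(\kk)$ with $\widetilde\alpha(\kk) = \eu^{\iu h_2(\kk)}$. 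Combining, $\alpha(\kk) = \eu^{\iu h_1(\kk)}\,\eu^{\iu h_2(\kk)}$, which is exactly $N=2$ in Proposition~\ref{thm:alpha}\ref{alpha_3}. For this to be genuinely \emph{constructive}, one would make the lift explicit: spectrally decompose $\widetilde\alpha(\kk) = \sum_j \eu^{\iu\lambda_j(\kk)}\Pi_j(\kk)$ away from the cut, track eigenvalue branches continuously using the null-homotopy to close them up periodically, and assemble $h_2$ from the branches — with the Cayley transform handling the neighborhood of the spectral cut.

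\textbf{Main obstacle.} The delicate point is Step~2: producing a \emph{single} self-adjoint logarithm $h_2$ rather than a product of several, in a way that is both continuous/periodic \emph{and} constructive. The generic $N$-step argument of Proposition~\ref{thm:alpha} is easy but wasteful; collapsing it to $N = 1$ requires genuinely using the topology of $SU(m)$ ($\pi_1 = \pi_2 = 0$), and making the lift algorithmic means one must handle the eigenvalue crossings of $\widetilde\alpha(\kk)$ and the branch of the logarithm uniformly over $\kk$. I expect the bulk of the work — likely deferred to an appendix — to be this explicit construction of the lift, controlling where $\widetilde\alpha(\kk)$ has $-1$ in its spectrum and gluing the local Cayley-transform logarithms into a global one, using the homotopy to $\Id$ to guarantee the branches match up after going around each generating loop of $\T^D$.
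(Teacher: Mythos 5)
Your Step 1 is fine, but Step 2 contains a genuine gap: it is not true that a null-homotopic, continuous, $\Z^D$-periodic family with values in $SU(m)$ admits a \emph{single} continuous, periodic, self-adjoint logarithm. Take $D=1$, $m=2$ and $\widetilde\alpha(k) := \mathrm{diag}(\eu^{\iu 2\pi k}, \eu^{-\iu 2\pi k})$: it is $SU(2)$-valued, periodic and null-homotopic (indeed $\pi_1(SU(2))=0$), yet any continuous self-adjoint $h$ with $\eu^{\iu h(k)}=\widetilde\alpha(k)$ commutes with $\widetilde\alpha(k)$, hence is diagonal for $k\notin\tfrac12\Z$ and, by continuity, for all $k$; its first diagonal entry then satisfies $\eu^{\iu f(k)}=\eu^{\iu 2\pi k}$, so $f(k)=2\pi k+\mathrm{const}$, which is not periodic. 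Note that in this example $\det\alpha\equiv 1$, so your $h_1$ is constant and your scheme reduces exactly to the false claim. The obstruction-theoretic justification does not repair this: $\exp\colon\mathfrak{su}(m)\to SU(m)$ is surjective but is \emph{not} a covering map or a fibration for $m\ge 2$ (the preimage of $-\Id$ in $SU(2)$ is a $2$-sphere, not a discrete set), so the argument ``the obstruction lives in the homotopy of the discrete fiber'' is only valid for $m=1$, i.e.\ for your Step 1. The real enemy is eigenvalue crossings: at a crossing the eigenvalue branches can exchange and acquire individually nonzero winding even though the family is null-homotopic, and this is precisely what blocks a one-step logarithm and what your ``track the branches and close them up using the null-homotopy'' step cannot fix.

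The paper's proof is organized to circumvent exactly this point, and its two exponentials have a different shape from yours. It first constructs (Appendix~\ref{app:NonDegenerate}) a continuous, periodic approximant $\alpha'$ with \emph{completely non-degenerate} spectrum and $\sup_{\kk}\norm{\alpha'(\kk)-\alpha(\kk)}<2$. Non-degeneracy forces all eigenvalue branches of $\alpha'$ to carry the \emph{same} winding number, which must vanish because $\det\alpha'$ has zero winding; one can then run a continuous periodic function $\rho(\kk)$ through a spectral gap of $\alpha'(\kk)$ on the unit circle and obtain a single Cayley-transform logarithm $\alpha'=\eu^{\iu h'}$. The second exponential is spent on the correction: since $\norm{\alpha'\alpha^{-1}-\Id}<2$, the point $-1$ lies in the resolvent set of $\alpha'\alpha^{-1}$, so the Cayley transform gives $\alpha'\alpha^{-1}=\eu^{\iu h''}$ and hence $\alpha=\eu^{-\iu h''}\eu^{\iu h'}$. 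If you wish to keep the determinant-splitting idea, you would still have to prove a one-step logarithm for the $SU(m)$-valued part, and for that you would need a non-degeneracy perturbation of the same kind — at which point you are back to a genuine two-factor decomposition rather than ``scalar phase times one exponential''.
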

\begin{proof}
\noindent \textsl{Step 1 : the generic form.} We first need to know that one can construct a sequence of continuous, $\Z^D$-periodic families of unitary matrices $\set{\alpha_n(\kk)}_{\kk \in \R^D}$, $n \in \N$, such that
\begin{itemize}
 \item $\sup_{\kk \in \R^D} \norm{\alpha_n(\kk) - \alpha(\kk)} \to 0$ as $n \to \infty$, and
 \item the spectrum of $\alpha_n(\kk)$ is completely non-degenerate for all $n \in \N$ and $\kk \in \R^D$.
\end{itemize}

The proof of this fact is rather technical, and is deferred to Appendix~\ref{app:NonDegenerate}. In the following, we denote $\alpha'(\kk) := \alpha_n(\kk)$ where $n \in \N$ is large enough so that
\[ \sup_{\kk \in \R} \norm{\alpha'(\kk) - \alpha(\kk)} < 2. \]

\noindent \textsl{Step 2 : $\alpha'$ is homotopic to $\alpha$.} Since
\[ \sup_{\kk \in \R^D} \norm{\alpha'(\kk) \, \alpha(\kk)^{-1} - \Id} = \sup_{\kk \in \R^D} \norm{\alpha'(\kk) - \alpha(\kk)} < 2 \]
we have that $-1$ always lies in the resolvent set of $\alpha'(\kk) \, \alpha(\kk)^{-1}$, which then admits a continuous and $\Z^D$-periodic logarithm defined via the Cayley transform:
\begin{equation} \label{a'a-1}
\alpha'(\kk) \, \alpha(\kk)^{-1} = \eu^{\iu  h''(\kk)}, \quad h''(\kk)^* = h''(\kk) = h''(\kk+\mathbf{n}) \text{ for } \mathbf{n} \in \Z^D.
\end{equation}
Therefore
\[ \alpha_t(\kk) := \alpha'(\kk) \, \eu^{\iu  \, t h''(\kk)}, \quad t \in [0,1], \: \kk \in \R^D, \]
gives a continuous homotopy between $\alpha_0(\kk) = \alpha'(\kk)$ and $\alpha_1(\kk) = \alpha(\kk)$. As a consequence, we have that $\alpha'$ is null-homotopic, since $\alpha$ is by assumption.

\noindent \textsl{Step 3 : a logarithm for $\alpha'$.} Denote by $\set{\lambda_1(\kk), \ldots, \lambda_m(\kk)}$ a continuous labelling of the periodic, non-degenerate eigenvalues of $\alpha'(\kk)$. 

If $m=1$, then $\alpha'(\kk) \equiv \det (\alpha'(\kk)) \equiv \lambda_1(\kk)$ cannot wind around the circle, due to the hypothesis that $\alpha'$ is null-homotopic. This implies that one can choose a continuous \emph{and periodic} argument for $\lambda_1$, namely $\lambda_1(\kk) = \eu^{\iu  \phi_1(\kk)}$ with $\phi_1 \colon \R^D \to \R$ continuous and $\Z^D$-periodic (compare e.g.\ \cite[Lemma~2.13]{CorneanHerbstNenciu16}).

If $m \ge 2$, then the same is true for each of the eigenvalues $\lambda_j(\kk)$, $j \in \set{1, \ldots, m}$. Indeed, let $\phi_j \colon \R^D \to \R$ be a continuous argument of the eigenvalue $\lambda_j$. The function $\phi_j$ will satisfy
\[ \phi_j(\kk + \mathbf{e}_l) = \phi_j(\kk) + 2 \pi n_j^{(l)}, \quad l \in \set{1,\ldots,D}, \quad n_j^{(l)} \in \Z, \]
where $\mathbf{e}_l = (0, \ldots, 1, \ldots, 0)$ is the $l$-th vector in the standard basis of $\R^D$ and the integer $n_j^{(l)}$ is the winding number of the periodic function $\R \to U(1)$, $k_l \mapsto \lambda_j(\cdots,k_l,\cdots)$. Fix $l \in \set{1, \ldots, D}$, and assume that there exist $i, j \in \set{1, \ldots, m}$ for which $n_i^{(l)} \ne n_j^{(l)}$. Define $\phi(\kk) := \phi_j(\kk) - \phi_i(\kk)$; then
\[ \phi(\kk+\mathbf{e}_j) = \phi(\kk) + 2 \pi \left(n_j^{(l)} - n_i^{(l)}\right). \]
Since $n_j^{(l)} - n_i^{(l)} \ne 0$, the periodic function $\lambda(\kk) := \eu^{\iu  \phi(\kk)}$ winds around the circle $U(1)$ at least once as a function of the $l$-th component, and in particular covers the whole circle. So there must exist $\kk_0 \in \R^D$ such that $\lambda(\kk_0) = 1$, or equivalently $\lambda_i(\kk_0) = \eu^{\iu  \phi_i(\kk_0)} = \eu^{\iu  \phi_j(\kk_0)} = \lambda_j(\kk_0)$, in contradiction with the non-degeneracy of the eigenvalues of $\alpha'(\kk)$.

We deduce then that $n_i^{(l)} = n_j^{(l)} \equiv n^{(l)}$ for all $i,j \in \set{1, \ldots, m}$. Set now $\det(\alpha'(\kk)) = \eu^{\iu  \Phi(\kk)}$ for $\Phi(\kk) = \phi_1(\kk) + \cdots + \phi_m(\kk)$. Then the equality
\[ \Phi(\kk+\mathbf{e}_l) = \Phi(\kk) + 2 \pi \sum_{j=1}^{m} n_j^{(l)} = \Phi(\kk) + 2 \pi m n^{(l)} \]
shows that necessarily $n^{(l)}=0$ for all $l \in \set{1,\ldots,D}$, as otherwise the determinant of $\alpha'$ would wind around the circle contrary to the hypothesis of null-homotopy of $\alpha'$.

Finally, denote by $0<g\leq 2$ the minimal distance between any two eigenvalues of $\alpha'(\kk)$, and define the continuous and periodic function $\rho(\kk) := \phi_1(\kk) + g/ 100$. Then $\eu^{\iu  \, \rho(\kk)}$ lies in the resolvent set of $\alpha'(\kk)$ for all $\kk \in \R$. As a consequence, $-1$ is always in the resolvent set of the continuous and periodic family of unitary matrices $\widetilde{\alpha}(\kk) := \eu^{-\iu  \, (\rho(\kk)+\pi)} \, \alpha'(\kk)$, which then admits a continuous and periodic logarithm via the Cayley transform: $\widetilde{\alpha}(\kk) = \eu^{\iu  \, \widetilde{h}(\kk)}$. We conclude that
\begin{equation} \label{alpha'}
\alpha'(\kk) = \eu^{\iu  h'(\kk)} \quad \text{with} \quad h'(\kk) := \widetilde{h}(\kk) + (\rho(\kk)+\pi) \Id.
\end{equation}
The family of self-adjoint matrices $\set{h'(\kk)}_{\kk \in \R^D}$ is still continuous and periodic by definition.

\noindent \textsl{Step 4 : a two-step logarithm for $\alpha$.} In view of \eqref{a'a-1} and \eqref{alpha'} we have $\eu^{\iu  h'(\kk)} \, \alpha(\kk)^{-1} = \eu^{\iu  h''(\kk)}$ for continuous and periodic families of self-adjoint matrices $\set{h'(\kk)}_{\kk \in \R^D}$ and $\set{h''(\kk)}_{\kk \in \R^D}$. This can be rewritten as $\alpha(\kk) = \eu^{-\iu  h''(\kk)} \, \eu^{\iu  h'(\kk)}$, which is \eqref{multilog} for $N=2$.
\end{proof}

\subsection{\texorpdfstring{The link between the topology of $\alpha$ and that of $P$}{The link between the topology of alpha and that of P}} 

We now come back to Theorem~\ref{thm:Main}\eqref{Main_3}. First  we consider the case $d=2$ (so that $D=d-1=1$). We have constructed in \eqref{alpha} a continuous and $\Z$-periodic family of unitary matrices $\set{\alpha(k_2)}_{k_2 \in \R}$, starting from a smooth, periodic family of projections $\set{P(k_1,k_2)}_{(k_1,k_2) \in \R^2}$ and an orthonormal Bloch basis for the restriction $\set{P(0,k_2)}_{k_2 \in \R}$. The next result links the topology of $\alpha$ with the one of $P$.

\begin{proposition} \label{thm:alphaChern}
Let $\set{\alpha(k_2)}_{k_2 \in \R}$ and $\set{P(\kk)}_{\kk \in \R^2}$ be as above. Then
\[ \deg(\det \alpha) = c_1(P). \]
\end{proposition}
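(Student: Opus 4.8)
The plan is to compute both sides of the claimed identity by reducing everything to a winding number on the edge $\{k_1 = 0\} \cup \{k_1 = 1\}$ of the fundamental cell $[0,1]^2$, exploiting the fact that the Chern number $c_1(P)$ can be written as the holonomy defect of parallel transport around the boundary of the cell. First I would recall from Appendix~\ref{app:Parallel} that, if $T_{k_2}(k_1,0)$ denotes parallel transport of $\{P(k_1,k_2)\}$ along the first coordinate, then the curvature of the natural connection is concentrated (after a gauge transformation trivializing the transport in the $k_1$-direction) along the segment $k_1 = 1$; concretely, writing $\psi_a(k_1,k_2) = T_{k_2}(k_1,0)\,\xi_a(0,k_2)$ as in~\eqref{psi}, the vectors $\{\psi_a(k_1,k_2)\}$ form a smooth (local) Bloch frame on the strip $k_1 \in [0,1]$, and in this frame the only contribution to $\int_{\T^2} \Tr_\Hi(P[\partial_1 P, \partial_2 P])$ comes from the mismatch between $\psi_a(1,k_2)$ and $\psi_a(0,k_2)$, which is precisely encoded by the transition matrix $\alpha(k_2)$ from~\eqref{alpha}.

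The key computation is then the standard fact that for a vector bundle over $\T^2$ trivialized over the two patches obtained by cutting along $\{k_1 = 0\}$, with transition function $\alpha(k_2)$ on the overlap, the first Chern number equals the winding number of $\det \alpha$. I would make this explicit by choosing a smooth partition-of-unity-type interpolation: pick a smooth $\chi \colon [0,1] \to [0,1]$ with $\chi \equiv 0$ near $0$ and $\chi \equiv 1$ near $1$, and define a genuinely $\Z^2$-periodic smooth Bloch frame $\widetilde\psi_a(k_1,k_2) := \sum_b \psi_b(k_1,k_2)\, U(k_1,k_2)_{ba}$ where $U(k_1,k_2)$ is a smooth path of unitaries with $U(0,k_2) = \Id$ and $U(1,k_2) = \alpha(k_2)^{-1}$ (such a $U$ need not be periodic in $k_1$, but the frame $\widetilde\psi_a$ will be). Writing $P(\kk) = \sum_a \ket{\widetilde\psi_a(\kk)}\bra{\widetilde\psi_a(\kk)}$ and using the formula for the Chern number in terms of the $\mathfrak{u}(m)$-valued connection one-form $A_{ab} = \scal{\widetilde\psi_a}{\di \widetilde\psi_b}$, namely $c_1(P) = \frac{1}{2\pi \iu}\int_{\T^2} \tr_{\C^m}(\di A)$, Stokes' theorem collapses the integral to a boundary term, and the only surviving piece is $\frac{1}{2\pi\iu}\int_0^1 \di k_2 \, \tr_{\C^m}\!\big(\alpha(k_2)^* \partial_{k_2}\alpha(k_2)\big) = \deg(\det\alpha)$, which is exactly the right-hand side of Proposition~\ref{thm:alpha}\ref{alpha_2} specialized to $D=1$.

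The main obstacle I anticipate is bookkeeping the gauge transformation carefully: one must verify that $A$ computed in the frame $\{\psi_a\}$ has vanishing $k_1$-component (this is the defining property of parallel transport, $\scal{\psi_a}{\partial_{k_1}\psi_b} = 0$, modulo the $P$-orthogonality subtlety that $\partial_{k_1}\psi_b$ may have a component outside $\Ran P$, which however does not contribute to the trace of $\di A$), and then track how the extra unitary $U$ modifies $A$ by $U^* \di U$ so that the only net effect is the boundary winding. An alternative, possibly cleaner route that avoids partitions of unity is to invoke directly the known identity (e.g.\ from \cite{Panati07, MonacoPanati15}) that the obstruction class of the Bloch bundle over $\T^2$, cut along a meridian, is the homotopy class of the clutching map, combined with $\pi_1(U(m)) \cong \Z$ via $\det$ as already used in the proof of Proposition~\ref{thm:alpha}; but since the paper emphasizes constructive arguments, I would prefer the explicit Stokes computation above. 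Either way the conclusion is $\deg(\det\alpha) = c_1(P)$.
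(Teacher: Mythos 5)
There is a genuine gap at the heart of your explicit computation. The interpolating unitary $U(k_1,k_2)$ you need is forced to be $\Z$-periodic in $k_2$ (the vectors $\psi_b(k_1,\cdot)$ of \eqref{psi} are $k_2$-periodic and pointwise linearly independent, so periodicity of $\widetilde\psi_a$ in $k_2$ forces periodicity of $U$), jointly continuous, and to satisfy $U(0,\cdot)\equiv\Id$, $U(1,\cdot)=\alpha^{-1}$. Such a $U$ is exactly a homotopy in $U(m)$ between the constant loop and the loop $k_2\mapsto\alpha(k_2)^{-1}$, and by Proposition~\ref{thm:alpha} (the equivalence of \ref{alpha_1}, \ref{alpha_2}, \ref{alpha_4} and \ref{alpha_5}) it exists if and only if $\deg(\det\alpha)=0$, i.e.\ if and only if a global periodic Bloch frame exists at all. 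So in the topologically obstructed case — the case this proposition is really needed for — your frame $\widetilde\psi_a$ cannot be constructed; in the unobstructed case the argument only verifies $0=0$. The flaw also surfaces in the Stokes step: if a globally smooth, $\Z^2$-periodic orthonormal frame did exist, then $\tr_{\C^m}A$ with $(A_j)_{ab}=\scal{\widetilde\psi_a}{\partial_{k_j}\widetilde\psi_b}$ would be a globally defined $1$-form on $\T^2$, the torus has no boundary, and $\int_{\T^2}\tr(\di A)=0$, so your computation as written would ``show'' $c_1(P)=0$ for every $P$. The nonzero term $\frac{1}{2\pi\iu}\int_0^1\di k_2\,\tr\bigl(\alpha(k_2)^*\partial_{k_2}\alpha(k_2)\bigr)$ can only survive because the frame one integrates against fails to be $k_1$-periodic.

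The repair is to drop $U$ altogether and run the curvature/Stokes computation on the fundamental cell directly in the parallel-transport frame \eqref{psi}: there $\scal{\psi_a}{\partial_{k_1}\psi_b}=0$, the coefficients $(A_2)_{ab}=\scal{\psi_a}{\partial_{k_2}\psi_b}$ are $k_2$-periodic, one has $\Tr_\Hi\bigl(P\,[\partial_1P,\partial_2P]\bigr)=\partial_1\tr A_2-\partial_2\tr A_1$ (the commutator contribution is traceless for an orthonormal frame), and \eqref{alpha} gives $\tr A_2(1,k_2)-\tr A_2(0,k_2)=\tr\bigl(\alpha(k_2)^*\partial_{k_2}\alpha(k_2)\bigr)$; integrating over $[0,1]^2$ then yields $c_1(P)=\deg(\det\alpha)$. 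This corrected version is essentially the paper's own argument, which establishes the same chain of identities in Appendix~\ref{app:Parallel} via the Duhamel formula for $\partial_{k_2}T_{\kk}(k_1,0)$ rather than via Stokes. Your fallback of simply invoking the clutching-function description of $c_1$ together with $\pi_1(U(m))\simeq\Z$ is not an independent proof in this context: it is precisely the content of Proposition~\ref{thm:alphaChern} restated in bundle language.
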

\begin{proof}
The equality in the statement follows at once from the following chain of equalities:
\[ \tr_{\C^m} \left( \alpha(k_2)^* \partial_{k_2} \alpha(k_2) \right) = \Tr_{\Hi} \left( P(0,k_2) \, \mathcal{T}(k_2)^* \partial_{k_2} \mathcal{T}(k_2) \right) = \int_{0}^{1} \di k_1 \, \Tr_{\Hi} \left( P(\kk) \, \left[ \partial_1 P(\kk), \partial_2 P(\kk) \right] \right). \]
Their proof can be found in Appendix~\ref{app:Parallel} (compare \cite[Sec.~6.3]{CorneanMonacoTeufel17}).
\end{proof}

We are finally able to conclude the proof of Theorem~\ref{thm:Main}\eqref{Main_3}.

\begin{proof}[Proof of Theorem~\ref{thm:Main}\eqref{Main_3}, $d=2$]
Given our initial hypothesis that $c_1(P)=0$, the combination of Propositions~\ref{thm:alpha} and~\ref{thm:alphaChern} gives that $\alpha$ is null-homotopic, and hence admits a two-step logarithm which can be constructed via Proposition~\ref{thm:2-step}. This construction then yields the desired continuous and periodic Bloch basis, again via Proposition~\ref{thm:alpha}.
\end{proof}

\begin{proof}[Proof of Theorem~\ref{thm:Main}\eqref{Main_3}, $d=3$]
Let $\set{P(k_1,k_2,k_3)}_{(k_1,k_2,k_3) \in \R^3}$ be a smooth and periodic family of projections. Under the assumption that $c_1(P)_{23}=0$, the $2$-dimensional result we just proved provides an orthonormal Bloch basis for the restriction $\set{P(0,k_2,k_3)}_{(k_2,k_3) \in \R^2}$, which can be parallel-transported to $\set{k_1=1}$ and hence defines $\set{\alpha(k_2,k_3)}_{(k_2,k_3) \in \R^2}$, as in \eqref{alpha}. We now apply Proposition~\ref{thm:alphaChern}, to the $2$-dimensional restrictions $\set{P(k_1,0,k_3)}_{(k_1,k_3) \in \R^2}$ and $\set{P(k_1,k_2,0)}_{(k_1,k_2) \in \R^2}$ instead, and obtain that
\begin{equation} \label{alphaChern_ij}
\deg_2(\det \alpha) = \deg(\det \alpha(\cdot, 0)) = c_1(P)_{12} = 0, \quad \deg_3(\det \alpha) = \deg(\det \alpha(0, \cdot)) = c_1(P)_{13} = 0
\end{equation}
(compare Appendix~\ref{app:Parallel}). Again by Proposition~\ref{thm:alpha} the family $\alpha$ is then null-homotopic, and one can construct its two-step logarithm via Proposition~\ref{thm:2-step}. Proposition~\ref{thm:alpha} illustrates how to produce the required continuous and $\Z^3$-periodic Bloch basis.
\end{proof}


\section{Proof of Theorem~\ref{thm:Main}\eqref{Main_1}: maximal number of orthonormal Bloch vectors} \label{sec:thmMain(i)}

We come to the proof of Theorem~\ref{thm:Main}\eqref{Main_1}, concerning the existence of $m-1$ orthonormal Bloch vectors for a smooth and $\Z^d$-periodic family of projections $\set{P(\kk)}_{\kk \in \R^d}$ with $2 \le d \le 3$. As usual, we have denoted by $m$ the rank of $P(\kk)$.

\subsection{Pseudo-periodic families of matrices} 

Before giving the proof of Theorem~\ref{thm:Main}\eqref{Main_1}, we need some generalizations of the results in Section~\ref{sec:induction}. 

\begin{definition} \label{def:gamma}
Let $\set{\gamma(k_3)}_{k_3 \in \R}$ be a continuous and $\Z$-periodic family of unitary matrices. We say that a continuous family of matrices $\set{\mu(k_2,k_3)}_{(k_2, k_3) \in \R^2}$ is \emph{$\gamma$-periodic} if it satisfies the following conditions:
\[ \mu(k_2+1,k_3) = \gamma(k_3) \, \mu(k_2, k_3) \, \gamma(k_3)^{-1}, \quad \mu(k_2, k_3+1) = \mu(k_2,k_3), \quad (k_2, k_3) \in \R^2. \]

We say that two continuous and $\gamma$-periodic families $\set{\mu_0(\kk)}_{\kk \in \R^2}$ and $\set{\mu_1(\kk)}_{\kk \in \R^2}$ are \emph{$\gamma$-homotopic} if there exists a collection of continuous and $\gamma$-periodic families $\set{\mu_t(\kk)}_{\kk \in \R^2}$, depending continuously on $t \in [0,1]$, such that $\mu_{t=0}(\kk) = \mu_0(\kk)$ and $\mu_{t=1}(\kk) = \mu_1(\kk)$ for all $\kk \in \R^2$.
\end{definition}

Notice that a $\gamma$-periodic family of matrices is periodic in $k_3$ and only pseudo-periodic in $k_2$: the family $\gamma$ encodes the failure of $k_2$-periodicity.

\begin{proposition} \label{thm:gamma-null}
Let $\set{\alpha(k_2,k_3)}_{(k_2,k_3) \in \R^2}$ be a continuous and $\gamma$-periodic family of unitary matrices, and assume that $\deg_2(\det \alpha) = \deg_3(\det \alpha) = 0$. Then one can \emph{construct} a continuous and $\gamma$-periodic two-step logarithm for $\alpha$, namely there exist continuous and $\gamma$-periodic families of self-adjoint matrices $\set{h_i(\kk)}_{\kk \in \R^2}$, $i \in \set{1, 2}$, such that
\[ \alpha(k_2,k_3) = \eu^{\iu  h_1(k_2,k_3)} \, \eu^{\iu  h_2(k_2,k_3)}, \quad (k_2,k_3) \in \R^2. \]
\end{proposition}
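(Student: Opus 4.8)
The plan is to re-run the proof of Proposition~\ref{thm:2-step} essentially verbatim, systematically replacing ``$\Z^D$-periodic'' by ``$\gamma$-periodic'' throughout. The structural reason this works is that, for each fixed $k_3$, conjugation by $\gamma(k_3)$, i.e.\ $X\mapsto \gamma(k_3)\,X\,\gamma(k_3)^{-1}$, is an automorphism of $\C^{m\times m}$ which preserves self-adjointness and commutes with the exponential map, with the Cayley transform of Appendix~\ref{app:Cayley}, and with any holomorphic functional calculus; moreover, since unitary conjugation leaves spectra invariant and a $\gamma$-periodic family is genuinely $\Z$-periodic in $k_3$, the map $(k_2,k_3)\mapsto\sigma\big(\alpha(k_2,k_3)\big)$ is \emph{genuinely} $\Z^2$-periodic. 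Hence every ingredient of Proposition~\ref{thm:2-step} that concerns only the spectrum of the family — extraction of a periodic eigenvalue labelling, the vanishing-of-winding-numbers bookkeeping, the choice of a ``good'' angular cut — is valid without change, while $\gamma$-periodicity of the logarithms that are produced is carried along automatically by this conjugation-equivariance.

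Concretely I would proceed in four steps. \emph{(i)} Adapt the construction of Appendix~\ref{app:NonDegenerate} to obtain a continuous, $\gamma$-periodic family $\set{\alpha'(\kk)}_{\kk\in\R^2}$ of unitary matrices with completely non-degenerate spectrum and $\sup_{\kk\in\R^2}\norm{\alpha'(\kk)-\alpha(\kk)}<2$ (for $m=1$ there is nothing to do and $\alpha'=\alpha$); note that ``non-degenerate spectrum'' is a $\gamma$-compatible condition precisely because conjugation preserves eigenvalues. \emph{(ii)} Since $\norm{\alpha'(\kk)\,\alpha(\kk)^{-1}-\Id}=\norm{\alpha'(\kk)-\alpha(\kk)}<2$, the Cayley transform gives $\alpha'(\kk)\,\alpha(\kk)^{-1}=\eu^{\iu h''(\kk)}$ with $h''(\kk)$ continuous and self-adjoint, and $\gamma$-periodic because $\alpha'$ and $\alpha$ are and the Cayley transform commutes with conjugation; in particular $\alpha=\eu^{-\iu h''}\alpha'$, and the path $s\mapsto\eu^{\iu s h''}\alpha$ joins $\alpha$ to $\alpha'$, so that $\deg_j(\det\alpha')=\deg_j(\det\alpha)=0$ for $j=2,3$. \emph{(iii)} Reproduce Step~3 of Proposition~\ref{thm:2-step}: using only that $\sigma(\alpha'(\cdot))$ is genuinely $\Z^2$-periodic, non-degenerate, and has winding-free determinant, one produces a continuous \emph{and $\Z^2$-periodic} scalar function $\rho$ such that $\eu^{\iu\rho(\kk)}$ lies in the resolvent set of $\alpha'(\kk)$ for all $\kk$ (for $m\ge 2$, $\rho:=\phi_1+g/100$ with $\phi_1$ a periodic argument of one of the eigenvalues and $0<g\le 2$ the spectral gap; for $m=1$, $\rho:=\phi_1+\pi$). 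Then $\widetilde\alpha(\kk):=\eu^{-\iu(\rho(\kk)+\pi)}\alpha'(\kk)$ has $-1$ in its resolvent set and is again $\gamma$-periodic, since the scalar prefactor is central and genuinely periodic, so the Cayley transform gives $\widetilde\alpha=\eu^{\iu\widetilde h}$ with $\widetilde h$ continuous, self-adjoint, and $\gamma$-periodic; hence $h'(\kk):=\widetilde h(\kk)+(\rho(\kk)+\pi)\,\Id$ is continuous, self-adjoint, $\gamma$-periodic, and satisfies $\eu^{\iu h'(\kk)}=\alpha'(\kk)$. \emph{(iv)} Combining (ii) and (iii), $\alpha(\kk)=\eu^{-\iu h''(\kk)}\,\eu^{\iu h'(\kk)}$, which is the asserted $\gamma$-periodic two-step logarithm with $h_1:=-h''$ and $h_2:=h'$.

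I expect the only genuinely non-routine point to be step~(i): verifying that the non-degeneracy procedure of Appendix~\ref{app:NonDegenerate} can be run $\gamma$-equivariantly rather than just periodically. This should go through because that procedure is assembled from operations — mollification in $\kk$, spectral projections, and locally chosen generic unitary conjugations glued together by a partition of unity — each of which commutes with conjugation by $\gamma(k_3)$; concretely one works on the fundamental domain $[0,1]\times\R$ for the $k_2$-variable and transports the perturbation across $k_2\mapsto k_2+1$ by $\gamma(k_3)$-conjugation, so that $\gamma$-periodicity (and not merely plain periodicity) is preserved. A secondary point to spell out, which is also used inside step~(iii) and already exploited in Proposition~\ref{thm:2-step}, is that non-degeneracy of the genuinely periodic eigenvalue set together with $\deg_j(\det\alpha')=0$ forces the continuous eigenvalue labelling to have trivial permutation monodromy, hence a periodic choice of $\phi_1$; since this statement involves only the spectrum, which conjugation by $\gamma$ leaves unchanged, it transfers to the present setting with no modification.
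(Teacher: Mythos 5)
Your proposal is correct and follows essentially the same route as the paper: the paper's proof of Proposition~\ref{thm:gamma-null} likewise re-runs Proposition~\ref{thm:2-step} verbatim, noting that the spectrum and norm of a $\gamma$-periodic family are genuinely $\Z^2$-periodic, that the Cayley transform (and hence the logarithms built from it) respects $\gamma$-periodicity, and deferring the only substantive modification --- the $\gamma$-periodic construction of non-degenerate approximants --- to Appendix~\ref{app:NonDegenerate}, exactly as you do in your step (i).
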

\begin{proof}
The argument goes as in the proof of Proposition~\ref{thm:2-step}. One just needs to modify Step 1 there, where the approximants of $\alpha$ with completely non-degenerate spectrum are constructed obeying $\gamma$-periodicity rather than mere periodicity (compare Appendix~\ref{app:NonDegenerate}). It is also worth noting that both the spectrum and the norm of $\mu(k_2+1,k_3)$ coincide with the spectrum and the norm of $\mu(k_2,k_3)$ for any $\gamma$-periodic family of matrices $\mu$, and that the Cayley transform of a $\gamma$-periodic family of unitary matrices $\set{\alpha(k_2,k_3)}_{(k_2,k_3) \in \R^2}$ is also $\gamma$-periodic. Hence, logarithms constructed via functional calculus on the Cayley transform are automatically $\gamma$-periodic (see Appendix~\ref{app:Cayley}). Finally, observing that the spectrum of a $\gamma$-periodic family of matrices is $\Z^2$-periodic, the rest of the argument for Proposition~\ref{thm:2-step} goes through unchanged.
\end{proof}

The next result generalizes Proposition~\ref{thm:alpha} considerably.

\begin{proposition} \label{thm:gamma}
Assume that $D \le 2$. Let $\set{\alpha_0(\kk)}_{\kk \in \R^D}$ and  $\set{\alpha_1(\kk)}_{\kk \in \R^D}$ be continuous and periodic families of unitary matrices. Then the following are equivalent:
\begin{enumerate}
 \item \label{gamma_1} the families are homotopic;
 \item \label{gamma_2} $\deg_j(\det \alpha_0) = \deg_j(\det \alpha_1)$ for all $j \in \set{1,\ldots, D}$, where $\deg_j(\det \cdot)$ is defined in \eqref{deg};
 \item \label{gamma_3} one can \emph{construct} a continuous family of unitary matrices $\set{\beta(k_1,\kk)}_{(k_1,\kk) \in \R^d}$, $d=D+1$, which is $\Z^D$-periodic in $\kk$, with $\beta(0,\kk) \equiv \Id$ for all $\kk \in \R^D$, and such that
 \[ \alpha_1(\kk) = \beta(k_1,\kk) \, \alpha_0(\kk) \, \beta(k_1+1,\kk)^{-1}, \quad (k_1,\kk) \in \R^d. \]
\end{enumerate}

If $D=2$, then the above three statements remain equivalent even if one replaces periodicity by $\gamma$-periodicity and homotopy by $\gamma$-homotopy.
\end{proposition}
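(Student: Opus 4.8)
\emph{Plan.} Proposition~\ref{thm:gamma} is the ``relative'' version of Proposition~\ref{thm:alpha} (the latter being the case $\alpha_0 \equiv \Id$), and I would establish it by the same three-implication cycle \ref{gamma_1}$\Leftrightarrow$\ref{gamma_2}, \ref{gamma_1}$\Rightarrow$\ref{gamma_3}$\Rightarrow$\ref{gamma_1}, the crucial simplification being that $U(m)$ is a topological group. Set $\delta(\kk) := \alpha_1(\kk)\,\alpha_0(\kk)^{-1}$. Then a collection $\set{\alpha_t}_{t\in[0,1]}$ is a homotopy from $\alpha_0$ to $\alpha_1$ if and only if $\set{\alpha_t\,\alpha_0^{-1}}_{t\in[0,1]}$ is a homotopy from $\Id$ to $\delta$; moreover $\delta$ is again continuous and $\Z^D$-periodic, and it is $\gamma$-periodic whenever $\alpha_0$ and $\alpha_1$ are, because $M \mapsto \gamma(k_3)\,M\,\gamma(k_3)^{-1}$ is multiplicative and sends inverses to inverses. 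Since $\det$ is a group homomorphism into $U(1)$ and winding numbers are additive, $\deg_j(\det\delta) = \deg_j(\det\alpha_1) - \deg_j(\det\alpha_0)$ for every $j$. Hence \ref{gamma_1}$\Leftrightarrow$\ref{gamma_2} amounts to the assertion that $\delta$ is null-homotopic precisely when $\deg_j(\det\delta)=0$ for all $j$: in the $\Z^D$-periodic case ($D \le 2$) this is the equivalence \ref{alpha_1}$\Leftrightarrow$\ref{alpha_2} of Proposition~\ref{thm:alpha} applied to $\delta$; in the $\gamma$-periodic case ($D=2$), the nontrivial direction is Proposition~\ref{thm:gamma-null}, which even produces a $\gamma$-periodic two-step logarithm $\delta = \eu^{\iu h_1}\eu^{\iu h_2}$ and hence the $\gamma$-homotopy $\eu^{\iu t h_1}\eu^{\iu t h_2}$ contracting $\delta$ to $\Id$, while the converse is the triviality that $\det$ of a (possibly $\gamma$-)null-homotopic family has vanishing winding numbers.

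For \ref{gamma_1}$\Rightarrow$\ref{gamma_3} I would write $\beta$ down explicitly. Given a homotopy $\set{\alpha_t}_{t\in[0,1]}$ from $\alpha_0$ to $\alpha_1$, put
\[ \beta(k_1,\kk) := \alpha_{k_1}(\kk)^{-1}\,\alpha_0(\kk), \quad k_1 \in [0,1], \]
so that $\beta(0,\kk) = \Id$ and $\beta(1,\kk) = \alpha_1(\kk)^{-1}\alpha_0(\kk)$, and extend to all of $\R$ by the recursion forced by the desired identity,
\[ \beta(k_1+1,\kk) := \alpha_1(\kk)^{-1}\,\beta(k_1,\kk)\,\alpha_0(\kk), \qquad \beta(k_1-1,\kk) := \alpha_1(\kk)\,\beta(k_1,\kk)\,\alpha_0(\kk)^{-1}. \]
One then checks, exactly as in the proof of the implication \ref{alpha_1}$\Rightarrow$\ref{alpha_4} of Proposition~\ref{thm:alpha}, that the one-sided limits agree at every integer (at $k_1=1$ both give $\alpha_1^{-1}\alpha_0$, at $k_1=0$ both give $\Id$, and the recursion propagates this to all of $\Z$), that $\beta$ is $\Z^D$-periodic (resp.\ $\gamma$-periodic) in $\kk$ since its building blocks are and conjugation by $\gamma(k_3)$ respects products and inverses, and that by construction $\alpha_1(\kk) = \beta(k_1,\kk)\,\alpha_0(\kk)\,\beta(k_1+1,\kk)^{-1}$ for every $k_1 \in \R$.

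The converse \ref{gamma_3}$\Rightarrow$\ref{gamma_1} is a one-liner: given $\beta$ as in \ref{gamma_3}, set
\[ \alpha_t(\kk) := \beta(-t/2,\kk)\,\alpha_0(\kk)\,\beta(t/2,\kk)^{-1}, \quad t\in[0,1]. \]
Then $\alpha_{t=0}(\kk) = \beta(0,\kk)\,\alpha_0(\kk)\,\beta(0,\kk)^{-1} = \alpha_0(\kk)$, while $\alpha_{t=1}(\kk) = \beta(-1/2,\kk)\,\alpha_0(\kk)\,\beta(1/2,\kk)^{-1} = \alpha_1(\kk)$ by the identity in \ref{gamma_3} evaluated at $k_1 = -1/2$; continuity in $t$ and $\Z^D$-periodicity (resp.\ $\gamma$-periodicity) in $\kk$ are immediate, and each $\alpha_t(\kk)$ is unitary. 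This closes the cycle and, together with \ref{gamma_1}$\Leftrightarrow$\ref{gamma_2}, proves the proposition.

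I do not expect a deep obstacle: the topological content is entirely outsourced to Propositions~\ref{thm:alpha} and~\ref{thm:gamma-null}. The step that needs the most care is the bookkeeping around the reduction to $\delta = \alpha_1\alpha_0^{-1}$ and the recursion defining $\beta$ --- specifically, verifying that $\delta$ is \emph{honestly} $\gamma$-periodic (so that $\det\delta$ is genuinely $\Z^2$-periodic and $\deg_2,\deg_3$ are well defined) and that the recursive extension of $\beta$ across successive integers does not accumulate a spurious double $\gamma$-twist. Both follow from the multiplicativity of $M \mapsto \gamma(k_3)\,M\,\gamma(k_3)^{-1}$, but these are precisely the places where the upgrade from ``periodic'' to ``$\gamma$-periodic'' could go wrong, so I would spell them out.
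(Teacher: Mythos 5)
Your proposal is correct and follows essentially the same route as the paper: reduce to the single family $\alpha_1^{-1}\alpha_0$ (you use $\alpha_1\alpha_0^{-1}$, which is immaterial), invoke Proposition~\ref{thm:gamma-null} for the $\gamma$-periodic two-step logarithm to settle the degree condition, extend $\beta$ from $[0,1]$ to $\R$ by the recursion $\beta(k_1+1,\kk)=\alpha_1(\kk)^{-1}\beta(k_1,\kk)\alpha_0(\kk)$ with the same continuity check at integers, and close the cycle with the identical formula $\alpha_t(\kk)=\beta(-t/2,\kk)\,\alpha_0(\kk)\,\beta(t/2,\kk)^{-1}$. The only cosmetic deviation is that the paper defines $\beta$ on $[0,1]$ directly as $\eu^{\iu k_1 h_2(\kk)}\eu^{\iu k_1 h_1(\kk)}$ from the two-step logarithm rather than as $\alpha_{k_1}(\kk)^{-1}\alpha_0(\kk)$ from the homotopy, which changes nothing of substance.
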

\begin{proof}
Since periodicity is a particular case of $\gamma$-periodicity, we give the proof in the $\gamma$-periodic framework. Set
\[ \alpha'(\kk) := \alpha_1(\kk)^{-1} \, \alpha_0(\kk), \quad \kk \in \R^D. \]
Then $\set{\alpha'(\kk)}_{\kk \in \R^D}$ is a continuous and $\gamma$-periodic family of unitary matrices, satisfying moreover $\deg_j(\det \alpha') = \deg_j(\det \alpha_0) - \deg_j(\det \alpha_1) = 0$ for $j \in \set{1, \ldots, D}$. In view of Proposition~\ref{thm:gamma-null}, one can construct a continuous and $\gamma$-periodic two-step logarithm for $\alpha'$:
\[ \alpha'(\kk) = \eu^{\iu  h_2(\kk)} \, \eu^{\iu  h_1(\kk)}. \]

Define
\[ \beta(k_1,\kk) := \eu^{\iu  \, k_1 \, h_2(\kk)} \, \eu^{\iu  \, k_1 \, h_1(\kk)}, \quad k_1 \in [0,1], \: \kk \in \R^D, \]
and extend this definition to positive $k_1>0$ by
\[ \beta(k_1+1,\kk) := \alpha_1(\kk)^{-1} \, \beta(k_1,\kk) \,\alpha_0(\kk) \]
and to negative $k_1 < 0$ by
\[ \beta(k_1,\kk) := \alpha_1(\kk) \, \beta(k_1+1,\kk) \, \alpha_0(\kk)^{-1}. \]
Notice first that the above defines a family of unitary matrices which is $\gamma$-periodic in $\kk$. We just need to show that this definition yields also a continuous function of $k_1$. We have $\beta(0^+,\kk)=\Id$ and $\beta(1^-,k_2)=\alpha_1(\kk)^{-1} \, \alpha_0(\kk)$ by definition. Let $\epsilon > 0$. If $k_1=-\epsilon$ is negative but close to zero, we have due to the definition
\[ \beta(-\epsilon,\kk) = \alpha_1(\kk) \, \beta(1-\epsilon,\kk) \, \alpha_0(\kk)^{-1} \to \alpha_1(\kk) \, \beta(1^-,\kk) \, \alpha_0(\kk)^{-1} = \Id \quad \text{as } \epsilon \to 0. \]
Hence $\beta$ is continuous at $k_1=0$. At $k_1=1$ we have instead
\[ \beta(1+\epsilon,\kk) = \alpha_1(\kk)^{-1} \, \beta(\epsilon,\kk) \, \alpha_0(\kk) \to \alpha_1(\kk)^{-1} \, \beta(0^+,\kk) \, \alpha_0(\kk) = \alpha_1(\kk)^{-1} \, \alpha_0(\kk) \quad \text{as } \epsilon \to 0 \]
and $\beta$ is also continuous there. A similar argument shows continuity of $k_1 \mapsto \beta(k_1,\kk)$ at every other integer value of $k_1$.

\medskip

Conversely, if we are given $\set{\beta(k_1,\kk)}_{(k_1,\kk) \in \R^3}$ as in the statement, then
 \[ \alpha_t(\kk) := \beta(-t/2,\kk)\alpha_0(\kk) \, \beta(t/2,\kk)^{-1}, \quad t \in [0,1], \: \kk \in \R^2, \]
gives the desired $\gamma$-homotopy between $\alpha_0$ and $\alpha_1$.
\end{proof}

\subsection{Orthonormal Bloch vectors} 

We now come back to the proof of Theorem~\ref{thm:Main}\eqref{Main_1}.

\begin{proof}[Proof of Theorem~\ref{thm:Main}\eqref{Main_1}]
Let us start from a $2$-dimensional smooth and $\Z^2$-periodic family of rank-$m$ projections $\set{P(\kk)}_{\kk \in \R^2}$. We replicate the construction at the beginning of Section~\ref{sec:c1=0} (see Equation~\eqref{psi}) to obtain an orthonormal collection of $m$ Bloch vectors $\set{\psi_a}_{a=1}^{m}$ for $\set{P(\kk)}_{\kk \in \R^2}$ which are continuous and $\Z$-periodic in the variable $k_2$. The continuous and periodic family of unitary matrices $\set{\alpha\sub{2D}(k_2)}_{k_2 \in \R}$, defined as in \eqref{alpha}, measures the failure of $\set{\psi_a}_{a=1}^{m}$ to be periodic in $k_1$.

Define
\begin{equation} \label{eqn:tildealpha} 
\widetilde{\alpha}\sub{2D}(k_2) := \begin{pmatrix}
\det \alpha\sub{2D}(k_2) & 0 & \cdots & 0 \\
0 & 1 & \cdots & 0 \\
\vdots & \vdots & \ddots & \vdots \\
0 & 0 & \cdots & 1
\end{pmatrix}.
\end{equation}
Clearly $\det \alpha\sub{2D}(k_2) = \det \widetilde{\alpha}\sub{2D}(k_2)$, so that in particular $\alpha\sub{2D}$ and $\widetilde{\alpha}\sub{2D}$ are homotopic. Proposition~\ref{thm:gamma} applies and produces a family of unitary matrices $\set{\beta\sub{2D}(k_1, k_2)}_{(k_1, k_2) \in \R^2}$ which is periodic in $k_2$ and such that
\[ \alpha\sub{2D}(k_2) = \beta\sub{2D}(k_1, k_2) \, \widetilde{\alpha}\sub{2D}(k_2) \, \beta\sub{2D}(k_1 + 1, k_2)^{-1} \]
holds for all $(k_1,k_2) \in \R^2$.

With $\set{\psi_a}_{a=1}^{m}$ as in~\eqref{psi} and $\set{\beta\sub{2D}(\kk)}_{\kk \in \R^2}$ as above, define
\[ \xi_a(\kk) := \sum_{b=1}^{m} \psi_b(\kk)  \, \beta\sub{2D}(\kk)_{ba}, \quad a \in \set{1, \ldots, m}, \: \kk \in \R^2. \]
Then we see that for $a \in \set{1, \ldots, m}$ and $(k_1,k_2) \in \R^2$
\begin{equation} \label{newmatch}
\begin{aligned}
\xi_a(k_1 + 1, k_2) & = \sum_{b=1}^{m} \psi_b(k_1+1, k_2)  \, \beta\sub{2D}(k_1+1,k_2)_{ba} = \sum_{b,c=1}^{m} \psi_c(k_1, k_2) \, \alpha\sub{2D}(k_2)_{cb} \, \beta\sub{2D}(k_1+1,k_2)_{ba} \\
& = \sum_{c=1}^{m} \psi_c(k_1, k_2) \, \left[ \alpha\sub{2D}(k_2) \, \beta\sub{2D}(k_1+1,k_2)\right]_{ca} = \sum_{c=1}^{m} \psi_c(k_1, k_2) \, \left[ \beta\sub{2D}(k_1,k_2) \, \widetilde{\alpha}\sub{2D}(k_2) \right]_{ca} \\
& = \sum_{b=1}^{m} \sum_{c=1}^{m} \psi_c(k_1, k_2) \, \beta\sub{2D}(k_1,k_2)_{cb} \, \widetilde{\alpha}\sub{2D}(k_2)_{ba} = \sum_{b=1}^{m} \xi_b(k_1, k_2) \, \widetilde{\alpha}\sub{2D}(k_2)_{ba}.
\end{aligned}
\end{equation}

Since $\widetilde{\alpha}\sub{2D}(k_2)$ is in the form \eqref{eqn:tildealpha}, when we set $a \in \set{2, \ldots, m}$ in the above equation this reads $\xi_a(k_1+1,k_2) = \xi_a(k_1,k_2)$, that is, $\set{\xi_a}_{a=2}^{m}$ is an orthonormal collection of $(m-1)$ Bloch vectors which are continuous \emph{and $\Z^2$-periodic}. This concludes the proof of Theorem~\ref{thm:Main}\eqref{Main_1} in the $2$-dimensional case.

\bigskip

We now move to the case $d=3$. Let $\set{P(\kk)}_{\kk \in \R^3}$ be a family of rank-$m$ projections which is smooth and $\Z^3$-periodic. In view of what we have just proved, the $2$-dimensional restriction $\set{P(0,k_2,k_3)}_{(k_2,k_3) \in \R^2}$ admits a collection of $m$ orthonormal Bloch vectors $\set{\xi_a(0,\cdot,\cdot)}_{a=1}^{m}$ satisfying
\begin{equation} \label{pseudo}
\begin{aligned}
\xi_1(0,k_2+1,k_3)& = \det \alpha\sub{2D}(k_3) \, \xi_1(0,k_2,k_3), \\
\xi_b(0,k_2+1,k_3)& = \xi_b(0,k_2,k_3) \text{ for all } b \in \set{2, \ldots, m}, \\
\xi_a(0,k_2,k_3+1)& = \xi_a(0,k_2,k_3) \text{ for all } a \in \set{1, \ldots, m}.
\end{aligned}
\end{equation}

Parallel-transport these Bloch vectors along the $k_1$-direction, and define $\set{\psi_a}_{a=1}^{m}$ as in \eqref{psi} and $\set{\alpha(k_2,k_3)}_{(k_2,k_3) \in \R^2}$ as in \eqref{alpha}. The latter matrices are still unitary, depend continuously on $(k_2,k_3)$, are periodic in $k_3$, but
\[ \alpha(k_2+1,k_3) = \widetilde{\alpha}\sub{2D}(k_3) \, \alpha(k_2,k_3) \, \widetilde{\alpha}\sub{2D}(k_3)^{-1}, \]
as can be checked from \eqref{pseudo}. Thus, the family $\set{\alpha(k_2,k_3)}_{(k_2,k_3) \in \R^2}$ is $\widetilde{\alpha}\sub{2D}$-periodic, and consequently so is the family defined by
\[ \widetilde{\alpha}(k_2,k_3) := \begin{pmatrix}
\det \alpha(k_2,k_3) & 0 & \cdots & 0 \\
0 & 1 & \cdots & 0 \\
\vdots & \vdots & \ddots & \vdots \\
0 & 0 & \cdots & 1
\end{pmatrix} \]
(actually, since $\widetilde{\alpha}$ and $\widetilde{\alpha}\sub{2D}$ commute, in this case $\widetilde{\alpha}\sub{2D}$-periodicity reduces to mere periodicity). Since $\alpha$ and $\widetilde{\alpha}$ share the same determinant, Proposition~\ref{thm:gamma} again produces a continuous, $\widetilde{\alpha}\sub{2D}$-periodic family of unitary matrices $\set{\beta(k_1, \kk)}_{(k_1, \kk) \in \R^3}$ such that for all $(k_1,\kk) \in \R^3$
\[ \alpha(\kk) = \beta(k_1, \kk) \, \widetilde{\alpha}(\kk) \, \beta(k_1 + 1, \kk)^{-1}. \]

Arguing as above (compare \eqref{newmatch}), the collection of Bloch vectors defined by
\[ \xi_a(k_1,\kk) := \sum_{b=1}^{m} \psi_b(k_1,\kk)  \, \beta(k_1,\kk)_{ba}, \quad a \in \set{1, \ldots, m}, \: (k_1,\kk) \in \R^3, \]
satisfies
\[ \xi_a(k_1 + 1, \kk) = \sum_{b=1}^{m} \xi_b(k_1, \kk) \, \widetilde{\alpha}(\kk)_{ba}, \quad a \in \set{1,\ldots,m}. \]
Due to the form of $\widetilde{\alpha}$, this implies again that $\set{\xi_a}_{a=2}^{m}$ are continuous, orthonormal, \emph{and $\Z^3$-periodic} Bloch vectors for $\set{P(\kk)}_{\kk \in \R^3}$, thus concluding the proof.
\end{proof}


\section{Proof of Theorem~\ref{thm:Main}\eqref{Main_2}: moving Parseval frames of Bloch vectors} \label{sec:thmMain(ii)}

In this Section, we finally prove Theorem~\ref{thm:Main}\eqref{Main_2}, and complete the proof of the first main result. The central step consists in proving the result for families of rank $1$, which we will do first.

\subsection{The rank-1 case}

\begin{proof}[Proof of Theorem~\ref{thm:Main}\eqref{Main_2} (rank-$1$ case)]
Let $d \le 3$. We consider first a smooth and $\Z^d$-periodic family of projections $\set{P_1(\kk)}_{\kk \in \R^d}$ of rank $m = 1$. We want to show that there exists two Bloch vectors $\set{\xi_1, \xi_2}$ which are continuous, $\Z^d$-periodic, and generate the $1$-dimensional space $\Ran P_1(\kk) \subset \Hi$ at each $\kk \in \R^d$.

To do so, fix a complex conjugation $C$ on the Hilbert space $\Hi$ (which is tantamount to the choice of an orthonormal basis). Define
\begin{equation} \label{Q=CPC}
Q(\kk) := C \, P_1(-\kk) \, C^{-1}.
\end{equation}
Using the fact that $C$ is an antiunitary operator such that $C^2 = \Id$, one can check that $\set{Q(\kk)}_{\kk \in \R^d}$ defines a smooth and $\Z^d$-periodic family of orthogonal projectors. Moreover, one also has that $c_1(Q)_{ij} = - c_1(P)_{ij}$ for all $1 \le i<j \le d$, as can be seen by integrating the identity
\[ \Tr_{\Hi} \left( Q(\kk) \, \left[ \partial_i Q(\kk), \partial_j Q(\kk) \right] \right) = - \Tr_{\Hi} \left( P_1(-\kk) \, \left[ \partial_i P_1(-\kk), \partial_j P_1(-\kk) \right] \right) \]
over $\T^2_{ij}$ (compare \cite{Panati07, AucklyKuchment18}).

Set now $P(\kk) := P_1(\kk) \oplus Q(\kk)$ for $\kk \in \R^d$. The rank-$2$ family of projections $\set{P(\kk)}_{\kk \in \R^d}$ on $\Hi \oplus \Hi$ satisfies then
\[ c_1(P)_{ij} = c_1(P_1)_{ij} + c_1(Q)_{ij} = 0 \quad \text{for all } 1 \le i<j\le d. \]
Hence, in view of the results of Section~\ref{sec:c1=0}, it admits a Bloch basis $\set{\psi_1, \psi_2}$. Let $\pi_j \colon \Hi \oplus \Hi \to \Hi$ be the projection on the $j$-th factor, $j \in \set{1,2}$. Set finally
\[ \xi_a(\kk) := \pi_1 \left( \left(P_1(\kk) \oplus 0\right) \, \psi_a(\kk) \right), \quad a \in \set{1,2}, \: \kk \in \R^d. \]

Let us show that $\set{\xi_a(\kk)}_{a=1}^{2}$ gives a (continuous and $\Z^d$-periodic) Parseval frame in $\Ran P_1(\kk)$. Indeed, let $\psi \in \Ran P_1(\kk)$: then automatically $\psi \oplus 0 \in  \Ran P(\kk)$. Since $\set{\psi_a(\kk)}_{a=1}^{2}$ is an orthonormal basis for $\Ran P(\kk)$, we obtain that
\[ \psi \oplus 0 = \sum_{a=1}^{2} \scal{\psi_a(\kk)}{\psi \oplus 0}_{\Hi \oplus \Hi} \psi_a(\kk)= \sum_{a=1}^{2} \scal{\xi_a(\kk)}{\psi}_{\Hi} \psi_a(\kk). \]
Finally, we apply $\pi_1 \circ \left(P_1(\kk) \oplus 0\right)$ on both sides and obtain
\[ \psi = \sum_{a=1}^{2} \scal{\xi_a(\kk)}{\psi}_{\Hi} \xi_a(\kk) \]
which is the defining condition~\eqref{april1} for $\set{\xi_a(\kk)}_{a=1}^{2}$ to be a Parseval frame in $\Ran P_1(\kk)$.
\end{proof}

\subsection{\texorpdfstring{The higher rank case: $m>1$}{The higher rank case: m>1}}

\begin{proof}[Proof of Theorem~\ref{thm:Main}\eqref{Main_2} (rank-$m$ case, $m>1$)]
Let $d \le 3$, and consider a smooth and $\Z^d$-periodic family of rank-$m$ projections $\set{P(\kk)}_{\kk \in \R^d}$. In view of Theorem~\ref{thm:Main}\eqref{Main_1}, which we already proved, it admits $m-1$ orthonormal Bloch vectors $\set{\xi_a}_{a=1}^{m-1}$: they are $\Z^d$-periodic, and without loss of generality (see Appendix~\ref{app:Smoothing}) we assume them to be smooth. Denote by 
\[ P_{m-1}(\kk) := \sum_{a=1}^{m-1} \ket{\xi_a(\kk)} \bra{\xi_a(\kk)}, \quad \kk \in \R^2, \]
the rank-$(m-1)$ projection onto the space spanned by $\set{\xi_a(\kk)}_{a=1}^{m-1}$. Since the latter are smooth and periodic Bloch vectors for $\set{P(\kk)}_{\kk \in \R^d}$, the family $\set{P_{m-1}(\kk)}_{\kk \in \R^d}$ is smooth, $\Z^d$-periodic, and satisfies $P_{m-1}(\kk) \, P(\kk) = P(\kk) \, P_{m-1}(\kk) = P_{m-1}(\kk)$.

Denote by $P_1(\kk)$ the orthogonal projection onto the orthogonal complement of $\Ran P_{m-1}(\kk)$ inside $\Ran P(\kk)$. Then $\set{P_1(\kk)}_{\kk \in \R^d}$ is a smooth and $\Z^d$-periodic family of rank-$1$ projections, and furthermore $P_1(\kk) \, P(\kk) = P(\kk) \, P_1(\kk) = P_1(\kk)$. In view of the results of the previous Subsections, we can construct two continuous and $\Z^d$-periodic Bloch vectors $\set{\xi_m, \xi_{m+1}}$ which generate $\Ran P_1(\kk)$ at all $\kk \in \R^d$. Since $P_1(\kk)$ is a sub-projection of $P(\kk)$, it then follows that 
\[ P(\kk) \, \xi_a(\kk) = P(\kk) \, P_1(\kk) \, \xi_a(\kk) = P_1(\kk) \, \xi_a(\kk) = \xi_a(\kk) \quad \text{for all } a \in \set{m, m+1} . \]
Besides, by construction $\set{\xi_m(\kk), \xi_{m+1}(\kk)}$ generate the orthogonal complement in $\Ran P(\kk)$ to the span of $\set{\xi_a(\kk)}_{a=1}^{m-1}$, and hence the full collection of $m+1$ Bloch vectors $\set{\xi_a}_{a=1}^{m+1}$ give an $(m+1)$-frame for $\set{P(\kk)}_{\kk \in \R^d}$ consisting of continuous and $\Z^d$-periodic vectors, as desired.
\end{proof}
 
\section{Proof of Theorem~\ref{thm:Main2}: from Hofstadter-like Hamiltonians to Fermi-like magnetic projections} \label{sec:Hofstadter}

In this Section we leave the periodic setting and start the proof of Theorem~\ref{thm:Main2}, which applies to a 2-dimensional discrete magnetic Hamiltonian $H_b$, as described in Section~\ref{sec:Model} (from which we borrow much of the notation). We show that, in order to exhibit an exponentially localized Wannier Parseval frame for the the Fermi projection of the original Hamiltonian, it is sufficient to solve the same problem for a suitable family of \emph{Fermi-like magnetic projections}, in the sense of Definition~\ref{dfn:Fermi-like} and Lemma~\ref{lemmaKN} below. 

\subsection{Fermi-like magnetic projections}

In the following, we drop the dependence on $q$ for notational convenience, effectively setting $q=1$ in the magnetic phase $\eu^{\iu \epsilon q \phi(\cdot,\cdot)}$.

We assume that $H_{b_0}$, and hence $\widetilde{\HH}_0 \equiv U_{b_0} H_{b_0} U_{b_0}^*$, has an isolated spectral island. Then we know that there exists $\epsilon^*$ such that for every $\epsilon \le \epsilon^*$ also $\widetilde{\HH}_\epsilon$ has an isolated spectral island \cite{Cornean}. This allows us to define the family of spectral projections onto this spectral island of $\widetilde{\HH}_\epsilon$, that we denote by $\big\{\widetilde{\mathcal{P}}_\epsilon\big\}_{0 \le \epsilon \le \epsilon^*}$. In the following, we will show that there exists an $\epsilon_0<\epsilon^*$ such that $\widetilde{\mathcal{P}}_\epsilon$ admits a Parseval frames in the sense of Theorem~\ref{thm:Main2}, for every $\epsilon<\epsilon_0$. Notice that this implies also the existence of a Parseval frame for the Fermi projection $\Pi_{b=b_0+\epsilon}$ of the original Hamiltonian $H_{b=b_0+\epsilon}$ consisting of exponentially localized generalized Wannier functions, since the two projections $\Pi_{b=b_0+\epsilon}$ and $\widetilde{\mathcal{P}}_\epsilon$ are unitarily conjugated by the explicit unitary multiplication operator $U_{b=b_0+\epsilon}$ in \eqref{iunie1}, see Corollary \ref{CorGWB}.

From \eqref{apr1'} we see that the $\epsilon$-dependence of (the matrix elements of) the Hamiltonian $\widetilde{\HH}_\epsilon$ is both in the phase factor and in $h_{\kk,\epsilon}$. Nevertheless, it is easier to remove the latter dependence; we will see that this does not spoil the validity of Theorem~\ref{thm:Main2}. We thus consider the operator ${\HH}_\epsilon$ acting on $\ell^2(\Z^2)\otimes \C^{Q}$ defined by the matrix elements
\begin{equation} \label{HHepsilon}
{\HH}_\epsilon(\gamma,\xx;\gamma',\xx') := \eu^{\iu \epsilon \phi(\gamma,\gamma')} \mathcal{T}_0(\gamma-\gamma';\xx,\xx'), \quad \text{where} \quad \mathcal{T}_0(\gamma;\xx,\xx') := \int_\Omega \di \kk \, \eu^{\iu 2\pi \kk\cdot \gamma} h_{\kk,0}(\xx,\xx')
\end{equation}
for $\gamma,\gamma'\in\Z^2$ and $\xx,\xx'\in \set{1,\ldots,Q}$. Notice that, in view of Remark~\ref{rmk:MatrixCommuteMagn}, ${\HH}_\epsilon$ still commutes with the magnetic translations $\tau_{\epsilon,\eta}$, $\eta \in \Z^2$, defined in~\eqref{eqn:tau}.

The following Lemma ensures that if $\widetilde{\HH}_{\epsilon}$ is gapped, also $\HH_{\epsilon}$ is gapped.

\begin{lemma}
There exists $\widetilde{\epsilon}_0$ such that, for every $0\le\epsilon\le\widetilde{\epsilon}_0$, the spectral island $\sigma_{b_0+\epsilon}$ of ${\HH}_{\epsilon}$ is $\epsilon$-close in the Hausdorff distance to a spectral island $\widetilde{\sigma}_{b_0+\epsilon}$ of $\widetilde{\HH}_\epsilon$.
\end{lemma}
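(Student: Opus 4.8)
The plan is to compare the two Hamiltonians $\HH_\epsilon$ and $\widetilde{\HH}_\epsilon$ by quantifying how far apart their matrix elements are, and then invoke a standard perturbative stability statement for isolated spectral islands. First I would observe that both operators have matrix elements of the form ``magnetic phase times a function of $\gamma - \gamma'$'': for $\HH_\epsilon$ the hopping is $\mathcal{T}_0(\gamma - \gamma';\xx,\xx')$, the Fourier coefficient of $h_{\kk,0}$, whereas for $\widetilde{\HH}_\epsilon$ it is the $\epsilon$-dependent hopping determined by the extra phase factors in \eqref{febru5} together with $h_{\kk,b_0+\epsilon}$. Since $U_{b_0+\epsilon}\HH_\epsilon U_{b_0+\epsilon}^*$ and $\widetilde{\HH}_\epsilon$ carry the same Peierls phase structure, their difference is (after conjugating by the explicit unitary $U_{b}$, which is a multiplication operator and hence does not affect norms) controlled entirely by $\|h_{\kk,b_0+\epsilon} - h_{\kk,0}\|$, uniformly in $\kk$.

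Next I would estimate $\|\widetilde{\HH}_\epsilon - \HH_\epsilon\|$ (or, more precisely, the norm of the difference after the unitary identification). By the Lemma just proved, $h_{\kk,b_0+\epsilon}$ is real-analytic in $\epsilon$, so $\sup_{\kk}\|h_{\kk,b_0+\epsilon} - h_{\kk,b_0}\| = O(\epsilon)$; one also needs that replacing $b_0$ by $0$ in the off-diagonal hopping phases of \eqref{febru5} and collapsing the $q$-dependence costs at most $O(\epsilon)$ in operator norm — here the key input is again the exponential decay of $\mathcal{T}(\gamma)$ in $\gamma$, which makes the sum defining the fiber absolutely convergent with uniformly bounded tails, so that the $\epsilon$-dependent modification of finitely many phases plus the uniformly small remainder is $O(\epsilon)$. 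Thus there is a constant $C$ with $\|\widetilde{\HH}_\epsilon - \HH_\epsilon\| \le C\epsilon$ for $\epsilon$ small.

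Finally, recall that $\widetilde{\HH}_0 = \HH_0$ has an isolated spectral island $\sigma_{b_0}$, separated from the rest of the spectrum by a gap of width $g>0$. By norm-resolvent (indeed plain norm) stability of spectra — e.g.\ via the Riesz projection $\frac{1}{2\pi\iu}\oint_\Gamma (z - \widetilde{\HH}_\epsilon)^{-1}\,\di z$ around a fixed contour $\Gamma$ encircling $\sigma_{b_0}$ at distance $\sim g/2$, which is well-defined once $C\epsilon < g/2$ — the spectrum of $\widetilde{\HH}_\epsilon$ inside $\Gamma$, call it $\widetilde{\sigma}_{b_0+\epsilon}$, stays within Hausdorff distance $C\epsilon$ of $\sigma(\HH_\epsilon)\cap \text{int}\,\Gamma = \sigma_{b_0+\epsilon}$, and both are genuine isolated spectral islands. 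Choosing $\widetilde{\epsilon}_0 := \min\{\epsilon^*, g/(2C)\}$ gives the claim. The main obstacle — really the only non-routine point — is the norm estimate $\|\widetilde{\HH}_\epsilon - \HH_\epsilon\| = O(\epsilon)$: one must be careful that the extra phase factors in \eqref{febru5}, which depend on the internal coordinates $\xx,\yy$ and on $\epsilon$, together with the switch from $h_{\kk,b_0+\epsilon}$ to $h_{\kk,0}$, are uniformly small, and this rests entirely on the exponential decay of $\mathcal{T}$ to tame the lattice sum; everything else is a textbook resolvent-contour argument.
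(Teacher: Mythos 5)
Your first step is essentially the paper's: the difference of the two Hamiltonians is estimated matrix element by matrix element, the $\epsilon$-dependent phase discrepancies grow at most like $\epsilon\left(\|\gamma-\gamma'\|+1\right)$ and are absorbed by the exponential decay of $\mathcal{T}$, giving $\left|\HH_\epsilon(\gamma,\xx;\gamma',\xx')-\widetilde{\HH}_\epsilon(\gamma,\xx;\gamma',\xx')\right|\le C\epsilon\,\eu^{-\beta\|\gamma-\gamma'\|}$ and then, by a Schur--Holmgren bound, $\norm{\HH_\epsilon-\widetilde{\HH}_\epsilon}\le C'\epsilon$. (Your intermediate phrasing about ``the modification of finitely many phases plus a small remainder'' is not the right mechanism --- \emph{all} phases are modified, each by an amount of order $\epsilon\|\gamma-\gamma'\|$, and it is exactly this linear growth that the exponential decay of $\mathcal{T}$ must tame --- but your closing summary states the correct point, so this is only a presentational blemish.)

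The genuine gap is in your concluding spectral step. You take a \emph{fixed} contour $\Gamma$ around the unperturbed island $\sigma_{b_0}$ of $\HH_0=\widetilde{\HH}_0$ and assert that the Riesz projection of $\widetilde{\HH}_\epsilon$ along $\Gamma$ is well defined ``once $C\epsilon<g/2$''. But the constant $C$ you produced controls $\norm{\widetilde{\HH}_\epsilon-\HH_\epsilon}$, i.e.\ the comparison of the two operators \emph{at the same} $\epsilon$; it says nothing about how far $\sigma(\widetilde{\HH}_\epsilon)$ has moved from $\sigma(\widetilde{\HH}_0)$. That comparison across $\epsilon$ is \emph{not} a norm-small perturbation: $\norm{\widetilde{\HH}_\epsilon-\widetilde{\HH}_0}$ (equivalently $\norm{\HH_\epsilon-\HH_0}$) does not tend to zero as $\epsilon\to0$, since $\left|\eu^{\iu\epsilon\phi(\gamma,\gamma')}-1\right|$ is of order one on rows with $\|\gamma\|\sim1/\epsilon$ --- the familiar fact that a uniform magnetic field is never a small perturbation in operator norm. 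Hence keeping $\Gamma$ in the resolvent set of $\widetilde{\HH}_\epsilon$ (and of $\HH_\epsilon$) requires the Lipschitz continuity of the spectrum in the flux, i.e.\ the magnetic perturbation theory of \cite{Cornean} which the paper only uses to guarantee the existence of the isolated island of $\widetilde{\HH}_\epsilon$ for $\epsilon\le\epsilon^*$; writing $\widetilde{\epsilon}_0=\min\{\epsilon^*,g/(2C)\}$ does not by itself locate that island inside your fixed $\Gamma$. The paper avoids the issue altogether: it never anchors at $\epsilon=0$, but argues symmetrically that if $\mathrm{dist}\left(z,\sigma(\HH_\epsilon)\right)>C'\epsilon$ then a Neumann series shows $z\in\rho(\widetilde{\HH}_\epsilon)$, and vice versa, so the two full spectra --- and in particular the islands --- are at Hausdorff distance at most $C'\epsilon$. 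Replacing your contour step by this direct two-sided resolvent argument (or by an explicit quantitative appeal to \cite{Cornean}) closes the gap.
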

\begin{proof}
From \eqref{febru5} we see that 
\[ \left| {\HH}_\epsilon(\gamma,\xx;\gamma',\xx') - \widetilde{\HH}_\epsilon(\gamma,\xx;\gamma',\xx') \right| \leq C \, \epsilon \, \left( \|\gamma-\gamma'\| +1 \right) \left| \mathcal{T}(\gamma-\gamma';\xx,\xx') \right|, \]
which implies the estimate
\begin{equation}\label{hc20}
\left| {\HH}_\epsilon(\gamma,\xx;\gamma',\xx') - \widetilde{\HH}_\epsilon(\gamma,\xx;\gamma',\xx') \right| \leq C \, \epsilon \,  \eu^{-\beta \|\gamma-\gamma'\|}
\end{equation}
for some positive constants $C, \beta > 0$ uniformly in $\xx, \xx' \in \set{1, \ldots, Q}$. From the above we conclude via a Schur--Holmgren estimate that $\norm{{\HH}_\epsilon-\widetilde{\HH}_\epsilon} \leq C' \, \epsilon$ for some constant $C'>0$. An elementary argument based on Neumann series shows that if ${\rm dist}(z,\sigma(\HH_\epsilon))> C' \,\epsilon$ then $z$ must also be in the resolvent set of $\widetilde{\HH}_\epsilon$. The argument is symmetric in the two operators, hence the spectra are at Hausdorff distance $\epsilon$. 
\end{proof}

In view of the above Lemma, the family of projections $\{\mathcal{P}_\epsilon\}_{\epsilon \in \left[0,\widetilde{\epsilon}_0\right]}$ onto the spectral island $\sigma_{b_0+\epsilon}$ of ${\HH}_{\epsilon}$ is well defined, for example by the Riesz formula
\begin{equation} \label{RieszFormula}
\mathcal{P}_\epsilon = \frac{\iu}{2 \pi} \oint_{\mathcal{C}} \di z \, ({\HH}_{\epsilon} - z)^{-1},
\end{equation}
where $\mathcal{C}$ is a positively-oriented contour in the complex energy plane which encloses only the spectral island $\sigma_{b_0+\epsilon}$. This family of projections satisfies a number of properties (see Proposition~\ref{ApplicationMPF} below), which for later convenience we collect in the following Definition.

\begin{definition}[Fermi-like magnetic projections] \label{dfn:Fermi-like}
A family of projections $\set{\Pi^{(\epsilon)}}_{\epsilon \in [0,\epsilon_0)}$ acting on $\ell^2(\Z^2) \otimes \C^Q$ is called a family of \emph{Fermi-like magnetic projections} if the following properties are satisfied:
\begin{enumerate}[label=(\roman*),ref=\roman*]
 \item {\sl perturbation of a periodic projection}: the projection $\Pi_0\equiv\Pi^{(0)}$ is such that there exists a family of rank-$m$ orthogonal projection $P_0(\kk)$ acting on $\C^Q$ which is smooth and $\Z^2$-periodic as a function of $\kk$, and such that 
\begin{equation} \label{ZeroPeriodic}
\Pi_0(\gamma,\xx;\gamma',\xx')= \int_{\Omega} \di\kk \, \eu^{\iu 2\pi \kk \cdot (\gamma-\gamma')} P_0(\kk)(\xx,\xx') \quad \text{for all } \gamma,\gamma' \in \Z^2, \: \xx,\xx' \in \set{1,\ldots,Q};
\end{equation}
moreover for some positive constants $C$ and $\alpha$
\begin{equation} \label{DiagLoc}
\left|\Pi^{(\epsilon)}(\gamma,\xx;\gamma',\xx') - \eu^{\iu \epsilon \phi(\gamma,\gamma')} \Pi_{0}(\gamma,\xx;\gamma',\xx') \right| \leq C \,\epsilon \, \eu^{-\alpha \|\gamma-\gamma'\|}
\end{equation}
for all $\gamma,\gamma' \in \Z^2$ and $\xx,\xx' \in \set{1,\ldots,Q}$;
 \item {\sl exponential localization of the matrix elements}: for some positive constants $C$ and $\Ll$ we have 
\begin{equation} \label{ExpLocIntKernel}
\left| \Pi^{(\epsilon)}(\gamma,\xx;\gamma',\xx') \right| \leq  C \eu^{-\Ll \|\gamma-\gamma'\|}
\end{equation} 
for all $\gamma, \gamma' \in \Z^2$, $\xx, \xx' \in \set{1, \ldots, Q}$ and $\epsilon \in [0,\epsilon_0)$;
 \item {\sl intertwining with the magnetic translations}: for all $\eta\in \Z^2$
\begin{equation} \label{InterMagTransl}
\eu^{\iu \epsilon \phi(\gamma,\eta)} \Pi^{(\epsilon)}(\gamma-\eta,\xx;\gamma'-\eta,\xx') \eu^{-\iu \epsilon \phi(\gamma',\eta)} = \Pi^{(\epsilon)}(\gamma,\xx;\gamma',\xx') \quad \forall \: \gamma,\gamma' \in \Z^2, \: \xx,\xx' \in \set{1,\ldots,Q}.
\end{equation}
\end{enumerate}
\end{definition}

Notice that the above relation~\eqref{InterMagTransl} is clearly equivalent to
\begin{equation} \label{MagCommute}
\tau_{\epsilon,\eta} \Pi^{(\epsilon)} \tau_{\epsilon,\eta}^* = \Pi^{(\epsilon)}\, , \quad \forall \: \eta \in \Z^2
\end{equation}
by a computation analogous to~\eqref{luglio2}.

\begin{proposition} \label{ApplicationMPF}
The family of projections $\{\mathcal{P}_\epsilon\}_{\epsilon \in \left[0,\widetilde{\epsilon}_0\right]}$ is a family of magnetic Fermi-like projections in the sense of Definition \ref{dfn:Fermi-like}.
\end{proposition}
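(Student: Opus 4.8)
The plan is to take $\Pi^{(\epsilon)}:=\mathcal{P}_\epsilon$, the Riesz projection~\eqref{RieszFormula} of $\HH_\epsilon$, and verify the three requirements of Definition~\ref{dfn:Fermi-like} one at a time, fixing the threshold $\epsilon_0\le\widetilde{\epsilon}_0$ small enough along the way. Property~\eqref{InterMagTransl} (equivalently~\eqref{MagCommute}) comes essentially for free: the matrix elements of $\HH_\epsilon$ have the form $\ep{\gamma}{\gamma'}\,\mathcal{T}_0(\gamma-\gamma';\xx,\xx')$ with the hopping factor $\mathcal{T}_0$ \emph{independent of} $\epsilon$, so Remark~\ref{rmk:MatrixCommuteMagn} applies and gives $\tau_{\epsilon,\eta}\,\HH_\epsilon\,\tau_{\epsilon,\eta}^{*}=\HH_\epsilon$ for the magnetic translations of~\eqref{eqn:tau}; hence $\tau_{\epsilon,\eta}(\HH_\epsilon-z)^{-1}\tau_{\epsilon,\eta}^{*}=(\HH_\epsilon-z)^{-1}$ for $z$ on the contour $\mathcal{C}$, and integrating over $\mathcal{C}$ transfers the commutation to $\mathcal{P}_\epsilon$.

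Next I would analyse the fiber at $\epsilon=0$. Since $\HH_0$ is translation invariant, the Bloch--Floquet transform $\mathcal{U}\sub{BF}$ fibers it as $\int_{\Omega}^{\oplus}\di\kk\,h_{\kk,0}$, and, the spectral island being isolated uniformly in $\kk$, the fiberwise Riesz formula defines $P_0(\kk):=\tfrac{\iu}{2\pi}\oint_{\mathcal{C}}\di z\,(h_{\kk,0}-z)^{-1}$, a rank-$m$ orthogonal projection on $\C^{Q}$ which inherits the real-analyticity and $\Z^2$-periodicity of $h_{\kk,0}$ (Lemma after Proposition~\ref{prop1}). Undoing the transform gives $\mathcal{P}_0(\gamma,\xx;\gamma',\xx')=\int_\Omega\di\kk\,\eu^{\iu 2\pi\kk\cdot(\gamma-\gamma')}P_0(\kk)(\xx,\xx')$, which is~\eqref{ZeroPeriodic}; moreover, by a Paley--Wiener argument, real-analyticity of $P_0(\cdot)$ forces exponential decay of $\mathcal{P}_0(\gamma-\gamma';\cdot,\cdot)$ in $\|\gamma-\gamma'\|$, establishing~\eqref{ExpLocIntKernel} at $\epsilon=0$.

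The heart of the proof is the decay estimate~\eqref{DiagLoc}, for which I would use a ``twisted resolvent''. Let $R_0(\gamma-\gamma';\xx,\xx';z)$ be the matrix elements of $(\HH_0-z)^{-1}$, which decay exponentially in $\|\gamma-\gamma'\|$ uniformly for $z\in\mathcal{C}$ by the discrete Combes--Thomas estimate (Appendix~\ref{app:KernelEstimates}), and let $\widetilde{R}_\epsilon(z)$ be the operator with matrix elements $\ep{\gamma}{\gamma'}R_0(\gamma-\gamma';\xx,\xx';z)$. Using the bilinearity and antisymmetry of the Peierls phase in the ``cocycle'' form
\[ \phi(\gamma,\gamma'')+\phi(\gamma'',\gamma')=\phi(\gamma,\gamma')-\phi(\gamma-\gamma'',\gamma'-\gamma''), \]
a direct computation shows $(\HH_\epsilon-z)\,\widetilde{R}_\epsilon(z)=\Id+B_\epsilon(z)$, where the kernel of $B_\epsilon(z)$ carries the factor $\eu^{-\iu\epsilon\phi(\gamma-\gamma'',\gamma'-\gamma'')}-1$ inside the convolution of $\mathcal{T}_0$ with $R_0$. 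Since $|\eu^{\iu\theta}-1|\le|\theta|$ and $|\phi(\gamma-\gamma'',\gamma'-\gamma'')|\le\tfrac12\|\gamma-\gamma''\|\,\|\gamma'-\gamma''\|$, the extra polynomial weight is absorbed by the exponential decay of $\mathcal{T}_0$ and $R_0$, and the convolution over $\gamma''$ gives $|B_\epsilon(z)(\gamma,\xx;\gamma',\xx')|\le C\,\epsilon\,\eu^{-\alpha\|\gamma-\gamma'\|}$ uniformly on $\mathcal{C}$, hence $\norm{B_\epsilon(z)}\le C'\epsilon$ by a Schur--Holmgren bound. Choosing $\epsilon_0$ with $C'\epsilon_0<1$, the operator $\Id+B_\epsilon(z)$ is boundedly invertible with an exponentially localized inverse (Neumann series together with the convolution-of-exponentials estimates of Appendix~\ref{app:KernelEstimates}), so
\[ (\HH_\epsilon-z)^{-1}-\widetilde{R}_\epsilon(z)=-\,\widetilde{R}_\epsilon(z)\,B_\epsilon(z)\,(\Id+B_\epsilon(z))^{-1} \]
has matrix elements bounded by $C\,\epsilon\,\eu^{-\alpha\|\gamma-\gamma'\|}$ uniformly on $\mathcal{C}$. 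Integrating over $\mathcal{C}$, and observing that $\tfrac{\iu}{2\pi}\oint_{\mathcal{C}}\di z\,\widetilde{R}_\epsilon(z)$ has matrix elements $\ep{\gamma}{\gamma'}\mathcal{P}_0(\gamma-\gamma';\xx,\xx')$, yields exactly~\eqref{DiagLoc}; combining this with the $\epsilon=0$ decay of $\mathcal{P}_0$ (or applying Combes--Thomas directly to $(\HH_\epsilon-z)^{-1}$, which is at positive distance from $\mathcal{C}$ uniformly for $\epsilon\le\widetilde{\epsilon}_0$) then upgrades~\eqref{ExpLocIntKernel} to all $\epsilon\in[0,\epsilon_0)$ with uniform constants.

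I expect the main obstacle to be precisely the quantitative bookkeeping in this twisted-resolvent step: showing that $B_\epsilon(z)$, and then $(\Id+B_\epsilon(z))^{-1}$, stay exponentially localized with rate and constants uniform along the \emph{whole} contour $\mathcal{C}$. This is where the discrete magnetic Combes--Thomas estimates of Appendix~\ref{app:KernelEstimates} do the real work; the remaining ingredients — the $\epsilon=0$ fiber analysis, the magnetic-translation commutation, and the Paley--Wiener decay — are routine.
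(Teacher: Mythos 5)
Your proposal is correct and follows essentially the same route as the paper: commutation with the magnetic translations via Remark~\ref{rmk:MatrixCommuteMagn} and the Riesz formula, the fiberwise identification of $P_0(\kk)$ at $\epsilon=0$, Combes--Thomas (Proposition~\ref{Combes-Thomas}) for~\eqref{ExpLocIntKernel}, and for~\eqref{DiagLoc} exactly the paper's gauge-dressed approximate resolvent of Proposition~\ref{prop:DiagLoc}, i.e.\ $(\HH_\epsilon-z)\widetilde{R}_\epsilon(z)=\Id+B_\epsilon(z)$ with $\norm{B_\epsilon(z)}=\mathcal{O}(\epsilon)$ and exponentially localized kernel, followed by contour integration. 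The only cosmetic difference is that you control $(\Id+B_\epsilon)^{-1}$ by an exponentially localized Neumann series (as in Lemma~\ref{Nagy}), whereas the paper instead applies Combes--Thomas directly to $(\HH_\epsilon-z)^{-1}$ in the identity $S_z^{(\epsilon)}-(\HH_\epsilon-z)^{-1}=(\HH_\epsilon-z)^{-1}T_z^{(\epsilon)}$; both are valid.
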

\begin{proof}
Since $\HH_\epsilon$ commutes with the magnetic translations, every $\mathcal{P}_\epsilon$ satisfies~\eqref{InterMagTransl} by construction. Moreover, they also satisfy~\eqref{ExpLocIntKernel}. This is a direct application of the Combes--Thomas estimates on the resolvent of $\HH_\epsilon$, because the gap in the spectrum of the Hamiltonian persists for every $\epsilon$. For completeness we report the proof of the Combes--Thomas estimates in Appendix~\ref{app:KernelEstimates}, see Proposition~\ref{Combes-Thomas}.

By hypothesis, $\HH_0=\widetilde{\HH}_0$ has a (possibly) non-trivial spectral island $\sigma_{b_0}$ which must come from the ranges of $m<Q$ bands of $h_{\kk,0}$. In other words, $h_{\kk,0}$ has $m<Q$ eigenvalues whose ranges remain separated from the other $Q-m$, and these ranges together build up the spectral island $\sigma_{b_0}$.  Denote by $P_0(\kk)$ the spectral projection of $h_{\kk,0}$ corresponding to these $m$ eigenvalues. From the properties of $h_{\kk,0}$ we can deduce that $P_0(\kk)$ is smooth and periodic in $\kk$, and the spectral projection of $\HH_0$ onto $\sigma_{b_0}$ is simply given by 
\begin{equation*}
\mathcal{P}_0(\gamma,\xx;\gamma',\xx')= \int_{\Omega} \di\kk\, \eu^{\iu  2\pi \kk \cdot (\gamma-\gamma')} P_0(\kk)(\xx,\xx'), \quad \gamma,\gamma' \in \Z^2, \: \xx,\xx' \in \set{1,\dots,Q}.
\end{equation*}
Hence also the property~\eqref{ZeroPeriodic} is proved.

Now it remains to prove~\eqref{DiagLoc}. This result is  essentially known, see~\cite{Cornean} and~\cite{BrynildsenCornean13}; for completeness, we present a proof adapted to our setting in~Proposition \ref{prop:DiagLoc}.
\end{proof}

Coming back to the problem of constructing a Parseval frame spanning the range of the Fermi projection $\widetilde{\mathcal{P}}_\epsilon$ of $\widetilde{\HH}_\epsilon$, we show that it is in fact sufficient to construct it for the Fermi-like magnetic projections $\mathcal{P}_\epsilon$.

\begin{lemma} \label{lemmaKN}
There exists $\epsilon_0 \le \widetilde{\epsilon}_0$ such that, for every $0\le\epsilon\le\epsilon_0$, the Fermi-like magnetic projection $\mathcal{P}_\epsilon$ admits a magnetic Parseval frame in the sense of Theorem~\ref{thm:Main2} if and only if the Fermi projection $\widetilde{\mathcal{P}}_\epsilon$ admits it. 
\end{lemma}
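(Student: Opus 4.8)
The plan is to produce, for $\epsilon$ small enough, a unitary operator $V_\epsilon$ on $\ell^2(\Z^2)\otimes\C^Q$ that intertwines the two projections, $V_\epsilon\,\mathcal{P}_\epsilon\,V_\epsilon^*=\widetilde{\mathcal{P}}_\epsilon$, and that has two extra features: it commutes with all magnetic translations $\tau_{\epsilon,\eta}$, $\eta\in\Z^2$ (see~\eqref{eqn:tau}), and it maps exponentially localized vectors to exponentially localized vectors. Granting this, the stated equivalence is immediate: if $\set{w_a^{(\epsilon)}}_{1\le a\le M}$ is a magnetic Parseval frame for $\mathcal{P}_\epsilon$ in the sense of Theorem~\ref{thm:Main2}, then the vectors $\widetilde{w}_a^{(\epsilon)}:=V_\epsilon\,w_a^{(\epsilon)}$ are again exponentially localized, and since $V_\epsilon\tau_{\epsilon,\eta}=\tau_{\epsilon,\eta}V_\epsilon$ one gets
\[
\widetilde{\mathcal{P}}_\epsilon=V_\epsilon\,\mathcal{P}_\epsilon\,V_\epsilon^*=\sum_{\eta\in\Z^2}\sum_{a=1}^{M}\ket{\tau_{\epsilon,\eta}\,\widetilde{w}_a^{(\epsilon)}}\bra{\tau_{\epsilon,\eta}\,\widetilde{w}_a^{(\epsilon)}},
\]
a magnetic Parseval frame for $\widetilde{\mathcal{P}}_\epsilon$ with the same number of generators; applying $V_\epsilon^*$ yields the converse implication, and because $V_\epsilon$ is unitary the same transport turns an orthonormal Wannier basis (the case $M=m$, relevant precisely when $c_1(P_0)=0$, a condition which refers to the common $\epsilon=0$ fiber $P_0(\kk)$ and hence is shared by both projections, since $\HH_0=\widetilde{\HH}_0$ forces $\mathcal{P}_0=\widetilde{\mathcal{P}}_0$) into an orthonormal Wannier basis.

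To construct $V_\epsilon$ I would use the Kato--Nagy intertwiner. First I would record that for $\epsilon$ small both $\HH_\epsilon$ and $\widetilde{\HH}_\epsilon$ are gapped and commute with the $\tau_{\epsilon,\eta}$ (Remark~\ref{rmk:MatrixCommuteMagn}), hence so do $\mathcal{P}_\epsilon$ and $\widetilde{\mathcal{P}}_\epsilon$; by the Combes--Thomas estimates of Appendix~\ref{app:KernelEstimates} applied through the Riesz formula, the matrix elements of both projections decay exponentially at a common rate $\Ll_0>0$, uniformly in $\epsilon$ (this is property~\eqref{ExpLocIntKernel}). Next, since $\HH_0=\widetilde{\HH}_0$ gives $\mathcal{P}_0=\widetilde{\mathcal{P}}_0$, subtracting the two instances of~\eqref{DiagLoc} (valid for $\widetilde{\mathcal{P}}_\epsilon$ by~\cite{Cornean,BrynildsenCornean13} and for $\mathcal{P}_\epsilon$ by Proposition~\ref{ApplicationMPF}) yields
\[
\left|\mathcal{P}_\epsilon(\gamma,\xx;\gamma',\xx')-\widetilde{\mathcal{P}}_\epsilon(\gamma,\xx;\gamma',\xx')\right|\le C\,\epsilon\,\eu^{-\alpha\norm{\gamma-\gamma'}}
\]
for some $C,\alpha>0$ independent of $\epsilon$. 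I would then fix $0<\Ll<\min(\Ll_0,\alpha)$ and work inside the Banach algebra $\mathcal{A}_\Ll$ of operators $A$ with $\norm{A}_\Ll:=\sup_{\gamma}\sum_{\gamma'}\max_{\xx,\xx'}|A(\gamma,\xx;\gamma',\xx')|\,\eu^{\Ll\norm{\gamma-\gamma'}}<\infty$ (a Schur-type norm, submultiplicative up to a factor $Q$, hence defining a Banach-algebra structure after an equivalent renorming). The estimates above say that $\mathcal{P}_\epsilon,\widetilde{\mathcal{P}}_\epsilon\in\mathcal{A}_\Ll$ with $\epsilon$-independent bounds and that $\norm{\mathcal{P}_\epsilon-\widetilde{\mathcal{P}}_\epsilon}_\Ll\le C'\epsilon$.

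Setting $R:=\widetilde{\mathcal{P}}_\epsilon-\mathcal{P}_\epsilon$, for $\epsilon$ small enough that $\norm{R}_\Ll<1$ (with room for the factor $Q$) I would define
\[
V_\epsilon:=\left(\Id-R^2\right)^{-1/2}\left(\widetilde{\mathcal{P}}_\epsilon\,\mathcal{P}_\epsilon+(\Id-\widetilde{\mathcal{P}}_\epsilon)(\Id-\mathcal{P}_\epsilon)\right),
\]
which is unitary and satisfies $V_\epsilon\,\mathcal{P}_\epsilon\,V_\epsilon^*=\widetilde{\mathcal{P}}_\epsilon$. Since $V_\epsilon$ is obtained from $\mathcal{P}_\epsilon$ and $\widetilde{\mathcal{P}}_\epsilon$ by finitely many sums and products together with the binomial series for $(\Id-R^2)^{-1/2}$, and this series converges in $\mathcal{A}_\Ll$ because $\norm{R^2}_\Ll$ is small, we get $V_\epsilon\in\mathcal{A}_\Ll$; moreover $V_\epsilon$ commutes with anything commuting with both $\mathcal{P}_\epsilon$ and $\widetilde{\mathcal{P}}_\epsilon$, in particular with every $\tau_{\epsilon,\eta}$. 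Finally, a routine estimate (using $\norm{\gamma-\gamma''}+\norm{\gamma''-\gamma'}\ge\norm{\gamma-\gamma'}$ to split off a decaying prefactor and summing the remaining kernel) shows that every operator in $\mathcal{A}_\Ll$ sends a vector exponentially localized at rate $\beta>0$ to one exponentially localized at rate $\min(\beta,\Ll)/4$; this supplies the two extra features used in the first paragraph, with $\epsilon_0$ taken as the threshold below which all of the above holds.

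I expect the main obstacle to be exactly this uniform-in-$\epsilon$ control of localization under the intertwiner: the existence of \emph{some} unitary conjugating $\mathcal{P}_\epsilon$ to $\widetilde{\mathcal{P}}_\epsilon$ once $\norm{\mathcal{P}_\epsilon-\widetilde{\mathcal{P}}_\epsilon}<1$ in operator norm is classical, but an arbitrary such unitary need not have exponentially localized matrix elements. Keeping the whole Kato--Nagy construction inside $\mathcal{A}_\Ll$ requires checking both that the two projections lie in $\mathcal{A}_\Ll$ with $\epsilon$-independent bounds (Combes--Thomas) and that their difference is small in the $\mathcal{A}_\Ll$-norm, not merely in operator norm — the latter being where $\mathcal{P}_0=\widetilde{\mathcal{P}}_0$ and~\eqref{DiagLoc} are genuinely used, and what makes the series for $(\Id-R^2)^{-1/2}$ converge in $\mathcal{A}_\Ll$ rather than just in $\mathcal{B}(\ell^2(\Z^2)\otimes\C^Q)$.
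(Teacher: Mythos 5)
Your proposal is correct and follows essentially the same route as the paper: a Kato--Nagy unitary intertwining $\mathcal{P}_\epsilon$ and $\widetilde{\mathcal{P}}_\epsilon$, its commutation with the magnetic translations, and the exponential localization of its matrix elements (which the paper proves in Lemma~\ref{Nagy} by expanding the series term by term — exactly the content of your weighted Schur-norm algebra $\mathcal{A}_\Ll$), after which the frame is transported back and forth by the unitary. The only minor difference is how the key input $\left|\mathcal{P}_\epsilon(\gamma,\xx;\gamma',\xx')-\widetilde{\mathcal{P}}_\epsilon(\gamma,\xx;\gamma',\xx')\right|\le C\,\epsilon\,\eu^{-\beta\|\gamma-\gamma'\|}$ is obtained: you subtract two \eqref{DiagLoc}-type bounds, invoking \cite{Cornean,BrynildsenCornean13} for $\widetilde{\mathcal{P}}_\epsilon$, whereas the paper derives it internally from the resolvent identity, the bound~\eqref{hc20} on $\HH_\epsilon-\widetilde{\HH}_\epsilon$, Combes--Thomas estimates, and the Riesz formula.
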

\begin{proof}
The two families $\{\mathcal{P}_\epsilon\}_{\epsilon \in \left[0,\widetilde{\epsilon}_0\right]}$ and $\{\widetilde{\mathcal{P}}_\epsilon\}_{\epsilon \in \left[0,\widetilde{\epsilon}_0\right]}$ are projections onto isolated spectral island of Hamiltonians with exponentially localized matrix elements (see Proposition~\ref{Combes-Thomas}), and hence have themselves exponentially localized matrix elements in view of the Riesz formula~\eqref{RieszFormula} (compare~\eqref{ExpLocIntKernel}). Using the resolvent identity and \eqref{hc20}, it follows that
\begin{equation} \label{giugno27}
\left| {\mathcal{P}}_{\epsilon}(\gamma,\xx;\gamma',\xx') -\widetilde{\mathcal{P}}_{\epsilon}(\gamma,\xx;\gamma',\xx') \right| \leq C \, \epsilon \, \eu^{-\beta \|\gamma-\gamma'\|}.
\end{equation}
Because of this, there exists an $\epsilon_0>0$ such that $\|{\mathcal{P}}_{\epsilon}-\widetilde{\mathcal{P}}_{\epsilon}\|\leq 1/2$ for all $\epsilon\le\epsilon_0$. Then via the associated Kato--Nagy unitary $K_\epsilon$ \cite{Kato66} we have $\widetilde{\mathcal{P}}_\epsilon = K_\epsilon \mathcal{P}_\epsilon K_\epsilon^{-1}$. The formula
\begin{equation} \label{eqn:KatoNagy}
\begin{aligned}
K_\epsilon&=\left[{\bf 1}-\left(\widetilde{\mathcal{P}}_\epsilon-{\mathcal{P}}_\epsilon\right)^{2}\right]^{-1/2} \left(\widetilde{\mathcal{P}}_\epsilon {\mathcal{P}}_\epsilon+ \left({\bf 1} - \widetilde{\mathcal{P}}_\epsilon  \right) \left({\bf 1} - {\mathcal{P}}_\epsilon \right) \right)\\
&=\left[{\bf 1}+\sum_{k=1}^{\infty}\frac{\left(2k-1\right)!!}{k! 2^k}\left(\widetilde{\mathcal{P}}_\epsilon-{\mathcal{P}}_\epsilon\right)^{2k}\right] \left(\widetilde{\mathcal{P}}_\epsilon {\mathcal{P}}_\epsilon+ \left({\bf 1} - \widetilde{\mathcal{P}}_\epsilon  \right) \left({\bf 1} - {\mathcal{P}}_\epsilon \right) \right)
\end{aligned}
\end{equation}
for the Kato--Nagy unitary clearly shows that also $K_\epsilon$ commutes with the magnetic translations. Thus we see that if $\big\{w_a^{(\epsilon)}\big\}_{1 \le a \le M}$ is such that~\eqref{apr7} holds then the vectors $\big\{K_\epsilon^{*} w_a^{(\epsilon)}\big\}_{1\leq a\leq M}$ will span (together with all their magnetic translates) the range of ${\mathcal{P}}_{\epsilon}$, and viceversa. In order to see that the vectors $K_\epsilon^{*} w_a^{(\epsilon)} = w_a^{(\epsilon)} + (K_\epsilon^* - {\bf 1}) w_a^{(\epsilon)}$ are also exponentially localized, it suffices to show that $K_\epsilon-{\bf 1}$ has exponentially localized matrix elements in the sense of~\eqref{ExpLocIntKernel}, which we will do below in Lemma~\ref{Nagy}. This will conclude the proof.
\end{proof}

\begin{lemma} \label{Nagy}
For $K_\epsilon$ in \eqref{eqn:KatoNagy}, there exist constants $C, \lambda>0$ such that for all $\gamma, \gamma' \in \Z^2$ and $\xx, \xx' \in \set{1, \ldots, Q}$ and if $\epsilon$ is small enough
\[ \left| K_\epsilon(\gamma,\xx;\gamma',\xx') - \delta_{\gamma,\gamma'} \, \delta_{\xx,\xx'} \right| \leq  C \eu^{-\Ll \|\gamma-\gamma'\|}. \]
\end{lemma}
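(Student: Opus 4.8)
The plan is to show that $K_\epsilon - \Id$ has exponentially localized matrix elements by exploiting the explicit power-series form of $K_\epsilon$ in \eqref{eqn:KatoNagy} together with the exponential localization of the two projections $\mathcal{P}_\epsilon$ and $\widetilde{\mathcal{P}}_\epsilon$ and, crucially, the smallness estimate \eqref{giugno27} which says that $\mathcal{P}_\epsilon - \widetilde{\mathcal{P}}_\epsilon$ is $O(\epsilon)$ in a \emph{weighted} sense. The key algebraic observation is that both factors appearing in \eqref{eqn:KatoNagy} differ from the identity by expressions built out of products of $\mathcal{P}_\epsilon$, $\widetilde{\mathcal{P}}_\epsilon$, and their difference: indeed $\widetilde{\mathcal{P}}_\epsilon \mathcal{P}_\epsilon + (\Id - \widetilde{\mathcal{P}}_\epsilon)(\Id - \mathcal{P}_\epsilon) - \Id = -(\widetilde{\mathcal{P}}_\epsilon - \mathcal{P}_\epsilon)^2 + (\widetilde{\mathcal{P}}_\epsilon - \mathcal{P}_\epsilon) (\Id - 2\mathcal{P}_\epsilon)$ up to rearrangement — more simply, $\widetilde{\mathcal{P}}_\epsilon \mathcal{P}_\epsilon + (\Id - \widetilde{\mathcal{P}}_\epsilon)(\Id - \mathcal{P}_\epsilon) = \Id + (\widetilde{\mathcal{P}}_\epsilon - \mathcal{P}_\epsilon)(2\mathcal{P}_\epsilon - \Id) $, hence each term is a finite product of exponentially localized operators at least one of which is $\mathcal{P}_\epsilon - \widetilde{\mathcal{P}}_\epsilon$, carrying an explicit factor $\epsilon$ by \eqref{giugno27}. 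Likewise every term in the Neumann-type series $\sum_{k\ge 1} \frac{(2k-1)!!}{k! 2^k} (\widetilde{\mathcal{P}}_\epsilon - \mathcal{P}_\epsilon)^{2k}$ is a product of $2k$ copies of $\mathcal{P}_\epsilon - \widetilde{\mathcal{P}}_\epsilon$.

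The main technical ingredient is therefore a \emph{convolution lemma}: if $A$ and $B$ are operators on $\ell^2(\Z^2)\otimes\C^Q$ with $|A(\gamma,\xx;\gamma',\xx')| \le C_A\, \eu^{-\mu\|\gamma-\gamma'\|}$ and $|B(\gamma,\xx;\gamma',\xx')| \le C_B\, \eu^{-\mu\|\gamma-\gamma'\|}$, then the product $AB$ satisfies $|(AB)(\gamma,\xx;\gamma',\xx')| \le C_A C_B\, Q\, \big(\sum_{\zeta\in\Z^2}\eu^{-\mu'\|\zeta\|}\big)\, \eu^{-\mu'\|\gamma-\gamma'\|}$ for any $\mu' < \mu$ (one loses a bit of the exponential rate to absorb the lattice sum, using $\|\gamma-\gamma''\|+\|\gamma''-\gamma'\|\ge\|\gamma-\gamma'\|$ and then splitting the remaining weight). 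I would fix once and for all a rate $\lambda \in (0, \min\{\Ll,\alpha,\beta\})$ — with $\Ll$ from \eqref{ExpLocIntKernel} for the two projections and $\beta$ from \eqref{giugno27} — and prove by induction on $j$ that a product of $j$ operators each exponentially localized at rate $> \lambda$ is exponentially localized at rate $\lambda$ with a constant growing at most geometrically, say $\le C_0^{\,j}$ for some $C_0$ depending only on $Q$, $\lambda$ and the individual constants. Applying this to the $2k$-fold products $(\widetilde{\mathcal{P}}_\epsilon-\mathcal{P}_\epsilon)^{2k}$, each such product is bounded by $(C\epsilon)^{2k} C_0^{\,2k}\,\eu^{-\lambda\|\gamma-\gamma'\|}$; since $\frac{(2k-1)!!}{k!2^k}\le 1$, the series over $k$ converges in the weighted norm provided $\epsilon$ is small enough that $C C_0 \epsilon < 1$, and its sum is $O(\epsilon^2)$ times $\eu^{-\lambda\|\gamma-\gamma'\|}$. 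Multiplying by the other factor $\Id + (\widetilde{\mathcal{P}}_\epsilon - \mathcal{P}_\epsilon)(2\mathcal{P}_\epsilon - \Id)$, itself of the form $\Id + O(\epsilon)\cdot(\text{exp.\ localized at rate }\lambda)$, and collecting the contributions that do not cancel against $\Id$, one obtains exactly an estimate of the form $|K_\epsilon(\gamma,\xx;\gamma',\xx') - \delta_{\gamma,\gamma'}\delta_{\xx,\xx'}| \le C\,\eu^{-\lambda\|\gamma-\gamma'\|}$ (in fact with a gain of a factor $\epsilon$, which is more than needed), for $\epsilon$ below a suitable threshold, which I would shrink $\epsilon_0$ to enforce.

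The part requiring the most care is bookkeeping the rate loss in the convolution lemma so that it does not degenerate to zero after summing infinitely many terms: the naive estimate loses a fixed $\delta>0$ of exponential rate per multiplication, which would kill the bound after infinitely many factors. The fix is standard but must be stated: one does \emph{not} iterate the lossy bound term by term; instead one proves the uniform statement ``product of arbitrarily many operators, each localized at rate $\ge \mu_0$ for a \emph{fixed} $\mu_0 > \lambda$, is localized at rate $\lambda$'' in one stroke, by first passing from rate $\mu_0$ to rate $\lambda$ for each factor (paying the constant $\sum_\zeta \eu^{-(\mu_0-\lambda)\|\zeta\|}<\infty$ only \emph{once per factor}) and then using that a convolution of finitely many functions each bounded by $\eu^{-\lambda\|\cdot\|}$ is bounded by $(\text{const})^{j}\,\eu^{-\lambda\|\cdot\|}$ via the triangle inequality $\|\gamma-\gamma'\|\le\sum\|\text{increments}\|$ and controlling $\sum_{\text{intermediate sites}} \prod \eu^{-(\lambda/2)\|\cdot\|}$ — i.e.\ keeping half the rate for the final decay and spending the other half on the internal lattice sums. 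With the geometric constant $C_0^{\,j}$ thus obtained and the extra smallness $(C\epsilon)^{j}$ from \eqref{giugno27}, the series converges and the proof is complete.
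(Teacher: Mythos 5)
Your proposal is correct and follows essentially the same route as the paper's proof: both rewrite $K_\epsilon-\mathbf{1}$ as sums of products in which each term carries at least one factor of $D_\epsilon=\widetilde{\mathcal{P}}_\epsilon-\mathcal{P}_\epsilon$ (hence a factor $\epsilon$ via \eqref{giugno27}), and both control the powers $D_\epsilon^{2k}$ by a weighted Schur/convolution estimate at a slightly reduced exponential rate that is fixed once and for all, so that the constants grow only geometrically and the series converges for $\epsilon$ small. Your way of paying for the internal lattice sums (splitting half the rate) versus the paper's (weighted row sums at a rate $\beta'<\beta$) is only a cosmetic difference in bookkeeping.
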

\begin{proof}
Using $\widetilde{\mathcal{P}}_\epsilon = \widetilde{\mathcal{P}}_\epsilon^2$ and ${\mathcal{P}}_\epsilon = {\mathcal{P}}_\epsilon^2$, let us first rewrite~\eqref{eqn:KatoNagy} as
\begin{equation} \label{giugno27'}
\begin{aligned}
K_\epsilon-{\bf 1}&=2\widetilde{\mathcal{P}}_\epsilon {\mathcal{P}}_\epsilon - \widetilde{\mathcal{P}}_\epsilon   - {\mathcal{P}}_\epsilon +\left[\sum_{k=1}^{\infty}\frac{\left(2k-1\right)!!}{k! 2^k}\left(\widetilde{\mathcal{P}}_\epsilon-{\mathcal{P}}_\epsilon\right)^{2k}\right] \left(\widetilde{\mathcal{P}}_\epsilon {\mathcal{P}}_\epsilon+ \left({\bf 1} - \widetilde{\mathcal{P}}_\epsilon  \right) \left({\bf 1} - {\mathcal{P}}_\epsilon \right) \right) \\
&=\widetilde{\mathcal{P}}_\epsilon \left({\mathcal{P}}_\epsilon - \widetilde{\mathcal{P}}_\epsilon \right) + \left({\mathcal{P}}_\epsilon - \widetilde{\mathcal{P}}_\epsilon \right) {\mathcal{P}}_\epsilon +\left[\sum_{k=1}^{\infty}\frac{\left(2k-1\right)!!}{k! 2^k}\left(\widetilde{\mathcal{P}}_\epsilon-{\mathcal{P}}_\epsilon\right)^{2k}\right] \left(\widetilde{\mathcal{P}}_\epsilon {\mathcal{P}}_\epsilon+ \left({\bf 1} - \widetilde{\mathcal{P}}_\epsilon  \right) \left({\bf 1} - {\mathcal{P}}_\epsilon \right) \right).
\end{aligned}
\end{equation}
The right-hand side of the above is a sum of operators which are all expressed in terms of $D_\epsilon := \widetilde{\mathcal{P}}_\epsilon-{\mathcal{P}}_\epsilon$ times the bounded operators ${\mathcal{P}}_\epsilon$, $\widetilde{\mathcal{P}}_\epsilon$ and $\widetilde{\mathcal{P}}_\epsilon {\mathcal{P}}_\epsilon+({\bf 1} - \widetilde{\mathcal{P}}_\epsilon) ({\bf 1} - {\mathcal{P}}_\epsilon)$. Notice that, in view of~\eqref{giugno27}, we have
\begin{equation} \label{Est1}
\left|D_\epsilon(\gamma,\xx;\gamma',\xx') \right| \leq C \, \epsilon \,  \eu^{-\beta\|\gamma-\gamma'\|} \, , \quad \gamma, \gamma' \in \Z^2, \: \xx,\xx' \in \set{1,\ldots,Q}.
\end{equation}
Thus, to show that the matrix elements of $K_\epsilon - {\bf 1}$ satisfy the estimate in the statement, it suffices to show that the series in square brackets appearing in~\eqref{giugno27'} defines an operator $O_\epsilon$ with exponentially localized matrix elements. 

Consider the matrix elements of $D_\epsilon^{n}$ with $n\in \N$ and $n\geq2$:
\[ D_\epsilon^n(\gamma,\xx;\gamma',\xx')=\sum_{\gamma_1\in \Z^2}\sum_{\xx_1=1}^{Q} \cdots \sum_{\gamma_{n-1}\in \Z^2}\sum_{\xx_{n-1}=1}^{Q} D_\epsilon(\gamma,\xx;\gamma_1,\xx_1) \cdots D_\epsilon(\gamma_{n-1},\xx_{n-1};\gamma',\xx')\, . \]
In view of \eqref{Est1} we have that, for $0<\beta'<\beta$,
\begin{align*}
\eu^{\beta' \|\gamma-\gamma'\|} \left|D^n_\epsilon(\gamma,\xx;\gamma',\xx')\right| &\leq C \, \epsilon \,  Q^{n-1} \left( \sup_{\gamma'' \in \Z^2}\sup_{\xx'',\yy\in \set{1,\ldots,Q}} \sum_{\eta \in \Z^2} |D_\epsilon(\gamma'',\xx'';\eta,\yy)| \eu^{\beta' \|\gamma''-\eta\|} \right)^{n-1}\\
&\leq \epsilon^{n} (C')^{n}
\end{align*}
for some $C'>0$. With this estimate, it follows that 
\[ |O_\epsilon(\gamma,\xx;\gamma',\xx')| \le [(1-(\epsilon C'))^{-1/2}-1] \, \eu^{-\beta' \|\gamma-\gamma'\|} \le C'' \, \epsilon  \, \eu^{-\beta' \|\gamma-\gamma'\|} \]
for some $C''>0$ uniform in $\gamma, \gamma' \in \Z^2$, $\xx, \xx' \in \set{1,\ldots,Q}$, and $\epsilon$ sufficiently small.
\end{proof}


\section{Proof of Theorem~\ref{thm:Main2}: Parseval frames for Fermi-like magnetic projections}  \label{sec:MagneticParseval}

In view of the discussion in the previous Section, we have reduced the statement of Theorem~\ref{thm:Main2} to the following equivalent result, formulated in terms of Fermi-like magnetic projections.

\begin{proposition} \label{thm:Main3}
Let $\set{\Pi^{(\epsilon)}}_{\epsilon \in [0,\epsilon_0)}$ be a family of Fermi-like magnetic projections as in Definition~\ref{dfn:Fermi-like}. Then there exists $\epsilon_0'<\epsilon_0$ such that for all $0\leq \epsilon\leq \epsilon_0'$ the following hold:
\begin{enumerate}[label=(\roman*),ref=\roman*]
\item \label{item:Main3_i} one can \emph{construct} $m-1$ orthonormal exponentially localized vectors $\big\{w_s^{(\epsilon)}\}_{1 \le s \le m-1}$ and two other exponentially localized vectors $\big\{W_r^{(\epsilon)}\big\}_{1 \le r \le 2}$ such that 
\[ \Pi_1^{(\epsilon)} := \sum_{\gamma \in \Z^2} \sum_{s=1}^{m-1} \ket{\tau_{\epsilon,\gamma}\, w_s^{(\epsilon)}} \bra{\tau_{\epsilon,\gamma}\, w_s^{(\epsilon)}} \quad \text{and} \quad \Pi_2^{(\epsilon)} := \sum_{\gamma \in \Z^2} \sum_{r=1}^{2} \ket{\tau_{\epsilon,\gamma}\, W_r^{(\epsilon)}} \bra{\tau_{\epsilon,\gamma}\, W_r^{(\epsilon)}} \]
are two orthogonal projections commuting with all magnetic translations $\tau_{\epsilon,\gamma}$ defined in~\eqref{eqn:tau} and such that $\Pi^{(\epsilon)} = \Pi_1^{(\epsilon)} + \Pi_2^{(\epsilon)}$ and $\Pi_1^{(\epsilon)}\Pi_2^{(\epsilon)}=0$;
 \item \label{item:Main3_ii} if moreover $c_1(P_0)=0 \in \Z$, where $P_0$ is as in~\eqref{ZeroPeriodic}, then one can \emph{construct} $m$ \emph{orthonormal} and \emph{exponentially localized} vectors $\big\{w_a^{(\epsilon)}\big\}_{1 \le a \le m}$ such that
\[ \Pi^{(\epsilon)} = \sum_{\gamma \in \Z^2} \sum_{a=1}^{m} \ket{\tau_{\epsilon,\gamma}\, w_a^{(\epsilon)}} \bra{\tau_{\epsilon,\gamma}\, w_a^{(\epsilon)}}. \]
\end{enumerate}
\end{proposition}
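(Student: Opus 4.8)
The plan is to treat $\{\Pi^{(\epsilon)}\}_{\epsilon\in[0,\epsilon_0)}$ as a small, magnetically twisted perturbation of the periodic projection $\Pi_0$: at $\epsilon=0$ one imports from the periodic Theorem~\ref{thm:Main} an exponentially localized Wannier generating system, and for $\epsilon$ small one transports it by applying $\Pi^{(\epsilon)}$ and then re-orthonormalizing (resp. re-framing), using only operations that commute with the magnetic translations $\tau_{\epsilon,\eta}$ and preserve exponential localization. Concretely, at $\epsilon=0$ the projection $\Pi_0$ is the inverse (magnetic) Bloch--Floquet transform of the smooth $\Z^2$-periodic rank-$m$ family $\{P_0(\kk)\}$ of \eqref{ZeroPeriodic}. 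Theorem~\ref{thm:Main}\eqref{Main_1} produces $m-1$ continuous periodic orthonormal Bloch vectors; their span is a rank-$(m-1)$ subfamily of $P_0(\kk)$ whose orthogonal complement $P_1(\kk)$ inside $P_0(\kk)$ is a smooth periodic rank-$1$ family, to which the rank-$1$ case of Theorem~\ref{thm:Main}\eqref{Main_2} applies and yields a Parseval $2$-frame of continuous periodic Bloch vectors; and if $c_1(P_0)=0$, Theorem~\ref{thm:Main}\eqref{Main_3} gives directly an orthonormal Bloch basis of $m$ vectors. By Appendix~\ref{app:Smoothing} all these Bloch vectors inherit the regularity of $\{P_0(\kk)\}$ (in particular they are real-analytic for the Fermi-like projections of Proposition~\ref{ApplicationMPF}), so their inverse Bloch--Floquet transforms are exponentially localized functions $g_\bullet$ with $\Pi_0=\sum_{\gamma\in\Z^2}\sum_\bullet\ket{\tau_{0,\gamma}g_\bullet}\bra{\tau_{0,\gamma}g_\bullet}$, where $\tau_{0,\gamma}$ denotes ordinary translation (and likewise for the rank-$(m-1)$ and rank-$1$ pieces).

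The transport step — which I would isolate as a lemma and invoke three times — reads as follows. Let $\{\Pi^{(\epsilon)}\}$ be a family of Fermi-like magnetic projections and $\{g_\bullet\}$ exponentially localized functions such that $\{\tau_{0,\gamma}g_\bullet\}$ is a Parseval frame of $\Ran\Pi^{(0)}$. Put $v_\bullet^{(\epsilon)}:=\Pi^{(\epsilon)}g_\bullet\in\Ran\Pi^{(\epsilon)}$. By the exponential localization \eqref{ExpLocIntKernel} of $\Pi^{(\epsilon)}$, each $v_\bullet^{(\epsilon)}$ is exponentially localized near the origin, and by the perturbation bound \eqref{DiagLoc} together with $\Pi_0 g_\bullet=g_\bullet$ one obtains $\|v_\bullet^{(\epsilon)}-g_\bullet\|\le C\epsilon$: the twisting phase $\eu^{\iu\epsilon\phi(\gamma,\gamma')}$ in \eqref{DiagLoc} is only ever tested against the localized bump $g_\bullet$, on whose effective support $|\eu^{\iu\epsilon\phi(\gamma,\gamma')}-1|=O(\epsilon)$. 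The frame operator $G^{(\epsilon)}:=\sum_\gamma\sum_\bullet\ket{\tau_{\epsilon,\gamma}v_\bullet^{(\epsilon)}}\bra{\tau_{\epsilon,\gamma}v_\bullet^{(\epsilon)}}$ is then self-adjoint, non-negative, exponentially localized, commutes with all $\tau_{\epsilon,\eta}$ (the projective cocycle of \eqref{eqn:PropMagTransl} cancels between bra and ket), satisfies $\Ran G^{(\epsilon)}\subseteq\Ran\Pi^{(\epsilon)}$, and $\|G^{(\epsilon)}-\Pi^{(\epsilon)}\|\le C\epsilon$ (its $\epsilon=0$ value being $\Pi_0$). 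Hence for $\epsilon$ small, $G^{(\epsilon)}$ is invertible on $\Ran\Pi^{(\epsilon)}$ and $\Ran G^{(\epsilon)}=\Ran\Pi^{(\epsilon)}$. Setting $w_\bullet^{(\epsilon)}:=(G^{(\epsilon)}+\Id-\Pi^{(\epsilon)})^{-1/2}v_\bullet^{(\epsilon)}$, the inverse square root commutes with $\tau_{\epsilon,\eta}$ and with $\Pi^{(\epsilon)}$, so $w_\bullet^{(\epsilon)}\in\Ran\Pi^{(\epsilon)}$, $\tau_{\epsilon,\gamma}w_\bullet^{(\epsilon)}=(G^{(\epsilon)}+\Id-\Pi^{(\epsilon)})^{-1/2}\tau_{\epsilon,\gamma}v_\bullet^{(\epsilon)}$, and a standard frame-theoretic computation gives $\sum_\gamma\sum_\bullet\ket{\tau_{\epsilon,\gamma}w_\bullet^{(\epsilon)}}\bra{\tau_{\epsilon,\gamma}w_\bullet^{(\epsilon)}}=\Pi^{(\epsilon)}$. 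If the starting system was orthonormal, then $w_\bullet^{(\epsilon)}=g_\bullet+O(\epsilon)$ forces the Gram operator of $\{\tau_{\epsilon,\gamma}w_\bullet^{(\epsilon)}\}$ within $O(\epsilon)$ of the identity; being also the Gram operator of a Parseval frame it is a projection, hence equal to the identity, and the transported system is again orthonormal.

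Part~\eqref{item:Main3_ii} is now immediate: apply the lemma with $\{g_\bullet\}$ the $m$ orthonormal $\epsilon=0$ Wannier functions of $\Pi_0$ coming from the $c_1(P_0)=0$ case. For part~\eqref{item:Main3_i}: first apply the lemma to the $m-1$ orthonormal Wannier functions $w_s^{(0)}$, obtaining orthonormal exponentially localized $w_s^{(\epsilon)}\in\Ran\Pi^{(\epsilon)}$ and the subprojection $\Pi_1^{(\epsilon)}:=\sum_\gamma\sum_s\ket{\tau_{\epsilon,\gamma}w_s^{(\epsilon)}}\bra{\tau_{\epsilon,\gamma}w_s^{(\epsilon)}}$, which commutes with the $\tau_{\epsilon,\eta}$ and obeys $\Pi_1^{(\epsilon)}\Pi^{(\epsilon)}=\Pi_1^{(\epsilon)}$. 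Then $\Pi_2^{(\epsilon)}:=\Pi^{(\epsilon)}-\Pi_1^{(\epsilon)}$ is an orthogonal projection commuting with the $\tau_{\epsilon,\eta}$, with $\Pi_1^{(\epsilon)}\Pi_2^{(\epsilon)}=0$ and exponentially localized matrix elements, and a direct check — using that $\Pi_1^{(\epsilon)}$ has matrix elements of the form $\eu^{\iu\epsilon\phi(\gamma,\gamma')}$ times an exponentially localized $O(\epsilon)$-perturbation of $\Pi_1^{(0)}$ — shows that $\{\Pi_2^{(\epsilon)}\}$ is itself a family of Fermi-like magnetic projections, with $m$ and $P_0$ replaced by $1$ and $P_1$. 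Applying the lemma to $\{\Pi_2^{(\epsilon)}\}$ and to the periodic Parseval $2$-frame $\{W_r^{(0)}\}$ of $\Ran\Pi_2^{(0)}$ produces the two exponentially localized vectors $W_r^{(\epsilon)}$ with $\Pi_2^{(\epsilon)}=\sum_\gamma\sum_{r}\ket{\tau_{\epsilon,\gamma}W_r^{(\epsilon)}}\bra{\tau_{\epsilon,\gamma}W_r^{(\epsilon)}}$, which is the decomposition required in~\eqref{item:Main3_i}.

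The real work lies in two places. First, one must check that every operator built by holomorphic functional calculus above — the inverse square roots $(G^{(\epsilon)}+\Id-\Pi^{(\epsilon)})^{-1/2}$ and the symmetric orthonormalizers — has exponentially localized matrix elements of the magnetic-matrix form $A(\gamma,\xx;\gamma',\xx')=\eu^{\iu\epsilon\phi(\gamma,\gamma')}a_\epsilon(\gamma-\gamma';\xx,\xx')$, so that the resulting $w_\bullet^{(\epsilon)}$ and $W_r^{(\epsilon)}$ are genuinely exponentially localized; this is the analogue of Lemma~\ref{Nagy}, and I would prove it with the same Schur/Combes--Thomas bookkeeping of Appendix~\ref{app:KernelEstimates} (Proposition~\ref{Combes-Thomas}), using that this class of kernels is stable under products, bounded inverses, and contour integration. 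Second, one needs the uniform-in-$\epsilon$ bounds $\|v_\bullet^{(\epsilon)}-g_\bullet\|+\|G^{(\epsilon)}-\Pi^{(\epsilon)}\|=O(\epsilon)$ that pin down $\epsilon_0'$, which rest on the precise shape of \eqref{DiagLoc} (the magnetic phase acting as a local gauge, harmless against exponentially localized test functions). I expect the first point to be the main obstacle, since it must be carried through each functional-calculus step while preserving the magnetic covariance; the rest is soft frame theory performed equivariantly with respect to the magnetic translations.
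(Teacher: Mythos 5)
Your construction of $\Pi_1^{(\epsilon)}$ (and of part~(ii)) is essentially the paper's: you project the $\epsilon=0$ Wannier functions with $\Pi^{(\epsilon)}$ and re-orthonormalize by an inverse square root commuting with the magnetic translations; the paper does exactly this in Lemma~\ref{lemaianuar1}, only phrased through the Gram matrix $M_\epsilon^{-1/2}$ acting on the index side rather than your frame operator $(G^{(\epsilon)}+\Id-\Pi^{(\epsilon)})^{-1/2}$ acting on the Hilbert-space side — for a Riesz system these are the same (Löwdin) orthonormalization, and your supporting estimates (weighted kernel bounds of the type \eqref{ExpoM}, \eqref{dm28}, \eqref{wepsi}, plus the Lemma~\ref{Nagy}/Lemma~\ref{AuxLemma} bookkeeping for functional calculus of magnetically covariant, exponentially localized kernels) are precisely the ones the paper carries out. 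Where you genuinely diverge is the rank-one complement: the paper does \emph{not} transport the $\epsilon=0$ Parseval $2$-frame directly; it redoes the space-doubling at finite $\epsilon$ — forming $T^{(\epsilon)}$ from $\widetilde P_2\oplus P_2$, correcting the $O(\epsilon)$ almost-projection by Nenciu's formula to $\mathfrak{P}^{(\epsilon)}$, building an orthonormal Wannier system on the doubled space (possible since $P_3$ has zero Chern number), restricting via $\pi_2$, and finally conjugating $\mathfrak{P}_2^{(\epsilon)}$ to $\Pi_2^{(\epsilon)}$ by a Kato--Nagy unitary. You instead observe that $\Pi_2^{(\epsilon)}=\Pi^{(\epsilon)}-\Pi_1^{(\epsilon)}$ is again a Fermi-like family (rank one at $\epsilon=0$) and apply the canonical-tight-frame construction to $\{\Pi_2^{(\epsilon)}\tau_{\epsilon,\gamma}W_r^{(0)}\}$, exploiting the key point that the \emph{frame operator} of an overcomplete Parseval frame is the projection itself, hence invertible on the range for small $\epsilon$ even though the Gram matrix is not — which is exactly why the paper's $M_\epsilon^{-1/2}$ device cannot be used for the $2$-frame and why it resorts to doubling. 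Your route is legitimate and arguably leaner (no doubled space, no Nenciu correction, no Kato--Nagy step for this piece), at the price of having to verify the \eqref{DiagLoc}-type bound $|\Pi_1^{(\epsilon)}(\gamma,\xx;\gamma',\xx')-\eu^{\iu\epsilon\phi(\gamma,\gamma')}\Pi_1^{(0)}(\gamma,\xx;\gamma',\xx')|\le C\epsilon\,\eu^{-\alpha\|\gamma-\gamma'\|}$ for the \emph{constructed} projection $\Pi_1^{(\epsilon)}$ — the same estimate the paper proves via Remark~\ref{rmk:weps} and \eqref{hc15}, so nothing beyond the machinery you already plan to develop; the paper's detour through $\mathfrak{P}^{(\epsilon)}$ buys an explicitly ``trivialized'' ambient projection at finite $\epsilon$, but your direct argument reaches the same conclusion.
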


The rest of this Section will be devoted to the proof of Proposition~\ref{thm:Main3}. Before diving into the mathematical details, we briefly sketch here the strategy of the case $c_1(P_0)\neq 0$, namely Proposition \ref{thm:Main3}\eqref{item:Main3_i}.
\begin{description}[leftmargin=4em]
	\item[Step 1] By Definition \ref{dfn:Fermi-like} the projection $\Pi^{(0)}$ is a fibered operator in the Bloch--Floquet representation. Therefore, applying Theorem \ref{thm:Main} we can construct an exponentially localized Wannier Parseval frame $\{\tau_{\epsilon=0,\gamma} w_a\}_{a\in\set{1,\ldots,m+1}, \:\gamma \in \Z^2}$ for $\Pi^{(0)}$ (compare Theorem \ref{thm:Main_bis}). Consider now the $m-1$ orthonormal exponentially localized Wannier vectors $\left\{w_s\right\}_{1\leq s \leq m-1}$ together with all their magnetic translates via $\tau_{\epsilon,\gamma}$.  Using gauge covariant magnetic perturbation theory by means of hypothesis \eqref{DiagLoc}, we can prove that the Gram matrix associated to the set $\{\Pi^{(\epsilon)} \tau_{\epsilon, \gamma} w_a\}_{a\in\set{1,\ldots,m-1}, \:\gamma \in \Z^2} $ has matrix elements which decay exponentially away from the diagonal and is close to the identity. Hence we can construct $m-1$ orthogonal exponential localized vectors $\big\{ w_s^{(\epsilon)}\big\}_{1\leq s\leq m-1}$ in the range of $\Pi^{\epsilon}$, such that $\Pi_1^{(\epsilon)} := \sum_{\gamma \in \Z^2} \sum_{s=1}^{m-1} \ket{\tau_{\epsilon,\gamma}\, w_s^{(\epsilon)}} \bra{\tau_{\epsilon,\gamma}\, w_s^{(\epsilon)}}$. 
	\item[Step 2] The second step makes use of the space-dimension-doubling procedure of Section \ref{sec:thmMain(ii)} coupled with gauge covariant magnetic perturbation theory. We consider the operator $T^{(\epsilon)}=T_1^{(\epsilon)}\oplus T_2^{(\epsilon)}$ whose matrix elements are given by the matrix elements of the position-space representation of the projection $P_2(\kk) \oplus (C P_2(\kk) C^{-1})$ times the $\epsilon$-dependent Peierls magnetic phase $\eu^{\iu \epsilon \phi(\cdot,\cdot)}$. $T^{(\epsilon)}$ is not a projection (for $\epsilon\neq0$) but it is an almost projection, namely it is $\epsilon$-close to an actual projection $\mathfrak{P}^{(\epsilon)}=\mathfrak{P}_{1}^{(\epsilon)}
	\oplus \mathfrak{P}_{2}^{(\epsilon)}$. Since the projection $T^{(\epsilon=0)}$ is trivial, we can repeat the procedure described in Step 1 and obtain a basis of exponentially localized Wannier-type functions for the projection $\mathfrak{P}^{(\epsilon)}$. Projecting onto one component of the doubled space and using a Kato--Nagy unitary $K_\epsilon$, we obtain $\Pi_2^{(\epsilon)}=\sum_{\gamma\in \Z^2}\sum_{r=1}^2\ket{\tau_{\epsilon,\gamma}W_{r}^{(\epsilon)}}\bra{\tau_{\epsilon,\gamma}W_{r}^{(\epsilon)}}$, where $\Pi_2^{(\epsilon)}= K_\epsilon \mathfrak{P}_2^{(\epsilon)}K_\epsilon^{-1}$ and $\big\{W_{r}^{(\epsilon)}\big\}_{r \in \{1,2\}}$ are two exponentially localized vectors.
\end{description}

\subsection{Parseval frame at $\epsilon = 0$}

We first look at the projection $\Pi_0 \equiv \Pi^{(\epsilon=0)}$, which can be fibered through the projections $\set{P_0(\kk)}_{\kk \in \R^2}$ as in~\eqref{ZeroPeriodic}. We know from Theorem \ref{thm:Main} that we may find $m-1$ orthonormal vectors $\set{\xi_s(\kk)}_{1 \le s \le m-1}$ in the range of $P_0(\kk)$ which are both $\Z^2$-periodic and real-analytic in $\kk$. Also, we may find two other vectors $\Xi_1(\kk)$ and $\Xi_2(\kk)$ in the range of $P_0(\kk)$ which are $\Z^2$-periodic and real-analytic in $\kk$, so that $\set{\xi_1(\kk),\ldots,\xi_{m-1}(\kk),\Xi_1(\kk),\Xi_2(\kk)}$ forms a Parseval frame for the range of $P_0(\kk)$. This means that we have the following orthogonal decomposition for $P_0(\kk)$ :
\begin{equation}\label{apr2}
P_0(\kk) = P_1(\kk) + P_2(\kk), \quad P_1(\kk) := \sum_{s=1}^{m-1} \ket{\xi_s(\kk)} \bra{\xi_s(\kk)}, \quad P_2(\kk) := \sum_{r=1}^{2} \ket{\Xi_r(\kk)} \bra{\Xi_r(\kk)}. 
\end{equation}
Note that $P_1(\kk) P_2(\kk)=0$ and $P_2(\kk)$ has rank 1. Going back from $\kk$-space to position-space we define the operators $\Pi_j$ acting in $\ell^2(\Z^2) \otimes \C^Q$ having the following matrix elements:
\[ \Pi_j(\gamma,\xx;\gamma',\xx') := \int_\Omega \di\kk \, \eu^{\iu 2\pi \kk \cdot (\gamma-\gamma')} P_j(\kk)(\xx,\xx'), \quad j \in \set{0,1,2} \, , \gamma, \gamma'\in \Z^2\, , \: \xx, \xx' \in \set{1,\ldots,Q} \, . \]

Since $\Pi^{(\epsilon)}$ is a family of Fermi-like magnetic projections, from \eqref{ExpLocIntKernel} we have
\begin{equation} \label{CTP0}
\left|\Pi_0(\gamma,\xx;\gamma',\xx') \right| \leq C \exp^{-\Ll \|\gamma-\gamma'\|} \, .
\end{equation}
Define now the exponentially localized Wannier-type functions
\[ w_s(\gamma,\xx) := \int_\Omega \di\kk \, \eu^{\iu 2\pi \kk\cdot \gamma}\xi_s(\kk,\xx),\quad W_r(\gamma,\xx) := \int_\Omega \di\kk \, \eu^{\iu 2\pi \kk\cdot \gamma}\Xi_r(\kk,\xx), \]
for $s \in \set{1, \ldots, m-1}$ and $r \in \set{1,2}$. Due to the analyticity of the Bloch-type vectors $\xi_s$ and $\Xi_r$, we obtain that \cite{Cloizeaux, Kuchment16}
\begin{equation} \label{ExpoWF0}
\begin{gathered}
\max_{s \in \set{1,\dots,m-1}} \max_{\xx \in \set{1,\ldots,Q}} \left| \eu^{\beta \|\gamma\|}  w_s(\gamma,\xx) \right| \leq C  \, , \\
 \max_{r \in \set{1,2}} \max_{\xx \in \set{1,\ldots,Q}} \left| \eu^{\beta \|\gamma\|}  W_r(\gamma,\xx) \right| \leq C   \, ;
\end{gathered}
\end{equation}
where $\beta$ is less than the width of the strip of analiticity of the Bloch-type frame.
Moreover we have the following identities:
\[ \Pi_1 = \sum_{\gamma\in\Z^2} \sum_{s=1}^{m-1} \ket{\tau_{0,\gamma}\, w_s} \bra{\tau_{0,\gamma} \, w_s}, \quad \Pi_2 = \sum_{\gamma\in\Z^2} \sum_{r=1}^{2} \ket{\tau_{0,\gamma} \, W_r} \bra{\tau_{0,\gamma} \, W_r}, \quad \Pi_0 = \Pi_1 + \Pi_2. \]

\subsection{Construction of $\Pi_1^{(\epsilon)}$}

The next Lemma provides the construction of $\Pi_1^{(\epsilon)}$ as in the statement of Proposition~\ref{thm:Main3}.

\begin{lemma}[Construction of $\Pi_1^{(\epsilon)}$] \label{lemaianuar1}
Define the self-adjoint operator $M_\epsilon$ acting on the space $\ell^2(\Z^2) \otimes \C^{m-1}$ defined by the matrix elements
\begin{equation} \label{eqn:Mepsilon}
M_\epsilon(\gamma,s;\gamma',s') := \scal{\tau_{\epsilon,\gamma} w_s}{\Pi^{(\epsilon)} \tau_{\epsilon,\gamma'} w_{s'}}_{\ell^2(\Z^2)\otimes \C^Q}.
\end{equation}
If $\epsilon_0$ is small enough, then there exists some $C,\alpha>0$ such that uniformly in $\epsilon\leq \epsilon_0$ we have
\begin{equation} \label{ExpoM}
\left| \delta_{ss'} \delta_{\gamma \gamma'} - M_\epsilon(\gamma,s;\gamma',s') \right| \leq C \, \epsilon \, \eu^{-\alpha\|\gamma-\gamma'\|}.
\end{equation}
The vectors
\begin{equation}\label{ianuar21}
V_{\gamma'',s,\epsilon}(\gamma,\xx) := \sum_{\gamma'\in\Z^2} \sum_{s'=1}^{m-1} [M_\epsilon^{-1/2}](\gamma',s';\gamma'',s) [\Pi^{(\epsilon)}\tau_{\epsilon,\gamma'} w_{s'}](\gamma,\xx)
\end{equation}
are orthonormal. Moreover, there exist $m-1$ exponentially localized vectors $w_s^{(\epsilon)}$, $s \in \set{1,\ldots,m-1}$, such that 
\begin{equation} \label{Vepsilon}
V_{\gamma'',s,\epsilon} = \tau_{\epsilon,\gamma''} w_s^{(\epsilon)} \quad \text{for all } \gamma'' \in \Z^2, \: \epsilon \in [0,\epsilon_0).
\end{equation}
Finally
\[ \Pi_1^{(\epsilon)} := \sum_{\gamma \in\Z^2} \sum_{s=1}^{m-1} \ket{V_{\gamma,s,\epsilon}} \bra{V_{\gamma,s,\epsilon}} = \sum_{\gamma \in\Z^2} \sum_{s=1}^{m-1} \ket{\tau_{\epsilon,\gamma} \, w_s^{(\epsilon)}} \bra{\tau_{\epsilon,\gamma} \, w_s^{(\epsilon)}} \]
is an orthogonal projection such that $\Pi^{(\epsilon)} \,  \Pi_1^{(\epsilon)} = \Pi_1^{(\epsilon)}$. 
\end{lemma}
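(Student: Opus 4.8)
\emph{The plan is to perform a magnetic (covariance-preserving) Löwdin symmetric orthonormalization of the translates $\set{\tau_{\epsilon,\gamma}w_s}$ after projecting them onto $\Ran\Pi^{(\epsilon)}$.} First I would observe that, since $\Pi^{(\epsilon)}$ is an orthogonal projection, $M_\epsilon(\gamma,s;\gamma',s')=\scal{\Pi^{(\epsilon)}\tau_{\epsilon,\gamma}w_s}{\Pi^{(\epsilon)}\tau_{\epsilon,\gamma'}w_{s'}}$, so $M_\epsilon$ is the Gram matrix of the vectors $\set{\Pi^{(\epsilon)}\tau_{\epsilon,\gamma}w_s}$ and is automatically self-adjoint and positive semi-definite. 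At $\epsilon=0$ one has $\Pi_0\tau_{0,\gamma'}w_{s'}=\tau_{0,\gamma'}w_{s'}$ (as $w_{s'}\in\Ran\Pi_1\subseteq\Ran\Pi_0$) and, by the Bloch--Floquet Plancherel identity, $\scal{\tau_{0,\gamma}w_s}{\tau_{0,\gamma'}w_{s'}}=\int_\Omega\di\kk\,\eu^{\iu 2\pi\kk\cdot(\gamma-\gamma')}\scal{\xi_s(\kk)}{\xi_{s'}(\kk)}=\delta_{ss'}\delta_{\gamma\gamma'}$, so $M_0=\Id$. For general $\epsilon$ I would commute the translations through $\Pi^{(\epsilon)}$ using \eqref{MagCommute} and the projective relation \eqref{eqn:PropMagTransl}, together with the skew-symmetry of $\phi$, to rewrite
\[ M_\epsilon(\gamma,s;\gamma',s')=\eu^{\iu\epsilon\phi(\gamma,\gamma')}\,\scal{w_s}{\Pi^{(\epsilon)}\tau_{\epsilon,\gamma'-\gamma}w_{s'}}. \]
This exhibits $M_\epsilon$ in the covariant form of Remark~\ref{rmk:MatrixCommuteMagn} (now on $\ell^2(\Z^2)\otimes\C^{m-1}$), hence $M_\epsilon$ commutes with the magnetic translations $\tau_{\epsilon,\eta}$ restricted to that space. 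The bound \eqref{ExpoM} then follows from a telescoping comparison of $M_\epsilon$ with $M_0$: one replaces $\tau_{\epsilon,\cdot}$ by $\tau_{0,\cdot}$, controlling $|\eu^{\iu\epsilon\phi}-1|\le\epsilon\,|\phi|$ against the exponential localization \eqref{ExpoWF0} of $w_s$, and replaces $\Pi^{(\epsilon)}$ by the phase times $\Pi_0$ via \eqref{DiagLoc} and \eqref{ExpLocIntKernel}; the polynomially growing Peierls phases only cost an arbitrarily small loss in the decay rate, so \eqref{ExpoM} holds for some $0<\alpha<\Ll$.

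Next, from \eqref{ExpoM} and a Schur--Holmgren estimate one gets $\norm{M_\epsilon-\Id}\le C\epsilon$, so for $\epsilon_0$ small enough $M_\epsilon\ge\tfrac12>0$, it is invertible, and $M_\epsilon^{-1/2}$ is well defined by functional calculus; being a function of $M_\epsilon$, it again commutes with the magnetic translations. Expanding $M_\epsilon^{-1/2}=\bigl(\Id-(\Id-M_\epsilon)\bigr)^{-1/2}$ as a binomial (Neumann) series and bounding convolutions of exponentially decaying kernels exactly as in the proof of Lemma~\ref{Nagy}, I would show that $M_\epsilon^{-1/2}-\Id$ has matrix elements bounded by $C\,\epsilon\,\eu^{-\alpha'\|\gamma-\gamma'\|}$. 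Orthonormality of the $V_{\gamma'',s,\epsilon}$ in \eqref{ianuar21} is then immediate, since
\[ \scal{V_{\gamma'',s,\epsilon}}{V_{\widetilde\gamma,\widetilde s,\epsilon}}=\bigl[M_\epsilon^{-1/2}\,M_\epsilon\,M_\epsilon^{-1/2}\bigr](\gamma'',s;\widetilde\gamma,\widetilde s)=\delta_{s\widetilde s}\,\delta_{\gamma''\widetilde\gamma}, \]
using that $M_\epsilon^{-1/2}$ is self-adjoint and $M_\epsilon$ is the Gram matrix above.

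For the covariance \eqref{Vepsilon} I would combine the transformation rule $\tau_{\epsilon,\eta}\bigl(\Pi^{(\epsilon)}\tau_{\epsilon,\gamma'}w_{s'}\bigr)=\Pi^{(\epsilon)}\tau_{\epsilon,\eta}\tau_{\epsilon,\gamma'}w_{s'}=\eu^{\iu\epsilon\phi(\gamma',\eta)}\Pi^{(\epsilon)}\tau_{\epsilon,\eta+\gamma'}w_{s'}$ (from \eqref{MagCommute} and \eqref{eqn:PropMagTransl}) with the translational covariance of $M_\epsilon^{-1/2}$; a short index computation, shifting $\gamma'\mapsto\gamma'-\eta$ in \eqref{ianuar21}, gives $\tau_{\epsilon,\eta}V_{\gamma'',s,\epsilon}=V_{\gamma''+\eta,s,\epsilon}$, so setting $w_s^{(\epsilon)}:=V_{0,s,\epsilon}$ yields \eqref{Vepsilon}. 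Exponential localization of $w_s^{(\epsilon)}=\sum_{\gamma',s'}[M_\epsilon^{-1/2}](\gamma',s';0,s)\,\Pi^{(\epsilon)}\tau_{\epsilon,\gamma'}w_{s'}$ then follows by convolving the exponential decay of $[M_\epsilon^{-1/2}](\cdot\,;0,s)$ in $\gamma'$ with the localization of $\Pi^{(\epsilon)}\tau_{\epsilon,\gamma'}w_{s'}$ around $\gamma'$, the latter coming from \eqref{ExpLocIntKernel} and \eqref{ExpoWF0}. Finally, since $\set{V_{\gamma,s,\epsilon}}$ is orthonormal and each $V_{\gamma,s,\epsilon}\in\Ran\Pi^{(\epsilon)}$ (being a linear combination of vectors in the range of $\Pi^{(\epsilon)}$), the operator $\Pi_1^{(\epsilon)}=\sum_{\gamma,s}\ket{V_{\gamma,s,\epsilon}}\bra{V_{\gamma,s,\epsilon}}$ is the orthogonal projection onto their closed span, a subspace of $\Ran\Pi^{(\epsilon)}$, whence $\Pi^{(\epsilon)}\Pi_1^{(\epsilon)}=\Pi_1^{(\epsilon)}$.

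The hard part will be the first paragraph: checking that $M_\epsilon$ genuinely inherits the covariant form of Remark~\ref{rmk:MatrixCommuteMagn} under the \emph{projective} magnetic translations (this is what guarantees that symmetric orthonormalization produces true translates $\tau_{\epsilon,\gamma}\,w_s^{(\epsilon)}$ rather than merely an orthonormal set), and establishing the decay estimate \eqref{ExpoM} while keeping the polynomially growing Peierls phases under control. Once \eqref{ExpoM} is secured, the remaining steps are a routine Neumann-series estimate for $M_\epsilon^{-1/2}$ and the standard Löwdin-orthogonalization bookkeeping.
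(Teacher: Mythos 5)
Your proposal is correct and follows essentially the same route as the paper: you build the Gram matrix $M_\epsilon$, exploit the covariant kernel form $M_\epsilon(\gamma,s;\gamma',s')=\eu^{\iu\epsilon\phi(\gamma,\gamma')}m_\epsilon(\gamma-\gamma';s,s')$ coming from \eqref{MagCommute} and \eqref{eqn:PropMagTransl}, prove \eqref{ExpoM} by comparison with the $\epsilon=0$ case (the paper organizes this comparison through the auxiliary operators $\widehat{\Pi}^{(\epsilon)}$, $\widetilde{\Pi}_1^{(\epsilon)}$, $\widetilde{\Pi}_2^{(\epsilon)}$ and the overlap estimate \eqref{ImpoEstimates}, whereas you factor out the covariant phase and telescope directly, which is equally valid since only relative phases of size $\epsilon\|\gamma-\gamma''\|\,\|\gamma''-\gamma'\|$ survive and are absorbed into a slightly smaller decay rate), and then obtain $M_\epsilon^{-1/2}$ and its localized, covariant kernel by the Neumann series, exactly as in the paper. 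One minor slip: the intermediate identity $\tau_{\epsilon,\eta}V_{\gamma'',s,\epsilon}=V_{\gamma''+\eta,s,\epsilon}$ holds only up to the phase $\eu^{\iu\epsilon\phi(\gamma'',\eta)}$ by \eqref{eqn:PropMagTransl}, but it is exact in the only case you use, $\gamma''=0$, so \eqref{Vepsilon} with $w_s^{(\epsilon)}:=V_{0,s,\epsilon}$ and the remaining conclusions are unaffected.
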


\begin{remark}
When the Chern number of $\set{P_0(\kk)}_{\kk \in \R^2}$ vanishes, then the construction of $\Pi_1^{(\epsilon)}$ provided by Lemma~\ref{lemaianuar1} above can be applied to the whole $\Pi^{(\epsilon)}$, thus proving also Proposition~\ref{thm:Main3}(\ref{item:Main3_ii}).
\end{remark}

\begin{proof}[Proof of Lemma~\ref{lemaianuar1}.]
The proof follows the same ideas as in \cite{CorneanHerbstNenciu16}. In the following the magnetic phase composition rule
\begin{equation}\label{ianuar20}
\phi(\gamma,\gamma')+\phi(\gamma',\gamma'')=\phi(\gamma,\gamma'')+\phi(\gamma-\gamma',\gamma'-\gamma'')
\end{equation}
will be used repeatedly. During the proof we will denote inessential constants by~$K$.

Consider the operator $\widehat{\Pi}^{(\epsilon)}$ defined by the following matrix elements:
\begin{equation} \label{hatPi}
\begin{aligned}
\widehat{\Pi}^{(\epsilon)} (\gamma,\xx;\gamma',\xx')&:= \eu^{\iu \epsilon \phi(\gamma,\gamma')} \Pi_0(\gamma,\xx;\gamma',\xx') =  \eu^{\iu \epsilon \phi(\gamma,\gamma')} \Pi_1(\gamma,\xx;\gamma',\xx')+  \eu^{\iu \epsilon \phi(\gamma,\gamma')} \Pi_2(\gamma,\xx;\gamma',\xx')  \\
&=: \widehat{\Pi}^{(\epsilon)}_1 (\gamma,\xx;\gamma',\xx') + \widehat{\Pi}^{(\epsilon)}_2 (\gamma,\xx;\gamma',\xx') \, ,
\end{aligned}
\end{equation}
and the operator $\widetilde{\Pi}^{(\epsilon)}$ defined by
\begin{equation} \label{tildePi}
\widetilde{\Pi}^{(\epsilon)}:= \sum_{\gamma'' \in \Z^2} \sum_{s=1}^{m-1} | \tau_{\epsilon,\gamma''}w_{s} \rangle \langle  \tau_{\epsilon,\gamma''}w_{s}| + \sum_{\gamma''\in\Z^2} \sum_{r=1}^2  | \tau_{\epsilon,\gamma''}W_{r} \rangle \langle  \tau_{\epsilon,\gamma''}W_{r}| =: \widetilde{\Pi}_1^{(\epsilon)} + \widetilde{\Pi}_2^{(\epsilon)} \,.
\end{equation}
Then we have the following estimate:
\begin{equation} \label{giugno28}
\begin{aligned}
|M_\epsilon(\gamma,s;\gamma',s')- \delta_{ss'}\delta_{\gamma \gamma'}| &\leq \left| \scal{\tau_{\epsilon,\gamma} w_{s}}{\left(\Pi^{(\epsilon)}-\widehat{\Pi}^{(\epsilon)}\right)\tau_{\epsilon,\gamma'} w_{s'}} \right| \\
& \quad+ \left| \scal{\tau_{\epsilon,\gamma} w_{s}}{\left(\widehat{\Pi}^{(\epsilon)}-\widetilde{\Pi}^{(\epsilon)}\right)\tau_{\epsilon,\gamma'} w_{s'}} \right| + \left| \scal{\tau_{\epsilon,\gamma} w_{s}}{\widetilde{\Pi}_2^{(\epsilon)}\tau_{\epsilon,\gamma'} w_{s'}} \right|  \\
& \quad + \left| \scal{\tau_{\epsilon,\gamma} w_{s}}{\widetilde{\Pi}_1^{(\epsilon)}\tau_{\epsilon,\gamma'} w_{s'}} - \delta_{ss'}\delta_{\gamma \gamma'} \right| \, .
\end{aligned}
\end{equation}

The first term of the right-hand side is exponentially localized due to \eqref{DiagLoc} and \eqref{ExpoWF0}. In order to prove the exponential localization of the second term we prove an estimate analogue to \eqref{DiagLoc} for the matrix elements of $\widehat{\Pi}^{(\epsilon)}-\widetilde{\Pi}^{(\epsilon)}$:
\begin{equation} \label{dm28}
\begin{aligned}
\Big| \widehat{\Pi}^{(\epsilon)}_1(\gamma,\xx;\gamma',\xx') & - \widetilde{\Pi}^{(\epsilon)}_1 (\gamma,\xx;\gamma',\xx')\Big| \\
&\leq \sum_{\gamma''\in\Z^2} \sum_{s'=1}^{m-1} \left| \left(1-\eu^{\iu\epsilon \phi(\gamma-\gamma'',\gamma''-\gamma')}\right)w_{s''}(\gamma-\gamma'',\xx) \overline{w_{s''}(\gamma'-\gamma'',\xx')} \right| \\
&\leq \frac{\epsilon}{2} \sum_{\gamma''\in\Z^2} \sum_{s'=1}^{m-1}  \|\gamma-\gamma''\| \|\gamma''-\gamma'\|\left| w_{s''}(\gamma-\gamma'',\xx) \overline{w_{s''}(\gamma'-\gamma'',\xx')} \right| \\
&\leq K \, \epsilon \, \eu^{-\alpha'\|\gamma-\gamma'\|} \, ,
\end{aligned}
\end{equation}
where $\alpha'<\beta$, since we have used \eqref{ExpoWF0}. The same argument works also for the matrix elements of $\widehat{\Pi}^{(\epsilon)}_2-\widetilde{\Pi}^{(\epsilon)}_2$; hence we can conclude that 
\[ \left| \left( \widehat{\Pi}^{(\epsilon)}-\widetilde{\Pi}^{(\epsilon)} \right) (\gamma,\xx;\gamma',\xx')\right| \leq K \, \epsilon \, \eu^{-\alpha' \|\gamma-\gamma'\|} \, . \]
Then the exponential localization also of the second term on the right-hand side of~\eqref{giugno28} follows.
	
Consider now the scalar product $\scal{\tau_{\epsilon,\gamma} w_{s}}{\tau_{\epsilon,\gamma'} W_{r}}$. Since $\tau_{0,\gamma} w_{s}$ and $\tau_{0,\gamma'} W_{r}$ belong to orthogonal subspaces for every $\gamma,\gamma' \in \Z^2$, $s \in \set{1, \ldots, m-1}$ and $r \in \set{1,2}$, we have that
\begin{equation} \label{ImpoEstimates}
\begin{aligned}
| \langle \tau_{\epsilon,\gamma} w_{s}, \, & \tau_{\epsilon,\gamma'}W_{r} \rangle | = \left| \scal{\tau_{\epsilon,\gamma} w_{s}}{\tau_{\epsilon,\gamma'} W_{r}} - \scal{\tau_{0,\gamma} w_{s}}{\tau_{0,\gamma'} W_{r}} \right| \\
&\leq \sum_{\gamma'' \in \Z^2} \sum_{\yy =1}^{Q} \left| \left(\ep{\gamma-\gamma''}{\gamma''-\gamma'}-1\right) \right| \left|\overline{(\tau_{0,\gamma} w_{s})(\gamma'',\yy)} (\tau_{0,\gamma'}W_{r})(\gamma'',\yy)\right| \\
&\leq \eu^{-\alpha\|\gamma-\gamma'\|} \frac{\epsilon}{2} \sum_{\gamma'' \in \Z^2} \sum_{\yy =1}^Q  \eu^{\alpha\|\gamma-\gamma''\|}\|\gamma-\gamma''\| \|\gamma''-\gamma'\|  \eu^{\alpha\|\gamma''-\gamma'\|} \left|\overline{(\tau_{0,\gamma}w_{s})(\gamma'',\yy)} (\tau_{0,\gamma'}W_{r})(\gamma'',\yy)\right| \\
&\leq K \, \epsilon \, \eu^{-\alpha'\|\gamma-\gamma'\|} \, .
\end{aligned}
\end{equation}
The same argument works if we substitute $\tau_{\epsilon,\gamma'}W_{r}$ with $\tau_{\epsilon,\gamma'}w_{s'}$ as long as $\left(\gamma',s'\right) \neq \left(\gamma,s\right)$. Thus we can also prove the exponential localization for the third term on the right-hand side of~\eqref{giugno28}:
\begin{align*}
\left| \scal{\tau_{\epsilon,\gamma} w_{s}}{\widetilde{\Pi}_2\tau_{\epsilon,\gamma'} w_{s'}} \right| & \leq \sum_{\gamma''\in\Z^2} \sum_{r=1}^2 \left|\scal{\tau_{\epsilon,\gamma} w_{s}}{\tau_{\epsilon,\gamma''}W_{r}} \right| \left|\scal{\tau_{\epsilon,\gamma''} W_{r}}{\tau_{\epsilon,\gamma'} w_{s'}} \right| \\
&\leq K' \epsilon^2 \eu^{-\alpha'\|\gamma-\gamma'\|} \sum_{\gamma''\in\Z^2} \sum_{r=1}^2 \eu^{-\beta\|\gamma-\gamma''\|}  \leq K \epsilon^2  \eu^{-\alpha''\|\gamma-\gamma'\|} \, ,
\end{align*}
and for the fourth term as well:
\begin{align*}
|\langle \tau_{\epsilon,\gamma} w_{s}, \, & \widetilde{\Pi}_1\tau_{\epsilon,\gamma'} w_{s'} \rangle - \delta_{ss'}\delta_{\gamma \gamma'} | \\
& \leq \delta_{\gamma \gamma'}\delta_{ss'}\sum_{\substack{(\gamma'',s'')\in \Z^2 \times \set{1,\dots,m-1}\\ (\gamma'',s'')\neq (\gamma,s)}} \left| \scal{\tau_{\epsilon,\gamma} w_{s}}{\tau_{\epsilon,\gamma''} w_{s''}} \right| \, \left| \scal{ \tau_{\epsilon,\gamma''} w_{s''}}{\tau_{\epsilon,\gamma'} w_{s'}} \right| \\
&\quad + \sum_{\substack{(\gamma'',s'')\in \Z^2 \times \set{1,\dots,m-1}}} \left| \scal{\tau_{\epsilon,\gamma} w_{s}}{\tau_{\epsilon,\gamma''} w_{s''}} \right| \, \left| \scal{\tau_{\epsilon,\gamma''} w_{s''}}{\tau_{\epsilon,\gamma'} w_{s'}} \right| \\
&\leq K \epsilon^2  \eu^{-\alpha''\|\gamma-\gamma'\|} \, ,
\end{align*}
with $\alpha''<\alpha'$. Hence \eqref{ExpoM} is proved. 

Define now $D_\epsilon:=M_\epsilon-\Id$. The estimate~\eqref{ExpoM} shows that the norm of $D_\epsilon$ is controlled by $\epsilon$, therefore $M_\epsilon^{-1/2}=\left(\Id +D_\epsilon\right)^{-1/2}$ exists and can be expressed as a norm convergent power series around $\epsilon=0$. Moreover, arguing as in the proof of Lemma~\ref{Nagy}, one can show that $M_\epsilon^{-1/2}$ has exponentially localized matrix elements, and that
\begin{equation} \label{M1/2}
|M_\epsilon^{-1/2}(\gamma,s;\gamma',s')- \delta_{ss'}\delta_{\gamma \gamma'}| \le C \, \epsilon \, \eu^{-\rho \norm{\gamma-\gamma'}}
\end{equation}
for some positive $C, \rho>0$.

The series defined in~\eqref{ianuar21} is now an absolutely convergent series and it is straightforward to check that the vectors $V_{\gamma'',s,\epsilon}$ form an orthonormal set. It remains to prove the existence of $m-1$ exponentially localized vectors $w_s^{(\epsilon)}$ such that \eqref{Vepsilon} holds.	By hypothesis $\Pi^{(\epsilon)}$ satisfies~\eqref{MagCommute}, which together with~\eqref{eqn:PropMagTransl} implies that 
\begin{align*}
M_\epsilon(\gamma,s;\gamma',s') & = \scal{\tau_{\epsilon,\gamma} w_s}{\tau_{\epsilon,\gamma'} \Pi^{(\epsilon)} w_{s'}} = \scal{\tau_{\epsilon,\gamma'}^*\tau_{\epsilon,\gamma} w_s}{\Pi^{(\epsilon)} w_{s'}} \\
& = \scal{\ep{\gamma'}{\gamma} \tau_{\epsilon,\gamma-\gamma'} w_s}{\Pi^{(\epsilon)} w_{s'}} = \ep{\gamma}{\gamma'} \scal{\tau_{\epsilon,\gamma-\gamma'} w_s}{\Pi^{(\epsilon)} w_{s'}} \\
& =: \ep{\gamma}{\gamma'} m_\epsilon(\gamma-\gamma';s,s').
\end{align*}
By using the power series expansion for the inverse square root, one can prove (see~\cite{CorneanHerbstNenciu16}) a similar form for the matrix elements of $M^{-1/2}_\epsilon$, namely
\[ [M^{-1/2}_\epsilon](\gamma,s;\gamma',s')=: \ep{\gamma}{\gamma'} m_{\epsilon,-1/2}(\gamma-\gamma';s,s') \;. \]
It then follows, using again~\eqref{eqn:PropMagTransl} and~\eqref{MagCommute}, that
\begin{align*}
V_{\gamma'',s,\epsilon} & = \sum_{\gamma'\in\Z^2} \sum_{s'=1}^{m-1} \ep{\gamma'}{\gamma''} m_{\epsilon,-1/2}(\gamma'-\gamma'';s',s)  [\tau_{\epsilon,\gamma'}\Pi^{(\epsilon)} w_{s'}] \\
& = \sum_{\gamma=\gamma'-\gamma''\in\Z^2} \sum_{s'=1}^{m-1} \ep{\gamma+\gamma''}{\gamma''} m_{\epsilon,-1/2}(\gamma;s',s)  [\tau_{\epsilon,\gamma+\gamma''}\Pi^{(\epsilon)} w_{s'}] \\
& = \sum_{\gamma\in\Z^2} \sum_{s'=1}^{m-1} \ep{\gamma}{\gamma''} m_{\epsilon,-1/2}(\gamma;s',s)  [\ep{\gamma''}{\gamma} \tau_{\epsilon,\gamma''} \tau_{\epsilon,\gamma}\Pi^{(\epsilon)} w_{s'}] \\
& = \tau_{\epsilon,\gamma''} w_{s}^{(\epsilon)}, 
\end{align*}
with 
\begin{equation} \label{wepsilons}
w^{(\epsilon)}_s := \sum_{\gamma\in\Z^2} \sum_{s'=1}^{m-1} m_{\epsilon,-1/2}(\gamma;s',s)  [\tau_{\epsilon,\gamma}\Pi^{(\epsilon)} w_{s'}] = \sum_{\gamma\in\Z^2} \sum_{s'=1}^{m-1} [M_\epsilon^{-1/2}](\gamma,s';0,s) [\tau_{\epsilon,\gamma} \Pi^{(\epsilon)} w_{s'}] \,.
\end{equation}

Due to the exponential localization \eqref{ExpoWF0} of the $w_s$'s and of the matrix elements of $M^{-1/2}_\epsilon$ and $\Pi^{(\epsilon)}$, we easily get that there exist $\beta',C>0$, independent of $\xx$, such that
\begin{equation} \label{ExpoWFeps}
\sup_{\gamma \in \Z^2} \eu^{\beta'\|\gamma\|} |w^{(\epsilon)}_s(\gamma,\xx)| \leq C \, . \qedhere
\end{equation}
\end{proof}

\begin{remark} \label{rmk:weps}
Notice that the functions $w_s^{(\epsilon)}$ defined in~\eqref{wepsilons} satisfy
\begin{align*}
w_s^{(\epsilon)} - w_{s'}  & = w_s^{(\epsilon)} - [\Pi_0 w_{s'}] = w_s^{(\epsilon)} - [\widehat{\Pi}^{(\epsilon)} w_{s'}] + [(\widehat{\Pi}^{(\epsilon)}-\Pi_0) w_{s'}] \\
& = \sum_{\eta\in\Z^2} \sum_{s''=1}^{m-1} [M_\epsilon^{-1/2}](\eta,s'';0,s)  [(\Pi^{(\epsilon)} - \delta_{0\eta} \delta_{s's''} \widehat{\Pi}^{(\epsilon)})  \tau_{\epsilon,\eta}]w_{s''} + [(\widehat{\Pi}^{(\epsilon)}-\Pi_0) w_{s'}]
\end{align*}
hence
\begin{align*}
w_s^{(\epsilon)} & (\gamma,\xx)- w_{s'}(\gamma,\xx) =\\
& = \sum_{\gamma'\in\Z^2} \sum_{\xx'=1}^{Q} [M_\epsilon^{-1/2}](0,s';0,s)  [\Pi^{(\epsilon)}(\gamma,\xx;\gamma',\xx') - \ep{\gamma}{\gamma'}\Pi_0(\gamma,\xx;\gamma',\xx')] w_{s'}(\gamma',\xx') \\
& \quad + \sum_{\eta\ne 0} \sum_{s''\ne s'} \left([M_\epsilon^{-1/2}](\eta,s'';0,s) - \delta_{0\eta}\delta_{s's''} \right) \sum_{\gamma'\in\Z^2}\sum_{\xx'=1}^{Q} \Pi^{(\epsilon)}(\gamma,\xx;\gamma',\xx') \, \ep{\gamma'}{\eta} w_{s''}(\gamma'-\eta,\xx')\\
& \quad + \sum_{\gamma'\in\Z^2} \sum_{\xx'=1}^{Q} (\ep{\gamma}{\gamma'}-1)\Pi_0(\gamma,\xx;\gamma',\xx') w_{s'}(\gamma',\xx') \,.
\end{align*}
Using~\eqref{DiagLoc} to estimate the first sum on the right-hand side of the above, \eqref{M1/2} with~\eqref{ExpLocIntKernel} for the second sum, and the power series of the exponential and~\eqref{ExpLocIntKernel} at $\epsilon = 0$ for the third sum, together with the exponential decay~\eqref{ExpoWF0} of the functions $w_s$, we are able to deduce that for all $s,s' \in \set{1, \ldots, m-1}$
\[
\left| w_s^{(\epsilon)}(\gamma,\xx)- w_{s'}(\gamma,\xx) \right| \le C \, \epsilon \, \eu^{-\sigma \norm{\gamma}}
\]
for some positive constants $C,\sigma > 0$ uniform in $\gamma \in \Z^2$, $\xx \in \set{1, \ldots, Q}$, and $\epsilon$ sufficiently small. Clearly the above implies in turn that
\begin{equation} \label{wepsi}
\left| [\tau_{\epsilon,\eta} w_s^{(\epsilon)}](\gamma,\xx)- [\tau_{\epsilon,\eta}w_{s'}](\gamma,\xx) \right| \le C \, \epsilon \, \eu^{-\sigma \norm{\gamma-\eta}}
\end{equation}
for any $\eta \in \Z^2$.
\end{remark}

\subsection{Construction of $\Pi_2^{(\epsilon)}$}

We now show that the orthogonal projection defined as
\begin{equation}\label{hc1''}
\Pi_2^{(\epsilon)}:=\Pi^{(\epsilon)}_{\phantom{2}}-\Pi_1^{(\epsilon)}
\end{equation} 
can be written as in Proposition \ref{thm:Main3}(\ref{item:Main3_i}). To construct a Parseval frame for it we will mix the space-dimension-doubling method used in the proof of Theorem\ref{thm:Main}\eqref{Main_3} (see Section~\ref{sec:thmMain(ii)}) with magnetic perturbation theory. 

The projection $P_2(\kk)$ in \eqref{apr2} has rank 1, and we introduce $\widetilde{P}_2(\kk):=C P_2(-\kk)C^{-1}$ as in~\eqref{Q=CPC}. Denote by $P_3(\kk):=\widetilde{P}_2(\kk) \oplus P_2(\kk)$ the rank-2 projection acting on $\C^Q\oplus \C^Q$. As it is argued in Section \ref{sec:thmMain(ii)}, its Chern number is zero. We now define the operator 
\[ T^{(\epsilon)}(\gamma,\xx;\gamma',\xx'):=\eu^{\iu \epsilon \phi(\gamma,\gamma')}\int_\Omega \eu^{\iu 2\pi \kk \cdot (\gamma-\gamma')}P_3(\kk)(\xx,\xx')d\kk,\quad \gamma,\gamma' \in \Z^2, \: \xx,\xx'\in \set{1,\dots,2Q}, \]
acting on $\ell^2(\Z^2)\otimes(\C^Q\oplus \C^Q)$. Note the fact that $T^{(\epsilon)}=T_1^{(\epsilon)}\oplus T_2^{(\epsilon)}$ where $T_j^{(\epsilon)}$ acts on $\ell^2(\Z^2)\otimes \C^Q$. We also have
\begin{equation} \label{T2}
\begin{aligned}
T_2^{(\epsilon)}(\gamma,\xx;\gamma',\xx')&:=\eu^{\iu \epsilon \phi(\gamma,\gamma')}\int_\Omega \eu^{\iu 2\pi \kk \cdot (\gamma-\gamma')}P_2(\kk)(\xx,\xx')d\kk\\
&=\eu^{\iu \epsilon \phi(\gamma,\gamma')}\Pi_2(\gamma,\xx;\gamma',\xx')=\widehat{\Pi}^{(\epsilon)}_2(\gamma,\xx;\gamma',\xx') \,,
\end{aligned}
\end{equation}
for $\gamma,\gamma' \in \Z^2$ and $\xx,\xx'\in \set{1,\dots,Q}$, compare~\eqref{hatPi}.

The operators $T_j^{(\epsilon)}$ are almost orthogonal projections, in the sense that 
\[ \Delta_j^{(\epsilon)}:= \left\{ \left(T_j^{(\epsilon)}\right)^2-T_j^{(\epsilon)} \right\}=\mathcal{O}(\epsilon) \] 
in the operator norm; more precisely we want to prove an estimate of the usual type for the matrix elements of $\Delta_j^{(\epsilon)}$, namely
\begin{equation}\label{hc11}
\left| \Delta_j^{(\epsilon)}(\gamma,\xx;\gamma',\xx') \right|\leq C \;\epsilon \; \eu^{-\alpha \|\gamma-\gamma'\|}.
\end{equation}
In order to show this, notice first that $T_j^{(0)}$ is a true projection and hence $\Delta_j^{(0)}=0$. Then, using the magnetic phase composition rule \eqref{ianuar20} to compute the matrix elements of $(T_j^{(\epsilon)})^2$ and noticing that 
\[ \left|T_j^{(\epsilon)}(\gamma;\xx,\xx')\right| \leq C \eu^{-\delta \|\gamma\|}, \] 
we obtain
\[ \left|\Delta_j^{(\epsilon)}(\gamma,\xx;\gamma',\xx')\right|=\left|\Delta_j^{(\epsilon)}(\gamma,\xx;\gamma',\xx')-\Delta_j^{(0)}(\gamma,\xx;\gamma',\xx')\right| \leq C \;\epsilon \; \eu^{-\alpha \|\gamma-\gamma'\|} \]
as wanted.

Now if $\epsilon$ is small enough, we may construct the following explicit orthogonal projections acting on $\ell^2(\Z^2)\otimes \C^Q$ (see \cite{Nenciu02} for more details):
\[ \mathfrak{P}_{j}^{(\epsilon)} := T_j^{(\epsilon)} + \left(T_j^{(\epsilon)} - \frac{1}{2} {\bf 1} \right) \left\{ ({\bf 1}+4\Delta_j^{(\epsilon)})^{-1/2}-{\bf 1} \right\}. \]
Simply using the above formula in each term of the direct sum in the expression for $T^{(\epsilon)}$ we obtain that 
\[ \mathfrak{P}^{(\epsilon)}=\mathfrak{P}_{1}^{(\epsilon)}
\oplus \mathfrak{P}_{2}^{(\epsilon)} \]
is an orthogonal projection acting on the ``doubled'' space and, using~\eqref{hc11} and arguing as in the proof of Lemma~\ref{Nagy}, that
\begin{equation}\label{hc13}
\left| \mathfrak{P}^{(\epsilon)}(\gamma,\xx;\gamma',\xx') -T^{(\epsilon)}(\gamma,\xx;\gamma',\xx') \right| \leq C \;\epsilon \; \eu^{-\alpha \|\gamma-\gamma'\|}.
\end{equation}

Because the non-magnetic projection $P_3(\kk)$ which builds up $T^{(\epsilon=0)}$ is trivial, mimicking the proof of Lemma~\ref{lemaianuar1} we infer that we can construct two exponentially localized Wannier-type vectors $F_r^{(\epsilon)}\in \ell^2(\Z^2)\otimes (\C^Q\oplus\C^Q)$ such that: 
\[ \mathfrak{P}^{(\epsilon)}=\sum_{\gamma\in \Z^2}\sum_{r=1}^2 \ket{(\tau_{\epsilon,\gamma} \oplus \tau_{\epsilon,\gamma}) F_r^{(\epsilon)}} \bra{(\tau_{\epsilon,\gamma} \oplus \tau_{\epsilon,\gamma}) F_r^{(\epsilon)}}, \]
where $\tau_{\epsilon,\gamma} \oplus \tau_{\epsilon,\gamma}$ is the obvious extension of the magnetic translation to the doubled space. 
Restricting ourselves to vectors of the type $0 \oplus \psi$ where $\psi$ is in the range of $\mathfrak{P}_2^{(\epsilon)}$, and denoting by $\pi_2 \colon \ell^2(\Z^2)\otimes (\C^Q\oplus \C^Q)\to 
\ell^2(\Z^2)\otimes \C^Q$ the projection on the second component of the doubled space, we have the identity
\[ \psi = \pi_2( 0\oplus\psi ) =  \sum_{\gamma\in \Z^2}\sum_{r=1}^{2}\scal{\tau_{\epsilon,\gamma}(\pi_2  F_r^{(\epsilon)})}{\psi}_{\ell^2(\Z^2)\otimes \C^Q} \left(\tau_{\epsilon,\gamma} (\pi_2 F_r^{(\epsilon)}) \right). \]
In other words this means that
\begin{equation} \label{frakP2}
\mathfrak{P}_{2}^{(\epsilon)} = \sum_{\gamma\in \Z^2}\sum_{r=1}^2 \ket{\tau_{\epsilon,\gamma}(\pi_2 F_r^{(\epsilon)})}\bra{\tau_{\epsilon,\gamma}(\pi_2 F_r^{(\epsilon)})}.
\end{equation}

The next important step consists of the following estimate that we state as a Lemma. 
\begin{lemma}
There exist constants $\epsilon_0,\alpha, C > 0$ such that for every $0\le\epsilon \leq \epsilon_0$ it holds that
\begin{equation} \label{hc15}
\left| \Pi_2^{(\epsilon)}(\gamma,\xx;\gamma',\xx') - T_2^{(\epsilon)}(\gamma,\xx;\gamma',\xx') \right| \leq C \;\epsilon \; \eu^{-\alpha \|\gamma-\gamma'\|} \,.
\end{equation}
\end{lemma}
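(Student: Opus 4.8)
The plan is to decompose $\Pi_2^{(\epsilon)} - T_2^{(\epsilon)}$ into pieces that are either estimated by results proved above or reduced to standard lattice convolutions of exponentially localized kernels. Recalling that $\Pi_2^{(\epsilon)} = \Pi^{(\epsilon)} - \Pi_1^{(\epsilon)}$ (see \eqref{hc1''}), that $T_2^{(\epsilon)} = \widehat{\Pi}_2^{(\epsilon)}$ by \eqref{T2}, and that $\widehat{\Pi}_2^{(\epsilon)} = \widehat{\Pi}^{(\epsilon)} - \widehat{\Pi}_1^{(\epsilon)}$ by \eqref{hatPi}, I would first write
\[ \Pi_2^{(\epsilon)} - T_2^{(\epsilon)} = \big( \Pi^{(\epsilon)} - \widehat{\Pi}^{(\epsilon)} \big) - \big( \Pi_1^{(\epsilon)} - \widehat{\Pi}_1^{(\epsilon)} \big). \]
Since $\widehat{\Pi}^{(\epsilon)}(\gamma,\xx;\gamma',\xx') = \ep{\gamma}{\gamma'}\,\Pi_0(\gamma,\xx;\gamma',\xx')$, the first difference is controlled directly by the localization assumption \eqref{DiagLoc} in the definition of Fermi-like magnetic projections: its matrix elements are bounded by $C\,\epsilon\,\eu^{-\alpha\norm{\gamma-\gamma'}}$.

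For the second difference I would insert the intermediate operator $\widetilde{\Pi}_1^{(\epsilon)} = \sum_{\gamma''}\sum_{s=1}^{m-1} \ket{\tau_{\epsilon,\gamma''} w_s}\bra{\tau_{\epsilon,\gamma''} w_s}$ of \eqref{tildePi} and split
\[ \Pi_1^{(\epsilon)} - \widehat{\Pi}_1^{(\epsilon)} = \big(\Pi_1^{(\epsilon)} - \widetilde{\Pi}_1^{(\epsilon)}\big) + \big(\widetilde{\Pi}_1^{(\epsilon)} - \widehat{\Pi}_1^{(\epsilon)}\big). \]
The term $\widetilde{\Pi}_1^{(\epsilon)} - \widehat{\Pi}_1^{(\epsilon)}$ is exactly the operator bounded in \eqref{dm28}, with matrix elements $\le K\,\epsilon\,\eu^{-\alpha'\norm{\gamma-\gamma'}}$. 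For the remaining term, recalling from Lemma~\ref{lemaianuar1} that $\Pi_1^{(\epsilon)} = \sum_{\gamma''}\sum_{s=1}^{m-1} \ket{\tau_{\epsilon,\gamma''} w_s^{(\epsilon)}}\bra{\tau_{\epsilon,\gamma''} w_s^{(\epsilon)}}$, I would use the telescoping identity $\ket{a}\bra{a} - \ket{b}\bra{b} = \ket{a}\bra{a-b} + \ket{a-b}\bra{b}$ with $a = \tau_{\epsilon,\gamma''}w_s^{(\epsilon)}$ and $b = \tau_{\epsilon,\gamma''}w_s$, so that $a-b = \tau_{\epsilon,\gamma''}(w_s^{(\epsilon)} - w_s)$. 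The crucial input is then estimate \eqref{wepsi} from Remark~\ref{rmk:weps}, which gives $\big|[\tau_{\epsilon,\gamma''}(w_s^{(\epsilon)} - w_s)](\gamma,\xx)\big| \le C\,\epsilon\,\eu^{-\sigma\norm{\gamma-\gamma''}}$, while the factors $\tau_{\epsilon,\gamma''} w_s$ and $\tau_{\epsilon,\gamma''} w_s^{(\epsilon)}$ carry $O(1)$ exponential decay $\eu^{-\beta\norm{\cdot\, - \gamma''}}$ by \eqref{ExpoWF0} and \eqref{ExpoWFeps} respectively.

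Writing the matrix element of $\Pi_1^{(\epsilon)} - \widetilde{\Pi}_1^{(\epsilon)}$ out as a sum over $\gamma'' \in \Z^2$ and $s$ of products of one factor that is $O(\epsilon)$ and exponentially peaked around $\gamma''$ and one factor that is $O(1)$ and exponentially peaked around $\gamma''$, a standard discrete convolution inequality — of the kind already used in the proof of Lemma~\ref{Nagy} and in \eqref{ImpoEstimates} — bounds it by $C\,\epsilon\,\eu^{-\alpha\norm{\gamma-\gamma'}}$ after shrinking the decay rate. Collecting the three contributions and taking $\alpha$ to be the smallest rate involved yields \eqref{hc15}. The only genuinely new ingredient is the bound \eqref{wepsi} on $w_s^{(\epsilon)} - w_s$, which has already been isolated in Remark~\ref{rmk:weps}; everything else is routine bookkeeping with exponentially localized kernels, so I do not expect a serious obstacle — the only point requiring care is keeping track of the various decay constants through the lattice convolutions.
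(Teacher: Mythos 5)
Your proposal is correct and follows essentially the same route as the paper: the same decomposition via $\widehat{\Pi}^{(\epsilon)}$ and the intermediate operator $\widetilde{\Pi}_1^{(\epsilon)}$, with the first piece controlled by \eqref{DiagLoc}, the second by \eqref{dm28}, and the third by the rank-one telescoping identity combined with \eqref{wepsi}, \eqref{ExpoWF0}, \eqref{ExpoWFeps} and a discrete convolution estimate. This matches the paper's argument step for step.
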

\begin{proof}
Considering the fact that $\Pi_2^{(\epsilon)}$ is defined as in \eqref{hc1''} and the equalities~\eqref{hatPi} and~\eqref{T2} hold, we have
\begin{align*}
\Big| \Pi_2^{(\epsilon)}(\gamma,\xx;\gamma',\xx') &-  T_2^{(\epsilon)}(\gamma,\xx;\gamma',\xx') \Big| \\
& \leq \left| \Pi^{(\epsilon)}(\gamma,\xx;\gamma',\xx') - \widehat{\Pi}^{(\epsilon)}(\gamma,\xx;\gamma',\xx') \right| + \left| \Pi_1^{(\epsilon)}(\gamma,\xx;\gamma',\xx') - \widehat{\Pi}_1^{(\epsilon)}(\gamma,\xx;\gamma',\xx') \right|\\
&\leq C \;\epsilon \; \eu^{-\alpha \|\gamma-\gamma'\|} + \left| \Pi_1^{(\epsilon)}(\gamma,\xx;\gamma',\xx') - \widehat{\Pi}_1^{(\epsilon)}(\gamma,\xx;\gamma',\xx') \right| \,,
\end{align*}
which is a consequence of the estimate \eqref{DiagLoc}. Therefore it suffices to prove that the matrix elements of $\Pi_1^{(\epsilon)}-\widehat{\Pi}^{(\epsilon)}_{1}$ are exponentially localized and proportional to $\epsilon$. Since we have proved that this is true for the difference $\widetilde{\Pi}_1^{(\epsilon)}-\widehat{\Pi}^{(\epsilon)}_{1}$, where $\widetilde{\Pi}_1^{(\epsilon)}$ is defined in~\eqref{tildePi} (see~\eqref{dm28}), it suffices to prove the required estimate on the matrix elements of the difference $\Pi_1^{(\epsilon)}-\widetilde{\Pi}^{(\epsilon)}_1$. Since the $(\gamma,\xx;\gamma',\xx')$-matrix element of this difference is provided by a difference of absolutely convergent series, we can estimate
\begin{align*}
\Big| \Pi_1^{(\epsilon)}&(\gamma,\xx;\gamma',\xx')-\widetilde{\Pi}^{(\epsilon)}_1(\gamma,\xx;\gamma',\xx') \Big| \\
& = \left| \sum_{\eta \in \Z^2} \sum_{s=1}^{m-1} [\tau_{\epsilon,\eta} w_{s}^{(\epsilon)}](\gamma,\xx) \, \overline{[\tau_{\epsilon,\eta} w_{s}^{(\epsilon)}](\gamma',\xx')} - [\tau_{\epsilon,\eta} w_{s}](\gamma,\xx) \, \overline{[\tau_{\epsilon,\eta} w_{s}](\gamma',\xx')} \right| \\
& \le \sum_{\eta \in \Z^2} \sum_{s=1}^{m-1} \Big\{ \left| [\tau_{\epsilon,\eta} w_{s}^{(\epsilon)}](\gamma,\xx) - [\tau_{\epsilon,\eta} w_{s}](\gamma,\xx) \right| \, \left|[\tau_{\epsilon,\eta} w_{s}^{(\epsilon)}](\gamma',\xx') \right| \\
& \qquad + \left|[\tau_{\epsilon,\eta} w_{s}](\gamma,\xx) \right| \, \left| [\tau_{\epsilon,\eta} w_{s}^{(\epsilon)}](\gamma',\xx') - [\tau_{\epsilon,\eta} w_{s}](\gamma',\xx') \right| \Big\}.
\end{align*}
In view of~\eqref{wepsi} and of the exponential localization~\eqref{ExpoWF0} and~\eqref{ExpoWFeps} of $w_s$ and $w_{s}^{(\epsilon)}$, the conclusion follows.
\end{proof}

Coupling \eqref{hc15} with \eqref{hc13} we obtain 
\[ \left|\Pi_2^{(\epsilon)}(\gamma,\xx;\gamma',\xx') - \mathfrak{P}_2^{(\epsilon)}(\gamma,\xx;\gamma',\xx')\right|\leq C \;\epsilon \; \eu^{-\alpha \|\gamma-\gamma'\|}. \]
Then, if $\epsilon$ is small enough, the above implies that the two projections $\Pi_2^{(\epsilon)}$ and $\mathfrak{P}_2^{(\epsilon)}$ are unitarily equivalent through a Kato--Nagy unitary $K_\epsilon$ given as in~\eqref{eqn:KatoNagy}, i.e. $\Pi_2^{(\epsilon)}= K_\epsilon \mathfrak{P}_2^{(\epsilon)}K_\epsilon^{-1}$. Therefore we have the following proposition that concludes the construction of $\Pi_2^{(\epsilon)}$.

\begin{proposition}
There exist two exponentially localized functions $W_{r}^{(\epsilon)}$, $r\in \set{1,2}$, such that
\[ \Pi_2^{(\epsilon)}=\sum_{\gamma\in \Z^2}\sum_{r=1}^2\ket{\tau_{\epsilon,\gamma}W_{r}^{(\epsilon)}}\bra{\tau_{\epsilon,\gamma}W_{r}^{(\epsilon)}}.\]
\end{proposition}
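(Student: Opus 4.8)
The plan is to obtain $W_r^{(\epsilon)}$ by conjugating the frame \eqref{frakP2} for $\mathfrak{P}_2^{(\epsilon)}$ with the Kato--Nagy unitary intertwining $\mathfrak{P}_2^{(\epsilon)}$ and $\Pi_2^{(\epsilon)}$. Concretely, for $\epsilon$ small enough that $\norm{\Pi_2^{(\epsilon)} - \mathfrak{P}_2^{(\epsilon)}} \le 1/2$, formula \eqref{eqn:KatoNagy} (with $\widetilde{\mathcal{P}}_\epsilon$ and $\mathcal{P}_\epsilon$ replaced by $\Pi_2^{(\epsilon)}$ and $\mathfrak{P}_2^{(\epsilon)}$) defines a unitary $K_\epsilon$ on $\ell^2(\Z^2)\otimes\C^Q$ with $\Pi_2^{(\epsilon)} = K_\epsilon \mathfrak{P}_2^{(\epsilon)} K_\epsilon^{-1}$. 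I then set $W_r^{(\epsilon)} := K_\epsilon(\pi_2 F_r^{(\epsilon)})$ for $r \in \set{1,2}$ and verify the frame identity and the exponential localization separately.

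For the frame identity, the key point is that $K_\epsilon$ commutes with every magnetic translation $\tau_{\epsilon,\gamma}$. As in the proof of Lemma~\ref{lemmaKN}, formula \eqref{eqn:KatoNagy} exhibits $K_\epsilon$ as a norm-convergent power series in $\Pi_2^{(\epsilon)}$ and $\mathfrak{P}_2^{(\epsilon)}$, so it suffices to check that both of these projections commute with the $\tau_{\epsilon,\gamma}$. For $\Pi_2^{(\epsilon)} = \Pi^{(\epsilon)} - \Pi_1^{(\epsilon)}$ this holds by construction: $\Pi^{(\epsilon)}$ commutes with the magnetic translations by \eqref{MagCommute}, and $\Pi_1^{(\epsilon)}$ by Lemma~\ref{lemaianuar1}. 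For $\mathfrak{P}_2^{(\epsilon)}$, recall that $T_2^{(\epsilon)} = \widehat{\Pi}_2^{(\epsilon)}$ has matrix elements of the form $\eu^{\iu\epsilon\phi(\gamma,\gamma')}$ times a function of $\gamma-\gamma'$, hence commutes with the $\tau_{\epsilon,\gamma}$ by Remark~\ref{rmk:MatrixCommuteMagn}; consequently so do $\Delta_2^{(\epsilon)} = (T_2^{(\epsilon)})^2 - T_2^{(\epsilon)}$, the operator $(\Id + 4\Delta_2^{(\epsilon)})^{-1/2}$ defined through its power series, and therefore $\mathfrak{P}_2^{(\epsilon)}$ itself. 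Granting this, conjugating \eqref{frakP2} by $K_\epsilon$ and moving $K_\epsilon$ past $\tau_{\epsilon,\gamma}$ yields
\[ \Pi_2^{(\epsilon)} = \sum_{\gamma \in \Z^2} \sum_{r=1}^{2} \ket{\tau_{\epsilon,\gamma}\bigl(K_\epsilon \pi_2 F_r^{(\epsilon)}\bigr)}\bra{\tau_{\epsilon,\gamma}\bigl(K_\epsilon \pi_2 F_r^{(\epsilon)}\bigr)} = \sum_{\gamma \in \Z^2} \sum_{r=1}^{2} \ket{\tau_{\epsilon,\gamma} W_r^{(\epsilon)}}\bra{\tau_{\epsilon,\gamma} W_r^{(\epsilon)}}, \]
which is the claimed identity.

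For exponential localization, write $W_r^{(\epsilon)} = \pi_2 F_r^{(\epsilon)} + (K_\epsilon - \Id)(\pi_2 F_r^{(\epsilon)})$. The vector $\pi_2 F_r^{(\epsilon)}$ is exponentially localized because $F_r^{(\epsilon)}$ is, the projection $\pi_2$ merely discarding internal components. It then remains to show that $K_\epsilon - \Id$ has exponentially localized matrix elements, which is exactly the argument of Lemma~\ref{Nagy} run with $D_\epsilon := \Pi_2^{(\epsilon)} - \mathfrak{P}_2^{(\epsilon)}$: this $D_\epsilon$ satisfies the $\mathcal{O}(\epsilon)$ off-diagonal bound $\left| \Pi_2^{(\epsilon)}(\gamma,\xx;\gamma',\xx') - \mathfrak{P}_2^{(\epsilon)}(\gamma,\xx;\gamma',\xx') \right| \le C\,\epsilon\,\eu^{-\alpha\norm{\gamma-\gamma'}}$ established just above, while $\Pi_2^{(\epsilon)}$ and $\mathfrak{P}_2^{(\epsilon)}$ are themselves exponentially localized (the former via \eqref{ExpLocIntKernel} and \eqref{ExpoWFeps}, the latter via \eqref{hc13} since $T_2^{(\epsilon)} = \widehat{\Pi}_2^{(\epsilon)}$ is). A Schur--Holmgren estimate then shows that the composition of an exponentially localized operator with an exponentially localized vector remains exponentially localized, giving $\sup_{\gamma\in\Z^2} \eu^{\beta'\norm{\gamma}}\,|W_r^{(\epsilon)}(\gamma,\xx)| \le C$ uniformly in $\xx$, as required.

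The only genuine work here is bookkeeping: confirming the magnetic-translation covariance of $\mathfrak{P}_2^{(\epsilon)}$ (which reduces cleanly to Remark~\ref{rmk:MatrixCommuteMagn}) and re-running the estimate of Lemma~\ref{Nagy} in this new instance. There is no new analytic difficulty beyond what has already been developed; the only mild subtlety is making sure the vectors $F_r^{(\epsilon)}$ obtained by ``mimicking the proof of Lemma~\ref{lemaianuar1}'' in the doubled space are genuinely magnetic-covariant there, but this is precisely the content of the identity \eqref{frakP2}, which may be taken as given.
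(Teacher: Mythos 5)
Your argument is correct and follows essentially the same route as the paper's proof: conjugate the frame \eqref{frakP2} by the Kato--Nagy unitary $K_\epsilon$ intertwining $\mathfrak{P}_2^{(\epsilon)}$ and $\Pi_2^{(\epsilon)}$, note that $K_\epsilon$ commutes with the magnetic translations because both projections do, set $W_r^{(\epsilon)} = K_\epsilon(\pi_2 F_r^{(\epsilon)})$, and get exponential localization by the Lemma~\ref{Nagy}-type estimate on $K_\epsilon - \mathbf{1}$. Your derivation of the magnetic covariance of $\mathfrak{P}_2^{(\epsilon)}$ from the matrix-element structure of $T_2^{(\epsilon)}$ via Remark~\ref{rmk:MatrixCommuteMagn} is just a (perfectly valid) alternative to the paper's reading it off \eqref{frakP2}; otherwise the two proofs coincide.
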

\begin{proof}
By hypothesis $\Pi^{(\epsilon)}$ commutes with the magnetic translations and by construction also $\Pi_1^{(\epsilon)}$ does; it follows that so does $\Pi_2^{(\epsilon)}$ by~\eqref{hc1''}.
From~\eqref{frakP2}, it is also clear that $\mathfrak{P}_2^{(\epsilon)}$  commutes with the magnetic translations, and by~\eqref{eqn:KatoNagy} so does the Kato-Nagy unitary $K_\epsilon$. Setting $W_r := K_\epsilon (\pi_2 F_r^{(\epsilon)})$, $r \in \set{1,2}$ (compare~\eqref{frakP2}), the proof is concluded like that of Lemma~\ref{lemmaKN}.
\end{proof}

\begin{remark} \label{ExtensionToContinuous}
The results presented in this Section can be extended with not much effort to continuous families of magnetic Fermi-like projections $\Pi^{(\epsilon)}$ acting in $L^2(\R^2) \otimes \Hi$, where $\Hi = L^2((0,1)^2)$. However, in order to apply the construction of the Parseval frames in the framework of continuous magnetic Schr\"odinger operators with constant magnetic field, it is necessary to prove an analogue of Proposition \ref{ApplicationMPF}. In the continuous case the situation is more complicated and one has to fully exploit magnetic perturbation theory \cite{Nenciu02,Cornean,CorneanMonacoMoscolari18} and use some involved technical results regarding elliptic regularity and Agmon--Combes--Thomas uniform exponential decay estimates \cite{CorneanNenciu}. 
\end{remark}


\appendix

\section{``Black boxes''}

In this Appendix we will provide more details and appropriate references for a number of tools and ``black boxes'' employed in the paper.

\subsection{Smoothing argument} \label{app:Smoothing} 

We start by providing a smoothing argument that allows to produce \emph{real-analytic} Bloch vectors from continuous ones.

\begin{lemma}[Smoothing argument] \label{lemma:smoothing}
Let $\set{P(\kk)}_{\kk \in \R^d}$ be a family of orthogonal projections admitting an analytic, $\Z^d$-periodic analytic extension to a complex strip around $\R^d \subset \C^d$. Assume that there exist continuous, $\Z^d$-periodic, and orthogonal Bloch vectors $\set{\xi_1,..., \xi_m}$ for $\set{P(\kk)}_{\kk \in \R^d}$. Then, there exist also \emph{real-analytic}, $\Z^d$-periodic, and orthogonal Bloch vectors $\set{\widehat{\xi}_1,..., \widehat{\xi}_m}$.

The same holds true if analyticity is replaced by $C^r$-smoothness for some $r \in \N \cup \set{\infty}$.
\end{lemma}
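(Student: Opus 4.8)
The plan is to use the classical \emph{mollify-then-reorthonormalize} scheme. First I would fix a $\Z^d$-periodic approximate identity $\set{j_\delta}_{\delta>0}$ on the Brillouin torus $\T^d$: in the $C^r$ (or $C^\infty$) case this can be taken to be any nonnegative $C^\infty$ bump, periodized, with total mass $1$ and support (modulo $\Z^d$) shrinking to the origin as $\delta\to 0$; in the analytic case I would instead take the periodized heat kernel $j_\delta(\kk):=(4\pi\delta)^{-d/2}\sum_{\n\in\Z^d}\eu^{-\norm{\kk-\n}^2/(4\delta)}$, which is $\Z^d$-periodic and extends to an \emph{entire} function of $\kk\in\C^d$ with Gaussian decay in the real directions. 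Then I would set $\eta_a(\kk):=\int_{\T^d}j_\delta(\kk-\kk')\,\xi_a(\kk')\,\di\kk'$ and define the candidate smoothed vectors by projecting back onto $\Ran P$, namely $\zeta_a(\kk):=P(\kk)\,\eta_a(\kk)$ for $a\in\set{1,\dots,m}$. Since $\eta_a$ is $\Z^d$-periodic and inherits the regularity of $j_\delta$ (entire, resp.\ $C^\infty$), while $P$ is $\Z^d$-periodic and analytic in its strip (resp.\ $C^r$), each $\zeta_a$ is a $\Z^d$-periodic Bloch vector whose regularity matches that of the family $P$; in particular it is analytic in a strip around $\R^d$ contained in the analyticity strip of $P$.

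Next I would show that the $\zeta_a$ are close to the original $\xi_a$ for $\delta$ small. Using $P(\kk)\xi_a(\kk)=\xi_a(\kk)$ together with the uniform continuity of the continuous, periodic map $\kk\mapsto\xi_a(\kk)$, one obtains $\sup_{\kk}\norm{\zeta_a(\kk)-\xi_a(\kk)}\to 0$ as $\delta\to 0$, uniformly in $a$. Hence, for $\delta$ small enough, the Gram matrices $S(\kk)_{ab}:=\scal{\zeta_a(\kk)}{\zeta_b(\kk)}_{\Hi}$ satisfy $\sup_{\kk}\norm{S(\kk)-\Id}<1$, so $S(\kk)$ is positive, invertible, $\Z^d$-periodic, and of the same regularity as the $\zeta_a$'s, and its inverse square root $S(\kk)^{-1/2}=\sum_{k\ge 0}\binom{-1/2}{k}(S(\kk)-\Id)^k$ is a norm-convergent series defining a family of matrices of the same regularity (equivalently one may use the holomorphic functional calculus $\frac{1}{2\pi\iu}\oint z^{-1/2}(z-S(\kk))^{-1}\,\di z$ along a contour encircling $\sigma(S(\kk))\subset(0,\infty)$ and avoiding the branch cut). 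I would then set $\widehat{\xi}_a(\kk):=\sum_{b=1}^{m}\zeta_b(\kk)\,[S(\kk)^{-1/2}]_{ba}$: these are $\Z^d$-periodic Bloch vectors — being linear combinations of Bloch vectors with coefficients of the right regularity — and the identity $\scal{\widehat{\xi}_a(\kk)}{\widehat{\xi}_b(\kk)}=[S(\kk)^{-1/2}S(\kk)S(\kk)^{-1/2}]_{ab}=\delta_{ab}$ shows they are orthonormal, completing the proof.

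The only points requiring genuine (but entirely routine) care occur in the analytic case: one must check that convolving the bounded, continuous, $\Z^d$-periodic map $\xi_a$ against the entire kernel $j_\delta$ produces a function holomorphic in $\kk$ on $\C^d$ (by dominated convergence and Morera, exploiting the Gaussian decay), so that the width of the analyticity strip of $\widehat{\xi}_a$ is constrained only by that of $P$, and that $S^{-1/2}$ is analytic, which follows from the binomial series or the Riesz-integral representation above; in the $C^r$ and $C^\infty$ cases these verifications are immediate. I do not expect a real obstacle here: the argument is a soft assembly of mollification, the trivial estimate $\norm{\zeta_a-\xi_a}_\infty\to 0$, and Löwdin symmetric orthonormalization, and it is manifestly compatible with any extra structure one wishes to preserve (e.g.\ additional symmetries commuting with $P$ and with the mollification, or $\gamma$-pseudo-periodicity in one variable), provided the mollifier is chosen to respect that structure.
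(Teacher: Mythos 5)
Your proof is correct and follows essentially the same route as the paper: mollify the continuous Bloch vectors with a regular ($\Z^d$-periodic or analytic) kernel, project back onto $\Ran P(\kk)$, and restore orthonormality via the inverse square root of the Gram matrix $S(\kk)^{-1/2}$. The only difference is your choice of mollifier (periodized heat kernel rather than the paper's product of Poisson-type kernels on $\R^d$, or the Fej\'er kernel it mentions as an alternative), which is immaterial to the argument.
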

\begin{proof}[Proof (sketch)]
We sketch here the proof: more details can be found in \cite[Sec.~2.3]{CorneanHerbstNenciu16}.

Define
\[ g(\kk) = g(k_1, \ldots, k_d) := \frac{1}{\pi^d} \prod_{j=1}^{d} \frac{1}{1+k_j^2}. \]
The function $g$ is analytic over the strip $\set{\mathbf{z}=(z_1, \ldots, z_d) \in \C^d : \left| \mathrm{Im} \, z_j \right| < 1, \: j \in \set{1,\ldots,d}}$ and obeys $\int_{\R^d} g(\kk) \, \di \kk = 1$. For $\delta > 0$, define $g_\delta(\kk) := \delta^{-d} g(\kk/\delta)$. Set
\[ \psi_a^{(\delta)}(\kk) := \int_{\R^d} g_\delta(\kk - \kk') \, \xi_a(\kk') \, \di \kk', \quad a \in \set{1,...,m}, \: \kk \in \R^d. \]
The above define $\Z^d$-periodic vectors which admit an analytic extension to a strip of half-width $\delta$ around the real axis in $\C^d$, and moreover converge to $\xi_a$ uniformly as $\delta \to 0$. We note here that an alternative way of smoothing has been suggested to us by G.~Panati: he proposed taking the convolution with the Fej{\'e}r kernel, which has the advantage of  integrating on $[-1/2,1/2]^d$ and not on the whole $\R^d$.

Now denote $\phi_a^{(\delta)}(\kk) := P(\kk) \, \psi_a^{(\delta)}(\kk)$, for $a \in \set{1,...,m}$ and $\kk \in \R^d$. Then for any $\epsilon > 0$ there exists $\delta > 0$ such that $\phi_a^{(\delta)}(\kk)$ and $\xi_a(\kk)$ are uniformly at a distance less then $\epsilon$. Moreover, as the $\xi_a$'s are orthogonal, we can make sure that the Gram--Schmidt matrix $S^{(\delta)}(\kk)_{ab} := \scal{\phi_a^{(\delta)}(\kk)}{\phi_b^{(\delta)}(\kk)}$ is close to the identity matrix, uniformly in $\kk$, possibly at the price of choosing an even smaller $\delta$. This implies that $S^{(\delta)}(\kk)^{-1/2}$ is real-analytic and $\Z^d$-periodic, and hence the vectors
\[ \widehat{\xi}_a(\kk) := \sum_{b=1}^{m} \phi_b^{(\delta)}(\kk) \, \left[S^{(\delta)}(\kk)^{-1/2}\right]_{ba} \]
define the required real-analytic, $\Z^d$-periodic, and orthogonal Bloch vectors.
\end{proof}

\subsection{Parallel transport} \label{app:Parallel} 

We recall here the definition of \emph{parallel transport} associated to a smooth and $\Z^d$-periodic family of projections $\set{P(k_1,\ldots,k_d)}_{(k_1,\ldots,k_d) \in \R^d}$ acting on an Hilbert space $\Hi$.

Fix $i \in \set{1,\ldots, d}$. For $(k_1,\ldots,k_d) \in \R^d$, denote by $\kk \in \R^D$, $D=d-1$, the collection of coordinates different from the $i$-th. We use the shorthand notation $(k_1,\ldots,k_d) \equiv (k_i,\kk)$ throughout this Subsection. 

Define
\begin{equation} \label{eqn:generator}
A_{\kk}(k_i) := \iu \left[ \partial_{k_i} P(k_i, \kk), P(k_i, \kk) \right], \quad (k_i,\kk) \in \R^d.
\end{equation}
Then $A_{\kk}(k_i)$ defines a self-adjoint operator on $\Hi$. The solution to the operator-valued Cauchy problem
\begin{equation} \label{eqn:parallel_def}
\iu \, \partial_{k_i} T_{\kk}(k_i, k_i^0) = A_{\kk}(k_i) \, T_{\kk} (k_i, k_i^0), \quad T_{\kk}(k_i^0, k_i^0) = \Id,
\end{equation}
defines a family of unitary operators on $\Hi$, called the \emph{parallel transport unitaries} (along the $i$-th direction). In the following we will fix $k_i^0=0$. This notion coincides with the one in differential geometry of the parallel transport along the straight line from $(0,\kk)$ to $(k_i,\kk)$ associated to the \emph{Berry connection} on the Bloch bundle. The parallel transport unitaries satisfy the properties listed in the following result.

\begin{lemma} \label{prop:parallel}
Let $\set{P(\kk)}_{\kk \in \R^d}$ be a smooth (respectively analytic) and $\Z^d$-periodic family of orthogonal projections acting on an Hilbert space $\Hi$. Then the family of parallel transport unitaries $\set{T_{\kk}(k_i, 0)}_{k_i \in \R, \: \kk \in \R^D}$ defined in \eqref{eqn:parallel_def} satisfies the following properties:
\begin{enumerate}
 \item \label{item:T-smooth} the map $\R^d \ni \kk = (k_i, \kk) \mapsto T_{\kk}(k_i, 0) \in \mathcal{U}(\Hi)$ is smooth (respectively real-analytic);
 \item \label{item:T-periodic} for all $k_i \in \R$ and $\kk \in \R^D$ 
 \[ T_{\kk}(k_i+1, 1) = T_{\kk}(k_i, 0) \]
 and
 \[ T_{\kk + \mathbf{n}}(k_i, 0) = T_{\kk}(k_i, 0) \quad  \text{for } \mathbf{n} \in \Z^D; \]
 \item \label{item:T-intertwine} the intertwining property
 \[ P(k_i, \kk) = T_{\kk}(k_i, 0) \, P(0, \kk) \, T_{\kk}(k_i, 0)^{-1} \]
 holds for all $k_i \in \R$ and $\kk \in \R^D$ .
\end{enumerate}
\end{lemma}

A proof of all these properties can be found for example in \cite{FreundTeufel16} or in \cite[Sec.~2.6]{CorneanHerbstNenciu16}.

\bigskip

In \eqref{alpha}, the parallel transport unitary $\mathcal{T}(\kk) := T_{\kk}(1,0)$ is employed to define the continuous, $\Z^D$-periodic family of unitary matrices $\set{\alpha(\kk)}_{\kk \in \R^D}$. Let $j \in \set{1, \ldots, d}$, $j \ne i$. The integrand in the formula \eqref{deg} for $\deg_j(\det \alpha)$ can be expressed in terms of the parallel transport unitaries as
\[ \tr_{\C^m} \left( \alpha(\kk)^* \partial_{k_j} \alpha(\kk) \right) = \Tr_{\Hi} \left( P(0,\kk) \, \mathcal{T}(\kk)^* \partial_{k_j} \mathcal{T}(\kk) \right) \]
(compare \cite[Lemma~6.1]{CorneanMonacoTeufel17}). Besides, by the Duhamel formula we have
\[ \partial_{k_j} T_{\kk}(k_i,0) = T_{\kk}(k_i,0) \, \int_{0}^{k_i} \di s \, T_{\kk}(s,0)^* \, \partial_{k_j} A_{\kk}(s) \, T_{\kk}(s,0), 	\]
where $A_{\kk}(s)$ is as in \eqref{eqn:generator} (compare \cite[Lemma~6.2]{CorneanMonacoTeufel17}). On the other hand, one can also compute
\[ P(k_i,\kk) \, \partial_{k_j} A_{\kk}(k_i) \, P(k_i,\kk) = P(k_i,\kk) \, [ \partial_{k_i} P(k_i,\kk), \partial_{k_j} P(k_i,\kk) ] \, P(k_i,\kk) \]
so that, denoting $\mathbf{K} := (k_i,\kk) \in \R^d$,
\[ \Tr_{\Hi} \left( P(0,\kk) \, \mathcal{T}(\kk)^* \partial_{k_j} \mathcal{T}(\kk) \right) = \int_{0}^{1} \di k_i \, \Tr_{\Hi} \left( P(\mathbf{K}) \, \left[ \partial_{k_i} P(\mathbf{K}), \partial_{k_j} P(\mathbf{K}) \right] \right) \]
(compare \cite[Eqn.~(6.13)]{CorneanMonacoTeufel17}). Putting all the above equalities together, we conclude that
\[ \deg_j(\det \alpha) = \frac{1}{2 \pi \iu} \int_{0}^{1} \di k_j \int_{0}^{1} \di k_i \, \Tr_{\Hi} \left( P(\mathbf{K}) \, \left[ \partial_{k_i} P(\mathbf{K}), \partial_{k_j} P(\mathbf{K}) \right] \right) = c_1(P)_{ij}, \]
see \eqref{Chern}. The above equality proves Proposition~\ref{thm:alphaChern} as well as Equation~\eqref{alphaChern_ij}.

\subsection{Cayley transform} \label{app:Cayley} 

An essential tool to produce ``good'' logarithms for families of unitary matrices which inherit properties like continuity and ($\gamma$-)periodicity is the \emph{Cayley transform}. We recall here this construction.

\begin{lemma}[Cayley transform] \label{prop:Cayley}
Let $\set{\alpha(\kk)}_{\kk \in \R^{D}}$ be a family of unitary matrices which is continuous and $\Z^{D}$-periodic. Assume that $-1$ lies in the resolvent set of $\alpha(\kk)$ for all $\kk \in \R^{D}$. Then one can \emph{construct} a family $\set{h(\kk)}_{\kk \in \R^{D}}$ of self-adjoint matrices which is continuous, $\Z^{D}$-periodic and such that 
\[ \alpha(\kk) = \eu^{\iu  \, h(\kk)} \quad \text{for all } \kk \in \R^{D}. \]

If $D=2$ and $\set{\alpha(k_2,k_3)}_{(k_2,k_3) \in \R^2}$ is $\gamma$-periodic (in the sense of Definition~\ref{def:gamma}), then the above family of self-adjoint matrices can be chosen to be $\gamma$-periodic as well.
\end{lemma}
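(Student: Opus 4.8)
The plan is to construct $h$ by holomorphic (Riesz--Dunford) functional calculus of a fixed branch of the logarithm, and to read off every structural property from the fact that such a calculus is compatible with continuity, periodicity, and similarity transformations.

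First I would set up a single contour that works for all $\kk$ simultaneously. Since $\kk \mapsto \alpha(\kk)$ is continuous and $\Z^D$-periodic, it descends to a continuous map on the compact torus $\T^D$, so $\Sigma := \bigcup_{\kk \in \R^D} \sigma(\alpha(\kk))$ is a compact subset of the unit circle; by hypothesis $-1 \notin \Sigma$. Hence one can pick an open set $\Omega \subset \C \setminus (-\infty, 0]$ with $\Sigma \subset \Omega$ together with a finite union $\Gamma \subset \Omega$ of positively oriented contours winding once around $\Sigma$. On $\Omega$ the principal logarithm $\mathrm{Log}$ is holomorphic; setting $\ell(z) := -\iu\, \mathrm{Log}(z)$ I define
\[ h(\kk) := \ell(\alpha(\kk)) = \frac{-\iu}{2\pi\iu} \oint_\Gamma \mathrm{Log}(z)\, \bigl(z\,\Id - \alpha(\kk)\bigr)^{-1}\, \di z. \]
Equivalently, and more in the spirit of the name: $A(\kk) := \iu\,(\alpha(\kk) - \Id)(\alpha(\kk) + \Id)^{-1}$ is a well-defined self-adjoint matrix, because $-1$ lies in the resolvent set of $\alpha(\kk)$, and one may take $h(\kk) := -2\arctan A(\kk)$, since $\eu^{-2\iu \arctan t} = (1 - \iu t)/(1 + \iu t)$ for $t \in \R$ and $\alpha(\kk) = (\Id - \iu A(\kk))(\Id + \iu A(\kk))^{-1}$.

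Next I would verify the three claimed properties. Self-adjointness: on the unit circle one has $\mathrm{Log}(\bar z) = \overline{\mathrm{Log}(z)}$, and $\alpha(\kk)^* = \alpha(\kk)^{-1}$, so $h(\kk)^* = \iu\, \mathrm{Log}(\alpha(\kk)^{-1}) = -\iu\, \mathrm{Log}(\alpha(\kk)) = h(\kk)$ (or: $\arctan$ is real and odd and $A(\kk)^* = A(\kk)$). The exponential identity is immediate, $\eu^{\iu h(\kk)} = \eu^{\mathrm{Log}(\alpha(\kk))} = \alpha(\kk)$. Continuity of $\kk \mapsto h(\kk)$ follows from the contour-integral formula, since $\kk \mapsto (z\,\Id - \alpha(\kk))^{-1}$ is continuous uniformly for $z$ in the compact set $\Gamma$, and $\Z^D$-periodicity of $h$ is inherited directly from that of $\alpha$ through the same formula; as a bonus $\sigma(h(\kk)) \subset (-\pi,\pi)$.

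For the $\gamma$-periodic case with $D = 2$, I would only need to observe that conjugation preserves the spectrum, so that $\sigma(\alpha(k_2+1,k_3)) = \sigma(\alpha(k_2,k_3))$ and $\Sigma$ is again a compact subset of $\C \setminus (-\infty,0]$; thus the same $\Omega$, $\Gamma$ and the same formula for $h$ apply. Then, since holomorphic functional calculus intertwines with similarities, $f(VUV^{-1}) = V f(U) V^{-1}$, applying $\ell$ to $\alpha(k_2+1,k_3) = \gamma(k_3)\,\alpha(k_2,k_3)\,\gamma(k_3)^{-1}$ yields $h(k_2+1,k_3) = \gamma(k_3)\, h(k_2,k_3)\, \gamma(k_3)^{-1}$, while $h(k_2,k_3+1) = h(k_2,k_3)$ follows from $\Z$-periodicity of $\alpha$ in $k_3$; this is precisely $\gamma$-periodicity of $\set{h(k_2,k_3)}$. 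The only step that genuinely requires care is this passage from pointwise spectral information to a single contour valid for all $\kk$, i.e.\ the compactness of $\Sigma$ together with its staying away from $-1$, which in both settings rests on periodicity; everything else is the standard Riesz--Dunford calculus.
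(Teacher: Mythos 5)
Your proposal is correct and follows essentially the same route as the paper: holomorphic functional calculus with a fixed branch of the logarithm along a single contour valid for all $\kk$ (using compactness/periodicity of the spectra), which automatically inherits continuity, periodicity and, via $f(VUV^{-1})=Vf(U)V^{-1}$, the $\gamma$-periodicity. In fact your ``equivalent'' formula $h(\kk)=-2\arctan A(\kk)$ coincides exactly with the paper's construction, whose contour integral equals $2\arctan$ of the Cayley transform $s(\kk)=-A(\kk)$; applying the principal logarithm directly to $\alpha(\kk)$ instead is only a cosmetic variation.
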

\begin{proof}
The proof adapts the one in \cite[Prop.~3.5]{CorneanMonacoTeufel17}. The Cayley transform
\[ s(\kk) := \iu \, \left( \Id - \alpha(\kk) \right) \, \left( \Id + \alpha(\kk) \right)^{-1} \]
is self-adjoint, depends continuously on $\kk$, and is $\Z^D$-periodic (respectively $\gamma$-periodic) if $\alpha$ is as well. One also immediately verifies that
\[ \alpha(\kk) = \left( \Id + \iu \, s(\kk) \right) \, \left( \Id - \iu \, s(\kk) \right)^{-1}. \]

Let $\mathcal{C}$ be a closed, positively-oriented contour in the complex plane which encircles the real spectrum of $s(\kk)$ for all $\kk \in \R^D$. Let $\log(\cdot)$ denote the choice of the complex logarithm corresponding to the branch cut on the negative real semi-axis. Then
\[ h(\kk) := \frac{1}{2 \pi} \oint_\mathcal{C} \log \left( \frac{1 + \iu \, z}{1 - \iu \, z} \right) \left( s(\kk) - z \Id \right)^{-1} \, \di z, \quad \kk \in \R^D, \]
obeys all the required properties.
\end{proof}

\subsection{Generically non-degenerate spectrum of families of unitary matrices} \label{app:NonDegenerate} 

The aim of this Subsection is to prove that
\begin{proposition} \label{prop:generic}
Let $D \le 2$. Consider a continuous and $\Z^D$-periodic family of unitary matrices $\set{\alpha(\kk)}_{\kk \in \R^D}$. Then, one can \emph{construct} a sequence of continuous, $\Z^D$-periodic families of unitary matrices $\set{\alpha_n(\kk)}_{\kk \in \R^D}$, $n \in \N$, such that
\begin{itemize}
 \item $\sup_{\kk \in \R^D} \norm{\alpha_n(\kk) - \alpha(\kk)} \to 0$ as $n \to \infty$, and
 \item the spectrum of $\alpha_n(\kk)$ is completely non-degenerate for all $n \in \N$ and $\kk \in \R^D$.
\end{itemize}

In $D=2$, the same conclusion holds if periodicity and homotopy are replaced by $\gamma$-periodicity and $\gamma$-homotopy, in the sense of Definition~\ref{def:gamma}.
\end{proposition}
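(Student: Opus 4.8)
The plan is to recast ``completely non\-degenerate spectrum'' as ``the image avoids the bad set $\Sigma \subset U(m)$ of matrices with a repeated eigenvalue'', observe that $\Sigma$ has real codimension $\ge 3$, and then use a transversality argument: a $D$-dimensional family with $D \le 2 < 3$ can be perturbed off $\Sigma$ by an arbitrarily small amount.

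\textbf{Step 1 (reduction to the smooth case).} First I would replace $\alpha$ by a nearby smooth family. In the $\Z^D$-periodic case, convolving $\alpha$ componentwise with a smooth, compactly supported mollifier $g_\delta$ on $\R^D$ gives a smooth, $\Z^D$-periodic $\beta_\delta$ with $\sup_{\kk}\norm{\beta_\delta(\kk)-\alpha(\kk)}\to 0$; since $\alpha(\kk)$ is unitary, for $\delta$ small $\beta_\delta(\kk)^*\beta_\delta(\kk)$ is uniformly close to $\Id$, so the polar retraction $\alpha_\delta(\kk):=\beta_\delta(\kk)\,(\beta_\delta(\kk)^*\beta_\delta(\kk))^{-1/2}$ is well defined, smooth, $\Z^D$-periodic, unitary-valued and still converges uniformly to $\alpha$ (this is the argument of Lemma~\ref{lemma:smoothing}). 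In the $\gamma$-periodic case one cannot naively convolve in the $k_3$ direction because the monodromy $\gamma(k_3)$ depends on $k_3$; instead one notes that a $\gamma$-periodic family is the same as a continuous section of the flat fiber bundle $\mathcal{U}_\gamma\to\T^2$ obtained from $[0,1]\times\R\times U(m)$ by the identifications $(1,k_3,W)\sim(0,k_3,\gamma(k_3)^{-1}W\gamma(k_3))$ and $(k_2,k_3+1,W)\sim(k_2,k_3,W)$, and that continuous sections of a smooth fiber bundle with manifold fiber can be $C^0$-approximated by smooth sections. So it suffices to prove the proposition for smooth $\alpha$; in fact it is enough to produce, for each $n$, one smooth family $\alpha_n$ with $\sup_\kk\norm{\alpha_n(\kk)-\alpha(\kk)}<1/n$ and everywhere non-degenerate spectrum.

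\textbf{Step 2 (the degenerate locus).} Let $\Sigma\subset U(m)$ be the set of unitary matrices with at least one repeated eigenvalue. By the classical von Neumann--Wigner stratification, $\Sigma$ is closed and is a finite union of embedded real-analytic submanifolds of $U(m)$, each of real codimension $\ge 3$ (the generic stratum, where exactly two eigenvalues coincide, has codimension exactly $3$). A family $\set{\alpha(\kk)}_\kk$ has everywhere completely non-degenerate spectrum if and only if $\alpha(\kk)\notin\Sigma$ for all $\kk$. Note also that $\Sigma$ is invariant under conjugation.

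\textbf{Step 3 (transversality).} In the $\Z^D$-periodic case, consider $\Phi\colon\T^D\times U(m)\to U(m)$, $\Phi(\kk,V):=\alpha(\kk)\,V$. For fixed $\kk$ the map $V\mapsto\alpha(\kk)V$ is a diffeomorphism of $U(m)$, so $\Phi$ is a submersion and hence transverse to every stratum of $\Sigma$; by the parametric transversality (Thom) theorem, for $V$ outside a measure-zero subset of $U(m)$ the map $\alpha_V:=\alpha(\cdot)\,V\colon\T^D\to U(m)$ is transverse to each of the finitely many strata. Since $\dim\T^D=D\le 2<3$, transversality to a set of codimension $\ge 3$ means $\alpha_V(\T^D)\cap\Sigma=\emptyset$; and $\norm{\alpha(\kk)V-\alpha(\kk)}=\norm{V-\Id}$ uniformly in $\kk$, so picking $V=V_n$ with $\norm{V_n-\Id}<1/n$ from this dense, full-measure set and setting $\alpha_n:=\alpha(\cdot)\,V_n$ finishes the construction. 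In the $\gamma$-periodic case ($D=2$) right translation by a constant breaks $\gamma$-periodicity, so instead, since $\Sigma$ is conjugation-invariant it determines a sub-bundle $\Sigma_\gamma\subset\mathcal{U}_\gamma$ which is fiberwise of codimension $\ge 3$; Thom transversality for sections of a fiber bundle then lets one perturb a given smooth section by an arbitrarily small $C^0$-amount, keeping it a section (i.e.\ keeping $\gamma$-periodicity), to one transverse to $\Sigma_\gamma$, and again $2<3$ forces the perturbed section to miss $\Sigma_\gamma$ entirely. Concretely such a perturbation is assembled from a partition of unity $\set{\rho_i}$ on a fundamental domain and local right-multiplications $\alpha\mapsto\alpha\,\eu^{\iu\epsilon_i\rho_i B_i}$ by generic self-adjoint $B_i$ of small strength $\epsilon_i$, the patch data being transported across $\set{k_2\in\Z}$ by conjugation with $\gamma(k_3)$ (conjugation-invariance of $\Sigma$ makes this transport preserve the non-degeneracy gained on each patch).

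\textbf{Expected main obstacle.} Steps 1 and 2 are routine. The delicate point — and presumably why this is deferred to the appendix — is carrying out the perturbation in the $\gamma$-periodic setting \emph{globally, by an arbitrarily small amount, and compatibly with the twisted periodicity} simultaneously: one must choose the cover, the cutoffs $\rho_i$ and the strengths $\epsilon_i$ so that the local perturbations neither recreate degeneracies in the transition regions $\set{0<\rho_i<1}$ nor clash across the monodromy, which is exactly the content of the section-transversality argument for $\mathcal{U}_\gamma$ and is where all the bookkeeping lives.
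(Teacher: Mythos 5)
Your Step~3 for the strictly $\Z^D$-periodic case is correct and is a genuinely different route from the paper's: you replace the paper's scheme (analytic approximation so that eigenvalue crossings become isolated, then the local splitting lemmas of \cite{CorneanMonacoTeufel17,CorneanMonaco17} plus gluing) by one global genericity argument, namely that the degenerate locus $\Sigma\subset U(m)$ is a finite union of conjugation-invariant strata of real codimension $\ge 3$, so that for almost every constant $V$ close to $\Id$ the right-translated family $\alpha(\cdot)V$ misses $\Sigma$ since $D\le 2<3$. After your mollification-plus-polar-retraction step (which is exactly the paper's smoothing lemma) this is sound, and it is arguably shorter; both arguments ultimately rest on a Sard-type fact, so the loss in ``constructiveness'' relative to the paper is only the replacement of an explicit splitting by the choice of a generic $V$, which is a fair trade.

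In the $\gamma$-periodic case, however, there is a concrete gap. Definition~\ref{def:gamma} only assumes $\gamma(k_3)$ \emph{continuous}, so the twisted bundle $\mathcal{U}_\gamma$ you build has merely continuous transition data: the claims ``continuous sections of a smooth fiber bundle can be $C^0$-approximated by smooth sections'' and ``Thom transversality for sections'' presuppose a smooth structure on the total space, i.e.\ smoothness of $\gamma$, and you cannot mollify the family across the seam $k_2\in\Z$ compatibly with the twisted boundary condition when $\gamma$ is not smooth. This is not a cosmetic point: your Sard/measure-zero step (``for a.e.\ $V$ the perturbed section misses $\Sigma$ over the patch $K_i$'') needs at least Lipschitz or $C^1$ regularity of the current section in the chart, since for a merely continuous two-parameter family $s$ the union $\bigcup_{\kk\in K_i}s(\kk)^{-1}\Sigma$ need not be a null set; and every finite cover of the base must contain patches meeting the circle $\set{k_2=0}$, where the chart transitions are conjugations by the merely continuous $\gamma(k_3)$. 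The paper avoids this entirely by first deforming $\alpha$ so that near $k_2\in\set{0,1}$ it \emph{equals} $\alpha_0(k_3)$, respectively $\gamma(k_3)\alpha_0(k_3)\gamma(k_3)^{-1}$, with $\alpha_0$ a smooth non-degenerate $1$D approximant --- non-degenerate for free by conjugation-invariance of the spectrum, and requiring no regularity of $\gamma$ --- and then perturbs only in the interior $(\epsilon,1-\epsilon)\times\R$ of the fundamental domain, where no twisting is present, before extending $\gamma$-periodically. Your proposal is missing this seam-handling step (and the cutoff-region bookkeeping you yourself flag as the ``main obstacle'' is asserted rather than carried out). If you add it --- or if you strengthen the hypothesis to $\gamma$ smooth, as can be arranged in the paper's application --- your interior perturbation by cut-off generic right translations does go through and completes the argument.
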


The periodic case for $D\leq 2$ has already been treated in  
\cite{CorneanHerbstNenciu16}, \cite{CorneanMonacoTeufel17} and \cite{CorneanMonaco17}, but we will sketch below the main ideas and give details on the new, $\gamma$-periodic situation.  

We will need two technical results, which we state here.

\begin{lemma}[Analytic Approximation Lemma]
Consider a uniformly continuous family of unitary matrices $\alpha(k)$ where $k\in [a,b]\subset \R$. Let $I$ be any compact set completely included in $[a,b]$. 
Then one can \emph{construct} a sequence $\set{\alpha_n(k)}_{k\in I}$, $n \in \N$, of families of unitary matrices which are \emph{real-analytic} on $I$ and such that
\[ \sup_{k\in I} \norm{\alpha_n(k) - \alpha(k)} \to 0 \quad \text{as } n \to \infty. \]
If $\alpha$ is continuous and $\mathbb{Z}$-periodic, the same is true for $\alpha_n$ and the approximation is uniform on $\R$. This last statement can be extended to any $D\geq 1$.
\end{lemma}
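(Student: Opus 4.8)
The plan is to approximate $\alpha$ in two moves: first replace it by a real-analytic matrix-valued family that is uniformly close but typically no longer unitary, then project that family back onto the unitary group by a canonical pointwise formula that preserves analyticity and periodicity.

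For the first move, in the non-periodic case I would extend $\alpha$ to a uniformly continuous $U(m)$-valued function $\widetilde\alpha$ on all of $\R$ (for instance constant equal to $\alpha(a)$ on $(-\infty,a]$ and to $\alpha(b)$ on $[b,\infty)$), fix the Gaussian kernel $g_\delta(k) := \delta^{-1}(2\pi)^{-1/2}\,\eu^{-k^2/(2\delta^2)}$, and set $\beta_\delta := g_\delta * \widetilde\alpha$. A dominated-convergence / Morera argument shows that $\beta_\delta$ extends to an entire matrix-valued function, hence is real-analytic on $I$, and uniform continuity of $\widetilde\alpha$ gives $\sup_{k \in \R}\norm{\beta_\delta(k) - \widetilde\alpha(k)} =: \eps(\delta) \to 0$ as $\delta \to 0$; restricting to $I$ completes this step. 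In the $\Z^D$-periodic case I would instead convolve with a periodized Gaussian $\sum_{\n \in \Z^D} g_\delta(\cdot + \n)$ --- equivalently, truncate the Fourier series of $\alpha$, or use the Fej\'er kernel on the torus as in the Remark after Lemma~\ref{lemma:smoothing} --- obtaining a real-analytic, $\Z^D$-periodic $\beta_\delta$ with $\sup_{\kk}\norm{\beta_\delta(\kk) - \alpha(\kk)} \to 0$.

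For the second move I would take $\alpha_\delta(k) := \beta_\delta(k)\,\bigl(\beta_\delta(k)^*\beta_\delta(k)\bigr)^{-1/2}$, i.e.\ the unitary factor of the polar decomposition (``symmetric orthonormalization''). Since $\alpha(k)$ is unitary, the identity $\beta_\delta^*\beta_\delta - \Id = \beta_\delta^*(\beta_\delta - \alpha) + (\beta_\delta - \alpha)^*\alpha$ shows $\norm{\beta_\delta(k)^*\beta_\delta(k) - \Id} = O(\eps(\delta))$ uniformly in $k$, so for $\delta$ small the positive Hermitian matrix $\beta_\delta(k)^*\beta_\delta(k)$ has spectrum inside a fixed compact subinterval of $(0,\infty)$, uniformly in $k$. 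Then $\bigl(\beta_\delta(k)^*\beta_\delta(k)\bigr)^{-1/2}$ is defined via the holomorphic functional calculus over a fixed contour (or a Neumann series), is uniformly bounded, and depends real-analytically --- and, in the periodic case, $\Z^D$-periodically --- on $k$, because $\beta_\delta^*\beta_\delta$ does. Using that this matrix commutes with $\beta_\delta^*\beta_\delta$ and that $\beta_\delta(k)$ is invertible for $\delta$ small, one checks $\alpha_\delta(k)$ is unitary; real-analyticity and periodicity of $\alpha_\delta$ are inherited from the factors; and $\norm{\alpha_\delta(k)-\alpha(k)} \le \norm{\beta_\delta(k)}\,\norm{(\beta_\delta^*\beta_\delta)^{-1/2}(k)-\Id} + \norm{\beta_\delta(k)-\alpha(k)} \to 0$ uniformly, since $(\beta_\delta^*\beta_\delta)^{-1/2}(k) \to \Id$ by continuity of $x \mapsto x^{-1/2}$. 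Setting $\alpha_n := \alpha_{1/n}$ gives the required sequence, and the $D \ge 1$ periodic statement is identical with the $D$-dimensional periodized mollifier.

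The delicate point --- and the only place the unitarity of $\alpha$ is used --- is the re-unitarization step: one must guarantee that $\beta_\delta^*\beta_\delta$ stays uniformly bounded away from zero so that its inverse square root exists, is uniformly bounded, and crucially inherits the real-analyticity (resp.\ periodicity) of $\beta_\delta$. I would make this rigorous through the contour-integral representation $(\beta_\delta^*\beta_\delta)^{-1/2} = \frac{1}{2\pi\iu}\oint_{\mathcal C} z^{-1/2}\bigl(z\,\Id - \beta_\delta^*\beta_\delta\bigr)^{-1}\,\di z$ over a fixed contour $\mathcal C$ encircling, say, the interval $[1/2,3/2]$, which makes all three properties transparent.
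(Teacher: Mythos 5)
Your proof is correct and takes essentially the same route as the paper's sketch: convolution with a real-analytic kernel (Gaussian or Fej\'er) followed by re-unitarization through the polar factor $\beta_\delta\,(\beta_\delta^*\beta_\delta)^{-1/2}$, i.e.\ exactly the two-step argument the paper delegates to \cite[Lemma~A.2]{CorneanMonacoTeufel17}, with the extension off $[a,b]$, the periodized kernel, and the contour-integral construction of the inverse square root spelled out. Note only that your exponent $-1/2$ is the correct one; the paper's formula $\alpha':=\beta\gamma^{1/2}$ is a typo for $\beta\gamma^{-1/2}$.
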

\begin{proof}[Proof (sketch)]
The proof proceeds in the same spirit of Lemma~\ref{lemma:smoothing} above. First, we take the convolution with a real-analytic kernel and obtain a smooth matrix $\beta(k)$ which is close in norm to $\alpha(k)$. Thus $\kappa:=\beta^*\beta$ must be close to the identity matrix, it is self-adjoint and real-analytic, and the same holds true for $\kappa^{1/2}$. Finally, we restore unitarity by writing $\alpha':=\beta \kappa^{1/2}$ and checking that $(\alpha')^*\alpha'=\mathbf{1}$. More details can be found in \cite[Lemma~A.2]{CorneanMonacoTeufel17}.
\end{proof}

\begin{lemma}[Local Splitting Lemma]
For $R > 0$ and $\kk_0 \in \R^D$, denote by $B_R(\kk_0)$ the open ball of radius $R$ around $\kk_0$. Let $\set{\alpha(\kk)}_{\kk \in B_R(\kk_0)}$ be a continuous family of unitary matrices. Then, for some $R' \le R$, one can \emph{construct} a sequence $\set{\alpha_n(\kk)}_{\kk \in B_{R'}(\kk_0)}$, $n \in \N$, of continuous families of unitary matrices such that
\begin{itemize}
 \item $\sup_{\kk \in B_R(\kk_0)} \norm{\alpha_n(\kk) - \alpha(\kk)} \to 0$ as  $n \to \infty$, and
 \item the spectrum of $\alpha_n(\kk)$ is completely non-degenerate for all $\kk \in B_{R'}(\kk_0)$. 
\end{itemize}
\end{lemma}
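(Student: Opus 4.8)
The plan is to reduce, by an arbitrarily small perturbation, to the case of a \emph{real-analytic} family of unitary matrices, and then to destroy all spectral degeneracies at once by left-multiplying with a generic small unitary $\eu^{\iu A}$. First I would fix any $R'<R$ and, using the ($D$-dimensional version of the) Analytic Approximation Lemma on the compact ball $\overline{B_{R'}(\kk_0)}\subset B_R(\kk_0)$, replace $\alpha$ by a real-analytic family of unitaries $\widetilde\alpha$ with $\sup_{\kk\in\overline{B_{R'}}}\norm{\widetilde\alpha(\kk)-\alpha(\kk)}$ as small as desired; since it then suffices to produce non-degenerate approximants for $\widetilde\alpha$ (the triangle inequality absorbs the error, and $R'$ can be kept fixed as, say, $R/2$), I rename $\widetilde\alpha$ to $\alpha$ and assume from now on that $\alpha$ is real-analytic on $\overline{B_{R'}(\kk_0)}$. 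This analytic reduction is the key preparatory step: it is exactly what makes the sets considered below subanalytic, hence of controllable dimension. Next I would record the elementary differential-geometric input: the set $\Sigma\subset U(m)$ of unitary matrices that are \emph{not} completely non-degenerate (i.e.\ have a repeated eigenvalue) is a closed semialgebraic subset of the real algebraic group $U(m)$ — cut out by the vanishing of the discriminant of the characteristic polynomial — whose top stratum (one double eigenvalue, the rest simple) is a smooth manifold of dimension $m^2-3$ and whose other strata have strictly smaller dimension; thus $\mathrm{codim}_{U(m)}\Sigma\ge 3$. This is the only point where $D\le 2$ is used, through $D+(m^2-3)\le m^2-1<m^2$.

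Now choose $\rho>0$ so small that $A\mapsto\eu^{\iu A}$ restricts to a diffeomorphism of $\set{A\in\mathrm{Herm}(m):\norm{A}\le\rho}$ onto a neighbourhood $\mathcal O$ of $\Id$ in $U(m)$, and consider the real-analytic map
\[ F\colon\overline{B_{R'}(\kk_0)}\times\set{A:\norm{A}\le\rho}\longrightarrow U(m),\qquad F(\kk,A):=\eu^{\iu A}\,\alpha(\kk). \]
For each fixed $\kk$ one has $\set{A:F(\kk,A)\in\Sigma}=\exp^{-1}\!\big(\mathcal O\cap\Sigma\,\alpha(\kk)^{-1}\big)$, the preimage under a diffeomorphism of an open piece of a right translate of $\Sigma$, so it has dimension $\le m^2-3$. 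Fibering over $\kk\in\overline{B_{R'}}$, the subanalytic set $F^{-1}(\Sigma)$ has dimension $\le D+(m^2-3)\le m^2-1$; since the projection onto the $\mathrm{Herm}(m)$-factor is Lipschitz, the ``bad set''
\[ \mathcal B:=\set{A\in\mathrm{Herm}(m):\eu^{\iu A}\alpha(\kk)\in\Sigma\ \text{for some}\ \kk\in\overline{B_{R'}}} \]
has Hausdorff dimension $<m^2$, hence Lebesgue measure zero in $\mathrm{Herm}(m)\simeq\R^{m^2}$. Consequently one can \emph{construct} — a generic choice works, $\mathcal B$ being negligible — matrices $A_n\in\mathrm{Herm}(m)$ with $\norm{A_n}\le\min\set{\rho,1/n}$ and $A_n\notin\mathcal B$, and then
\[ \alpha_n(\kk):=\eu^{\iu A_n}\,\alpha(\kk),\qquad \kk\in B_{R'}(\kk_0), \]
is a real-analytic (in particular continuous) family of unitary matrices with $\sup_{\overline{B_{R'}}}\norm{\alpha_n-\alpha}\le\norm{\eu^{\iu A_n}-\Id}\le 1/n\to 0$ and with completely non-degenerate spectrum at every $\kk\in\overline{B_{R'}}$, by the very choice $A_n\notin\mathcal B$. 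Composing with the approximation of the first step yields the assertion with this $R'$.

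I expect the main obstacle to be exactly the dimension bookkeeping of the second paragraph: turning ``$\mathrm{codim}\,\Sigma\ge 3$'' together with ``$D\le 2$'' into the measure-zero statement for $\mathcal B$ requires knowing that $F^{-1}(\Sigma)$, and its projection, are tame sets of the expected dimension, which fails for a merely continuous $\alpha$ — hence the analytic reduction must come first. A more computational alternative, closer to the methods of the quoted references, would instead use Riesz projections onto the spectral arcs of $\alpha(\kk_0)$ to block-diagonalise $\alpha$ over the contractible ball, replace each block $\beta_j(\kk)$ by a small self-adjoint family $h_j(\kk)$ via a logarithm, and run the same ``codimension $3$ beats dimension $D\le 2$'' transversality argument inside each $\mathrm{Herm}(m_j)$; I would, however, present the shorter one-step version above.
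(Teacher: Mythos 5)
Your proof is correct, but it takes a genuinely different route from the one the paper relies on. The paper does not actually prove the Local Splitting Lemma in the text: it defers to \cite[Lemma A.1]{CorneanMonacoTeufel17} for $D=1$ and \cite[Lemma 5.1]{CorneanMonaco17} for $D=2$, where the splitting is done spectrally — one isolates the eigenvalue clusters of $\alpha(\kk_0)$ by Riesz projections, reduces the residual degeneracies to small (effectively $2\times 2$) blocks, and removes them by subtracting a small parameter chosen outside the image of a smooth map from $\R^D$ into $\R^3$, which has measure zero by Sard; this is exactly the paper's closing remark that ``a smooth map between $\R^D$ and $\R^3$ cannot have regular values''. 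You instead perform a single global perturbation $\alpha_n(\kk)=\eu^{\iu A_n}\alpha(\kk)$ with one generic Hermitian $A_n$, and obtain the genericity from a dimension count for the degeneracy variety $\Sigma\subset U(m)$ (codimension $3$, which is correct) inside the product space, via subanalytic dimension theory (fiber-dimension formula, projections of relatively compact subanalytic sets, Lipschitz invariance of Hausdorff dimension). Both arguments hinge on the same numerology, $\mathrm{codim}\,\Sigma=3>D$; yours buys a shorter, one-shot proof at the price of heavier tameness machinery, while the cited proofs are more elementary and work directly with explicit spectral data — though they too end with a choice avoiding a measure-zero set, so your version is no less ``constructive'' on that score. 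Your observation that the analytic reduction must precede the genericity step is exactly right: for merely continuous $\alpha$ the bad set $\mathcal B$ need not be small, and the fiber-dimension bookkeeping is where analyticity is genuinely used. Two minor points to state explicitly: the paper's Analytic Approximation Lemma is formulated for intervals ($D=1$) and for periodic families, so you should note that the convolution-plus-reunitarization argument is run on the compact ball $\overline{B_{R'}(\kk_0)}\subset B_R(\kk_0)$ after a bounded continuous extension of $\alpha$ to $\R^D$ (routine, but not literally covered by the statement quoted); and your uniform estimate naturally lives on $B_{R'}$, the supremum over $B_R$ in the Lemma's wording being a slip since the approximants are only defined on the smaller ball — this matches how the Lemma is used in the proof of Proposition~\ref{prop:generic}.
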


The proof of the above Lemma can be found in \cite[Lemma~A.1]{CorneanMonacoTeufel17} for $D=1$ and in \cite[Lemma~5.1]{CorneanMonaco17} for $D=2$.

\begin{proof}[Proof of Proposition~\ref{prop:generic}]
The main idea is to lift all the spectral degeneracies of $\alpha$ within the unit interval $[0,1]$ or the unit square $[0,1] \times [0,1]$, and then extend the approximants with non-degenerate spectrum to the whole $\R^D$ by either periodicity or $\gamma$-periodicity.

\medskip

We start with $D=1$. By the Analytic Approximation Lemma we can find an approximant $\alpha^{(1)}$ of $\alpha$ which depends analytically on $k$. If $\alpha^{(1)}$ has degenerate eigenvalues, then they either cross at isolated points (a finite number of them in the compact interval $[0,1]$) or they stay degenerate for all $k \in [0,1]$. Pick a point in $[0,1]$ which is not an isolated degenerate point. Applying the Local Splitting Lemma, find a continuous approximant $\alpha^{(2)}$ of $\alpha^{(1)}$ for which the second option is ruled out, so that its eigenvalues cannot be constantly degenerate.

Let now $\alpha^{(3)}$ be an analytic approximation of $\alpha^{(2)}$, obtained by means of the Analytic Approximation Lemma. The eigenvalues of $\alpha^{(3)}$ can only be degenerate at a finite number of points $\set{0 < k_1 < \cdots < k_S < 1}$ (we assume without loss of generality that no eigenvalue intersections occur at $k=0$: this can be achieved by means of small shift of the coordinate). By applying the Local Splitting Lemma to balls of radius $1/n$ around each such point (starting from a large enough $n_0$), and extending the definition of the approximants from $[0,1]$ to $\R$ by periodicity, we obtain the required continuous and periodic approximants $\alpha_n$ with completely non-degenerate spectrum. Notice that, under the assumption of null-homotopy of $\alpha$, the rest of the argument of Theorem~\ref{thm:2-step} applies: in particular, for $n$ sufficiently large $\alpha_n$ admits a continuous and periodic logarithm, namely $\alpha_n(k) = \eu^{\iu  \, h_n(k)}$.

\medskip
 
Now we continue with $D=2$. We will only treat the $\gamma$-periodic setting, since the periodic case for $D\leq 2$ has been already analyzed in  
\cite{CorneanHerbstNenciu16}, \cite{CorneanMonacoTeufel17} and \cite{CorneanMonaco17}. 

We start by considering the strip $[0,1]\times \R$. The matrix $\alpha(0,k_3)$ is periodic, hence  we may find a smooth approximation  $\alpha_0(k_3)$ which is always non-degenerate and periodic. 

The matrix $\alpha(k_2,k_3) \alpha(0,k_3)^{-1}$ is close to the identity near $k_2=0$, and so is $\alpha(k_2,k_3) \alpha_0(k_3)^{-1}$. Hence if $k_2$ is close to $0$ we can write (using the Cayley transform)
$$\alpha(k_2,k_3)=\eu^{\iu  H_0(k_2,k_3)}\alpha_0(k_3)$$
where $H_0(k_2,k_3)$ is continuous, periodic in $k_3$, and uniformly close to zero. Due to the $\gamma$-periodicity of $\alpha$, we have that 
$\alpha(1,k_3)$ and  $\gamma(k_3)\alpha_0(k_3)\gamma(k_3)^{-1}$ are also close in norm. Reasoning in the same way as near $k_2=0$ we can write
$$\alpha(k_2,k_3)=\eu^{\iu  H_1(k_2,k_3)}\gamma(k_3)\alpha_0(k_3)\gamma(k_3)^{-1}$$
where $H_1(k_2,k_3)$ is  continuous, periodic in $k_3$, and uniformly close to zero near $k_2=1$. 

Let $\delta<1/10$. Choose a smooth function $0\leq g_\delta\leq 1$ such that
\[ g_\delta(k_2)= \begin{cases}
1  & \text{if }  k_2\in [0,\delta]\cup [1-\delta,1],\\
0  & \text{if } 2\delta\leq  k_2\leq 1-2\delta.
\end{cases} \]
For $0\leq k_2\leq 1$ and $k_3\in \R$, define the matrix $\alpha_\delta(k_2,k_3)$  in the following way: 
\[ \alpha_\delta(k_2,k_3):= \begin{cases}
\eu^{\iu  (1-g_\delta(k_2))H_0(k_2,k_3)} \, \alpha_0(k_3)  & \text{if } 0\leq k_2\leq 3\delta,\\
\alpha(k_2,k_3) & \text{if } 3\delta<k_2< 1-3\delta, \\
\eu^{\iu  (1-g_\delta(k_2))H_1(k_2,k_3)} \, \gamma(k_3) \, \alpha_0(k_3)\, \gamma(k_3)^{-1} & \text{if } 1-3\delta\leq  k_2\leq 1.
\end{cases}\]

We notice that $\alpha_\delta$ is continuous, periodic in $k_3$ and converges in norm to $\alpha$ when $\delta$ goes to zero. Moreover, 
$$\alpha_\delta(1,k_3)=\gamma(k_3)\,\alpha_\delta(0,k_3)\,\gamma(k_3)^{-1},$$
which is a crucial ingredient if we want to continuously extend it by $\gamma$-periodicity to $\R^2$. 

We also note that  $\alpha_\delta(k_2,k_3)$ is completely non-degenerate when $k_2$ is either $0$ or $1$, hence by continuity it must remain completely non-degenerate when $k_2\in[0,\epsilon]\cup[1-\epsilon,1]$ if $\epsilon$ is small enough. 

Following \cite{CorneanMonaco17}, we will explain how to produce an approximation $\alpha'(k_2,k_3)$ of $\alpha_\delta(k_2,k_3)$ with the following properties: 
\begin{itemize}
\item it coincides with $\alpha_\delta(k_2,k_3)$ if $k_2\in[0,\epsilon]\cup[1-\epsilon,1]$, 
\item it is continuous on $[0,1]\times \R$ and periodic in $k_3$,  
\item it is completely non-degenerate on the strip $[0,1]\times \R$. 
\end{itemize}
Assuming for now that all this holds true, let us investigate the consequences. Because it coincides with $\alpha_\delta$ near $k_2=0$ and $k_2=1$, we also have:
$$\alpha'(1,k_3)=\gamma(k_3)\alpha'(0,k_3)\gamma(k_3)^{-1}.$$
If $k_2>0$ we define recursively
$$\alpha'(k_2+1,k_3)=\gamma(k_3)\alpha'(k_2,k_3)\gamma(k_3)^{-1}$$
and if $k_2<0$ 
$$\alpha'(k_2,k_3)=\gamma(k_3)^{-1}\alpha'(k_2+1,k_3)\gamma(k_3).$$
Then $\alpha'$ has all the properties required in the statement, and the proof is complete.

\medskip

Finally let us sketch the main ideas borrowed from \cite{CorneanMonaco17} which are behind the proof of the three properties of $\alpha'$ listed above. 

First, the construction of $\alpha'$ is based on continuously patching  non-degenerate local logarithms, which is why the already non-degenerate region 
$k_2\in [0,\epsilon]\cup [1-\epsilon,1]$ is left unchanged. 

Second, let us consider the finite segment defined by $k_2\in [\epsilon,1-\epsilon]$ and $k_3=0$. The family of matrices $\set{\alpha_\delta(k_2,0)}$ is $1$-dimensional, with a spectrum which is completely non-degenerate near $k_2=\epsilon$ and $k_2=1-\epsilon$. Reasoning as in the case $D=1$ we can find a continuous  approximation $\alpha_2(k_2)$ which is completely non-degenerate on the whole interval $k_2\in [\epsilon,1-\epsilon]$. The matrix $\alpha_\delta(k_2,k_3)\alpha_2(k_2)^{-1}$ is close to the identity matrix if $|k_3|\ll 1$, hence we may locally perturb $\alpha_\delta$ near the segment $(\epsilon,1-\epsilon)\times \{0\}$ so that the new $\alpha_\delta'$  is completely non-degenerate on a small tubular neighborhood of the boundary of the segment $(\epsilon,1-\epsilon)\times \{0\}$. This perturbation must be taken small enough not to destroy the initial non-degeneracy near $k_2=\epsilon$ and $k_2=1-\epsilon$. 

Third, since $\alpha_\delta$ is periodic in $k_3$, the local perturbation around the strip  $(\epsilon,1-\epsilon)\times \{0\}$ can be repeated near all the strips $(\epsilon,1-\epsilon)\times \mathbb{Z}$. The new matrix, $\alpha_\delta''$, will be non-degenerate near a small tubular neighborhood of any unit square of the type $[0,1]\times [p,p+1]$, with $p\in \mathbb{Z}$. The final step is to locally perturb $\alpha_\delta''$ inside these squares, like in \cite[Prop.~5.11]{CorneanMonaco17}. The splitting method relies in an essential way on the condition $D\leq 2$, since it uses the fact that a smooth map between $\R^D$ and $\R^3$ cannot have regular values.
\end{proof}

\subsection{Resolvent estimates}
\label{app:KernelEstimates}

In this final Appendix we will prove the estimates on the matrix elements of the resolvent of the Hamiltonian $\HH_\epsilon$ that we used in Section~\ref{sec:Hofstadter}.
\begin{proposition}[Combes--Thomas type estimate]
	\label{Combes-Thomas}
	Consider an operator $H_0$ in $\ell^2(\Z^2) \otimes \C^Q$ such that its matrix elements are localized along the diagonal, that is,
	\begin{equation*}
	\left |H_0(\gamma,\xx;\gamma',\xx')\right |\leq C\eu^{-\beta_0\|\gamma-\gamma'\|} \qquad \forall \gamma,\gamma' \in \Z^2\, , \quad \xx, \xx' \in \set{1,\ldots,Q}
	\end{equation*}
	for some positive constants $C$ and $\beta_0$. Moreover fix a compact set $K\subset \rho(\HH_0)$.  Then, there exist two constants $C'$ and $0<\beta<\beta_0$ such that 
	\begin{equation*}
	\sup_{z\in K} \left\| (H_0-z)^{-1}(\gamma,\xx;\gamma',\xx')  \right\| \leq C' \eu^{-\beta\|\gamma-\gamma'\|} \, , \quad \forall \: \gamma, \gamma'\in \Z^2, \: \xx, \xx' \in \set{1,\ldots,Q} . 
	\end{equation*}
\end{proposition}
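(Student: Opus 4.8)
The plan is to run the classical Combes--Thomas argument in the present discrete, matrix-valued setting: conjugate $H_0$ by an exponential weight pointing in the direction of $\gamma-\gamma'$, and observe that, because of the off-diagonal exponential decay of $H_0$, this conjugation produces only a bounded perturbation whose norm is controlled by the rate of the weight, uniformly over $K$. Throughout I suppress the internal $\C^Q$ indices and write $H_0(\gamma;\gamma') \in M_Q(\C)$ for the corresponding $Q\times Q$ block.

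\textbf{First} I would fix $q\in\R^2$ with $\|q\|\le\beta_0/2$ and introduce the operator $H_0^q$ on $\ell^2(\Z^2)\otimes\C^Q$ with block matrix elements $H_0^q(\gamma;\gamma'):=\eu^{q\cdot(\gamma-\gamma')}H_0(\gamma;\gamma')$. To avoid domain issues with the unbounded multiplication operator $\eu^{q\cdot x}$, one replaces $q\cdot x$ by a bounded Lipschitz cut-off $F_n$ with $|F_n(\gamma)-F_n(\gamma')|\le\|q\|\,\|\gamma-\gamma'\|$ everywhere and $F_n(\gamma)=q\cdot\gamma$ on the ball of radius $n$; then $H_0^{F_n}=\eu^{F_n}H_0\eu^{-F_n}$ is a genuine conjugation and all estimates below are uniform in $n$. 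From the hypothesis $|H_0(\gamma,\xx;\gamma',\xx')|\le C\eu^{-\beta_0\|\gamma-\gamma'\|}$ together with $|\eu^{q\cdot\nu}-1|\le\|q\|\,\|\nu\|\,\eu^{\|q\|\,\|\nu\|}$ one gets the block bound
\[ \big\|(H_0^q-H_0)(\gamma;\gamma')\big\| \le Q\,C\,\|q\|\,\|\gamma-\gamma'\|\,\eu^{(\|q\|-\beta_0)\|\gamma-\gamma'\|}, \]
so a Schur--Holmgren estimate gives $\|H_0^q-H_0\|\le K_0\,\|q\|$ for all $\|q\|\le\beta_0/2$, where $K_0:=Q\,C\sum_{\nu\in\Z^2}\|\nu\|\,\eu^{-(\beta_0/2)\|\nu\|}<\infty$ depends only on $Q$, $C$, $\beta_0$ and, crucially, not on the direction of $q$.

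\textbf{Next}, since $K\subset\rho(H_0)$ is compact, $M_0:=\sup_{z\in K}\|(H_0-z)^{-1}\|$ is finite. Writing $H_0^q-z=(H_0-z)\big(\Id+(H_0-z)^{-1}(H_0^q-H_0)\big)$, I would set $\beta:=\min\{\beta_0/2,\,(2K_0M_0)^{-1}\}<\beta_0$; then for $\|q\|\le\beta$ the Neumann series converges uniformly in $z\in K$ and yields $\sup_{z\in K}\|(H_0^q-z)^{-1}\|\le 2M_0$. Finally, for $\gamma\ne\gamma'$ choose $q:=\beta\,(\gamma-\gamma')/\|\gamma-\gamma'\|$, so $\|q\|=\beta$ and $q\cdot(\gamma-\gamma')=\beta\|\gamma-\gamma'\|$; reading off the block identity $(H_0-z)^{-1}(\gamma;\gamma')=\eu^{-q\cdot(\gamma-\gamma')}(H_0^q-z)^{-1}(\gamma;\gamma')$ from the definition (legitimate after the cut-off once $n\ge\max\{\|\gamma\|,\|\gamma'\|\}$), one concludes
\[ \big\|(H_0-z)^{-1}(\gamma,\xx;\gamma',\xx')\big\| \le \eu^{-\beta\|\gamma-\gamma'\|}\,\big\|(H_0^q-z)^{-1}\big\| \le 2M_0\,\eu^{-\beta\|\gamma-\gamma'\|}, \]
uniformly in $z\in K$; the case $\gamma=\gamma'$ is trivial. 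This establishes the statement with $C':=2M_0$ and the above $\beta$.

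The only genuinely delicate points are bookkeeping ones: one must fix the perturbation constant $K_0$ and the resolvent bound $M_0$ \emph{before} choosing $q$, and make sure $K_0$ is independent of the direction of $q$, so that a single rate $\beta$ works simultaneously for every pair $(\gamma,\gamma')$ and every $z\in K$; and one must handle the unboundedness of the pure exponential weight via the Lipschitz cut-off $F_n$ (equivalently, verify directly that the formal block manipulations relating $(H_0^q-z)^{-1}$ and $(H_0-z)^{-1}$ are justified). Everything else is a routine Neumann-series estimate.
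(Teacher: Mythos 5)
Your proof is correct, and it is the same Combes--Thomas strategy the paper itself follows: conjugate $H_0$ by an exponential weight, show via $|\eu^{x}-1|\le |x|\eu^{|x|}$ and a Schur--Holmgren bound that the conjugation is a perturbation of size proportional to the rate, invert uniformly on $K$ by a Neumann series, and read the off-diagonal decay of the resolvent off the conjugation identity. The only real difference is the choice of weight. You use linear weights $\eu^{q\cdot\gamma}$ whose direction depends on the pair $(\gamma,\gamma')$, which obliges you (i) to check that the perturbation constant is uniform in the direction of $q$ so that a single $\beta$ works for all pairs, and (ii) to tame the unbounded multiplication operator with the Lipschitz cut-offs $F_n$ -- both of which you handle correctly. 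The paper instead conjugates with the radial weight $\eu^{\beta\|\gamma-\gamma_0\|}$ centred at the column index $\gamma_0$, obtaining the perturbation bound uniformly in $\gamma_0$ by the same elementary inequality plus the triangle inequality; since only the \emph{bounded} factor $\eu^{-\beta\|\cdot-\gamma_0\|}$ enters the intertwining identity $(H_0-z)^{-1}\eu^{-\beta\|\cdot-\gamma_0\|}=\eu^{-\beta\|\cdot-\gamma_0\|}\bigl(H_\beta^{(\gamma_0)}-z\bigr)^{-1}$, which is then applied to the vector $\delta_{\gamma_0,\xx'}$, no cut-off is needed, and the pointwise bound follows from the embedding $\ell^\infty\subset\ell^2$ rather than from your (equally valid) bound of a matrix element by the operator norm. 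Both implementations give the same conclusion, with $\beta$ and $C'$ depending only on $\beta_0$, $C$, $Q$ and $\sup_{z\in K}\|(H_0-z)^{-1}\|$.
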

\begin{proof}
Take $\gamma_0 \in \Z^2$. Consider the operator $H_\beta^{(\gamma_0)}$ defined by the following matrix elements:
\begin{equation}
\label{Halpha}
H^{(\gamma_0)}_\beta(\gamma,\xx;\gamma',\xx'):=\eu^{\beta\|\gamma-\gamma_0\|}H_0(\gamma,\xx;\gamma',\xx')\eu^{-\beta\|\gamma'-\gamma_0\|} \, , 
\end{equation}
for all $\gamma, \gamma'\in \Z^2$, and $\: \xx, \xx' \in \set{1,\ldots,Q}$. Using the inequality $|\eu^{x}-1|\leq |x| \eu^{|x|}$, which holds for all $x\in\R$, together with the triangle inequality we have
\begin{equation}
\label{HDiff}
\sup_{\gamma_0 \in \Z^2} \left | H^{(\gamma_0)}_\beta(\gamma,\xx;\gamma',\xx')- H_0(\gamma,\xx;\gamma',\xx')  \right | \leq C \beta \|\gamma-\gamma'\|\,\eu^{-(\beta_0-\beta)\|\gamma-\gamma'\|} \, .
\end{equation}

Using a Schur--Holmgren estimate, as soon as $\beta<\beta_0$ we get from \eqref{HDiff} that $\|H_\beta^{(\gamma_0)}-H_0\|\leq \beta C$ for all $\gamma_0$.  If $z \in K\subset\rho(H_0)$, we can choose a $\beta$ small enough such that the operator
 $$\left(\Id +\left( H^{(\gamma_0)}_\beta-H_0 \right)\left(H_0-z\right)^{-1}\right)$$  
is invertible uniformly in $z$ and $\gamma_0$. Thus we obtain that
\begin{equation*}
\big(H^{(\gamma_0)}_\beta-z\big)^{-1}=\left(H_0-z\right)^{-1}\left(\Id + \left(H^{(\gamma_0)}_\beta-H_0\right) \left(H_0-z\right)^{-1}\right)^{-1},
\end{equation*}
which implies 
\begin{equation}
\label{RHalphanorm}
\sup_{\gamma_0\in \Z^2}\sup_{z \in K } \norm{ \big(H_\beta^{(\gamma_0)}-z\big)^{-1}} =: A<\infty \, . 
\end{equation} 
Also, $\beta$ only depends on the minimal distance between $z$ and the spectrum of $H_0$.
 
We are now ready to prove the exponential localization of the resolvent of $H_0$. From the definition~\eqref{Halpha} of $H^{(\gamma_0)}_\beta$ we obtain that $\eu^{-\beta\|\cdot -\gamma_0\|}H^{(\gamma_0)}_\beta=H_0 \eu^{-\beta\|\cdot-\gamma_0\|}$. From this identity and from \eqref{RHalphanorm} we get that for every $z \in K$
\begin{equation}
\label{CTAux1}
\left(H_0-z\right)^{-1}  \eu^{-\beta\|\cdot-\gamma_0\|} =   \eu^{-\beta\|\cdot-\gamma_0\|} \big(H^{(\gamma_0)}_\beta-z\big)^{-1} \, .
\end{equation} 
Hence $\eqref{CTAux1}$ shows that $\left(H_0-z\right)^{-1}  \eu^{-\beta\|\cdot-\gamma_0\|}$ maps in the domain of the unbounded multiplication operator $\eu^{\beta\|\cdot-\gamma_0\|}$. Finally, considering the vector $\delta_{\gamma_0,\xx'}$ that is equal to 1 only in $(\gamma_0,\xx')$, and using the fact that in the discrete setting the $\ell^\infty$ norm is bounded by the $\ell^2$ norm, \eqref{RHalphanorm} implies
\begin{equation*}
\begin{aligned}
 \left| \eu^{\beta\|\gamma-\gamma_0\|}\left(H_0-z\right)^{-1} (\gamma,\xx;\gamma_0,\xx' )\right|=\left|\Big( \big(H_\beta^{(\gamma_0)}-z\big)^{-1}  \delta_{\gamma_0,\xx'} \Big) (\gamma,\xx)\right| \leq  A \, 
\end{aligned}
\end{equation*}
which concludes the proof.
\end{proof}

For the next statement, recall that $\HH_\epsilon$ was defined in~\eqref{HHepsilon}.

\begin{proposition}
	\label{prop:DiagLoc}
	Fix a compact set $K\subset \rho(\HH_0)$. Then there exist $\epsilon_0>0$, $\alpha<\infty$ and $C<\infty$ such that for every $0\leq \epsilon\leq \epsilon_0$ we have that $K\subset \rho(\HH_\epsilon)$ and:
	\begin{equation}
	\label{DiagLocProp}
	\sup_{z\in K}\left \vert(\HH_\epsilon-z)^{-1}(\gamma,\xx;\gamma',\xx')-\eu^{\iu \epsilon \phi(\gamma,\gamma')}(\HH_0-z)^{-1}(\gamma,\xx;\gamma',\xx')\right \vert\leq C\;\epsilon \;\eu^{-\alpha \|\gamma-\gamma'\|}.
	\end{equation}
\end{proposition}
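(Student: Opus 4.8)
The plan is to construct an explicit parametrix for $(\HH_\epsilon - z)^{-1}$ out of $(\HH_0-z)^{-1}$ decorated by the Peierls phase, and then correct it by a Neumann series. Fix $z \in K$ and abbreviate $R_0(z) := (\HH_0 - z)^{-1}$. Since $K \subset \rho(\HH_0)$ is compact and $\HH_0$ has exponentially localized matrix elements (by the exponential decay of $\mathcal{T}_0$, itself a consequence of the real-analyticity of $h_{\kk,0}$), Proposition~\ref{Combes-Thomas} applies and gives constants, uniform in $z \in K$, such that $|R_0(z)(\gamma,\xx;\gamma',\xx')| \le C' \eu^{-\beta \|\gamma-\gamma'\|}$; moreover, as $\HH_0$ commutes with the lattice translations, $R_0(z)(\gamma,\xx;\gamma',\xx')$ depends on $\gamma - \gamma'$ only. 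Define the operator $\widehat{R}^{(\epsilon)}(z)$ on $\ell^2(\Z^2) \otimes \C^Q$ by $\widehat{R}^{(\epsilon)}(z)(\gamma,\xx;\gamma',\xx') := \eu^{\iu \epsilon \phi(\gamma,\gamma')} R_0(z)(\gamma,\xx;\gamma',\xx')$; this is precisely the operator against which~\eqref{DiagLocProp} compares $(\HH_\epsilon-z)^{-1}$.

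First I would compute $(\HH_\epsilon - z)\, \widehat{R}^{(\epsilon)}(z)$. Because the matrix elements of $\HH_\epsilon$ and of $R_0(z)$ both carry only the \emph{difference} of the lattice variables times a phase, the product of the phases $\eu^{\iu\epsilon\phi(\gamma,\gamma')}$ and $\eu^{\iu\epsilon\phi(\gamma',\gamma'')}$ telescopes, via the magnetic composition rule~\eqref{ianuar20}, into $\eu^{\iu\epsilon\phi(\gamma,\gamma'')}$ times a defect phase $\eu^{\iu\epsilon\phi(\gamma-\gamma',\gamma'-\gamma'')}$ depending on the summation variable. Splitting that defect phase as $1 + (\eu^{\iu\epsilon\phi(\gamma-\gamma',\gamma'-\gamma'')}-1)$ and using $(\HH_0 - z) R_0(z) = \Id$ together with $\phi(\gamma,\gamma) = 0$, one obtains
\[ (\HH_\epsilon - z)\, \widehat{R}^{(\epsilon)}(z) = \Id + E^{(\epsilon)}(z), \]
where
\[ E^{(\epsilon)}(z)(\gamma,\xx;\gamma'',\xx'') := \eu^{\iu\epsilon\phi(\gamma,\gamma'')} \sum_{\gamma' \in \Z^2} \sum_{\xx'=1}^{Q} \Bigl( \eu^{\iu\epsilon\phi(\gamma-\gamma',\gamma'-\gamma'')} - 1 \Bigr) \mathcal{T}_0(\gamma-\gamma';\xx,\xx')\, R_0(z)(\gamma',\xx';\gamma'',\xx''). \]

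The key step is the bound on $E^{(\epsilon)}(z)$. Using $|\eu^{\iu\epsilon\phi(\x,\y)}-1| \le \epsilon\,|\phi(\x,\y)| \le \tfrac{\epsilon}{2}\|\x\|\,\|\y\|$ from~\eqref{eqn:Peierls}, the exponential decay $|\mathcal{T}_0(\gamma;\xx,\xx')| \le C \eu^{-\beta_0\|\gamma\|}$, and the Combes--Thomas bound on $R_0(z)$ recalled above, the polynomial prefactors $\|\gamma-\gamma'\|\,\|\gamma'-\gamma''\|$ are absorbed into slightly weakened exponentials, and summing over $\gamma'$ exactly as in Lemma~\ref{AuxLemma} (using $\|\gamma-\gamma'\|+\|\gamma'-\gamma''\|\ge\|\gamma-\gamma''\|$) yields
\[ \left| E^{(\epsilon)}(z)(\gamma,\xx;\gamma'',\xx'') \right| \le C\, \epsilon\, \eu^{-\alpha'\|\gamma-\gamma''\|}, \qquad \alpha' < \min\{\beta_0,\beta\}, \]
uniformly in $z \in K$ and $\xx,\xx'' \in \set{1,\ldots,Q}$. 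A Schur--Holmgren estimate then gives $\|E^{(\epsilon)}(z)\| \le C''\epsilon$, so that for $\epsilon_0$ small enough $\Id + E^{(\epsilon)}(z)$ is invertible for all $z\in K$ and $\epsilon \le \epsilon_0$; since $\HH_\epsilon$ is bounded and self-adjoint, surjectivity of $\HH_\epsilon - z$ forces $z\in\rho(\HH_\epsilon)$ (alternatively, one may invoke the persistence of the spectral gap from~\cite{Cornean}), and we obtain $K \subset \rho(\HH_\epsilon)$ together with $(\HH_\epsilon-z)^{-1} = \widehat{R}^{(\epsilon)}(z)\,(\Id + E^{(\epsilon)}(z))^{-1}$.

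Finally, $(\HH_\epsilon-z)^{-1} - \widehat{R}^{(\epsilon)}(z) = -\,\widehat{R}^{(\epsilon)}(z)\, E^{(\epsilon)}(z)\,(\Id + E^{(\epsilon)}(z))^{-1}$, and it remains to propagate exponential localization through this product. Exactly as in the proof of Lemma~\ref{Nagy}, the Neumann series for $(\Id + E^{(\epsilon)}(z))^{-1}$ shows that its matrix elements differ from $\delta_{\gamma\gamma'}\delta_{\xx\xx'}$ by a term bounded by $C\epsilon\,\eu^{-\alpha''\|\gamma-\gamma'\|}$; hence $E^{(\epsilon)}(z)\,(\Id + E^{(\epsilon)}(z))^{-1}$ has matrix elements bounded by $C\epsilon\,\eu^{-\alpha''\|\gamma-\gamma'\|}$, and composing once more with $\widehat{R}^{(\epsilon)}(z)$ (whose matrix elements decay like $\eu^{-\beta\|\cdot\|}$, since the Peierls phase has modulus one) via the same convolution estimate gives
\[ \left| (\HH_\epsilon-z)^{-1}(\gamma,\xx;\gamma',\xx') - \eu^{\iu\epsilon\phi(\gamma,\gamma')}(\HH_0-z)^{-1}(\gamma,\xx;\gamma',\xx') \right| \le C\,\epsilon\,\eu^{-\alpha\|\gamma-\gamma'\|}, \]
uniformly in $z \in K$, which is~\eqref{DiagLocProp}. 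The main obstacle is the identity $(\HH_\epsilon - z)\widehat{R}^{(\epsilon)}(z) = \Id + E^{(\epsilon)}(z)$ with the error term supported on the \emph{differences} $\gamma-\gamma'$ and $\gamma'-\gamma''$: this is the only configuration in which the $O(\epsilon)$ gain from the Peierls phase coexists with the exponential decay of $\mathcal{T}_0$ and of $R_0(z)$, and it is forced by the composition rule~\eqref{ianuar20}; everything downstream is the bookkeeping already carried out in Lemmas~\ref{AuxLemma} and~\ref{Nagy}.
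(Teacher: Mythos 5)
Your proof is correct and follows essentially the same route as the paper: the phase-decorated free resolvent $S_z^{(\epsilon)}$ as parametrix, the identity $(\HH_\epsilon-z)S_z^{(\epsilon)}=\Id+T_z^{(\epsilon)}$ obtained from the composition rule \eqref{ianuar20}, the $O(\epsilon)$ exponentially localized bound on the error via $|\eu^{\iu\epsilon\phi}-1|\le \tfrac{\epsilon}{2}\|\cdot\|\,\|\cdot\|$, a Schur--Holmgren bound, and Neumann inversion. The only cosmetic differences are at the end: the paper deduces $K\subset\rho(\HH_\epsilon)$ from the a priori invertibility for $\mathrm{Im}\,z\neq 0$ plus a bound uniform in $\mathrm{Im}\,z$, and writes the difference as $(\HH_\epsilon-z)^{-1}T_z^{(\epsilon)}$ with a second application of Proposition~\ref{Combes-Thomas}, whereas you use surjectivity plus self-adjointness and propagate localization through the Neumann series as in Lemma~\ref{Nagy}; both variants are sound.
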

\begin{proof}
	By hypothesis we know that $|\HH_0(\gamma,\xx;\gamma',\xx')| \leq C' \eu^{-\beta\|\gamma-\gamma'\|}$ and hence Proposition~\ref{Combes-Thomas} gives us that also $\left| (\HH_0-z)^{-1}(\gamma,\xx;\gamma',\xx')\right| \leq C'' \eu^{-\beta\|\gamma-\gamma'\|}$, uniformly for every $z \in K$. Consider the operator $S_z^{(\epsilon)}$ defined by the following matrix elements:
	\begin{equation*}
	S^{(\epsilon)}_z(\gamma,\xx;\gamma',\xx'):=\ep{\gamma}{\gamma'} (\HH_0-z)^{-1}(\gamma,\xx;\gamma',\xx') \, ,
	\end{equation*}
	for all $\gamma, \gamma'\in \Z^2$, and $\: \xx, \xx' \in \set{1,\ldots,Q}$. Then consider the matrix of $(\HH_{\epsilon}-z)S^{(\epsilon)}_z$. Exploiting the magnetic phase composition rule \eqref{ianuar20} and the fact that $\ep{\gamma}{\gamma}=1$ we get
	\begin{equation}
	\label{ApproxResolvent}
	(\HH_{\epsilon}-z)S^{(\epsilon)}_z=:{\bf 1} + T_z^{(\epsilon)} \, ,
	\end{equation}
	where $ T_z^{(\epsilon)} $ is the operator associated with the matrix elements
	\begin{equation}
	\label{KerTz}
	\ep{\gamma}{\gamma'}\sum_{\tgamma\in\Z^2}\sum_{\tx=1}^{Q}  \left(\ep{\gamma-\tgamma}{\tgamma-\gamma'}-1\right) \HH_0(\gamma,\xx;\tgamma,\tx) (\HH_0-z)^{-1}(\tgamma,\tx;\gamma',\xx') \, ,
	\end{equation}
	for all $\gamma, \gamma'\in \Z^2$, and $\: \xx, \xx' \in \set{1,\ldots,Q}$. Now note that
	\begin{equation*} 
	\left|\ep{\gamma-\tgamma}{\tgamma-\gamma'}-1\right| \leq \frac{\epsilon}{2} \|\gamma-\tgamma\| \|\tgamma-\gamma'\| \,.
	\end{equation*}
	Considering the exponential localization of $\HH_0$ and $(\HH_0-z)^{-1}$, a simple computation shows that, for every $\alpha<\beta$,
	\begin{equation}
	\label{KerEstimate}
	\eu^{\alpha\|\gamma-\gamma'\|} \left|T_z^{(\epsilon)}(\gamma,\xx;\gamma',\xx') \right| \leq \widetilde{C} \epsilon \, , 
	\end{equation}
	where $\widetilde{C}$ is some constant independent of $z$. Hence a Schur--Holmgren estimate now proves that $\|T_z^{(\epsilon)}\| \leq \tilde{C}\epsilon $. So, fix an $\epsilon_0$ such that the norm of $T_z^{(\epsilon)}$ is less than 1, then for every $\epsilon\leq \epsilon_0$  we can invert the operator ${\bf 1}+T_z^{(\epsilon)}$. Due to the selfadjointness of $\HH_{\epsilon}$ we know a priori that $(\HH_\epsilon-z)$ is invertible for every $z$ such that $\mathrm{Im} \,z\neq0$. So, from \eqref{ApproxResolvent} we obtain that
	\begin{equation*}
	(\HH_\epsilon-z)^{-1}=S_z^{(\epsilon)}\left({	\bf 1}+T_z^{(\epsilon)}\right)^{-1}
	\end{equation*}
	and
	$$\left\|(\HH_\epsilon-z)^{-1} \vphantom{S_z^{(\epsilon)}}\right\| \leq \left\| S_z^{(\epsilon)}\right\| < C \, , $$
	where $C$ is a constant that depends only on $K$ and does not depend on the imaginary part of $z$. So we can conclude that $K$ is also in the resolvent set of $\HH_{\epsilon}$ whenever $\epsilon\leq \epsilon_0$. Finally, from \eqref{ApproxResolvent} we have that $ S_z^{(\epsilon)}-(\HH_\epsilon-z)^{-1} = (\HH_\epsilon-z)^{-1} T_z^{(\epsilon)}$. Since $K$ is in the resolvent set of $\HH_\epsilon$, using Proposition~\ref{Combes-Thomas} we infer that $ (\HH_\epsilon-z)^{-1}$ has matrix elements localized around the diagonal, hence \eqref{DiagLocProp} follows taking into account \eqref{KerEstimate}.
\end{proof}



\bigskip \bigskip

{\footnotesize

\begin{tabular}{rl}
(H.D. Cornean) & \textsc{Department of Mathematical Sciences, Aalborg University} \\
 &  Skjernvej 4A, 9220 Aalborg, Denmark \\
 &  \textsl{E-mail address}: \href{mailto:cornean@math.aau.dk}{\texttt{cornean@math.aau.dk}} \\
 \\
(D. Monaco) & \textsc{Dipartimento di Matematica e Fisica, Universit\`{a} degli Studi Roma Tre} \\
 &  Largo San Leonardo Murialdo 1, 00146 Roma, Italy \\
 &  \textsl{E-mail address}: \href{mailto:dmonaco@mat.uniroma3.it}{\texttt{dmonaco@mat.uniroma3.it}} \\
\\
(M. Moscolari) &  \textsc{Dipartimento di Matematica, ``La Sapienza'' Universit\`{a} di Roma} \\
&   Piazzale Aldo Moro 2, 00185 Roma, Italy \\
&   \textsl{and}  \\
&   \textsc{Department of Mathematical Sciences, Aalborg University} \\
&   Skjernvej 4A, 9220 Aalborg, Denmark \\
&  \textsl{E-mail address}:  \href{mailto:moscolari@mat.uniroma1.it}{\texttt{moscolari@mat.uniroma1.it}} \\
\\
\end{tabular}

}


\begin{thebibliography}{OO}

\bibitem{AucklyKuchment18}
D.~Auckly and P.~Kuchment : On Parseval frames of exponentially decaying composite Wannier functions. In: F.~Bonetto, D.~Borthwick, E.~Harrell, and M.~Loss (eds.), {\it Mathematical Problems in Quantum Physics}, pp.~227--240. Vol.~717 in Contemporary Mathematics Volume. American Mathematical Society (Providence, RI, 2018).

\bibitem{AvisIsham79}
S.J.~Avis and C.J.~Isham : Quantum field theory and fibre bundles in a general space-time. In: M.~L\'evy and S.~Deser (eds.), {\it Recent Developments in Gravitation -- Carg\`ese 1978}, pp.~347--401. Plenum Press (New York, 1979).

\bibitem{AvronSimon78}
J.E.~Avron and B.~Simon : Analytic properties of band functions. {\it Ann. Phys.} {\bf 110}, 85--101 (1978).

\bibitem{BrynildsenCornean13}
M.~Brynildsen and H.D.~Cornean : On the Verdet constant and Faraday rotation for graphene-like materials. {\it Rev. Math. Phys.} {\bf 25}(4), 1350007 (2013).

\bibitem{CancesLevittPanatiStoltz17}
\'E.~Canc\`es, A.~Levitt, G.~Panati, and G.~Stoltz : Robust determination of maximally localized Wannier functions. {\it Phys. Rev. B} {\bf 95}, 075114 (2017).

\bibitem{Cornean}
H.D.~Cornean : On the Lipschitz continuity of spectral bands of Harper-like and magnetic Schr\"odinger operators. {\it Ann. Henri Poincar\'{e}} {\bf 11}, 973–-990 (2010).

\bibitem{CorneanNenciu98}
H.D.~Cornean and G.~Nenciu: On eigenfunction decay of two dimensional magnetic Schr\"odinger operators. {\it Commun. Math. Phys.} {\bf 192}, 671--685 (1998).

\bibitem{CorneanNenciu}
H.D.~Cornean and G.~Nenciu : The Faraday effect revisited: Thermodynamic limit. {\it J. Funct. Anal.} {\bf 257}, 2024--2066 (2009).

\bibitem{CorneanHerbstNenciu16}
H.D.~Cornean, I.~Herbst, and G.~Nenciu : On the construction of composite Wannier functions. {\it Ann. Henri Poincar\'{e}} {\bf 17}, 3361--3398 (2016).

\bibitem{CorneanMonacoMoscolari18}
H.D.~Cornean, D.~Monaco, and M.~Moscolari : Beyond Diophantine Wannier diagrams: gap labelling for Bloch-Landau Hamiltonians. Preprint available at \href{https://arxiv.org/abs/1810.05623}{\texttt{arXiv:1810.05623}} (2018).

\bibitem{CorneanMonacoTeufel17}
H.D.~Cornean, D.~Monaco, and S.~Teufel : Wannier functions and $\Z_2$ invariants in time-reversal symmetric topological insulators. {\it Rev. Math. Phys.} {\bf 29}, 1730001 (2017).

\bibitem{CorneanMonaco17}
H.D.~Cornean and D.~Monaco : On the construction of Wannier functions in topological insulators: the 3D case. {\it Ann. Henri Poincar\'{e}} {\bf 18}, 3863–-3902 (2017). 

\bibitem{Cloizeaux}
J.~des Cloizeaux : Energy bands and projection operators in a crystal: Analytic and asymptotic properties. {\it Phys. Rev.} {\bf 135}, A685--A697 ; Analytical properties of n-dimensional energy bands and Wannier functions. {\it Ibid.}, A698–-A707 (1964).

\bibitem{DubailRead15}
J.~Dubail and N.~Read : Tensor network trial states for chiral topological phases in two dimensions and a no-go theorem in any dimension. {\it Phys. Rev. B} {\bf 92}, 205307 (2015).

\bibitem{FeichtingerStrohmer98}
H.G.~Feichtinger and T.~Strohmer (eds.) : {\it Gabor Analysis and Algorithms, Theory and Applications}. Applied and Numerical Harmonic Analysis. Birkh\"auser (Boston, 1998).

\bibitem{FiorenzaMonacoPanati16_B}
D.~Fiorenza, D.~Monaco, and G.~Panati : Construction of real-valued localized composite Wannier functions for insulators. {\it Ann. Henri Poincar\'{e}} {\bf 17}, 63--97 (2016).

\bibitem{FiorenzaMonacoPanati16}
D.~Fiorenza, D.~Monaco, and G.~Panati : $\Z_2$ invariants of topological insulators as geometric obstructions. {\it Commun. Math. Phys.\,}  {\bf 343}, 1115--1157 (2016).

\bibitem{FreemanPooreWeiWyse14}
D.~Freeman, D.~Poore, A.R.~Wei, and M.~Wyse : Moving Parseval frames for vector bundles. {\it Houston J. of Math.} {\bf 40}, 817--832 (2014).

\bibitem{FreundTeufel16}
S.~Freund and S.~Teufel : Peierls substitution for magnetic Bloch bands. {\it Analysis \& PDE} {\bf 9}, 773--811 (2016).

\bibitem{GalliParrinello92}
G.~Galli and M.~Parrinello : Large scale electronic structure calculations. {\it Phys. Rev. Lett.} {\bf 69}, 3547 (1992).

\bibitem{Goedecker99}
S.~Goedecker : Linear scaling electronic structure methods. {\it Rev. Mod. Phys.} {\bf 71}, 1085--1123 (1999).

\bibitem{Grochenig}
K.~Gr\"ochenig : {\it Foundations of Time-Frequency Analysis}. Applied and Numerical Harmonic Analysis. Birkh{\"a}user (Boston, 2001).

\bibitem{GontierLevittSiraj-Dine18}
D.~Gontier, A.~Levitt, and S.~Siraj-Dine : Numerical construction of Wannier functions through homotopy. {\it J. Math. Phys.} {\bf 60}, 031901 (2019).

\bibitem{HanLarson00}
D.~Han and D.R.~Larson : {\it Frames, bases and group representations}. No.~697 in Memoirs of the American Mathematical Society. American Mathematical Society (Providence, RI, 2000).

\bibitem{Hofstadter76} 
D.R.~Hofstadter : 
Energy levels and wave functions of Bloch electrons in rational and irrational magnetic fields. {\it Phys. Rev. B} {\bf 14}, 2239-2249 (1976).

\bibitem{Husemoller94}
D.~Husemoller : {\it Fibre bundles}, 3rd edition. No.~20 in Graduate Texts in Mathematics. Springer-Verlag (New York, 1994).

\bibitem{Kato66}
T.~Kato :  {\it Perturbation theory for linear operators.} Springer (Berlin, 1966).

\bibitem{Kohmoto85}
M.~Kohmoto : Topological invariant and the quantization of the Hall conductance. {\it Ann. Phys.} {\bf 160}, 343--354 (1985).

\bibitem{Kuchment93}
P.~Kuchment : {\it Floquet theory for partial differential equations}. Birkh\"{a}user (Basel, 1993).

\bibitem{Kuchment09}
P.~Kuchment : Tight frames of exponentially decaying Wannier functions. {\it J. Phys. A: Math. Theor.} {\bf 42}, 025203 (2009).

\bibitem{Kuchment16}
P.~Kuchment : An overview of periodic ellipic operators. {\it Bull. AMS} {\bf 53}, 343--414 (2016).

\bibitem{LudewigThiang19}
M.~Ludewig, G.C.~Thiang: Good Wannier bases in Hilbert modules associated to topological insulators. Preprint avilable at \href{http://arxiv.org/abs/1904.13051}{\texttt{arXiv:1904.13051}} (2019).

\bibitem{MarzariEtAl12}
N.~Marzari, A.~Mostofi, J.~Yates, I.~Souza and D.~Vanderbilt : Maximally localized Wannier functions: Theory and applications. {\it Rev. Mod. Phys.} {\bf 84}, 1419--1475 (2012).

\bibitem{Monaco17}
D.~Monaco : Chern and Fu--Kane--Mele invariants as topological obstructions. Chapter 12 in : G.~Dell'Antonio and A.~Michelangeli (eds.), {\it Advances in Quantum Mechanics: Contemporary Trends and Open Problems}. Vol.~18 in Springer INdAM Series. Springer (Cham, 2017).

\bibitem{MonacoPanati15}
D.~Monaco and G.~Panati : Symmetry and localization in periodic crystals: triviality of Bloch bundles with a fermionic time-reversal symmetry. {\it Acta App. Math.} {\bf 137}, 185--203 (2015).

\bibitem{MonacoPanatiPisanteTeufel18}
D.~Monaco, G.~Panati, A.~Pisante, and S.~Teufel : Optimal decay of Wannier functions in Chern and Quantum Hall insulators. {\it Comm.~Math.~Phys.} {\bf 359}, 61--100 (2018).

\bibitem{MonacoTauber17}
D.~Monaco and C.~Tauber : Gauge-theoretic invariants for topological insulators: A bridge between Berry, Wess--Zumino, and Fu--Kane--Mele.  {\it Lett. Math. Phys.} {\bf 107}, 1315–-1343 (2017).

\bibitem{Nenciu02}
G.~Nenciu: On asymptotic perturbation theory for quantum mechanics: almost invariant subspaces and gauge invariant magnetic perturbation theory. {\it J. Math. Phys.} {\bf 43}, 1273--1298 (2002).

\bibitem{NenciuNenciu93} 
A.~Nenciu and G.~Nenciu : Existence of exponentially localized Wannier functions for nonperiodic systems.  {\it Phys. Rev. B} {\bf 47}, 10112--10115  (1993). 

\bibitem{NenciuNenciu98} 
A.~Nenciu and G.~Nenciu : The Existence of Generalised Wannier Functions for One-Dimensional Systems. {\it Commun. Math. Phys.} {\bf 190}, 541--548  (1998).   

\bibitem{Panati07}
G.~Panati : Triviality of Bloch and Bloch-Dirac bundles, {\it Ann. Henri Poincar\'e} {\bf 8}, 995--1011 (2007).

\bibitem{ReedSimon4} 
M.~Reed and B.~Simon : {\it Methods of Modern Mathematical Physics, Volume IV: Analysis of Operators}. Academic Press (New York, 1978).

\bibitem{RestaVanderbilt07}
R.~Resta and D.~Vanderbilt : Theory of polarization: a modern
approach. In: K.M.~Rabe, Ch.H.~Ahn, and J.-M.~Triscone (eds.), {\it Physics of Ferroelectrics -- A modern perspective}, pp.~31–-68. Springer (Berlin, 2007).

\bibitem{Simon15}
B.~Simon :  {\it Harmonic Analysis: A Comprehensive Course in Analysis, Part 3}. No.~3 of A comprehensive course in analysis. American Mathematical Society (Providence, RI, 2015).

\bibitem{Spaldin12}
N.A.~Spaldin : A beginner's guide to the modern theory of
polarization. {\it J. Solid State Chem.} {\bf 195}, 2 (2012).

\bibitem{Thouless84}
D.J.~Thouless : Wannier functions for magnetic sub-bands. {\it J. Phys. C: Solid State Phys.} {\bf 17}, L325--L327 (1984).

\bibitem{TKNN}
D.J.~Thouless, M.~Kohmoto, M.P.~Nightingale, and M.~de Nijs : Quantized Hall conductance in a two-dimensional periodic potential. {\it Phys.~Rev.~Lett.} {\bf 49}, 405--408 (1982).

\bibitem{Yates_et_al07}
J.~Yates, X.~Wang, D.~Vanderbilt, and I.~Souza : Spectral and
Fermi surface properties from Wannier interpolation. {\it Phys. Rev.
B} {\bf 75}, 195121 (2007).

\bibitem{ZaidenbergKreinKuchmentPankov75}
M.G.~Zaidenberg, S.G.~Krein, P.~Kuchment, and A.A.~Pankov : Banach bundles and linear operators. {\it Russian Math. Surveys} {\bf 30}, 115--175 (1975).


\end{thebibliography}
\end{document}